\setlist{nosep}
\newtheorem*{mdresult}{Result}
\newcommand{\opt}{\mathsf{Opt}}
\newcommand{\optafter}[1]{\opt_{#1}}
\newcommand{\alg}{\mathsf{Alg}}
\newcommand{\algafter}[1]{\alg_{#1}}
\newcommand{\algafterprime}[1]{\alg'_{#1}}
\newcommand{\newthreshold}{\ell'}
\newcommand{\lcb}{\ensuremath{\ell}}
\newcommand{\ucb}{\ensuremath{u}}
\newcommand{\D}{\mathcal{D}}
\newcommand{\clever}{$\mathsf{MoveBound}$\xspace}
\newcommand{\smart}{$\mathsf{SwapTest}$\xspace}
\newcommand{\E}{\mathbb{E}}
\newcommand{\OTild}{\widetilde{O}}
\newcommand{\one}{\mathbf{1}}%
\newcommand{\Ex}[2][]{\mbox{\rm\bf E}_{#1}\left[#2\right]}%
\renewcommand{\Pr}[2][]{\mbox{\rm\bf Pr}_{#1}\left[#2\right]}%
\newcommand{\btau}{\pmb{\tau}}
\newcommand{\IGNORE}[1]{}
\newcommand{\poly}{\ensuremath{\mathsf{poly}}}
\newcounter{note}[section]
\newcommand{\newadded}[1]{{#1}}
\title{\textmd{\bf Bandit Algorithms for Prophet Inequality and Pandora's Box}}
\date{\today}
\author{ 
    Khashayar Gatmiry\thanks{(gatmiry@mit.edu)
    Electrical Engineering and Computer Science,
Massachusetts Institute of Technology}
    \and 
    Thomas Kesselheim\thanks{(thomas.kesselheim@uni-bonn.de)
     Institute of Computer Science and Lamarr Institute for Machine Learning and Artificial Intelligence, University of Bonn}
	\and Sahil Singla\thanks{
        (ssingla@gatech.edu)
        School of Computer Science,
        Georgia Tech. \newadded{Supported in part by NSF award CCF-2327010.}
        }
    \and
    Yifan Wang\thanks{
        (ywang3782@gatech.edu)
        School of Computer Science,
        Georgia Tech. \newadded{Supported in part by NSF award CCF-2327010.}
        }
}
\newtheorem{Theorem}{Theorem}[section]
\newtheorem{Lemma}[Theorem]{Lemma}
\newtheorem{Remark}[Theorem]{Remark}
\newtheorem{Definition}[Theorem]{Definition}
\newtheorem{Claim}[Theorem]{Claim}
\newtheorem{Corollary}[Theorem]{Corollary}
\begin{document}



\maketitle 
\thispagestyle{empty}

\begin{abstract}
\medskip

The Prophet Inequality and Pandora's Box problems are fundamental stochastic problem with applications in Mechanism Design, Online Algorithms, Stochastic Optimization, Optimal Stopping, and Operations Research. A usual assumption in these works  is that  the probability distributions of the $n$  underlying random variables are given as input to the algorithm.  Since in practice these distributions need to be learned under limited feedback, we initiate the study of such stochastic problems in the  Multi-Armed Bandits model. 

\smallskip 

In  the Multi-Armed Bandits model we interact with $n$ unknown distributions over $T$ rounds: in  round $t$ we play a policy $x^{(t)}$ and only receive the value of $x^{(t)}$ as feedback. The goal is to minimize the regret, which is the difference over $T$ rounds in the total value of the optimal algorithm that knows the  distributions vs. the total value of our algorithm that learns the distributions from the limited feedback. Our main results give near-optimal   $\widetilde{O}\big(\poly(n)\sqrt{T}\big)$ total regret algorithms for both Prophet Inequality and Pandora's Box. 

\smallskip 

Our proofs proceed by maintaining confidence intervals on the  unknown  indices of the optimal policy. The exploration-exploitation tradeoff prevents us from directly refining these confidence intervals, so the main technique is to design a regret upper bound function that is learnable while playing low-regret Bandit policies.

\end{abstract}
\bigskip 

 
\newpage

\thispagestyle{empty}
\setcounter{tocdepth}{2}
{\small
\begin{spacing}{0}
   \tableofcontents
\end{spacing}
}

\clearpage

\setcounter{page}{1}



\section{Introduction}

The field of Stochastic Optimization deals with optimization problems under uncertain inputs, and has had tremendous success since \cite{Bellman-Book57}. A standard model is that the inputs are random variables  that are drawn from {known} probability distributions. The goal is to design a policy (an adaptive algorithm) to optimize the expected objective function. 
Examples of such problems 
 include Prophet Inequality~\cite{HKS-AAAI07,CHMS-STOC10,KW-STOC12,Rubinstein-STOC16}, Pandora’s Box~\cite{KWW-EC16,Singla-SODA18,GKS-SODA19}, 
and  Auction Design~\cite{Hartline-Book22, RoughgardenTwenty-Book16}. 
Most prior works assume that the underlying distributions are known to the algorithm and the challenge 
is in computing an (approximately) optimal policy.
However, in practical applications, the distributions are typically \emph{unknown} and must be learned concurrently with decision-making. 

A foundational framework that examines stochastic problems with  unknown distributions is the stochastic online learning model; see books~\cite{CL-Book06,BubeckC-Book12,Hazan-Book16}. Here, the learner interacts with the environment for $T$ days. On each day $t \in [T]$, the learner plays a certain policy $a^{(t)} \in A$, where $A$ represents the set of all policies (actions/algorithms). The environment draws a sample $X^{(t)} \sim \D$, where $\D$ indicates the environment's \emph{unknown} underlying distribution, and then the learner  receives a reward $a^{(t)}(X^{(t)})$ along with some ``feedback''. \newadded{For a maximization problem, }the  goal of the online learning model is to approach the optimal policy with  reward $\opt:= \max_{a \in A} \E_{X \sim \D}[a(X)]$ 
while minimizing in expectation the total regret:
\[ \textstyle T \cdot \opt  - \sum_{t \in [T]} a^{(t)}(X^{(t)}) .\]
The best regret bound that can be achieved for an online learning problem highly depends on the feedback given to the algorithm.
In the full-feedback model, the learner observes the complete sample $X^{(t)}$ as daily feedback. Since accessing the entire sample $X^{(t)}$ is often not feasible in many real-world applications, several partial feedback models have  been considered. The most limiting of them is the bandit feedback model 
where the only feedback available is the reward $a^{(t)}(X^{(t)})$; see books \cite{Slivkins-Book19,LS-Book20}. 


Interestingly, in many online learning scenarios, limiting feedback does not excessively impair the regret bound. For instance, consider the classic Learning from Experts problem where the goal is to identify the optimal action. In this case, for a small action set, both full feedback and bandit feedback result in an optimal regret bound of $\Theta(\sqrt{T})$.
This  motivates us to address the following question for general online stochastic optimization problems:
\begin{quote}
\emph{What is the minimum amount of feedback  necessary to learn a stochastic optimization problem while maintaining a near-optimal regret bound in $T$ as the full feedback model?}
\end{quote}

In addition to being an intellectually intriguing question, there are several other motivations for designing low regret algorithms that operate with limited feedback.


\begin{itemize}[leftmargin=*]
\item In numerous real-world scenarios, accessing the complete sample $X^{(t)}$ as feedback is infeasible. Furthermore, in order to safeguard data privacy to the greatest extent possible, it is advantageous to utilize minimal information in real-world online learning tasks. 
\item An online learning algorithm that operates with less feedback is concurrently applicable to all partial feedback models that incorporate the required feedback. We can therefore obtain near-optimal online learning algorithms that function uniformly across different feedback models.
\end{itemize}
Specifically, in this paper, we address the above question in the context of the fundamental Prophet Inequality and Pandora's Box problems, which have wide-ranging applications in areas such as Mechanism Design, Online Algorithms, Microeconomics, Operations Research, and Optimal Stopping. Our main results imply near-optimal $\widetilde{O}\big(\poly(n)\sqrt{T}\big)$ regret algorithms for both these problems under most limited bandit feedback, where $\widetilde{O}(\cdot)$ hides logarithmic factors.

\IGNORE{\color{blue}Old Introduction: The field of Stochastic Optimization deals with optimization problems under uncertain inputs and has had tremendous success since the work of Bellman in 1957~\cite{Bellman-Book57}. A standard model is that the inputs are random variables  that are drawn from {known} probability distributions. The goal is to design a policy (an adaptive algorithm) to optimize the expected objective function. 
Examples  include Prophet Inequality~\cite{HKS-AAAI07,CHMS-STOC10,KW-STOC12,Rubinstein-STOC16}, Pandora’s Box~\cite{KWW-EC16,Singla-SODA18,GKS-SODA19}, Stochastic Probing~\cite{CIKMR-ICALP09,GN-IPCO13,AN16,GNS-SODA17}, and Optimal Auction Design~\cite{Hartline-Book22, RoughgardenTwenty-Book16}. 
{Most prior works assume that the underlying distributions are known to the algorithm and the challenge 
is in computing an (approximately) optimal policy.
In many applications, however, the distributions are \emph{unknown} and need to be learned while already making decisions.}

In this paper we consider Stochastic Optimization problems in an Online Learning model with partial ``bandit'' feedback. In particular, we will study in this model the fundamental Prophet Inequality and Pandora's Box problems that have applications in areas such as  Mechanism Design, Online Algorithms, Microeconomics,  Operations Research, and Optimal Stopping.

Formally, we study stochastic learning problems that are played over $T$ rounds: in each round $t \in [T]$ we play some policy $x^{(t)}$ and receive a partial  feedback on its performance for the underlying \emph{unknown-but-fixed} distributions. The goal is to 
minimize the  \emph{regret}, which is the difference over $T$ rounds in the expected total value of the optimal algorithm that knows the underlying distributions and the total value of our algorithm  that learns distributions from partial feedback. 
The Bandits model  has been extensively studied  in   ``single-stage'' stochastic learning problems; see, e.g., books~\cite{LS-Book20,Slivkins-Book19,BubeckC-Book12} and related work in \Cref{sec:related}. Although both Prophet Inequality and Pandora's Box admit polytime optimal policies when the distributions are given, these are multi-stage (adaptive) policies where we have limited understanding for unknown distributions. 
 \vspace{-0.15cm}
\begin{quote}
\emph{Can we design low regret Bandit algorithms for multi-stage stochastic problems?}
\end{quote}
 \vspace{-0.15cm}
Our main result is an affirmative answer to this question for both the Prophet Inequality and Pandora's Box problems: we design near-optimal $\widetilde{O}\big(\poly(n)\sqrt{T}\big)$ regret algorithms\footnote{We use $\widetilde{O}(\cdot)$ to hide $\poly\!\log(nT)$ factors in $O(\cdot)$.}. 
}

\subsection{Prophet Inequality  under Bandit Feedback}

In the classical Optimal Stopping  problem of Prophet Inequality~\cite{Krengel-Journal77,Krengel-Journal78,Samuel-Annals84}, we are given distributions $\D_1, \ldots, \D_n$ of $n$ independent random variables. The outcomes $X_i \sim \D_i$  for $i \in [n]$ are revealed one-by-one and we have to immediately select/discard $X_i$ with the goal of maximizing the selected random variable in expectation.  They have become popular in Algorithmic Game Theory in the last 15 years since they imply posted pricing mechanisms that are ``simple'' (and hence more practical) and approximately optimal; see related work in \Cref{sec:related}.

The optimal policy for Prophet Inequality is given by a simple (reverse) dynamic program:  always select  $X_n$ on reaching it and select $X_i$ for $i<n$ if its value is more than the expected value of this optimal policy on $X_{i+1}, \ldots, X_n$. Thus, the optimal policy with expected value $\opt$ can be thought of as a \emph{fixed-threshold} policy where we select $X_i$ iff $X_i > \tau_i$ for $\tau_i$ being the expected value of this policy after $i$. How to design this optimal policy for unknown distributions? (See \Cref{remark:hindsightOPT} on the  ``hindsight optimum'' benchmark.)




As a motivating example, consider a  scenario where you want to sell a perishable item (e.g., cheese) in the market each day for the entire year. For simplicity, assume that there are 8 buyers, one arriving in each hour between 9 am to 5 pm. Your goal is to set price thresholds for each hour to maximize the total value. 
If the buyer value distributions are known, this  can be modeled as a Prophet Inequality problem with $n=8$ distributions. However, for unknown  value distributions  this  becomes a repeated game with a fixed arrival order where on each day you play some price thresholds and obtain a value along with feedback.  Next, we formally describe  this repeated game.

\medskip
\noindent \textbf{Online Learning Prophet Inequality.} 
In this problem the distributions $\D_1, \ldots, \D_n$ of Prophet Inequality are unknown to the algorithm in the beginning. We make the standard  normalization assumption  that each $\D_i$ is supported on $[0,1]$. \newadded{Without this normalization, a non-trivial additive regret is not achievable.} Now we play a $T$ rounds repeated game\footnote{We will always assume $T \geq n$  since otherwise getting an $O(\poly(n))$ regret algorithm is trivial.}: in round $t\in [T]$ we play a policy, which is a set of $n$ thresholds $(\tau^{(t)}_1, \ldots, \tau^{(t)}_n)$, and receives as reward its value on freshly drawn independent random variables $X^{(t)}_1 \sim \D_1, \ldots, X^{(t)}_n\sim \D_n$, i.e., the reward is $X^{(t)}_{\alg(t)}$ where $\alg(t) \in [n]$ is the smallest index $i$ with $X^{(t)}_{i} > \tau^{(t)}_{i}$.  The goal is to minimize the  \emph{total regret}: 
\[ \textstyle T \cdot \opt - \Ex{\sum_{t=1}^T X^{(t)}_{\alg(t)}} .\]
Since per-round reward is bounded by $1$, the goal is to get $o(T)$ regret. Moreover, standard examples show that every algorithm incurs $\Omega(\sqrt{T})$ regret;  see \Cref{sec:lowerBounds}.

An important question is what amount of feedback the algorithm receives after a round. One might consider a \emph{full-feedback} setting, where after each round $t$ the algorithm gets to know the entire sample $X^{(t)}_1, \ldots, X^{(t)}_n$ as feedback, which could be used to update beliefs regarding the distributions $\D_1, \ldots, \D_n$.
Here it is easy to design an $\widetilde{O}\big(\poly(n)\sqrt{T}\big)$ regret algorithm. This is because after discretization, we may assume that the there are only $T$  candidate thresholds for each $X_i$, so there are only $T^{n}$ candidate policies. Now the classical multiplicative weights algorithm~\cite{AHK-TOC12} implies that the regret is  $O\big(\sqrt{T \log (\# \text{policies})}\big) = \widetilde{O}\big(\poly(n)\sqrt{T}\big)$. Although this na\"ive algorithm is not polytime, a recent work of \cite{GHTZ-COLT21}  \newadded{on $O(n/\epsilon^2)$ sample complexity for prophet inequality} can be interpreted as giving a polytime  $\widetilde{O}(\poly(n)\sqrt{T})$ regret algorithm under {full-feedback}\footnote{Their results are in the PAC model for   ``strongly monotone'' stochastic problems.  They immediately imply $\widetilde{O}(\sqrt{nT})$ regret  under full-feedback  using the standard  doubling-trick.}. These results, however, do not extend to \emph{bandit feedback}, where the algorithm does not see the entire sample. 

\medskip
\noindent \textbf{Bandit Feedback.} {In many applications, it is unreasonable to assume that the algorithm gets the entire sample $X^{(t)}_1, \ldots, X^{(t)}_n$. 
For instance, in  the above scenario of selling a perishable item, we may only see the winning bid 
(e.g., if you don't run the shop and delegate someone else to sell the item  at the given price thresholds).  There are several reasonable partial feedback models,  namely:} 
\begin{enumerate}[label=(\alph*)]  
\item {We see $X^{(t)}_1, \ldots, X^{(t)}_{\alg(t)}$ but not $X^{(t)}_{\alg(t)+1}, \ldots, X^{(t)}_n$, meaning that we do not observe the sequence after it has been stopped.} \label{enum:modela}
\item We see the index $\alg(t)$ and the value $X_{\alg(t)}$ that we select but no other $X_i$. \label{enum:modelb}
\item We only see the value of $X_{\alg(t)}$ that we select and not even the index $\alg(t)$. \label{enum:modelc}
\end{enumerate}
What is the least amount of feedback needed to obtain $\widetilde{O}(\poly(n)\sqrt{T})$  regret? 

Our first main result is that even with the most restrictive feedback~\ref{enum:modelc}, it is possible to obtain $\widetilde{O}(\poly(n)\sqrt{T})$ regret. Thus, the same bounds also hold under \ref{enum:modela} and \ref{enum:modelb}. 
Note that these bounds are almost optimal because standard examples show that even with full feedback every algorithm incurs $\Omega(\sqrt{T})$ regret (see \Cref{sec:lowerBounds}).

\begin{restatable}{Theorem}{mainProphet} \label{thm:mainProphet}
There is a polytime algorithm  with  $O(n^3\sqrt{T}\log T)$  regret for the Bandit Prophet Inequality problem where we only receive the selected value as the feedback.
\end{restatable}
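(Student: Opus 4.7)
The plan is to maintain, for each position $i \in [n]$, a confidence interval $[\ell_i^{(t)}, u_i^{(t)}]$ that contains with high probability the optimal threshold $\tau_i$, namely the expected reward of the optimal policy restricted to $X_{i+1}, \ldots, X_n$. These thresholds satisfy the backward recursion $\tau_n = 0$ and $\tau_{i-1} = \tau_i + \E[(X_i - \tau_i)^+]$, and are exactly the quantities the optimal policy compares each $X_i$ against. Refining $[\ell_i, u_i]$ therefore reduces to estimating $\E[(X_i - c)^+]$ for a probe value $c$ inside the current interval, and then propagating the new bound through the recursion.

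The central algorithmic idea is an exploration round that disambiguates the bandit feedback despite seeing only a single value. Setting $\tau_j^{(t)} = 1$ for every $j \neq i$ makes stopping at those positions impossible, since values live in $[0,1]$ and the stopping rule is strict; setting $\tau_i^{(t)} = c$ then produces feedback that is either $0$ (meaning $X_i \leq c$) or equal to $X_i$ conditioned on $X_i > c$. A standard Hoeffding analysis on $k$ such samples shrinks the interval at position $i$ by $\widetilde{O}(1/\sqrt{k})$, and repeating this bottom-up for $i = n, n-1, \ldots, 1$ ensures that by the time we refine $[\ell_i, u_i]$ the downstream thresholds $\tau_{i+1}, \ldots, \tau_n$ are already accurate, keeping error propagation through the recursion under control.

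The main obstacle is that each such exploration round can waste almost all of the optimal reward, so a naive explore-then-commit only reaches $T^{2/3}$ regret. To push down to $\sqrt{T}$ I would introduce a learnable regret-upper-bound function: using the telescoping identity from the backward recursion, for any threshold vector $\hat{\tau}$ with $\hat{\tau}_i \in [\ell_i, u_i]$ the per-round exploitation regret is at most $\sum_{i=1}^n (u_i - \ell_i) \cdot q_i^{(t)}$ for reach-probabilities $q_i^{(t)} \in [0,1]$. Pairing exploration rounds (which shrink some width) against exploitation rounds (which pay for the current widths) in a UCB-style schedule yields the claimed $\widetilde{O}(n^3 \sqrt{T})$ bound after balancing; the factors of $n$ arise from the sum over positions, from error propagation through the recursion, and from the exploration-exploitation balance. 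The hardest technical point will be controlling the interlocking dependence of the confidence intervals so that widths remain statistically valid despite being computed from noisy downstream estimates, which I expect is the role of the \clever and \smart subroutines hinted at in the preamble.
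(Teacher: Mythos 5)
Your proposal correctly identifies the backward recursion for the optimal thresholds, the idea of maintaining confidence intervals, and even the $T^{2/3}$ barrier that pure exploration hits. But it does not actually escape that barrier, and the reason is precisely the one you flag. The exploration rounds you describe — $\tau_j^{(t)}=1$ for every $j\neq i$ — incur $\Theta(1)$ regret \emph{regardless of how much is already known}, since they deliberately forgo the reward at all positions except $i$. To drive your exploitation bound $\sum_i (u_i-\ell_i)\,q_i^{(t)}$ down to $1/\sqrt{T}$ you would have to estimate each $\tau_i^*$ to accuracy $\Theta(1/\sqrt{T})$, which by Hoeffding costs $\Theta(T)$ such rounds — already linear regret. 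The optimal trade-off between $\Theta(1)$-regret exploration and $\epsilon$-regret exploitation is $T^{2/3}$; no ``UCB-style schedule'' helps, because UCB's $\sqrt{T}$ guarantee relies on the optimistic action itself having regret proportional to the confidence width, whereas your exploration actions have $\Theta(1)$ regret by construction. (This is exactly what the paper's introduction means by ``we don't get unbiased samples with low regret.'')

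The missing ingredient, which is the heart of the paper's argument, is that after an initialization phase of cost $O(n^3\sqrt{T}\log T)$, the confidence-interval refinement is performed \emph{while playing only thresholds inside the current intervals}: $\tau_j=\ucb_j$ for $j<i$ (to maximize the probability of reaching $i$), $\tau_j=\newthreshold_j$ (the just-refined lower bounds) for $j>i$, and $\tau_i\in\{\lcb_i,\ucb_i\}$. Every such round therefore has regret $O(n^2\epsilon)$, not $\Theta(1)$. The observable is the empirical reward difference $\hat D_i\approx P_i\bigl(R_i(\ucb_i)-R_i(\lcb_i)\bigr)$, and the algebraic trick is that combining this with a CDF estimate $\hat F_i$ of only coarse accuracy $T^{-1/4}$ (cheap to obtain during initialization) yields an estimate of the bounding function $\delta_i(\tau)=P_i\bigl(F_i(\ucb_i)-F_i(\lcb_i)\bigr)(\tau-\algafterprime{i+1})$ whose error is $O(T^{-1/2})$, because the CDF error and the interval width both scale as $T^{-1/4}$ and multiply. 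Note also that this $\delta_i$ is strictly stronger than your $\sum_i(u_i-\ell_i)q_i$: it carries the extra factor $F_i(\ucb_i)-F_i(\lcb_i)$, which lets the regret bound shrink to $1/\sqrt{T}$ without ever shrinking the interval widths below $T^{-1/4}$. Without both of these ideas — refining inside the intervals, and a bounding function that is a product of two quantities each individually only $T^{-1/4}$-accurate — the $\sqrt{T}$ rate is out of reach. (Incidentally, the \clever and \smart subroutines you reference belong to the Pandora's Box algorithm, not this theorem.)
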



\noindent({We remark that it is possible to improve the $n^{3}$ factor in this result but we do not optimize it to keep the presentation cleaner.})

\newadded{\Cref{thm:mainProphet} may come as a surprise since there are several stochastic problems that admit $O(\poly(n)/\epsilon^2)$ sample complexity but do not admit $\widetilde{O}\big(\poly(n)\sqrt{T}\big)$ regret bandit algorithms. Indeed, a close variant of prophet inequality is sequential posted pricing. Here, the reward is defined as the revenue, i.e., it is the threshold itself if a random variable crosses it rather than the value of the random variable (welfare) as in prophet inequality. It is easy to show that sequential posted pricing has $O(1/\epsilon^2)$ sample complexity \cite{GHTZ-COLT21}, but even for $n=1$ every bandit algorithm incurs $\Omega(T^{2/3})$ regret \cite{LSTW-SODA23}.}

One might wonder whether $\widetilde{O}\big(\poly(n)\sqrt{T}\big)$ regret in \Cref{thm:mainProphet}  holds even for adversarial online learning, i.e., where $X^{(t)}_1 , \ldots, X^{(t)}_n$ are chosen by an adversary in each round $t$ and we compete against the optimal fixed-threshold policy in hindsight. In \Cref{sec:lowerBounds} we prove that this is impossible since every online learning algorithm incurs $\Omega(T)$ regret for adversarial inputs, even under full-feedback.

\begin{Remark}[Hindsight Optimum] \label{remark:hindsightOPT}
There is a lot of work on Prophet Inequality (with Samples) where the benchmark is the expected hindsight optimum $\Ex{\max X_i}$; see \Cref{sec:related}. However, we will be interested in the more realistic benchmark of the optimal policy, or in other words the optimal solution to the underlying MDP, which is standard  in stochastic optimization.
Firstly, comparing to the hindsight optimum does not  make sense for most stochastic problems, including Pandora's Box, since it cannot be achieved even approximately. Secondly,  optimal policy gives us a much more fine-grained picture than comparing to the offline optimum. For instance, it is known that a single sample suffices to get the optimal 2-competitive guarantee compared to the offline optimum~\cite{RWW-ITCS20}. This  might give the impression that there is nothing to be learned about the distributions for Prophet Inequality and sublinear regrets are impossible. However, this   is incorrect as \Cref{thm:mainProphet} obtains sublinear regret bounds w.r.t. the optimal policy.  
\end{Remark}

\subsection{Pandora's Box under Bandit Feedback}
The Pandora's Box problem was introduced by Weitzman,  motivated by Economic search applications \cite{Weitzman-Econ79}.  For example, how should a large organization decide between  competing research technologies to produce some commodity. 
In the classical setting, we are given  distributions $\D_1, \ldots, \D_n$ of $n$ independent random variables. The outcome $X_i \sim \D_i$  for $i \in [n]$ can be obtained by the algorithm by paying a known \emph{inspection cost} $c_i$. The goal is to find a policy to adaptively inspect a subset $S \subseteq [n]$ of the random variables to maximize \emph{utility}:  $\Ex{\max_{i \in S} X_i - \sum_{i\in S} c_i}$. Note that unlike the Prophet Inequality, we may now inspect the random variables in any order by paying a cost and we don't have to immediately accept/reject $X_i$.  

Even though Pandora's Box has an exponential state space,  \cite{Weitzman-Econ79} showed a simple optimal policy where we  inspect in a fixed order (using ``indices'') along with a stopping rule. We  study this problem  in the Online Learning model where the distributions $\D_i$  supported on $[0,1]$ are {unknown-but-fixed}.  Without loss of generality,  we will assume that the deterministic costs $c_i \in [0,1]$ are known to the algorithm\footnote{If the costs $c_i$   are unknown but fixed then the problem trivially reduces to the case of known costs. This is because we could simply open each box once without keeping the prize inside and receive as feedback the  cost $c_i$.}.

Formally, in Online Learning for Pandora's Box  we play a $T$ rounds repeated game where in round $t\in [T]$ we play a policy $a^{(t)}$, which is an order of inspection along with a stopping rule. As reward, we receive our utility  (value minus total inspection cost) on freshly drawn independent random variables $X^{(t)}_1 \sim \D_1, \ldots, X^{(t)}_n\sim \D_n$.  The goal is to minimize the total \emph{regret}, which is the difference  over $T$ rounds in the expected utility  of the optimal algorithm that knows the underlying distributions and the total utility of our algorithm. 

In the full-feedback setting  the algorithm receives the entire sample $X^{(t)}_1, \ldots, X^{(t)}_n$ as feedback in each round. Here,  it is again easy to design an $\widetilde{O}\big(\poly(n)\sqrt{T}\big)$ regret polytime algorithm relying on the results in \cite{GHTZ-COLT21,FL-NeurIPS20}. But these results do not extend to partial feedback.

There are again multiple ways of defining partial feedback.  E.g., we could see the values of all $X_i$  for $i \in S$, meaning that we get to see the values of the inspected random variables. Indeed, our results again apply to the most restrictive form of partial feedback: We  only see the total utility of a policy and not even the indices of inspected random variables or any of their values.

\begin{restatable}{Theorem}{mainThmPB} \label{thm:mainPandora}
There is a polytime algorithm  with  $O(n^{5.5}\sqrt{T}\log T)$  regret for the Bandit Pandora's Box problem where we only receive utility (selected value minus total cost) as  feedback.
\end{restatable}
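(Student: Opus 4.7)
The plan is to mirror the high-level framework developed for \Cref{thm:mainProphet} and adapt it to the structure of Pandora's Box. I first recall Weitzman's classical characterization of the optimal policy: each box $i$ has a reservation value (index) $\sigma_i \in [0,1]$ defined as the unique solution of $\E[\max(X_i - \sigma_i, 0)] = c_i$, and the optimal policy inspects boxes in decreasing order of $\sigma_i$, stopping as soon as the best prize already observed exceeds the next box's index. Thus the optimal policy is encoded by the $n$-tuple $(\sigma_1, \ldots, \sigma_n)$, and my algorithm will maintain confidence intervals $[\lcb_i, \ucb_i]$ around each $\sigma_i$ and use them to drive exploration.

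The algorithm proceeds in epochs: at any round it plays a Weitzman-style index policy using guesses drawn from the current intervals. The key observation is that to tighten $[\lcb_i, \ucb_i]$ it suffices to estimate $\E[\max(X_i - \tau, 0)]$ at a candidate $\tau$ inside the interval and compare it to the known cost $c_i$, since the sign of $\E[\max(X_i - \tau, 0)] - c_i$ determines on which side of $\tau$ the true $\sigma_i$ lies. Because this quantity is monotone in $\tau$, a bisection/UCB-style refinement shrinks every interval to width $\widetilde{O}(1/\sqrt{T})$ after $\widetilde{O}(\sqrt{T})$ well-chosen queries per box.

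The main obstacle, and where \clever and \smart come in, is extracting such an estimate from pure utility feedback while paying only small additional regret. I would design a \smart-style subroutine: run a paired policy consisting of the current best Weitzman policy and a modified variant that treats box $i$ with a candidate threshold $\tau$, and then compare averaged utilities; by a careful coupling, the difference isolates $\E[\max(X_i - \tau, 0)] - c_i$ up to controllable bias coming from changes in the stopping time and in the order of subsequent boxes. A \clever subroutine then moves $\lcb_i$ or $\ucb_i$ according to the sign of this estimator. The new difficulty over Prophet Inequality is that a swap at box $i$ can alter which later boxes are inspected, so I would need a Lipschitz-style lemma showing that the utility of Weitzman's policy loses only $\poly(n)$ per unit perturbation of any single index.

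Finally, I would bound the total regret as the sum of three contributions: (i) the exploitation loss from using Weitzman indices that are off by at most the current interval width, which costs $\poly(n)$ per unit slack by the Lipschitz lemma; (ii) the exploration cost of $\widetilde{O}(\sqrt{T})$ \smart queries per box, each running for $\poly(n)$ rounds and incurring $O(1)$ per-round regret; and (iii) lower-order terms from high-probability confidence failures. Putting the pieces together yields $O(n^{5.5}\sqrt{T}\log T)$. The hardest step is controlling the bias in the \smart estimator: because Pandora's stopping rule is adaptive, a naive swap propagates errors through the entire remaining inspection sequence, and it is precisely this propagation, together with the extra care needed to align the two coupled runs over boxes of differing indices, that forces the additional $n^{2.5}$ factor on top of the $n^3$ factor seen in the Prophet Inequality analysis.
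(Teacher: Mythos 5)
Your proposal captures the surface-level architecture of the paper's argument (confidence intervals around the Weitzman indices, paired-policy probes to refine them, a doubling schedule), but it misses the two ideas that actually make the proof go through.

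\textbf{First, the bisection claim is false under bandit feedback.} You assert that a bisection/UCB refinement ``shrinks every interval to width $\widetilde{O}(1/\sqrt{T})$ after $\widetilde{O}(\sqrt{T})$ well-chosen queries per box.'' The signal your paired probe produces is not an unbiased estimate of $g_i(\tau) = \E[\max(X_i-\tau,0)]-c_i$ itself; it is (after cancellation) proportional to $\big(F_{\pi,i}(\lcb_i)-F_{\pi,i}(\ucb_i)\big)\cdot g_i(\tau)$, where $F_{\pi,i}(x)$ is the probability of reaching box $i$ with a current best prize below $x$. If the probability mass $F_{\pi,i}(\lcb_i)-F_{\pi,i}(\ucb_i)$ is tiny, no amount of sampling within a $\sqrt{T}$-round budget recovers the sign of $g_i(\tau)$ accurately, so the interval cannot be shrunk. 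The paper explicitly embraces this: it never tries to shrink interval \emph{widths} to $1/\sqrt{T}$, but instead uses the \emph{bounding function} $\delta_i(\tau) = -(F_{\pi,i}(\ucb_i)-F_{\pi,i}(\lcb_i))\,g_i(\tau)$ and shrinks intervals only to where $|\delta_i|$ is small. The point is that when the reach-probability is low, the regret of using the ``wrong'' $\tau_i$ is also low, so a wide interval is harmless. The \clever\ policy is then chosen to \emph{maximize} this reach-probability so that the learnability of $\delta_i$ and the regret from uncertainty in $\tau_i$ are balanced. Without this idea your exploitation-loss term (i) and your exploration-cost term (ii) do not compose into $\widetilde{O}(\sqrt{T})$: either the width stalls at some non-negligible level and (i) is $\Omega(T)$, or you over-query a low-reach box and (ii) blows up.

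\textbf{Second, learning the inspection order is more than a Lipschitz lemma.} You gesture at ``a Lipschitz-style lemma showing that the utility of Weitzman's policy loses only $\poly(n)$ per unit perturbation of any single index.'' This does not exist in the form you need: the expected utility as a function of the thresholds is not continuous (let alone Lipschitz), because swapping two boxes whose indices cross can cause a discrete change in what is inspected downstream. The paper instead maintains an explicit set $S$ of \emph{order constraints} (a DAG on boxes enforced to respect the partial order revealed so far), uses the \smart\ subroutine to test a single adjacent pair at a time via a matched pair of policies, and then proves a combinatorial lemma (\Cref{RegretBound}) that any policy valid for the current constraint group can be converted to the optimal policy with at most $2n^2$ single-threshold moves and $2n^2$ adjacent swaps, each of whose cost is bounded by what \clever\ and \smart\ certify. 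That structured decomposition---not a global Lipschitz property---is what bounds the per-round regret of a valid policy by $O(n^3\epsilon)$. Your proposal would need to either reconstruct this DAG-plus-swap argument or prove a substitute for it; as written there is a genuine gap. (You also have \clever\ and \smart\ swapped relative to their roles in the paper, but that is cosmetic.)
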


Again, standard examples show that every algorithm incurs $\Omega(\sqrt{nT})$ regret even will full feedback; see \Cref{sec:lowerBounds}. Furthermore, we will prove in \Cref{sec:lowerBounds} that \Cref{thm:mainPandora} cannot hold for adversarial online learning where $X^{(t)}_1 , \ldots, X^{(t)}_n$ are chosen by an adversary: every online learning algorithm incurs $\Omega(T)$ regret for adversarial inputs, even under full-feedback.

 \subsection{High-Level Techniques}

Let's consider the general Prophet Inequality problem or the subproblem of Pandora's Box where the optimal order is given. In both cases, a policy is described by $n$ thresholds $\tau_1, \ldots, \tau_n \in [0,1]$, defining when to stop inspecting. It would be tempting to apply standard multi-armed bandit algorithms to maximize the expected reward over $[0,1]^n$. However, such approaches are bound to fail because the expected reward is not even continuous\footnote{For example, consider the Prophet Inequality instance in which $X_1$ is a distribution that returns $\frac{1}{4}$ w.p. $\frac{1}{2}$ and $\frac{3}{4}$ otherwise, while $X_2$ is a distribution that always returns $\frac{1}{2}$. The reward of this example is a piece-wise constant function: When $\tau < \frac{1}{4}$ or $\tau \geq \frac{3}{4}$, the expected reward is $\frac{1}{2}$. When $\frac{1}{4} \leq \tau < \frac{3}{4}$, the expected reward is $\frac{5}{8}$.}, let alone convex or Lipschitz. Discretizing the action space and applying a bandit algorithm only leads to  $\Omega(T^{2/3})$ regret. Another reasonable approach is to try to learn the distributions $\D_i$. However, recall that we only get feedback regarding the overall reward of a policy and do not  see which $X_i$ is selected. It is possible to obtain samples from each $X_i$ by considering policies that ignore all other boxes; however, such algorithms that use \emph{separate} exploration and exploitation also have  $\Omega(T^{2/3})$ regret.

Our algorithms combine exploration and exploitation.  We maintain confidence intervals $[\lcb_i, \ucb_i]$ for $i \in [n]$ satisfying w.h.p.\,that the optimal thresholds  $\tau^*_i \in [\lcb_i, \ucb_i]$. 
The crucial difference  from UCB-style algorithms~\cite{ACF-ML02} is that we don't get unbiased samples with low regret, so we cannot maintain or play upper confidences. Instead, we need a ``refinement'' procedure to shrink the intervals while ensuring that the regret during the refinement is bounded.

\IGNORE{
The crucial difference  from UCB algorithms~\cite{ACF-ML02} is that our 
 confidence intervals  apply to the price thresholds, meaning that they apply to the space of all policies, whereas in a UCB-style algorithm one has confidence intervals on the distributions. Interestingly, our algorithms do not  try to learn the underlying distributions,  apart from a short initialization phase where we obtain an  estimate $\hat{F}_i$ of the CDF of $X_i$. 
Afterwards, the confidence intervals are \emph{refined}. This is the main challenge since we need to keep the regret during refinement bounded. Unlike UCB algorithms,  we do not obtain unbiased samples from the distributions, so cannot  play the ``upper confidences'' during refinement.
}

More precisely, our algorithm works in $O(\log T)$ phases. In each phase, we start with confidence intervals  $[\lcb_i, \ucb_i]$ that  satisfy: (i)~$\tau^*_i \in [\lcb_i, \ucb_i]$ and (ii)~playing any thresholds  within the confidence intervals incur at most some $\epsilon$ regret. During the phase, we refine the confidence interval to $[\lcb'_i, \ucb'_i]$ while only playing thresholds  within our original confidence intervals, so that we don't incur much regret. We will show that the new confidence intervals  satisfy that  $\tau^*_i \in [\lcb'_i, \ucb'_i]$ and   that playing any thresholds  within  $[\lcb'_1, \ucb'_1], \ldots, [\lcb'_n, \ucb'_n]$ incur at most  $\frac{\epsilon}{2}$ regret. Thus,   the regret bound goes down by a constant factor in each phase.  

\medskip
\noindent \textbf{Bounding Function to Refine for $n=2$.}
To illustrate the idea behind a refinement phase, let's discuss the case of $n=2$; see \Cref{PIPB2} for more technical details. In this case, there is only one confidence interval $[\lcb, \ucb]$ that we have to refine. Our idea is to define a ``bounding function'' $\delta(\cdot)$ such that the expected regret in a single round when using threshold $\tau \in [\lcb, \ucb]$ is bounded by $\lvert \delta(\tau) \rvert$. Ideally, we would like to choose the optimal threshold $\tau^*$ for which $\delta(\tau^*) = 0$. However, this  requires the knowledge of $\delta$, which we don't have since the distributions are unknown. Instead, we compute an estimate $\hat{\delta}$ of $\delta$ and construct the new confidence interval $[\lcb', \ucb']$ to include all $\tau$ for which $\lvert \hat{\delta}(\tau) \rvert$ is small. The main technical difficulty is to obtain $\hat{\delta}$ while only playing low-regret policies. We achieve this by choosing $\delta$ such that $\hat{\delta}$ can be obtained by using only the estimates $\hat{F}_i$  of the CDF and the empirical average rewards when choosing the boundaries of the confidence interval as thresholds. Note that we do not make any statements about the width of the confidence interval; we only ensure that the regret is bounded when choosing any threshold inside the confidence interval.

\medskip
\noindent \textbf{Prophet Inequality for General $n$.} 
In the case of general $n$, each refinement phase updates the confidence intervals from the last random variable $X_n$ to the first one $X_1$. To refine  confidence interval $[\lcb_i, \ucb_i]$, we use our algorithm for the $n=2$ case as a subroutine, i.e.,  we play $\lcb_i$ and $\ucb_i$ sufficiently many times keeping the other thresholds fixed. However, there are several challenges in this approach. The first important one is that the probability of reaching $X_i$ will change depending on which thresholds are applied before it. We deal with this issue  by always using thresholds from our confidence intervals that maximize the probability of reaching $X_i$. Another important challenge while refining $[\lcb_i, \ucb_i]$ is that   the current choice of thresholds for $X_{i+1},\ldots, X_n$ is not optimal, so we maybe learning a threshold different from $\tau^*_i$. We handle this issue by choosing the other thresholds  in a way that they only improve from phase to phase. We then leave some space in the confidence intervals to accommodate for the improvements in later phases. 

\medskip
\noindent \textbf{Pandora's Box for General $n$.} We still maintain confidence and refine them using ideas similar to Prophet Inequality for general $n$.
The main additional challenge arising in Pandora's box is that the inspection ordering is not fixed. The optimal order is given by ordering the random variables by decreasing thresholds. However, there might be multiple orders consistent with our confidence intervals. Therefore, we keep a set  $S$ of constraints corresponding to a directed acyclic graph on the variables, where an edge from $X_i$ to $X_j$ means that $X_i$ comes before $X_j$ in the optimal order. We update this set by 
consider pairwise swaps. Then, during refinement of confidence interval $[\lcb_i,\ucb_i]$, we choose an inspection order satisfying these constraints while (approximately) maximizing a difference of products objective.

\subsection{Further Related Work} \label{sec:related}
There is a long line of work on both Prophet Inequality (PI) and Pandora's Box (PB), so we only discuss the most relevant papers. For more references, see~\cite{Lucier-Survey17,Singla-Thesis18}. Both PI and PB are classical single-item selection problems, but were popularized in TCS in \cite{HKS-AAAI07} and \cite{KWW-EC16}, respectively, due to their applications in mechanism design. Extensions of these problems to combinatorial settings have been studied in \cite{CHMS-STOC10,KW-STOC12,FGL-SODA15,FSZ-SODA16,Rubinstein-STOC16,RS-SODA17,EFGT-EC20} and in \cite{KWW-EC16,Singla-SODA18,GKS-SODA19,GJSS-IPCO19,FTWWZ-ICALP21}, respectively.
Although the optimal policy for  PI with known distributions   is   a simple dynamic program, designing optimal policies for free-order or in combinatorial PI settings is challenging. Some recent works designing approximately-optimal policies are~\cite{ANSS-EC19,PPSW-EC21,SS-EC21,LLPSS-EC21,BDL-EC22}.

Starting with  Azar, Kleinberg, and Weinberg~\cite{AKW-SODA14}, there is a lot of work on PI-with-Samples where the distributions are unknown but the algorithm has sample access to it~\cite{CDFS-EC19,RWW-ITCS20,GHTZ-COLT21,CDFFLLP-SODA22}. These works, however, compete against the benchmark of expected hindsight optimum, so lose at least a multiplicative factor of $1/2$ due the classical single-item PI and do not admit sublinear regret algorithms. 

The field of Online Learning under both full- and bandit-feedback is  well-established; see books~\cite{CL-Book06,BubeckC-Book12,Hazan-Book16,Slivkins-Book19,LS-Book20}. Most of the initial works focused on obtaining sublinear regret for single-stage problems (e.g., choosing the reward maximizing arm). The last decade has seen progress on learning  multi-stage policies for tabular MDPs under bandit feedback; see \cite[Chapter 38]{LS-Book20}.  However, these  algorithms have a regret that is polynomial in the state space, so they do not apply to PI and PB that  have large MDPs.

Finally, there is  some recent work at the intersection of Online Learning and Prophet Inequality/Pandora's Box~\cite{EHLM-AAAI19,atsidakou2022contextual,GT-ICML22}. These models are significantly different from ours, so do not apply to our problems. The closest one is \cite{GT-ICML22}, where the authors consider Pandora's Box under partial feedback (akin to model \ref{enum:modela}), but for \emph{adversarial} inputs (i.e.,  no underlying distributions). They obtain $O(1)$-competitive algorithms and leave open whether sublinear regrets are possible~\cite{Gergatsouli-22}.  Our lower bounds in \Cref{sec:lowerBoundAdvers} resolve this question by showing that sublinear regrets are impossible for adversarial inputs (even under full feedback), and one has to lose a multiplicative factor in the approximation.


\section{Prophet Inequality and Pandora's Box for $n = 2$}\label{PIPB2}

In this section, we give $O(\sqrt{T}\log T)$ regret algorithms for both Bandit Prophet Inequality  and Bandit Pandora's Box problems with $n=2$ distributions. We discuss this special case of \Cref{thm:mainProphet} before  since it's already non-trivial and showcases one of our main ideas of designing a regret bounding function  that is learnable while playing low-regret Bandit policies.

Our algorithms run in  $O(\log T)$ phases, where the number of rounds doubles each phase. Starting with an initial confidence interval containing the optimal threshold $\tau^* $, the goal of each phase is to refine this interval such that the one-round regret drops by a constant factor for the next phase. In \Cref{ISASec} we discuss each phase's  algorithm for  Prophet Inequality with $n=2$. In \Cref{sec:doublingFramework} we give a generic doubling framework  that combines all phases to prove total regret bounds. Finally, in \Cref{sec:PBTwoBoxes} we extend these ideas to Pandora's Box with $n=2$.


\subsection{Prophet Inequality via an Interval-Shrinking Algorithm}\label{ISASec}

  We first introduce the setting of the Bandit Prophet Inequality Problem with two distributions. Let $\D_1, \D_2$  denote the  two unknown distributions over $[0,1]$ with cdfs $F_1,F_2$ and densities $f_1,f_2$. Consider a $T$ rounds game where in each round $t$ we play a threshold $\tau^{(t)} \in [0, 1]$ and receive as feedback the following reward:
  \begin{itemize}
      \item Independently draw  $X_1^{(t)}$ from  $\D_1$.   If $X_1^{(t)} \geq \tau^{(t)}$, return $X_1^{(t)}$ as the reward.
      \item Otherwise, independently draw $X_2^{(t)}$ from $\D_2$ and return it as the reward.
  \end{itemize}
The only  feedback we receive is the reward, and not even which random variable gets selected.
  
If the distributions are known then the optimal policy is to play  $\tau^* := \Ex{X_2}$ in each round. For $\tau \in [0,1]$,  let $R(\tau)$ be the expected reward of playing one round with threshold $\tau$, i.e.,
  \begin{align} \label{eq:RtauPITwp}
      R(\tau) ~:=~ F_1(\tau)\cdot \Ex{X_2} + \int_\tau^1 x\cdot f_1(x)dx ~=~ 1+F_1(\tau)(\Ex{X_2} - \tau) - \int_\tau^1 F_1(x)dx ,
  \end{align}
  where the second equality uses integration by parts. The \emph{total regret}  is 
$\textstyle T \cdot R(\tau^*) - \sum_{t=1}^T R(\tau^{(t)}) .$

\medskip
\noindent \textbf{Initialization.}  For the initialization, we get $\Theta(\sqrt{T}\log T)$ samples from both $\D_1$ and $\D_2$ by playing $\tau = 0$ and $\tau=1$, respectively. This  incurs $\Theta(\sqrt{T}\log T)$  regret since each round incurs at most $1$ regret. 
The following simple lemma uses the  samples to obtain  initial distribution estimates.

\begin{Lemma}
\label{PI2Init}
After getting $C \cdot sqrt{T}\log T$ samples from $\D_1$ and  $\D_2$, with probability $1 - T^{-10}$ we can:
\begin{itemize}
    \item Calculate $\hat{F}_1(x)$ such that
    $|\hat{F}_1(x) - F_1(x)| \leq T^{-\frac{1}{4}}$ for all $x \in [0, 1]$ simultaneously.
    \item Calculate $\lcb$ and $\ucb$ such that $\ucb - \lcb \leq T^{-\frac{1}{4}}$ and  $\Ex{X_2} \in [\lcb, \ucb]$.
\end{itemize}
\end{Lemma}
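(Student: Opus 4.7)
The plan is to derive both bullets from standard concentration inequalities followed by a single union bound. Set $m := C\sqrt{T}\log T$ for the sample count from each of $\D_1$ and $\D_2$.

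For the first bullet, let $\hat{F}_1(x) := \frac{1}{m}\sum_{i=1}^m \one[X_1^{(i)} \leq x]$ be the empirical cdf. Since closeness is required uniformly over all $x\in[0,1]$ simultaneously, the natural tool is the Dvoretzky--Kiefer--Wolfowitz (DKW) inequality, which gives
\[
\Pr{\sup_{x\in[0,1]} |\hat{F}_1(x) - F_1(x)| \geq \epsilon} \leq 2e^{-2m\epsilon^2}.
\]
Plugging in $\epsilon = T^{-1/4}$ and $m = C\sqrt{T}\log T$ reduces the right-hand side to $2T^{-2C}$, which is at most $\tfrac{1}{2}T^{-10}$ once $C$ is a sufficiently large absolute constant.

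For the second bullet, let $\hat{\mu}_2 := \frac{1}{m}\sum_{i=1}^m X_2^{(i)}$ be the empirical mean of the $\D_2$ samples. Because $X_2^{(i)}\in[0,1]$, Hoeffding's inequality yields
\[
\Pr{|\hat{\mu}_2 - \Ex{X_2}| \geq \epsilon} \leq 2e^{-2m\epsilon^2}.
\]
Taking $\epsilon = \tfrac{1}{2}T^{-1/4}$ bounds this by $2T^{-C/2}$, again at most $\tfrac{1}{2}T^{-10}$ for $C$ sufficiently large. On the corresponding good event I would set $\lcb := \hat{\mu}_2 - \tfrac{1}{2}T^{-1/4}$ and $\ucb := \hat{\mu}_2 + \tfrac{1}{2}T^{-1/4}$, so that $\ucb - \lcb = T^{-1/4}$ and $\Ex{X_2}\in[\lcb,\ucb]$ both hold by construction. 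A union bound over the two failure events gives the joint conclusion with probability at least $1 - T^{-10}$.

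There is no real obstacle here---the lemma is a clean concentration statement about empirical estimators computed from i.i.d.\ samples. The only mild subtlety worth flagging is that the first bullet demands a uniform guarantee over all $x\in[0,1]$ simultaneously, which is precisely why DKW is the right tool rather than a pointwise Hoeffding union-bounded over a discretization of $[0,1]$; the latter would introduce extra logarithmic factors in $T$ and would not deliver the continuous ``for all $x$'' statement without additional monotonicity arguments.
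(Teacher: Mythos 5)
Your proposal is correct and follows essentially the same route as the paper: DKW for the uniform cdf bound, Hoeffding for the empirical mean of $X_2$, and a union bound over the two failure events. (In fact, your computation of the Hoeffding tail as $2T^{-C/2}$ for $\varepsilon = \tfrac{1}{2}T^{-1/4}$ is more careful than the paper's, which writes $2T^{-2C}$ at that step; the conclusion is unaffected since $C$ is a free constant.)
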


\begin{proof}

The first statement follows the DKW inequality (\Cref{DKW}). After taking $N = C\cdot \sqrt{T}\log T$ samples, the probability that $\exists x$ s.t. $|\hat{F}_1(x) - F_1(x)| > \varepsilon = T^{-\frac{1}{4}}$ is at most $2\exp(-2N\varepsilon^2) = 2T^{-2C} < T^{-2C+1}$. So, the first statement holds with probability at least $1 - T^{-11}$ when $C > 10$.

The second statement follows the Hoeffding's Inequality (\Cref{Hoeffding}). After taking $N = C\cdot \sqrt{T}\log T$ samples,   let $\mu$ be the average reward. Let  $\lcb = \mu - \varepsilon$ and $\ucb = \mu + \varepsilon$ for $\varepsilon = \frac{1}{2} T^{-\frac{1}{4}}$. Then, $\ucb - \lcb \leq T^{-\frac{1}{4}}$ by definition. Since the reward of each sample in inside $[0, 1]$, by Hoeffding's Inequality the probability that $|\mu - \Ex{X_2}| > \varepsilon$ is bounded by $2\exp(-2N\varepsilon^2) = 2T^{-2C} < T^{-2C+1}$. So, the second statement holds with probability at least $1 - T^{-11}$  when $C > 10$. Taking a union bound for two statements gives the desired lemma.
\end{proof}

Next we discuss our core algorithm.

\medskip
\noindent \textbf{Interval-Shrinking Algorithm.}
 Starting with an initial confidence interval containing $\tau^*=\Ex{X_2}$, our Interval-Shrinking algorithm (\Cref{ISA}) runs for $\Theta(\frac{\log T}{\epsilon^2})$ rounds and outputs a refined confidence interval.  In the following lemma, we will show that this  refined interval  still contains $\tau^*$ and that the regret of playing any $\tau$ inside this refined interval is bounded by $O(\epsilon)$.

\begin{algorithm}[tbh]
\caption{Interval-Shrinking Algorithm for Prophet Inequality}
\label{ISA}
\KwIn{Interval $[\lcb, \ucb]$,  approximate cdf $\hat{F}_1(x)$, and accuracy $\epsilon$.}
Run $C \cdot \frac{\log T}{\epsilon^2}$ rounds with $\tau = \lcb$. Let $\hat{R}_\lcb$ be the average reward. \\
Run $C \cdot \frac{\log T}{\epsilon^2}$ rounds with $\tau = \ucb$. Let $\hat{R}_\ucb$ be the average reward. \\
For $\tau \in [\lcb, \ucb]$, define $\hat{\Delta}(\tau) := \hat{F}_1(\ucb)(\tau - \ucb) - \hat{F}_1(\lcb)(\tau - \lcb) + \int_{\ell}^{\ucb} \hat{F}_1(x) dx$. \\
For $\tau \in [\lcb, \ucb]$, define $\hat{\delta}(\tau) := \hat{\Delta}(\tau) - (\hat{R}_\ucb - \hat{R}_\lcb)$. \\
Let $\lcb' := \min \{\tau \in [\lcb, \ucb] \text{ s.t. } \hat{\delta}(\tau) \geq -5\epsilon \}$  and let $\ucb' := \max \{\tau \in [\lcb, \ucb] \text{ s.t. } \hat{\delta}(\tau) \leq 5\epsilon \}$. \\
\KwOut{$[\lcb', \ucb']$}
\end{algorithm}

\begin{Lemma}
\label{mainthm}
Suppose we are given:
\begin{itemize}
\item \emph{Initial interval} $[\lcb, \ucb]$ of length  $\ucb - \lcb \leq T^{-\frac{1}{4}}$ and satisfying $ \tau^* \in [\lcb, \ucb]$. 
\item \emph{Distribution estimate} $ \hat{F}_1(x)$ satisfying  $|F_1(x) - \hat{F}_1(x)| \leq T^{-\frac{1}{4}}$ for all $x \in [0, 1]$ simultaneously. 
 \end{itemize}
 Then, for  $\epsilon > T^{-\frac{1}{2}}$
\Cref{ISA} runs thresholds inside  $[\lcb, \ucb]$ for at most $1000 \cdot \frac{\log T}{\epsilon^2}$ rounds,  and outputs a sub-interval $[\lcb', \ucb'] \subseteq [\lcb, \ucb]$ satisfying with probability $1 - T^{-10}$ the following statements:
\begin{enumerate} \topsep=0pt \itemsep=0pt
    \item \label{tSta1} $\tau^* \in [\lcb', \ucb']$.
    \item \label{tSta2} For every $\tau \in [\lcb', \ucb']$ the expected one-round regret of playing $\tau$ is at most $10 \epsilon$.
\end{enumerate}
\end{Lemma}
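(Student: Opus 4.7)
The plan is to introduce a ``true'' counterpart $\delta$ of the empirical $\hat\delta$ that (i) vanishes at $\tau^*$ and (ii) upper-bounds the one-round regret uniformly on $[\lcb,\ucb]$. The refinement step then simply tracks $\delta$ up to a known additive error coming from Hoeffding and the given DKW-type estimate of $F_1$.

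First, define $\Delta(\tau) := F_1(\ucb)(\tau-\ucb) - F_1(\lcb)(\tau-\lcb) + \int_\lcb^\ucb F_1(x)\,dx$ and $\delta(\tau) := \Delta(\tau) - (R(\ucb)-R(\lcb))$. Substituting $\tau^*=\Ex{X_2}$ into \eqref{eq:RtauPITwp} yields $R(\ucb) - R(\lcb) = F_1(\ucb)(\tau^*-\ucb) - F_1(\lcb)(\tau^*-\lcb) + \int_\lcb^\ucb F_1(x)\,dx$, from which the key identity $\delta(\tau) = (F_1(\ucb) - F_1(\lcb))(\tau - \tau^*)$ falls out; in particular $\delta(\tau^*) = 0$ and $\delta$ is linear in $\tau$. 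Next I would show that $R(\tau^*) - R(\tau) \le |\delta(\tau)|$ on $[\lcb,\ucb]$: expanding via \eqref{eq:RtauPITwp} gives $R(\tau^*) - R(\tau) = F_1(\tau)(\tau-\tau^*) + \int_\tau^{\tau^*} F_1(x)\,dx$, and a case split on $\tau \lessgtr \tau^*$ together with monotonicity of $F_1$ sandwiches this between $0$ and $(F_1(\tau^*) - F_1(\tau))(\tau^* - \tau)$, which is at most $(F_1(\ucb) - F_1(\lcb))\,|\tau-\tau^*| = |\delta(\tau)|$.

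The concentration step is then routine. By Hoeffding's inequality with $C \log T/\epsilon^2$ samples at each of $\lcb,\ucb$ and $C$ a sufficiently large constant, $|\hat R_\lcb - R(\lcb)|$ and $|\hat R_\ucb - R(\ucb)|$ are each at most $\epsilon$ with probability $1 - T^{-10}$. Combining this termwise with $|\hat F_1 - F_1| \le T^{-1/4}$ and $\ucb - \lcb \le T^{-1/4}$ gives $|\hat\Delta(\tau) - \Delta(\tau)| \le 3 T^{-1/2} \le 3\epsilon$ (using $\epsilon > T^{-1/2}$), and hence $|\hat\delta(\tau) - \delta(\tau)| \le 5\epsilon$ uniformly for $\tau \in [\lcb,\ucb]$.

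Both statements then follow mechanically. For (1), $\delta(\tau^*) = 0$ gives $|\hat\delta(\tau^*)| \le 5\epsilon$, so $\tau^*$ satisfies the defining conditions of both $\lcb'$ and $\ucb'$. For (2), any $\tau \in [\lcb',\ucb']$ satisfies $|\hat\delta(\tau)| \le 5\epsilon$, hence $|\delta(\tau)| \le 10\epsilon$, hence one-round regret $\le 10\epsilon$ by the bound above. The total round count is $2C\log T/\epsilon^2 \le 1000 \log T/\epsilon^2$ for appropriate $C$, and only $\lcb$ and $\ucb$ are ever played, keeping all thresholds in $[\lcb,\ucb]$. The main obstacle, which also motivates the algorithm design, is finding the right function $\delta$: its algebraic form must simultaneously collapse to $(F_1(\ucb)-F_1(\lcb))(\tau-\tau^*)$ (so it dominates the regret via monotonicity) and be expressible as $\Delta(\tau) - (R(\ucb) - R(\lcb))$, where $\Delta$ uses only $\hat F_1$ and the two rewards $R(\ucb), R(\lcb)$ are precisely what a low-regret subroutine can estimate by playing only the endpoints $\lcb$ and $\ucb$.
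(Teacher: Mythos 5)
Your proof follows essentially the same route as the paper: define $\delta(\tau) = (F_1(\ucb)-F_1(\lcb))(\tau-\tau^*)$, show it upper-bounds the one-round regret and admits the learnable representation $\Delta(\tau) - (R(\ucb)-R(\lcb))$, estimate it uniformly to within $5\epsilon$, and read off the conclusions. Your algebraic derivation $R(\tau^*)-R(\tau) = F_1(\tau)(\tau-\tau^*) + \int_\tau^{\tau^*}F_1(x)\,dx$ with the case split is a more explicit version of the paper's short probabilistic argument in Claim~\ref{LossBound}, and the concentration step matches Claim~\ref{AccBound}.

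One small step is elided: for Statement~\ref{tSta2} you assert that every $\tau\in[\lcb',\ucb']$ satisfies $|\hat\delta(\tau)|\le 5\epsilon$, but from the one-sided definitions of $\lcb'$ (minimum of $\{\hat\delta\ge -5\epsilon\}$) and $\ucb'$ (maximum of $\{\hat\delta\le 5\epsilon\}$) this does not follow automatically; one must first note that $\hat\delta$ is monotone nondecreasing (since $\hat\delta'(\tau)=\hat F_1(\ucb)-\hat F_1(\lcb)\ge 0$), so that $[\lcb',\ucb']$ is exactly the sub-level set $\{\tau:|\hat\delta(\tau)|\le 5\epsilon\}$. You remark that $\delta$ is linear but do not carry the analogous observation to $\hat\delta$; the paper makes this explicit and the rest of the argument is sound.
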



\medskip
\noindent \textbf{Proof Overview of \Cref{mainthm}.}
The main idea is to define a \emph{bounding function}
\[
\delta(\tau)~:=~(F_1(\ucb) - F_1(\lcb))\cdot (\tau - \tau^*).
\]
As we show in \Cref{LossBound} below, this function satisfies $R(\tau^*) - R(\tau) \leq |\delta(\tau)|$ for all $\tau \in [\lcb, \ucb]$, i.e., $|\delta(\tau)|$ is an upper bound on the one-round regret when choosing $\tau$ instead of $\tau^*$. So, ideally, we would like to choose $\tau$ that minimizes $|\delta(\tau)|$. However, we do not know $\delta(\tau)$. Therefore, we derive an estimate $\hat{\delta}(\tau)$  for all $\tau \in [\lcb, \ucb]$ and discard $\tau$ for which $|\hat{\delta}(\tau)|$ is too large because these cannot be the minimizers.

In order to estimate $\delta(\tau)$, we rewrite it in a different way as sum of terms that can be estimated well.  First, consider the difference in expected rewards when choosing thresholds $\ucb$ and $\lcb$, i.e., 
\[
R(\ucb) - R(\lcb) ~=~ (F_1(\ucb) - F_1(\lcb)) \tau^* - \int_\lcb^\ucb xf_1(x)dx ~=~ F_1(\ucb)\cdot (\tau^* - \ucb) - F_1(\lcb)\cdot (\tau^* - \lcb) + \int_\lcb^\ucb F_1(x)dx,
\]
where we used integration by parts. Adding this with $\delta(\tau)$ gives $\delta(\tau) + (R(\ucb) - R(\lcb))$ equals
\begin{align}\label{CDDfn}
 F_1(\ucb)(\tau - \ucb) - F_1(\lcb)(\tau - \lcb) + \int_\lcb^\ucb F_1(x)dx  ~=:~ \Delta(\tau),
\end{align}
which gives an alternate way of expressing 
$ \delta(\tau) = \Delta(\tau) - (R(\ucb) - R(\lcb)).$ 
(Another way of understanding the definition of $\Delta(\tau)$   is that it represents the difference of playing thresholds $\ucb$ and $\lcb$, assuming that $\Ex{X_2}=\tau$.) 
So, we define the estimate 
\[ \hat{\delta}(\tau) := \hat{\Delta}(\tau) - (\hat{R}_\ucb - \hat{R}_\lcb) ,\] 
where $\hat{\Delta}$ uses the estimate $\hat{F}_1$ instead of $F_1$ in \eqref{CDDfn}  and  to estimate $\hat{R}_\ucb$ and $\hat{R}_\lcb$ we use  empirical averages obtained in the current phase. The advantage is that besides the coarse knowledge of $\hat{F}_1$ we assumed to be given, we only need to choose thresholds from within our current confidence interval to obtain $\hat{\delta}$.  \Cref{AccBound} will show that $\hat{\delta}(\tau)$ estimates $\delta(\tau)$ within an additive error of $O(\epsilon)$.

\medskip
\noindent \textbf{Completing the Proof of \Cref{mainthm}.}
Now we complete the missing details. We first prove that $|\delta(\tau)|$ gives an upper bound on one-round regret with threshold $\tau$.

\begin{Claim}
\label{LossBound}
If $\tau, \tau^* \in [\lcb, \ucb]$, then $R(\tau^*) - R(\tau) \leq |\delta(\tau)|$.
\end{Claim}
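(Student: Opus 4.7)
The plan is to start from the integration-by-parts form of $R(\tau)$ given in \eqref{eq:RtauPITwp}, subtract to get a clean expression for $R(\tau^*) - R(\tau)$, and then bound it using monotonicity of $F_1$ together with the fact that $\tau, \tau^* \in [\lcb, \ucb]$.

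Concretely, writing $\tau^* = \Ex{X_2}$, \eqref{eq:RtauPITwp} gives
\[
R(\tau^*) - R(\tau) = F_1(\tau^*)(\tau^* - \tau^*) - F_1(\tau)(\tau^* - \tau) - \int_{\tau^*}^1 F_1(x)\,dx + \int_{\tau}^1 F_1(x)\,dx = \int_{\tau}^{\tau^*} F_1(x)\,dx - F_1(\tau)(\tau^* - \tau).
\]
I would then split into the cases $\tau \leq \tau^*$ and $\tau > \tau^*$. In the first case, monotonicity of $F_1$ yields $\int_\tau^{\tau^*} F_1(x)\,dx \leq F_1(\tau^*)(\tau^* - \tau)$, so
\[
R(\tau^*) - R(\tau) \leq \bigl(F_1(\tau^*) - F_1(\tau)\bigr)(\tau^* - \tau).
\]
Since $\tau, \tau^* \in [\lcb, \ucb]$ and $F_1$ is nondecreasing, $F_1(\tau^*) - F_1(\tau) \leq F_1(\ucb) - F_1(\lcb)$, so the right-hand side is at most $(F_1(\ucb) - F_1(\lcb))(\tau^* - \tau) = -\delta(\tau) = |\delta(\tau)|$ (noting $\delta(\tau) \leq 0$ when $\tau \leq \tau^*$). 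The case $\tau > \tau^*$ is symmetric: the same formula rewritten as $F_1(\tau)(\tau - \tau^*) - \int_{\tau^*}^{\tau} F_1(x)\,dx$ is bounded by $(F_1(\tau) - F_1(\tau^*))(\tau - \tau^*) \leq (F_1(\ucb) - F_1(\lcb))(\tau - \tau^*) = \delta(\tau) = |\delta(\tau)|$.

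There is no real obstacle here; the statement is essentially a one-line consequence of monotonicity once the right closed form for $R(\tau^*) - R(\tau)$ is in hand. The only thing to be careful about is picking the correct upper/lower bound for the integral in each case so that the sign matches $|\delta(\tau)|$, and using the hypothesis $\tau, \tau^* \in [\lcb, \ucb]$ exactly once at the very end to inflate $F_1(\tau^*) - F_1(\tau)$ (or its negative) to $F_1(\ucb) - F_1(\lcb)$.
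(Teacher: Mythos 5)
Your proof is correct and arrives at exactly the same intermediate bound as the paper, namely $R(\tau^*) - R(\tau) \leq |F_1(\tau^*) - F_1(\tau)|\cdot|\tau^* - \tau|$, followed by the same final inflation to $F_1(\ucb) - F_1(\lcb)$ using $\tau,\tau^* \in [\lcb,\ucb]$. The only difference is cosmetic: the paper derives the intermediate bound by a coupling argument (the two threshold policies differ only on the event $X_1$ falls between $\tau$ and $\tau^*$, and when they differ the reward gap is at most $|\tau^*-\tau|$), whereas you reach it by subtracting the integration-by-parts closed forms of $R(\tau^*)$ and $R(\tau)$ and then bounding the resulting integral by monotonicity of $F_1$; both are one-line calculations and neither is clearly preferable.
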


\begin{proof}

Consider $R(\tau^*) - R(\tau)$. The two settings are different only when $X_1$ is between $\tau^*$ and $\tau$, and the difference of the reward is bounded by $|\tau^* - \tau|$. Therefore,
$R(\tau^*) - R(\tau) ~\leq~ |\tau^* - \tau|\cdot |F_1(\tau^*) - F_1(\tau)| ~\leq~ |\tau^* - \tau|\cdot |F_1(\ucb) - F_1(\lcb)| = |\delta(\tau)|$, where the second inequality uses $\tau^*, \tau \in [\lcb, \ucb]$ implies $|F_1(\tau) - F_1(\tau^*)| \leq |F_1(\ucb) - F_1(\lcb)|$.
%
\end{proof}

Next, we prove that $\hat{\delta}(\tau)$ is a good estimate of $\delta(\tau)$.

\begin{Claim}
\label{AccBound}
In \Cref{ISA}, if the conditions in \Cref{mainthm} hold then with probability $1 - T^{-10}$ we have $|\hat{\delta}(\tau) - \delta(\tau)| \leq 5 \cdot \epsilon$  for all  $\tau \in [\lcb, \ucb]$ simultaneously.  
\end{Claim}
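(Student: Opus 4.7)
The plan is to decompose the error $\hat{\delta}(\tau) - \delta(\tau)$ into two parts that can be controlled by the two separate sources of approximation we have: the CDF estimate $\hat{F}_1$ from initialization, and the empirical average rewards $\hat{R}_\lcb, \hat{R}_\ucb$ computed in the current phase. Concretely, using the identities $\delta(\tau) = \Delta(\tau) - (R(\ucb) - R(\lcb))$ and $\hat{\delta}(\tau) = \hat{\Delta}(\tau) - (\hat{R}_\ucb - \hat{R}_\lcb)$, I would write
\[
\hat{\delta}(\tau) - \delta(\tau) ~=~ \bigl(\hat{\Delta}(\tau) - \Delta(\tau)\bigr) ~-~ \bigl((\hat{R}_\ucb - \hat{R}_\lcb) - (R(\ucb) - R(\lcb))\bigr),
\]
and then bound the two bracketed quantities separately, so I can aim for $3\epsilon$ from the first and $2\epsilon$ from the second, summing to $5\epsilon$.

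For the first bracket I would expand $\hat\Delta(\tau) - \Delta(\tau)$ using the definition \eqref{CDDfn} to get three terms: $(\hat F_1(\ucb) - F_1(\ucb))(\tau - \ucb)$, $-(\hat F_1(\lcb) - F_1(\lcb))(\tau - \lcb)$, and $\int_\lcb^\ucb (\hat F_1(x) - F_1(x))\,dx$. By the hypotheses of \Cref{mainthm} we have $|\hat F_1 - F_1| \leq T^{-1/4}$ everywhere, and since $\tau \in [\lcb, \ucb]$ we also have $|\tau - \ucb|, |\tau - \lcb| \leq \ucb - \lcb \leq T^{-1/4}$. Each of the three terms is thus at most $T^{-1/2}$ in absolute value, and the total is at most $3T^{-1/2} \leq 3\epsilon$ since $\epsilon \geq T^{-1/2}$. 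Crucially, this bound is \emph{uniform} in $\tau \in [\lcb, \ucb]$ because all three coefficients were bounded by $T^{-1/4}$ independently of $\tau$.

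For the second bracket I would apply Hoeffding's inequality twice. Each of $\hat R_\ucb$ and $\hat R_\lcb$ is an average of $N = C\log T/\epsilon^2$ independent rewards in $[0,1]$, so
\[
\Pr\!\bigl[|\hat R_\ucb - R(\ucb)| > \epsilon\bigr] ~\leq~ 2\exp(-2N\epsilon^2) ~=~ 2T^{-2C},
\]
and similarly for $\hat R_\lcb$. Choosing the universal constant $C$ large enough (e.g., $C \geq 6$) and union-bounding, both estimates are within $\epsilon$ of their means with probability at least $1 - T^{-10}$, which by the triangle inequality bounds the second bracket by $2\epsilon$.

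Combining the two bounds yields $|\hat\delta(\tau) - \delta(\tau)| \leq 5\epsilon$ simultaneously for all $\tau \in [\lcb, \ucb]$ with probability at least $1 - T^{-10}$, as desired. The only subtle point is the uniformity over $\tau$, but this is essentially free here: the first bracket is an affine function of $\tau$ with coefficients we have already bounded, and the second bracket does not depend on $\tau$ at all, so no covering argument is needed.
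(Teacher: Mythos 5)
Your decomposition into $\hat{\Delta}(\tau)-\Delta(\tau)$ (bounded deterministically by $3T^{-1/2}\leq 3\epsilon$ via the pointwise CDF error and short interval length) plus the two Hoeffding-controlled reward estimates (each within $\epsilon$) is exactly the paper's argument, with the same constants and the same union bound. Your explicit remark that uniformity in $\tau$ is automatic because the first bracket is affine and the second is $\tau$-independent is a nice, correct observation that the paper leaves implicit.
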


\begin{proof}
Recall that $\delta(\tau) = \Delta(\tau) -  \big( R(\ucb) - R(\lcb) \big)$. 
We first bound the error  $|\hat{\Delta}(\tau) - \Delta(\tau)|$. Notice, 
\[
    |\hat{\Delta}(\tau) - \Delta(\tau)| ~\leq~ |F_1(\ucb) - \hat{F}_1(\ucb)|\cdot |\tau -\ucb| + |F_1(\lcb) - \hat{F}_1(\lcb)|\cdot |\tau - \lcb| + \int_\lcb^\ucb |F_1(x) - \hat{F}_1(x)|dx.
\]
 The main observation is that all three terms on the right-hand-side can be bounded by $T^{-\frac{1}{2}}$ since  $|F_1(x) - \hat{F}_1(x)| \leq T^{-\frac{1}{4}}$ and $\ucb - \lcb \leq T^{-\frac{1}{4}}$.  Hence, $|\hat{\Delta}(\tau) - \Delta(\tau)| \leq 3T^{-\frac{1}{2}} \leq 3\epsilon$.

Next, we bound the errors for $|\hat{R}_\lcb - R(\lcb)|$ and for $|\hat{R}_\ucb - R(\ucb)|$. For $|\hat{R}_\lcb - R(\lcb)|$, notice that $\hat{R}_\lcb$ is an estimate of $R(\lcb)$ with $N = C\cdot \frac{\log T}{\epsilon^2}$ samples. Since  the reward of each sample is in $[0, 1]$, by Hoeffding's Inequality (\Cref{Hoeffding}) the probability that $|\hat{R}_\lcb - R(\lcb)| > \epsilon$ is bounded by $2\exp(-2N\epsilon^2) = 2T^{-2C}$. Then, $|\hat{R}_\lcb - R(\lcb)| \leq \epsilon$ holds with probability at least $1 - T^{-11}$ when $C>10$. The error bound for $|\hat{R}_\ucb - R(\ucb)|$ is identical. Taking a union bound for two error for $|\hat{R}_\lcb - R(\lcb)|$ and for $|\hat{R}_\ucb - R(\ucb)|$, and then summing them with the error for $|\hat{\Delta}(\tau) - \Delta(\tau)|$ completes the proof.
\end{proof}

Now, we are ready to prove \Cref{mainthm}.

\begin{proof}[Proof of \Cref{mainthm}]
 We will assume that $|\hat{\delta}(\tau) - \delta(\tau)| \leq 5\epsilon$, which is true with probability $1 - T^{-10}$ by \Cref{AccBound}.

Observe that $\hat{\delta}(\tau)$ is a monotone increasing function because $\hat{\delta}'(\tau) = \hat{\Delta}'(\tau) = \hat{F}_1(\ucb) - \hat{F}_1(\lcb) \geq 0$. Therefore, according to the definition of $\lcb'$ and $\ucb'$, we have $[\lcb', \ucb'] = \{\tau \in [\lcb, \ucb]: |\hat{\delta}(\tau)| \leq 5\epsilon\}$. Now, we can use this property to prove the two statements of this lemma separately.

For Statement~\ref{tSta1}, notice that $\delta(\tau^*) = 0$. \Cref{AccBound} gives $|\hat{\delta}(\tau^*)| \leq 5\epsilon$. Then, since $\tau^* \in [\lcb, \ucb]$ and $|\hat{\delta}(\tau^*)| \leq 5\epsilon$, we must have $\tau^* \in [\lcb', \ucb']$ as $[\lcb', \ucb'] = \{\tau \in [\lcb, \ucb]: |\hat{\delta}(\tau)| \leq 5\epsilon\}$. 

Next, we prove Statement~\ref{tSta2}. By \Cref{LossBound}, it suffices to bound $|\delta(\tau)|$ for all $\tau \in [\lcb', \ucb']$. By \Cref{AccBound}, we have  w.h.p. for all $\tau \in [\lcb', \ucb']$ that $|\delta(\tau)| \leq |\hat{\delta}(\tau)| + 5\epsilon \leq 10\epsilon$, where the last inequality uses the definition of $\lcb'$ and $\ucb'$.
\end{proof}

\subsection{Doubling Framework for Low-Regret Algorithms} \label{sec:doublingFramework}

In this section we show how to run \Cref{ISA} for multiple phases with a doubling trick to get   $O(\sqrt{T}\log T)$ regret. Instead of directly proving the regret bound for Prophet Inequality with $n=2$, we first give a general doubling framework that will later  be useful  for Prophet Inequality and Pandora's Box problems with $n$ random variables:

\begin{Lemma}
\label{DoublingBound}

Consider an online learning problem with size $n$. Assume the one-round regret for every possible action is bounded by 1. Suppose there exists an action set-updating algorithm $\alg$ satisfying: Given accuracy $\epsilon$ and action set $A$, algorithm $\alg$ runs $\Theta(\frac{n^\alpha \log T}{\epsilon^2})$ rounds in $A$ and outputs $A' \subseteq A$ satisfying the following  with probability $1 - T^{-10}$:
\begin{itemize}
    \item The optimal action in $A$ belongs to $A'$.
    \item For $a \in A'$, the one-round regret of playing $a$ is bounded by $\epsilon$.
\end{itemize}
Then, with probability $1 - T^{-9}$ the regret of \Cref{Doubling} is $O(n^{\alpha/2} \sqrt{T} \log T)$.

\end{Lemma}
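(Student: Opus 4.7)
The plan is to run $\alg$ in $K = \Theta(\log T)$ phases with geometrically decreasing accuracies $\epsilon_k := 2^{-k}$. At the start of phase $k$ I will maintain an action set $A_k$ with the invariant that (i)~$A_k$ contains the problem's optimal action and (ii)~every $a \in A_k$ has one-round regret at most $\epsilon_{k-1}$, where I set $\epsilon_{-1} := 1$ for the base case using the hypothesis that the one-round regret is at most $1$. Phase $k$ consists of invoking $\alg(A_k, \epsilon_k)$; by assumption it runs $\Theta(n^\alpha \log T / \epsilon_k^2)$ rounds \emph{inside} $A_k$ and returns $A_{k+1} \subseteq A_k$ satisfying the two output guarantees, which re-establishes the invariant for phase $k+1$.

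\textbf{Step 1 (correctness via a union bound).} I verify the invariant by induction on $k$: the optimal action in $A_k$ equals the global optimum since optima are preserved under shrinking as long as they stay in the set, and the one-round regret bound is exactly the second output guarantee of $\alg$. Each individual call fails with probability $T^{-10}$, so taking a union bound over all $K = O(\log T)$ phases shows that every call to $\alg$ succeeds simultaneously with probability at least $1 - O(T^{-10} \log T) \ge 1 - T^{-9}$. I will work in this success event for the rest of the proof.

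\textbf{Step 2 (regret per phase and choice of $K$).} In phase $k$ the algorithm plays rounds only from $A_k$, so every round of phase $k$ incurs one-round regret at most $\epsilon_{k-1}$. Multiplying by the $\Theta(n^\alpha \log T / \epsilon_k^2)$ rounds used gives a phase-$k$ regret contribution of order $n^\alpha \log T \cdot \epsilon_{k-1}/\epsilon_k^2 = \Theta(n^\alpha \log T \cdot 2^{k+1})$ for $k \ge 1$, and $\Theta(n^\alpha \log T)$ for $k=0$. I pick $K$ to be the largest index such that the cumulative number of rounds $\sum_{k=0}^K \Theta(n^\alpha \log T/\epsilon_k^2) = \Theta(n^\alpha \log T \cdot 2^{2K})$ is still at most $T$, which yields $2^K = \Theta(\sqrt{T/(n^\alpha \log T)})$. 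Summing the geometric series for the regret then telescopes into its last term: the total regret across all phases is $O\!\left(n^\alpha \log T \cdot 2^{K+1}\right) = O\!\left(n^{\alpha/2}\sqrt{T\log T}\right)$. After phase $K$, any rounds not yet consumed (at most $T$) are played inside $A_{K+1}$, whose actions have one-round regret at most $\epsilon_K = \Theta(\sqrt{n^\alpha \log T/T})$, contributing at most $T \cdot \epsilon_K = O(n^{\alpha/2}\sqrt{T \log T})$ additional regret. Both pieces together are well within the claimed $O(n^{\alpha/2}\sqrt{T}\log T)$ bound.

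\textbf{Main obstacle.} The only subtle point is the base case: in phase $0$ we possess no a priori bound on the one-round regret better than $1$, so the $\Theta(n^\alpha \log T/\epsilon_0^2)$ rounds of this phase must be charged against the trivial regret bound. Fixing $\epsilon_0 = 1$ makes this term $O(n^\alpha \log T)$, a low-order contribution that is dominated by the final phase. Beyond that, care is needed only with the union-bound accounting over the $O(\log T)$ phases and with verifying that the phase schedule indeed terminates before exhausting the $T$ available rounds, both of which are routine once the choice $\epsilon_k = 2^{-k}$ and the stopping rule for $K$ are fixed.
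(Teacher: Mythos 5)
Your proof is correct and mirrors the paper's argument: maintain the invariant, charge every round of phase $k$ at the one-round regret rate $\epsilon_{k-1}$ inherited from the previous phase, sum the resulting geometric series, account for the remaining rounds separately, and union-bound over the $O(\log T)$ calls to $\alg$. The only cosmetic deviation is that you choose the number of phases $K$ from a cumulative-round budget rather than from Algorithm~\ref{Doubling}'s explicit exit condition $\epsilon_i > n^{\alpha/2}\log T/\sqrt{T}$; both choices balance the two regret sources and yield the claimed $O(n^{\alpha/2}\sqrt{T}\log T)$ bound.
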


\begin{algorithm}[tbh]
\caption{General Doubling Algorithm}
\label{Doubling}
\KwIn{Time horizon $T$, problem size $n$, action space $A$, algorithm $\alg$, and parameter $\alpha$.}
Let $i = 1$, $\epsilon_1 = 1$, $A_1 = A$ \\
\While{$\epsilon_i > \frac{n^{\alpha / 2} \log T}{\sqrt{T}}$}
{
   Call $\alg$ with input $\epsilon_i$ and $A_i$, and get output $A_{i+1}$ \\
   $\epsilon_{i+1} \gets \frac{\epsilon_i}{2}$ \\
    $i \gets i+1$ 
}
Run $a \in A_i$ for the remaining rounds.
\end{algorithm}

The proof of the lemma uses simple counting; see \Cref{sec:appendixSec2}.

Based on \Cref{DoublingBound}, we can immediately give the Bandit Prophet Inequality regret bound.

\begin{Theorem}
\label{PI2Regret}
There exists an algorithm that achieves $O(\sqrt{T}\cdot \log T)$ regret with probability $1 - T^{-9}$ for Bandit Prophet Inequality  problem with two distributions. 
\end{Theorem}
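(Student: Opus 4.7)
The plan is to simply compose the three pieces already developed in this section. First I would run the initialization of \Cref{PI2Init}, playing $\tau = 0$ and $\tau = 1$ for $C\sqrt{T}\log T$ rounds each, to obtain a CDF estimate $\hat{F}_1$ with $\|\hat{F}_1 - F_1\|_\infty \le T^{-1/4}$ and an initial confidence interval $[\lcb_1, \ucb_1] \ni \tau^*$ of width at most $T^{-1/4}$. Since each round's regret is at most $1$, this initialization contributes $O(\sqrt{T}\log T)$ to the total regret, and its two guarantees hold simultaneously with probability at least $1 - T^{-10}$.

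Second I would invoke \Cref{DoublingBound} with action space $A_1 = [\lcb_1, \ucb_1]$, size parameter $n = 2$, exponent $\alpha = 0$, and subroutine $\alg$ equal to \Cref{ISA} hard-coded with the fixed $\hat{F}_1$ produced in the initialization. The choice $\alpha = 0$ is the natural one because \Cref{ISA} runs only $O(\log T / \epsilon^2)$ rounds per call, independent of any size parameter. To apply \Cref{DoublingBound} I must check that \Cref{ISA} fulfils its two abstract hypotheses in every phase; this is exactly the content of \Cref{mainthm}, provided its own preconditions carry over from phase to phase. I would verify the propagation as follows: (i)~the interval-width bound $\ucb - \lcb \le T^{-1/4}$ is preserved because \Cref{mainthm} guarantees $[\lcb', \ucb'] \subseteq [\lcb, \ucb]$, so the confidence interval can only contract; (ii)~$\hat{F}_1$ is produced once at initialization and reused, so its accuracy never degrades; and (iii)~the regime $\epsilon > T^{-1/2}$ is respected throughout, since the \textbf{while}-loop of \Cref{Doubling} halts as soon as $\epsilon \le \log T / \sqrt{T}$, and $\log T / \sqrt{T} > T^{-1/2}$.

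Once these preconditions are verified, \Cref{DoublingBound} delivers $O(\sqrt{T}\log T)$ regret from the shrinking and exploitation rounds; adding the $O(\sqrt{T}\log T)$ initialization regret yields the claimed bound. A union bound over the $O(\log T)$ phases together with the initialization keeps the overall failure probability below $T^{-9}$. The substantive work---designing an estimator $\hat{\delta}$ of the one-round regret bound that is learnable purely by probing the current confidence endpoints---was already carried out in \Cref{mainthm}, so no genuinely new obstacle appears in the present theorem; the only care required is the precondition bookkeeping in step (i)--(iii) above, which is essentially automatic because \Cref{ISA} never re-estimates the CDF and the confidence intervals only contract across phases.
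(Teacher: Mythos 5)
Your proposal is correct and follows essentially the same route as the paper's proof: run the initialization of \Cref{PI2Init} for the CDF estimate and initial confidence interval, then apply \Cref{DoublingBound} with $\alpha = 0$ and \Cref{ISA} as the subroutine. You supply somewhat more detail than the paper does in verifying that \Cref{mainthm}'s preconditions propagate from phase to phase (interval contraction, reuse of $\hat F_1$, and the $\epsilon > T^{-1/2}$ regime following from the while-loop termination threshold), but these are exactly the bookkeeping facts the paper's one-line argument implicitly relies on.
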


\begin{proof}
 The initialization runs $O(\sqrt{T} \log T)$ rounds, so the regret is $O(\sqrt{T} \log T)$. For the following interval shrinking procedure, \Cref{ISA} matches the algorithm $\alg$ described in \Cref{DoublingBound} with $\alpha = 0$. Therefore, applying \Cref{DoublingBound} completes the proof.
\end{proof}

%
\subsection{Extending to Pandora's Box  with a Fixed Order}	\label{sec:PBTwoBoxes}
  
In order to extend the approach to Pandora's Box, in this section we consider a simplified problem with a \emph{fixed box order}. There are two boxes taking values in $[0,1]$ from unknown  distributions $\D_1, \D_2$ with cdfs $F_1,F_2$ and densities $f_1,f_2$. The boxes have known costs $c_1, c_2 \in [0,1]$. We assume that we always  pay $c_1$ to observe $X_1$ (i.e., $\Ex{X_1}>c_1$), and then decide whether to observe $X_2$ by paying $c_2$. Indeed, it might be better to open the second box before the first box or not to open any box. We make these simplifying assumptions in this section to make the presentation cleaner.
Generally, determining an approximately optimal order will be one of the main technical challenges that we will need to handle for general $n$ in \Cref{PBGen}.

Formally, consider a $T$ rounds game where in each round $t$ we play a threshold $\tau^{(t)} \in [0, 1]$ and receive as feedback the following  utility:
\begin{itemize}
      \item Independently draw  $X_1^{(t)}$ from  $\D_1$.   If  $X_1^{(t)} \geq  \tau^{(t)}$, we stop and receive $X_1^{(t)}-c_1$ as the utility.
      \item Otherwise, we pay $c_2$ to see $X_2^{(t)}$ drawn independently from $\D_2$, and receive $\max\{X_1, X_2\}-(c_1+c_2)$ as utility.
  \end{itemize}
The only  feedback we receive is the utility, and not even which random variable gets selected.

To see the optimal policy, define a \emph{gain function} $g(v):= \Ex{ \max\{0, X_2 - v\} - c_2}$ to represent the expected additional utility  from opening $X_2$ assuming we already have $X_1 = v$, i.e.,
\begin{align} \label{eq:gainPBTwo}
g(v)  ~&=~ -c_2 + \int_v^1 (x - v)f_2(x) dx 
~=~ -c_2 + (1-v) - \int_v^1 F_2(x)dx. 
\end{align}
The optimal threshold (Weitzman's reservation value) $\tau^*$ is now the solution to $g(\tau^*)=0$, i.e.,  $\Ex{\max\{X_2 - \tau^*,0\}}=c_2$. Since our algorithm does not know $F_2(x)$ but only  an approximate  distribution  $\hat{F}_2(x)$, we get an estimate $\hat{g}(v)$ of $g(v)$ by replacing $F_2(x)$ with $\hat{F}_2(x)$ in \eqref{eq:gainPBTwo}.

For $\tau \in [0,1]$, let \emph{reward function} $R(\tau)$ denote the expected reward of playing $\tau$. With the definition of gain function $g(v)$  and linearity of expectation, we can write
\begin{align*}
    R(\tau) ~:=~ -c_1 + \Ex{X_1} + \int_0^\tau f_1(x)g(x)dx .
\end{align*}
The \emph{total regret} of our algorithm is now defined as
$\textstyle T \cdot R(\tau^*) - \sum_{t=1}^T R(\tau^{(t)}) .$

\medskip
\noindent \textbf{Interval-Shrinking Algorithm.} Starting with an initial confidence interval $[\lcb,\ucb]$ containing $\tau^*$, we again design an Interval-Shrinking algorithm (\Cref{ISAPan}) that runs for $\Theta(\frac{\log T}{\epsilon^2})$ rounds and outputs a refined confidence interval $[\lcb',\ucb']$. We will show that this  refined interval  still contains $\tau^*$ and that the regret of playing any $\tau$ inside this refined interval is bounded by $O(\epsilon)$.  
Now we give the algorithm and the theorem.

\begin{algorithm}[tbh]
\caption{Interval-Shrinking Algorithm for Pandora's Box}
\label{ISAPan}
\KwIn{Interval $[\lcb, \ucb]$, length $m$, and CDF estimates $\hat F_1(x), \hat F_2(x)$. }
Run $C \cdot \frac{\log T}{\epsilon^2}$ rounds with $\tau = \lcb$. Let $\hat{R}_\lcb$ be the average reward. \\
Run $C \cdot \frac{\log T}{\epsilon^2}$ rounds with $\tau = \ucb$. Let $\hat{R}_\ucb$ be the average reward.\\
For $\tau \in [\lcb, \ucb]$, define $\hat{\Delta}(\tau) := (\hat{g}(\ucb) - \hat{g}(\tau))\hat{F}_1(\ucb) - (\hat{g}(\lcb) - \hat{g}(\tau))\hat{F}_1(\lcb) - \int_\lcb^\ucb \hat{g}'(x)\hat{F}(x) dx$. \\
For $\tau \in [\lcb, \ucb]$, define $\hat{\delta}(\tau) := \hat{\Delta}(\tau) - (\hat{R}_\ucb - \hat{R}_\lcb)$. \\
Let $\lcb' = \min \{\tau \in [\lcb, \ucb] \text{ s.t. } \hat{\delta}(\tau) \geq -4\epsilon \}$  and let $\ucb' = \max \{\tau \in [\lcb, \ucb] \text{ s.t. } \hat{\delta}(\tau) \leq 4\epsilon \}$. \\
\KwOut{$[\lcb', \ucb']$}
\end{algorithm}

\begin{Lemma}
\label{mainthmpan}
Suppose we are given:
\begin{itemize}
\item \emph{Initial interval} $[\lcb, \ucb]$ satisfying $ \tau^* \in [\lcb, \ucb]$, gain function  $|g(\tau)| \leq T^{-\frac{1}{4}}$, and bounding function $|\delta(\tau)| \leq 16\epsilon$ where $\delta$ is defined in \eqref{eq:deltaPBnTwo}.
\item CDF estimate $\hat F_1(x)$ which is constructed via $1000 \cdot \frac{\log T}{\epsilon}$ new i.i.d. samples of $X_1$.
\item CDF estimate $\hat F_2(x)$ which is constructed via $1000 \cdot \frac{\log T}{\epsilon}$ new i.i.d. samples of $X_2$.
 \end{itemize}
 Then, for  $\epsilon > T^{-\frac{1}{2}}$, \Cref{ISAPan} runs thresholds inside  $[\lcb, \ucb]$ for no more than $10000 \cdot \epsilon^{-2}\log T$ rounds and outputs with probability $1 - T^{-10}$ a sub-interval $[\lcb', \ucb'] \subseteq [\lcb, \ucb]$ satisfying:

\begin{enumerate} 
    \item  $\tau^* \in [\lcb', \ucb']$.
    \item Simultaneously for every $\tau \in [\lcb', \ucb']$, we have $|\delta(\tau)| \leq 8 \epsilon$. 
    \item  Simultaneously for every $\tau \in [\lcb', \ucb']$, the expected one-round regret of playing $\tau$ is at most $8\epsilon$.
\end{enumerate}
\end{Lemma}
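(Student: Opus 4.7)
The plan is to follow the three-claim structure of \Cref{mainthm}, adapted to the Pandora's Box setting where the optimal threshold satisfies $g(\tau^*)=0$. The natural bounding function, suggested by reversing the integration-by-parts that produces $\hat\Delta$, is $\delta(\tau)=-g(\tau)\bigl(F_1(\ucb)-F_1(\lcb)\bigr)$, which vanishes at $\tau^*$. Correspondingly, $\hat\delta(\tau)=\hat\Delta(\tau)-(\hat R_\ucb-\hat R_\lcb)$ is precisely what the algorithm computes as its estimate of $\delta(\tau)$.

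I would first prove the Loss Bound (analog of \Cref{LossBound}): for $\tau,\tau^*\in[\lcb,\ucb]$, one has $R(\tau^*)-R(\tau)=\bigl|\int_\tau^{\tau^*}f_1(x)g(x)\,dx\bigr|$. Since $g$ is monotone decreasing with $g(\tau^*)=0$, $|g(x)|\le|g(\tau)|$ on the integration sub-interval, and bounding $\int f_1\le F_1(\ucb)-F_1(\lcb)$ yields one-round regret $\le|\delta(\tau)|$. Next, to verify the algorithm's template, applying integration by parts to $\int_\lcb^\ucb g'(x)F_1(x)\,dx=g(\ucb)F_1(\ucb)-g(\lcb)F_1(\lcb)-\int_\lcb^\ucb g(x)f_1(x)\,dx$ and simplifying shows the population version of $\hat\Delta(\tau)$ equals $\delta(\tau)+\bigl(R(\ucb)-R(\lcb)\bigr)$, so that $\hat\delta$ targets $\delta$.

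The main technical step is the Accuracy Bound $|\hat\delta(\tau)-\delta(\tau)|\le 4\epsilon$ uniformly in $\tau\in[\lcb,\ucb]$, with probability $1-T^{-10}$. Hoeffding on the $C\log T/\epsilon^2$ fresh samples gives $|\hat R_\ucb-R(\ucb)|,\,|\hat R_\lcb-R(\lcb)|\le\epsilon$. For $|\hat\Delta-\Delta|$, DKW on the $1000\log T/\epsilon$ CDF samples yields $\|\hat F_1-F_1\|_\infty,\,\|\hat F_2-F_2\|_\infty\le O(\sqrt{\epsilon})$, and via $\hat g(v)-g(v)=-\int_v^1(\hat F_2-F_2)\,dx$ also $\|\hat g-g\|_\infty\le O(\sqrt{\epsilon})$. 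The main obstacle is that a direct per-term bound on the three cross-terms in $\hat\Delta-\Delta$ (one for the $\ucb$-part, one for the $\lcb$-part, one for the integral) yields only $O(\sqrt{\epsilon})$ accuracy, which is too weak. The fix relies on the two input hypotheses: $|g(\tau)|\le T^{-1/4}\le\sqrt{\epsilon}$ (using $\epsilon>T^{-1/2}$) forces every $g$-factor appearing in the error expression to be $O(\sqrt{\epsilon})$, while $|\delta(\tau)|\le 16\epsilon$ couples $|g|$ with $F_1(\ucb)-F_1(\lcb)$, so that whenever a CDF-error of size $O(\sqrt{\epsilon})$ multiplies an expression depending on the interval width, the matching $\sqrt{\epsilon}$-factor materializes from the $\delta$-bound; the products then telescope to $O(\epsilon)$ as desired.

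With the Accuracy Bound in hand, the remainder is routine: $\hat\delta$ is nondecreasing in $\tau$ (since $\hat\delta'(\tau)=-\hat g'(\tau)(\hat F_1(\ucb)-\hat F_1(\lcb))\ge 0$ because $\hat g'\le 0$), so $[\lcb',\ucb']=\{\tau\in[\lcb,\ucb]:|\hat\delta(\tau)|\le 4\epsilon\}$ is a valid subinterval. Since $\delta(\tau^*)=0$, accuracy gives $|\hat\delta(\tau^*)|\le 4\epsilon$, placing $\tau^*$ in $[\lcb',\ucb']$ (Statement~1). For any $\tau\in[\lcb',\ucb']$, $|\delta(\tau)|\le|\hat\delta(\tau)|+4\epsilon\le 8\epsilon$ (Statement~2), and Statement~3 follows immediately from the Loss Bound. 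The round count $2C\log T/\epsilon^2\le 10000\log T/\epsilon^2$ holds by the construction of the algorithm.
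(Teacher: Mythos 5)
Your overall outline tracks the paper's three-claim structure (loss bound, accuracy bound, then a monotonicity argument placing $\tau^*$ and bounding $|\delta|$ on the new interval), and the loss bound and final containment steps are fine. The gap is in the accuracy bound $|\hat\delta(\tau)-\delta(\tau)|\le 4\epsilon$, which is the entire technical content of the lemma, and your proposed fix for it does not work.

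You correctly observe that a per-term bound on $\hat\Delta-\Delta$ gives only $O(\sqrt{\epsilon})$: with $N=\Theta(\log T/\epsilon)$ CDF samples, DKW yields $\|\hat F_j - F_j\|_\infty = \Theta(\sqrt{\epsilon})$, and unlike the Prophet case there is no $\ucb-\lcb\le T^{-1/4}$ bound on the interval width to push this down. You then claim that the hypotheses $|g|\le T^{-1/4}$ and $|\delta|\le 16\epsilon$ ``materialize'' an extra $\sqrt{\epsilon}$ factor. This is not the case: those hypotheses constrain \emph{population} quantities such as $(F_1(\ucb)-F_1(\lcb))\,|g(\tau)|$, but the problematic error terms involve the \emph{estimation noise} $(\hat g-g)$ or $(\hat F_1 - F_1)$, which is independent of how small $g$ itself is. Concretely, after integrating by parts, the cross-term $\int_\lcb^\ucb (F_1(\ucb)-F_1(x))(\hat F_2(x)-F_2(x))\,dx$ produces the boundary piece $(F_1(\ucb)-F_1(\lcb))\cdot(\hat g(\lcb)-g(\lcb))$. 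When $F_1(\ucb)-F_1(\lcb)=\Theta(1)$ (which is allowed, since then $|g|=O(\epsilon)$ satisfies the hypotheses) and $\hat g(\lcb)-g(\lcb)=\Theta(\sqrt{\epsilon})$ (its typical size with $\Theta(\log T/\epsilon)$ samples), this term alone is $\Theta(\sqrt{\epsilon})$, and no $g$-factor or $\delta$-factor is available to rescue it. So the per-term route is stuck at $\sqrt{\epsilon}$, not $\epsilon$.

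The paper's proof takes a genuinely different route here, which is the key idea of the whole lemma: it treats the full expression $\hat\Delta(\tau)$ as an empirical average of i.i.d.\ summands $\hat\Delta^{(k)}(\tau)$ (one per CDF sample), bounds $|\hat\Delta^{(k)}(\tau)|\le 1$ and bounds its \emph{second moment} via $\E[|\hat\Delta^{(k)}(\tau)|]\le (F_1(\ucb)-F_1(\lcb))(g(\lcb)-g(\ucb))\le 32\epsilon$ (this is where $|\delta|\le 16\epsilon$ enters), then applies Bernstein's inequality. The small-variance regime of Bernstein gives concentration at scale $\sqrt{\sigma^2\log T/N}+\log T/N = O(\epsilon)$, whereas the Hoeffding/per-term approach inherently gives only $O(\sqrt{\epsilon})$. (This is also why the lemma insists on \emph{fresh} i.i.d.\ samples for $\hat F_1,\hat F_2$ each phase, and why the paper discretizes $[\lcb,\ucb]$ and unions over a $1/T$-cover after proving a pointwise bound --- two further steps your outline omits.) Without the Bernstein argument, the claimed $O(\epsilon)$ accuracy --- and hence the whole lemma --- is not established.
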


To understand the main idea of the proof, let's compare the expected reward of choosing the optimal threshold $\tau^*$ and an arbitrary threshold $\tau \in [\lcb, \ucb]$. The difference is given by
\[
R(\tau^*) - R(\tau) ~=~ \int_0^{\tau^*} f_1(x) g(x) dx \,- \int_0^{\tau} f_1(x) g(x) dx ~=~ \int_\tau^{\tau^*} f_1(x) g(x) dx .
\]
Note that $g$ is non-increasing since $g'(x) = F_2(x) - 1 \leq 0$. So, using $\tau^*,\tau \in [\lcb,\ucb]$ imply $|F_1(\tau^*)-F_1(\tau)| \leq |F_1(\lcb)-F_1(\ucb)|$, we get $R(\tau^*) - R(\tau) \leq |(F_1(\lcb)-F_1(\ucb)) \cdot  g(\tau)|$. This motivates defining \emph{bounding function} 
\begin{align} \label{eq:deltaPBnTwo}
\delta(\tau) ~:=~ (F_1(\ucb) - F_1(\lcb)) \cdot \big( g(\tau^*) - g(\tau)) ~=~ -(F_1(\ucb)-F_1(\lcb)) \cdot  g(\tau),
\end{align}
and we get the following upper bound on the one-round regret when choosing $\tau$ instead of $\tau^*$.
\begin{restatable}{Claim}{PBBoundTwoBoxes}
\label{LossBoundPan}
If $\tau, \tau^* \in [\lcb, \ucb]$ then $R(\tau^*) - R(\tau) \leq |\delta(\tau)|$.
\end{restatable}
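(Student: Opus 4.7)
The plan is to start from the expression for the reward difference that the paper has already derived just above the claim statement: using $R(\tau) = -c_1 + \Ex{X_1} + \int_0^\tau f_1(x) g(x) dx$ gives
\[
R(\tau^*) - R(\tau) ~=~ \int_0^{\tau^*} f_1(x) g(x) dx - \int_0^{\tau} f_1(x) g(x) dx,
\]
so the whole proof reduces to bounding this integral by $|\delta(\tau)| = (F_1(\ucb)-F_1(\lcb)) \cdot |g(\tau)|$.

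The two structural facts I would use are: (i) $g$ is monotone non-increasing because $g'(x) = F_2(x)-1 \leq 0$, and (ii) $g(\tau^*)=0$ by the definition of the Weitzman reservation value (the equation $\Ex{\max\{X_2-\tau^*,0\}} = c_2$). Together these imply $g(x)$ has a well-defined sign on either side of $\tau^*$ and is sandwiched between $0$ and $g(\tau)$ on the segment between $\tau$ and $\tau^*$.

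The argument then splits into two symmetric cases. If $\tau \leq \tau^*$, monotonicity gives $0 = g(\tau^*) \leq g(x) \leq g(\tau)$ for every $x \in [\tau,\tau^*]$; hence
\[
R(\tau^*) - R(\tau) ~=~ \int_\tau^{\tau^*} f_1(x) g(x) dx ~\leq~ g(\tau)\bigl(F_1(\tau^*) - F_1(\tau)\bigr).
\]
Since $\tau,\tau^* \in [\lcb,\ucb]$ and $F_1$ is non-decreasing, $F_1(\tau^*)-F_1(\tau) \leq F_1(\ucb)-F_1(\lcb)$, and together with $g(\tau) \geq 0$ this yields $R(\tau^*)-R(\tau) \leq (F_1(\ucb)-F_1(\lcb))\cdot g(\tau) = |\delta(\tau)|$. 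If instead $\tau > \tau^*$, then on $[\tau^*,\tau]$ we have $g(\tau) \leq g(x) \leq 0$, so
\[
R(\tau^*) - R(\tau) ~=~ -\int_{\tau^*}^{\tau} f_1(x) g(x) dx ~\leq~ -g(\tau)\bigl(F_1(\tau)-F_1(\tau^*)\bigr) ~\leq~ |g(\tau)|\bigl(F_1(\ucb)-F_1(\lcb)\bigr) = |\delta(\tau)|.
\]

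There is no real obstacle here: the only ingredients are monotonicity of $g$, the defining identity $g(\tau^*)=0$, and the trivial bound $|F_1(\tau^*)-F_1(\tau)| \leq F_1(\ucb)-F_1(\lcb)$ coming from containment in the confidence interval. The mild care required is to handle the two cases $\tau \lessgtr \tau^*$ cleanly so that the sign of the integrand matches the sign of $g(\tau)$ in each case and the final bound lands with an absolute value on the right-hand side.
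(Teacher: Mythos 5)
Your proof is correct and takes essentially the same approach as the paper: both rest on writing the reward difference as $\int_\tau^{\tau^*} f_1(x)g(x)\,dx$, using monotonicity of $g$ together with $g(\tau^*)=0$ to bound $|g(x)|$ by $|g(\tau)|$ on the segment between $\tau$ and $\tau^*$, and then using $|F_1(\tau^*)-F_1(\tau)|\leq F_1(\ucb)-F_1(\lcb)$. The paper states this informally in one line ("the two settings differ only when $X_1$ is between $\tau$ and $\tau^*$"), whereas your explicit two-case treatment of $\tau\lessgtr\tau^*$ makes the sign bookkeeping fully precise; the substance is identical.
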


In order to define an estimate $\hat{\delta}(\tau)$ that can be computed using the available information, again consider the rewards when playing thresholds $\ucb$ and $\lcb$. The difference is given by
\begin{align*}
R(\ucb) - R(\lcb) ~=~ \int_\lcb^{\ucb} f_1(x) g(x) dx ~&=~ F_1(\ucb) g(\ucb) - F_1(\lcb) g(\lcb) - \int_\lcb^{\ucb} F_1(x) g'(x) dx \\
~&=~ F_1(\ucb) g(\ucb) - F_1(\lcb) g(\lcb) - \int_\lcb^{\ucb} F_1(x) \cdot(F_2(x) - 1) dx.
\end{align*}
Adding this equation with the definition of $\delta(\tau)$ gives $ \delta(\tau) + R(\ucb) - R(\lcb)$ equals
\begin{align} \label{eq:DeltaDefPBTwo}
F_1(\ucb) \cdot \big( g(\ucb) - g(\tau)\big) - F_1(\lcb) \cdot  \big( g(\lcb) - g(\tau )\big) - \int_\lcb^{\ucb} F_1(x) \cdot  (F_2(x) - 1) dx ~=:~ \Delta(\tau),
\end{align}
which gives us  an alternate way to express 
$\delta(\tau) = \Delta(\tau) - (R(\ucb) - R(\lcb)).$ 
So, we define the estimate 
\[ \hat{\delta}(\tau) := \hat{\Delta}(\tau) - (\hat{R}_\ucb - \hat{R}_\lcb) ,\] 
where $\hat{\Delta}$ uses the estimates $\hat{F}_1$ and  $\hat{g}$ instead of $F_1$ and $g$ in \eqref{eq:DeltaDefPBTwo},  and  to estimate $(\hat{R}_\ucb-\hat{R}_\lcb)$ we use  empirical averages obtained in the current phase. We have the following claim on the accuracy of $\hat{\delta}$ in \Cref{sec:missingPandoraTwo}, which is similar to \Cref{AccBound}.

\begin{restatable}{Claim}{AccBoundPan}
\label{AccBoundPan}
In \Cref{ISAPan}, if the conditions in \Cref{mainthmpan} hold, then with probability $1 - T^{-10}$  $|\hat{\delta}(\tau) - \delta(\tau)| \leq 4\epsilon$ simultaneously for all $\tau \in [\lcb, \ucb]$.
\end{restatable}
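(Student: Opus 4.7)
The plan is to mirror the structure of the proof of \Cref{AccBound}, splitting $\hat{\delta}(\tau) - \delta(\tau)$ into the estimation error of the bounding function and the estimation error of the empirical rewards, then controlling each piece via Hoeffding and DKW concentration.

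First I would invoke the identity $\delta(\tau) = \Delta(\tau) - (R(\ucb) - R(\lcb))$ from \eqref{eq:DeltaDefPBTwo} together with the definition $\hat{\delta}(\tau) = \hat{\Delta}(\tau) - (\hat{R}_\ucb - \hat{R}_\lcb)$ to obtain
\[
|\hat{\delta}(\tau) - \delta(\tau)| \leq |\hat{\Delta}(\tau) - \Delta(\tau)| + |\hat{R}_\ucb - R(\ucb)| + |\hat{R}_\lcb - R(\lcb)|.
\]
Each reward term is at most $\epsilon$ with probability $\geq 1 - T^{-11}$ by Hoeffding applied to the $C \log T/\epsilon^2$ samples collected at threshold $j \in \{\lcb, \ucb\}$ (rewards lie in $[-1,1]$), contributing at most $2\epsilon$ in total. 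Then I would apply DKW to the $1000 \log T / \epsilon$ samples used to form each $\hat{F}_i$, yielding $\max_x |\hat{F}_i(x) - F_i(x)| \leq O(\sqrt{\epsilon})$ with probability $\geq 1 - T^{-11}$ for $i \in \{1,2\}$. Because $\hat{g}(x) - g(x) = -\int_x^1 (\hat{F}_2 - F_2)(y)\,dy$, this immediately gives $\sup_x |\hat{g}(x) - g(x)| \leq O(\sqrt{\epsilon})$; combined with the hypothesis $|g| \leq T^{-1/4}$ and $\epsilon \geq T^{-1/2}$, it also forces $|\hat{g}(x)| \leq O(\sqrt{\epsilon})$ uniformly on $[\lcb, \ucb]$.

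The heart of the argument is showing $|\hat{\Delta}(\tau) - \Delta(\tau)| \leq 2\epsilon$. I would split $\hat{\Delta}(\tau) - \Delta(\tau)$ into three pieces: the two endpoint-product discrepancies $\hat{F}_1(j)(\hat{g}(j) - \hat{g}(\tau)) - F_1(j)(g(j) - g(\tau))$ for $j \in \{\lcb, \ucb\}$, and the integral discrepancy $\int_\lcb^\ucb \hat{g}'(x) \hat{F}_1(x)\,dx - \int_\lcb^\ucb g'(x) F_1(x)\,dx$. Each piece is decomposed via the product identity $AB - CD = (A-C)B + C(B-D)$ (using $\hat{g}' - g' = \hat{F}_2 - F_2$ on the integral piece) so that every resulting summand is the product of two quantities each of order $O(\sqrt{\epsilon})$: one factor is a CDF or gain-function estimation error of size $O(\sqrt{\epsilon})$ from the previous paragraph; the other is an interval increment such as $\hat{g}(j) - \hat{g}(\tau)$, $g(j) - g(\tau)$, or $\int_\tau^j (\hat{F}_2 - F_2)(x)\,dx$, which is itself $O(\sqrt{\epsilon})$ because $|g|, |\hat{g}| \leq O(\sqrt{\epsilon})$ throughout $[\lcb, \ucb]$. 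Summing over the pieces gives $|\hat{\Delta}(\tau) - \Delta(\tau)| \leq 2\epsilon$ uniformly in $\tau$, and a union bound over the two DKW events and the two Hoeffding events yields $|\hat{\delta}(\tau) - \delta(\tau)| \leq 4\epsilon$ with probability $\geq 1 - T^{-10}$.

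The hard part will be the cross-term of the form $F_1(j) \cdot [(\hat{g}(j) - \hat{g}(\tau)) - (g(j) - g(\tau))] = F_1(j) \cdot \int_\tau^j (\hat{F}_2 - F_2)(x)\,dx$, where the CDF-error factor is $O(\sqrt{\epsilon})$ but the other factor is \emph{a priori} bounded only by the interval width. This is where the hypothesis $|\delta(\tau)| = (F_1(\ucb) - F_1(\lcb)) \cdot |g(\tau)| \leq 16\epsilon$ becomes decisive: it lets us trade the smallness of $|g|$ on the interval against the smallness of the jump $F_1(\ucb) - F_1(\lcb)$, so that regardless of which of the two is small, the product of errors comes out $O(\epsilon)$.
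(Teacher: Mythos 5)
Your decomposition works for the Prophet Inequality version of this claim (\Cref{AccBound}), but it breaks down here for exactly the reason the paper flags in the paragraph after the statement: in the Pandora's Box setting the interval $[\lcb,\ucb]$ is \emph{not} guaranteed to have length $O(T^{-1/4})$, and the CDF estimates come from only $\Theta((\log T)/\epsilon)$ fresh samples, so DKW gives only $\sup_x|\hat F_i(x)-F_i(x)|=\Theta(\sqrt\epsilon)$. You correctly identify the dangerous cross-term $F_1(j)\cdot\int_\tau^j(\hat F_2-F_2)\,dx$, but the hypothesis $|\delta(\tau)|=(F_1(\ucb)-F_1(\lcb))\,|g(\tau)|\le 16\epsilon$ does not save it: the factor in your cross-term is $F_1(j)$, not the jump $F_1(\ucb)-F_1(\lcb)$, and even if you switch to the integral form $\int_\tau^\ucb (F_1(\ucb)-F_1(x))(\hat F_2-F_2)\,dx$ so that the $F_1$-factor becomes the jump, the product is $(F_1(\ucb)-F_1(\lcb))\cdot(\ucb-\tau)\cdot O(\sqrt\epsilon)$, which is $\Theta(\sqrt\epsilon)$ when $F_1(\ucb)-F_1(\lcb)$ and $\ucb-\lcb$ are both $\Theta(1)$. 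A concrete bad regime is $\lcb=0$, $\ucb=1$, $F_1$ uniform, and $F_2\equiv 1-\Theta(\epsilon)$ on $[0,1]$: all your hypotheses hold, yet DKW only certifies $\sup|\hat F_2-F_2|=O(\sqrt\epsilon)$, so your bound on $|\hat\Delta(\tau)-\Delta(\tau)|$ bottoms out at $O(\sqrt\epsilon)\gg\epsilon$.

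The paper's actual proof takes a genuinely different route. Rather than first replacing each $F_i$ with $\hat F_i$ via a uniform DKW-type bound and then multiplying error terms, it treats $\hat\Delta(\tau)=\frac1N\sum_k\hat\Delta^{(k)}(\tau)$ as an empirical average of i.i.d.\ per-sample estimators, each bounded in $[-1,1]$, and exploits the \emph{variance} of $\hat\Delta^{(k)}(\tau)$. The key step is the magnitude bound $\Ex{|\hat\Delta^{(k)}(\tau)|}\le(F_1(\ucb)-F_1(\lcb))(g(\lcb)-g(\ucb))\le|\delta(\lcb)|+|\delta(\ucb)|\le 32\epsilon$, which is precisely where the hypothesis $|\delta|\le16\epsilon$ enters; since $|\hat\Delta^{(k)}|\le1$ this gives $\textsf{Var}[\hat\Delta^{(k)}(\tau)]\le\Ex{|\hat\Delta^{(k)}(\tau)|}\le32\epsilon$, and Bernstein's inequality then yields $|\hat\Delta(\tau)-\Delta(\tau)|=O(\epsilon)$ with high probability from only $O((\log T)/\epsilon)$ samples. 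This variance-based argument automatically captures the ``trade'' you were hoping for --- in the regime where $F_1(\ucb)-F_1(\lcb)$ is large and $|g|$ is tiny, the single-sample estimators are typically $0$, so the effective sample noise is far below the DKW worst case. Finally, note that the paper's Bernstein bound is pointwise in $\tau$, so it must be followed by a discretization of $[\lcb,\ucb]$ to accuracy $1/T$, a union bound over the net, and a Lipschitz argument for $\Delta$ and $\hat\Delta$; your write-up omits this step, which is a second (smaller) gap.
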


The proof of \Cref{AccBoundPan} is different from \Cref{AccBound}: After the initialization, it's not possible to give an initial confidence interval of length at most $T^{-\frac{1}{4}}$. So, we cannot prove an $O(T^{-\frac{1}{2}})$ accuracy for $\Delta(\tau)$. Instead, we use the fact that $\textsf{Var} \Delta(\tau) \leq O(\epsilon)$ to   give an $O(\epsilon)$ accuracy bound using Bernstein inequality (\Cref{Bernstein}) for a single $\tau$. To extend the bound to the whole interval, we discretize and apply a union bound. To avoid the dependency from the previous phases when discretizing, in each phase we use new samples to construct $\hat F_1$ and $\hat F_2$. This is the reason that we introduce sample sets in \Cref{ISAPan}.



Now the proof  of \Cref{mainthmpan} is similar to the proof of \Cref{mainthm} via Claims~\ref{LossBoundPan} and \ref{AccBoundPan}.

Finally, we state the main theorem for Pandora's Box problem with two boxes in a fixed order.

\begin{Theorem}
\label{PB2Regret}
For Bandit Pandora's Box learning problem with two boxes in a fixed order, there exists an algorithm that achieves $O(\sqrt{T} \log T)$ total regret.
\end{Theorem}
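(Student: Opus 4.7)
The plan is to mirror the doubling argument used for \Cref{PI2Regret}, but with \Cref{ISAPan} in place of \Cref{ISA}: run a short initialization that seeds good CDF estimates and a narrow starting confidence interval, then iterate \Cref{ISAPan} over $O(\log T)$ phases with $\epsilon_i=2^{-i}$, and sum the per-phase regret. The one genuine complication relative to \Cref{PI2Regret} is that \Cref{mainthmpan} demands \emph{fresh} i.i.d.\ samples of $X_1$ and $X_2$ in every phase (this is essential for the Bernstein/union-bound proof of \Cref{AccBoundPan}), so the sampling overhead must be bookkept explicitly rather than folded into a black-box invocation of \Cref{DoublingBound}.

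For the initialization, I would play $\tau=0$ and $\tau=1$ for $\Theta(\sqrt{T}\log T)$ rounds each. The first yields direct i.i.d.\ samples of $X_1$ (the utility equals $X_1-c_1$), and the second yields i.i.d.\ samples of $\max(X_1,X_2)$ (the utility equals $\max(X_1,X_2)-c_1-c_2$). By \Cref{DKW} I obtain uniform $T^{-1/4}$-accurate estimates $\hat F_1$ and $\hat F_{\max}$, and since $F_{\max}=F_1 F_2$ by independence I recover $\hat F_2=\hat F_{\max}/\hat F_1$ to the same accuracy on the relevant portion of its domain where $\hat F_1$ is bounded away from $0$. A short search over the thresholds I have already probed then produces an initial interval $[\lcb_1,\ucb_1]$ of width $O(T^{-1/4})$ that contains $\tau^*$; since $|g'|\leq 1$ and $g(\tau^*)=0$ this forces $|g(\tau)|\leq T^{-1/4}$ throughout the interval, and hence $|\delta(\tau)|\leq|F_1(\ucb_1)-F_1(\lcb_1)|\leq 1\leq 16\epsilon_1$, meeting the preconditions of \Cref{mainthmpan} at $\epsilon_1=1$. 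The initialization contributes $O(\sqrt{T}\log T)$ regret.

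Next, I would iterate \Cref{ISAPan} with $\epsilon_i=2^{-i}$ until $\epsilon_i<T^{-1/2}$, and then play any fixed threshold inside the final interval for the remaining rounds. The invariants to carry into phase $i$ -- namely $\tau^*\in[\lcb_i,\ucb_i]$, the bound $|g(\tau)|\leq T^{-1/4}$ (preserved because intervals only shrink toward $\tau^*$), and $|\delta(\tau)|\leq 16\epsilon_i$ -- come directly from Statements~1 and~2 of \Cref{mainthmpan} applied to phase $i-1$, using the identity $8\epsilon_{i-1}=16\epsilon_i$. Each phase then splits into $O(\log T/\epsilon_i)$ sampling rounds at $\tau\in\{0,1\}$ to refresh $\hat F_1,\hat F_2$, followed by the $O(\log T/\epsilon_i^2)$ rounds of \Cref{ISAPan} itself.

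Finally, I would sum the regret by induction on the phase. By Statement~3 of the previous phase's output, every threshold played inside $[\lcb_i,\ucb_i]$ has one-round regret $O(\epsilon_i)$, so the $O(\log T/\epsilon_i^2)$ main rounds of phase $i$ contribute $O(\log T/\epsilon_i)$; the $O(\log T/\epsilon_i)$ sampling rounds contribute no more, using the trivial per-round bound of $O(1)$. Summing the resulting geometric series over $\epsilon_i\in[T^{-1/2},1]$ gives $O(\sqrt{T}\log T)$, and the trailing fixed-threshold phase adds at most $T\cdot O(T^{-1/2})=O(\sqrt{T})$. The step I expect to need the most care is neither the invariant propagation nor the regret sum but the $\hat F_2$ sampling: we never observe $X_2$ in isolation under bandit feedback, so the product-CDF inversion above (or a similarly indirect estimator) must be justified uniformly on the range where \Cref{AccBoundPan} uses it -- this is precisely what blocks the clean black-box reduction to \Cref{DoublingBound} that worked for \Cref{PI2Regret}.
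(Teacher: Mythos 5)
Your high-level plan — initialize CDF estimates and a starting interval, then iterate \Cref{ISAPan} over $O(\log T)$ doubling phases with the accounting for the per-phase CDF refresh done explicitly — is exactly the paper's plan (see \Cref{PB2Init} and the proof of \Cref{PB2Regret} in \Cref{sec:missingPandoraTwo}). The invariant-propagation step ($8\epsilon_{i-1}=16\epsilon_i$, interval shrinking only decreasing $|\delta|$) and the geometric sum are also the paper's. However, there are two gaps.

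First, your initialization claims to produce an interval $[\lcb_1,\ucb_1]$ of \emph{width} $O(T^{-1/4})$ containing $\tau^*$, and then derives $|g(\tau)|\le T^{-1/4}$ from $|g'|\le 1$. The premise is not achievable in general: if $g$ is nearly flat around $\tau^*$ (i.e.\ $F_2\approx 1$ there), then $O(\sqrt T\log T)$ samples pin down $g$ to accuracy $T^{-1/4}$ but \emph{not} $\tau^*$ itself, and the set of near-optimal thresholds can be wide. Fortunately, width is not what \Cref{mainthmpan} needs. The paper instead defines $[\lcb,\ucb]:=\{\tau:|\hat g(\tau)|\le\tfrac{1}{2}T^{-1/4}\}$, using monotonicity of $\hat g$, and this directly yields the actual precondition $|g(\tau)|\le T^{-1/4}$ on the whole interval, regardless of its width. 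You should adopt that construction rather than a width bound.

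Second, and more seriously, the $\hat F_2:=\hat F_{\max}/\hat F_1$ device does not plug into \Cref{AccBoundPan}. That claim's proof is a Bernstein argument whose cornerstone is that the per-sample estimator $\hat\Delta^{(k)}(\tau)$, built from indicator estimates $\hat F_1^{(k)}$ and $\hat F_2^{(k)}$ of \emph{independent} draws $X_1^{(k)},X_2^{(k)}$, is \emph{unbiased} with mean $\Delta(\tau)=O(\epsilon)$; the variance bound in \eqref{eq:dovomi} follows from $\mathsf{Var}\le\E[(\cdot)^2]\le\E[\cdot]$, which is only available precisely because the estimator is a bounded nonnegative unbiased one. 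Your quotient estimator is neither per-sample (it is a ratio of two empirical CDFs) nor unbiased, and its additive error is $\Theta(\varepsilon/F_1(x))$, which is uncontrolled wherever $F_1$ is small — exactly the part of $[\lcb,1]$ needed for $\hat g(v)=-c_2+(1-v)-\int_v^1\hat F_2$. You flag this yourself as "the step I expect to need the most care," but it is in fact a step your proof does not complete: as written it breaks the mechanism by which \Cref{AccBoundPan} is proved. It is worth noting that the paper's warm-up section glosses over the same point — both \Cref{PB2Init} and the preconditions of \Cref{mainthmpan} assume fresh i.i.d.\ samples of $X_2$, which the stated bandit feedback with a fixed box order does not supply. (In the general-$n$ setting of \Cref{PBGen} this disappears, since the algorithm can place box $i$ first with $\tau_{\pi(2)}=0$ and observe $X_i-c_i$.) So you have correctly identified a real issue in the fixed-order warm-up, but your proposed fix does not close it; you would either need to adopt the paper's implicit simplifying assumption of direct $X_2$ samples, or rework the estimator and the accuracy claim so that unbiasedness and the variance bound are restored.
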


The proof  of \Cref{PB2Regret} is similar to \Cref{PI2Regret}: We first show that  $\Theta(\sqrt{T}\log T)$ initial samples are sufficient to meet the conditions in \Cref{mainthmpan}. Combining this with \Cref{DoublingBound} proves the theorem. See \Cref{sec:missingPandoraTwo} for    details.

%
%

\section{Prophet Inequality  for General $n$}
In the Bandit Prophet Inequality problem,  there are $n$ unknown independent distributions $\D_1, \ldots , \allowbreak\D_n$ taking values  in $[0,1]$ with cdfs $F_1,  \ldots, F_n$ and densities $f_1,\ldots, f_n$. Consider a $T$ rounds game where round $t$ we play thresholds $\btau^{(t)}= (\tau_1^{(t)}, \tau_2^{(t)},\ldots,\allowbreak \tau_{n-1}^{(t)}, \tau_n^{(t)} = 0)$ and receive  the following reward: \newadded{For $i \in [n]$, independently draw  $X_i^{(t)}$ from  $\D_i$. Let $j = \min \{i \in [n]: X_i^{(t)} \geq \tau_i^{(t)}\}$. $X_j^{(t)}$ is returned as the reward.}
The only  feedback  is the reward, and we do not see the index $j$ of the selected random variable. Since we have $\tau_n^{(t)} = 0$,  the algorithm will always select a value.
In the following, we omit $\tau_n^{(t)}$ and only use  $ \btau^{(t)}:=(\tau_1^{(t)}, \tau_2^{(t)},\ldots,\allowbreak \tau_{n-1}^{(t)})$ to represent a threshold setting. 

Let $\optafter{i}$ represent the optimal expected reward if only running on distributions $\D_i, \D_{i+1},\ldots, \D_n$. Then, the optimal $i$-th threshold setting  is exactly $\optafter{i + 1}$. We can calculate $\{\optafter{i + 1}\}$ as follows:
\begin{itemize}
    \item Let $\optafter{n} =\Ex{X_n}$
    \item For $i = n - 1 \to 1$: Let $\optafter{i} = R(1,1,\ldots,1, \optafter{i+1}, \optafter{i+2},\ldots, \optafter{n})$, where the function $R(\btau)$ represents the expected one-round reward under thresholds $\btau=(\tau_1,\ldots, \tau_{n-1})$.  
\end{itemize}
 The \emph{total regret} is defined  
 \[ \textstyle T \cdot \optafter{1} - \sum_{t=1}^T R(\btau^{(t)}) .\]

\paragraph{\textbf{High-Level Approach. }}

Following the doubling framework from \Cref{Doubling}, we only need to design an initialization algorithm and a constraint-updating algorithm. For the initialization, we get $O(\poly(n)\sqrt{T}\log T)$ i.i.d. samples for each $X_i$ by playing thresholds $(1,1,\ldots, \tau_{i-1}=1, \tau_i=0, 0, \ldots, 0)$. Besides, we run $O(\poly(n)\sqrt{T} \log T)$ samples to get the initial confidence intervals with small length. 
For the constraint-updating algorithm, we reuse the idea from the $n = 2$ case where we shrink  confidence intervals by testing $X_i$ with thresholds $\lcb_i$ or $\ucb_i$. However, there are two major new challenges while testing $X_i$. 

The first challenge while testing $X_i$ is that we may stop early, and not get sufficiently many samples for $X_i$. Although the probability of reaching $X_i$ could be very small, this also means that we will not reach $X_i$ frequently. To avoid this problem, for  $j < i$, we use the upper confidence bounds as thresholds since they maximize the probability of reaching $X_i$. In particular, it is at least as high as in the optimal policy. Therefore, we will be able to show that the probability term cancels in calculation, so the total loss from $X_i$ can still be bounded.

The second challenge is that when we are testing $X_i$, we need to also set thresholds $\tau_j$ for $j > i$. The problem is that the optimal choice for $\tau_i$ depends on $\tau_j$ for $j>i$. 
To cope this this problem, in our algorithm we use the \textbf{lower confidence bounds} as  thresholds for $j > i$. 
Formally, let $\algafter{i}$ denote the expected reward if only running on distributions $\D_i,\ldots, \D_n$ with lower confidence bounds as the thresholds, i.e., 
\[ 
    \algafter{i} ~:=~ R(1,\ldots, 1, \tau_i = \lcb_i, \tau_{i+1} = \lcb_{i+1},\ldots , \tau_{n-1} = \lcb_{n-1}).
\]
%
Now, under our threshold setting, we can only hope to learn $\algafter{i+1}$, while the optimal threshold is $\optafter{i+1}$. So, our key idea is to first get a new confidence interval for $\algafter{i+1}$. \newadded{Then, since we have $\algafter{i+1} \leq \optafter{i+1}$, the lower bound for $\algafter{i+1}$ is also a lower bound for $\optafter{i+1}$. For the upper bound, we first bound the difference between $\optafter{i+1}$ and $\algafter{i+1}$, and adding this difference to the upper bound for $\algafter{i+1}$ gives the upper bound for  $\optafter{i+1}$.}

\subsection{Interval-Shrinking Algorithm for General $n$}

In this section, we give the interval shrinking algorithm, and provide the regret analysis to show that we can get a new group of confidence intervals that achieves $O(\epsilon)$ regret after $\OTild(\frac{\poly(n)}{\epsilon^2})$ rounds. We first give the algorithm and the corresponding lemma.


\begin{algorithm}[tbh]
\caption{Interval shrinking Algorithm for general $n$}
\label{ISAgen}
\KwIn{Intervals $[\lcb_1, \ucb_1], \ldots, [\lcb_{n-1}, \ucb_{n-1}]$, CDF estimates $\hat{F}_1(x),  \ldots , \hat{F}_n(x)$, and $\epsilon$. }

For $i \in [n-1]$, define $\hat{P}_{i} := \prod_{j \in [i-1]} \hat{F}_{j}(\ucb_{j})$ \\
\For{$i = n - 1 \to 1$}
{
Run $C \cdot \frac{\log T}{\epsilon^2}$ rounds with thresholds  $(\ucb_1, \ldots, \ucb_{i-1}, \lcb_i, \newthreshold_{i+1},
\ldots \newthreshold_{n-1})$ and $C \cdot \frac{\log T}{\epsilon^2}$ rounds with $(\ucb_1, \ldots, \ucb_{i-1}, \ucb_i, \newthreshold_{i+1},\ldots \newthreshold_{n-1})$. Let $\hat{D}_i$ be the difference of the average rewards. \\
For $\tau \in [\lcb_i, \ucb_i]$, define $\hat{\Delta}_i(\tau) := \hat{P}_i(\hat{F}_i(\ucb_i)(\tau - \ucb_i) - \hat{F}_i(\lcb_i)(\tau - \lcb_i) + \int_{\lcb_i}^{\ucb_i} \hat{F}_{i}(x)dx)$. \\
For $\tau \in [\lcb_i, \ucb_i]$, define $\hat{\delta}_i(\tau) := \hat{\Delta}_i(\tau) - \hat{D}_i$. \label{alg:lineDefnDelta} \\
Let $\lcb'_i =\min \big\{\tau \in [\lcb_i, \ucb_i] \text{ s.t. } \hat{\delta}_i(\tau) \geq -\epsilon \big\}$. \\
Let $\ucb'_i = \max \big\{\tau \in [\lcb_i, \ucb_i] \text{ s.t. } \hat{\delta}_i(\tau) \leq  (2n-2i-1)\epsilon \big\}$.
}
\KwOut{$[\lcb'_2, \ucb'_2],\ldots, [\lcb'_{n}, \ucb'_{n}]$}
\end{algorithm}

\begin{Lemma}
\label{maingen}
Suppose we are given:
\begin{itemize}
\item \emph{Distribution estimates} $ \hat{F}_i(x)$ for $i \in [n-1]$ satisfying  $|\prod_{i \in S} \hat{F}_i(x) - \prod_{i \in S} F_i(x)| \leq T^{-{1}/{4}}$ for  all $x \in [0,1]$ and $S \subseteq[n]$.
\item \emph{Initial intervals} $[\lcb_i, \ucb_i]$ for $i \in [n-1]$  of length  $\ucb_i - \lcb_i \leq T^{-{1}/{4}}$ that satisfy  $\optafter{i+1} \in [\lcb_i, \ucb_i]$ and $\algafter{i+1} \in [\lcb_i, \ucb_i]$.
 \end{itemize}
 Then, for  $\epsilon > 12T^{-\frac{1}{2}}$ \Cref{ISAgen} runs no more than 
 $1000 \cdot \frac{n\log T}{\epsilon^2}$ rounds such that in each round the threshold $\btau$ satisfies $\tau_i \in [\lcb_i, \ucb_i]$ for all $i  \in [n-1]$. Moreover,  with probability $1 - T^{-10}$ the following statements hold:
\begin{enumerate}[label=(\roman*)]
     \item \label{Sta1} $\optafter{i+1} \in [\lcb'_i, \ucb'_i]$ for all $i \in [n-1]$.
    \item \label{Sta2} Let $\algafterprime{i} := R(1, \ldots1, \lcb'_{i},\ldots, \lcb'_{n-1})$ for $i\in[n-1]$. Then $\algafterprime{i+1} \in [\lcb'_i, \ucb'_i]$.
    \item \label{Sta3} For every threshold setting $ \btau= (\tau_1, \ldots, \tau_{n-1})$ where $\tau_i \in [\lcb'_i, \ucb'_i]$, the expected one-round regret of playing $\btau$ is at most $2n^2\epsilon$.
\end{enumerate}
\end{Lemma}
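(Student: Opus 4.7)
The plan is to extend the bounding-function framework from the $n=2$ case (\Cref{ISASec}) to an inner-loop sweep $i = n-1, n-2, \ldots, 1$, with a joint downward induction that handles the fact that $\algafterprime{i+1}$ (rather than $\optafter{i+1}$) is the natural ``zero'' of each bounding function.

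First I would define, for each $i$, the per-position bounding function
\[
\delta_i(\tau) ~:=~ P_i\bigl(F_i(\ucb_i)-F_i(\lcb_i)\bigr)\bigl(\tau - \algafterprime{i+1}\bigr),\qquad P_i := \prod_{j<i} F_j(\ucb_j),
\]
where $P_i$ is the true probability of reaching $X_i$ under the upper-bound prefix $(\ucb_1,\ldots,\ucb_{i-1})$ actually played in the inner loop (chosen to \emph{maximise} the reach probability, so that the samples at $X_i$ are plentiful), and $\algafterprime{i+1}$ is precisely the continuation value induced by the tail $(\newthreshold_{i+1},\ldots,\newthreshold_{n-1})$. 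The same integration-by-parts manipulation as in \Cref{ISASec} yields $\delta_i(\tau) = \Delta_i(\tau) - (R_{\ucb_i}-R_{\lcb_i})$, so $\hat\delta_i$ as defined on line~\ref{alg:lineDefnDelta} of \Cref{ISAgen} is the natural plug-in estimate. The hypothesised product-CDF accuracy, the $T^{-1/4}$ initial width, Hoeffding on the $C\log T/\epsilon^2$ empirical samples of each configuration, and a union bound over $i$ together give $|\hat\delta_i(\tau) - \delta_i(\tau)| \le \epsilon$ simultaneously for all $i\in[n-1]$ and all $\tau \in [\lcb_i,\ucb_i]$, with probability at least $1 - T^{-10}$.

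Next I would prove statements \ref{Sta1}, \ref{Sta2}, and the auxiliary inequality
\[
\optafter{j+1} - \algafterprime{j+1} ~\le~ 2(n-1-j)\epsilon
\]
jointly by downward induction on $j$. The base $j=n-1$ is trivial since $\optafter{n} = \algafterprime{n} = \Ex{X_n}$. At step $i$: statement \ref{Sta2} follows because $\delta_i(\algafterprime{i+1}) = 0$ forces $\hat\delta_i(\algafterprime{i+1}) \in [-\epsilon,\epsilon] \subseteq [-\epsilon,(2n-2i-1)\epsilon]$; the lower half of \ref{Sta1} is the chain $\lcb'_i \le \algafterprime{i+1} \le \optafter{i+1}$; and the upper half is
\[
\delta_i(\optafter{i+1}) = P_i(F_i(\ucb_i)-F_i(\lcb_i))(\optafter{i+1}-\algafterprime{i+1}) \le \optafter{i+1}-\algafterprime{i+1} \le 2(n-1-i)\epsilon,
\]
so $\hat\delta_i(\optafter{i+1})\le(2n-2i-1)\epsilon$ and hence $\optafter{i+1}\le\ucb'_i$. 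Propagating the auxiliary bound from $i+1$ to $i$ is the main obstacle. Unfolding the one-step recursions and applying integration by parts yields the clean decomposition
\[
\optafter{i+1} - \algafterprime{i+1} ~=~ \int_{\lcb'_{i+1}}^{\optafter{i+2}} \bigl[F_{i+1}(x) - F_{i+1}(\lcb'_{i+1})\bigr]\,dx ~+~ F_{i+1}(\lcb'_{i+1})\bigl(\optafter{i+2}-\algafterprime{i+2}\bigr),
\]
whose second summand is $\le 2(n-2-i)\epsilon$ by the inductive hypothesis. The first summand should be $\le 2\epsilon$ via the interval guarantee $|\delta_{i+1}(\lcb'_{i+1})| \le 2\epsilon$; the delicate point is that this first summand does not carry the prefix factor $P_{i+1}$, so one must exploit that $\lcb'_{i+1},\optafter{i+2}\in[\lcb_{i+1},\ucb_{i+1}]$ and that the integrand is dominated by $F_{i+1}(\ucb_{i+1})-F_{i+1}(\lcb_{i+1})$ times a length bounded by $\algafterprime{i+2}-\lcb'_{i+1}$, reconstructing precisely the combination that the interval construction controls.

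Finally, statement \ref{Sta3} follows by a one-coordinate-at-a-time telescoping of $\optafter{1}-R(\btau)$ into $n-1$ single-swap regrets between $(\optafter{2},\ldots,\optafter{n})$ and $\btau$. Each swap at position $i$ is bounded (analogously to \Cref{LossBound}) by the reach probability (at most $P_i$ since $\tau_j\le\ucb_j$ for $j<i$) times $|\tau_i - \optafter{i+1}|\cdot|F_i(\tau_i)-F_i(\optafter{i+1})|$; the triangle-inequality split $|\tau_i-\optafter{i+1}|\le|\tau_i-\algafterprime{i+1}|+|\optafter{i+1}-\algafterprime{i+1}|$ then decomposes each swap into a ``$\delta_i$-part'' of size at most $(2n-2i)\epsilon$ (from the interval plus accuracy) and an ``auxiliary-bound part'' of size at most $2(n-1-i)\epsilon$, yielding $O(n\epsilon)$ per swap and $\le 2n^2\epsilon$ in total. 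The round count $2(n-1)\cdot C\log T/\epsilon^2 \le 1000n\log T/\epsilon^2$ and the invariant $\tau_i\in[\lcb_i,\ucb_i]$ at every played round are immediate from inspection of the inner loop.
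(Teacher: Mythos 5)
Your overall framework is correct and closely mirrors the paper's: define $\delta_i(\tau) = P_i\bigl(F_i(\ucb_i)-F_i(\lcb_i)\bigr)(\tau-\algafterprime{i+1})$, estimate it via the plug-in $\hat\delta_i$, and propagate an auxiliary bound on $\optafter{i+1}-\algafterprime{i+1}$ by downward induction. You also correctly flag the ``delicate point'': the first summand in your decomposition
\[
\optafter{i+1} - \algafterprime{i+1} = \int_{\lcb'_{i+1}}^{\optafter{i+2}} \bigl[F_{i+1}(x) - F_{i+1}(\lcb'_{i+1})\bigr]dx + F_{i+1}(\lcb'_{i+1})\bigl(\optafter{i+2}-\algafterprime{i+2}\bigr)
\]
is not of the form $\delta_{i+1}(\cdot)$, because the prefix factor $P_{i+1}$ is missing. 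However, your proposed resolution does not close the gap. You claim the first summand is dominated by $\bigl(F_{i+1}(\ucb_{i+1})-F_{i+1}(\lcb_{i+1})\bigr)\cdot|\algafterprime{i+2}-\lcb'_{i+1}|$, ``reconstructing precisely the combination that the interval construction controls'' --- but the interval construction controls $\delta_{i+1}(\lcb'_{i+1}) = P_{i+1}\bigl(F_{i+1}(\ucb_{i+1})-F_{i+1}(\lcb_{i+1})\bigr)(\lcb'_{i+1}-\algafterprime{i+2})$, so the first summand is only bounded by $2\epsilon/P_{i+1}$, not by $2\epsilon$. When $P_{i+1}$ is small the two bounds diverge badly, and your auxiliary claim $\optafter{j+1}-\algafterprime{j+1} \le 2(n-1-j)\epsilon$ is simply not achievable from the information available.

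The fix is what the paper does (Claim~\ref{BeforeMain3}): accept the weaker auxiliary bound $\optafter{i}-\algafterprime{i} \le \tfrac{2(n-i)\epsilon}{P_i}$, and observe that in every downstream use the factor $P_i$ re-enters to cancel it. Concretely, your bound on $\delta_i(\optafter{i+1})$ becomes $P_i\,F_i(\ucb_i)\cdot\tfrac{2(n-i-1)\epsilon}{P_{i+1}} = 2(n-i-1)\epsilon$ using $P_{i+1}=P_iF_i(\ucb_i)$, which is exactly what is needed for the upper half of Statement~\ref{Sta1}; and in your telescoping for Statement~\ref{Sta3} the ``auxiliary-bound part'' carries a prefactor $P_i\bigl(F_i(\ucb_i)-F_i(\lcb_i)\bigr)$, which again absorbs the $1/P_{i+1}$.

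Two further gaps. First, your telescoping for Statement~\ref{Sta3} bounds the swap at position $i$ by $P_i\,|\tau_i-\optafter{i+1}|\,|F_i(\tau_i)-F_i(\optafter{i+1})|$, but the one-step loss should be measured against the \emph{actual} continuation value $V_i=R(1,\ldots,1,\tau_{i+1},\ldots,\tau_{n-1})$ induced by the fixed tail, not against $\optafter{i+1}$; since $V_i$ need not lie between $\tau_i$ and $\optafter{i+1}$ (it can be slightly below $\lcb'_i$), the claimed $|\tau_i-\optafter{i+1}|$ factor is not the correct quantity, and one needs either the paper's direct downward induction bounding $\algafterprime{i}-R(1,\ldots,1,\tau_i,\ldots,\tau_{n-1})$, or an explicit control on $|V_i-\algafterprime{i+1}|$. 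Second, your chain $\lcb'_i \le \algafterprime{i+1} \le \optafter{i+1}$ and the validity of $\hat\delta_i(\algafterprime{i+1})\in[-\epsilon,\epsilon]\Rightarrow\algafterprime{i+1}\in[\lcb'_i,\ucb'_i]$ both silently presuppose $\algafterprime{i+1}\in[\lcb_i,\ucb_i]$; this is not immediate from the hypotheses (only $\algafter{i+1},\optafter{i+1}\in[\lcb_i,\ucb_i]$ are assumed) and is exactly what the paper's Claims~\ref{BeforeMain1} and~\ref{BeforeMain2} establish via the monotonicity argument $\algafter{i+1} \le \algafterprime{i+1} \le \optafter{i+1}$. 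That intermediate step must be included before the per-$i$ argument can start.
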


We first introduce some notation to prove \Cref{maingen}.  First, we define a single-dimensional function $R_i(\tau)$ to generalize reward function $R(\tau)$  from the $n = 2$ case in \Cref{ISASec}. Ideally, $R_i(\tau)$ should represent the reward of playing $\tau_i = \tau$, but  thresholds $\tau_j$ for $j>i$  also affect its  expected reward. So, to match the setting in \Cref{ISAgen}, we set  thresholds $\tau_{i+1},\ldots, \tau_{n-1}$ to be the updated lower bounds, i.e.,  define
\[
    R_i(\tau) ~:=~ R(1,\ldots,1, \tau_i =\tau, \newthreshold_{i+1},\ldots, \newthreshold_{n-1}). 
\]
Next, we introduce $P_i$, representing the maximum probability of observing $X_i$ when we have confidence intervals $\{[\lcb_i, \ucb_i]\}$, i.e.,
\[
  { \textstyle P_{i} ~:=~ \prod_{j=1}^{i-1} F_j(\ucb_{j}). } 
\]
Replacing $F_j$ with $\hat{F}_j$ in this equation defines estimate $\hat{P}_i$.

Notice that $P_{i}$ also equals the probability of reaching $X_i$ when we play thresholds $\tau_j = \ucb_j$ for all $j < i$ in \Cref{ISAgen}.  So, the loss of playing a sub-optimal threshold $\tau_i$ will be $P_i\cdot (R_i(\algafterprime{i+1}) - R_i(\tau))$ because $P_i$ is the probability of reaching $X_i$ and $\algafterprime{i+1}$ is the optimal threshold when $\tau_j = \lcb'_j$ for all $j > i$. 
We define the generalized \textit{bounding function}:
\[
    \delta_i(\tau) ~:=~  P_{i}\cdot \big(F_{i}(\ucb_i) - F_{i}(\lcb_i)\big)\cdot (\tau - \algafterprime{i+1}). 
\]
We will show in \Cref{LossBoundGen} below that $|\delta_i(\tau)|$ upper bounds $P_i \cdot (R_i(\algafterprime{i+1}) - R_i(\tau))$ for all $\tau \in [\lcb_i, \ucb_i]$. Since we don't know $ \delta_i(\tau) $, we will estimate it by writing in a different way.

Consider the difference in expected rewards between $\tau_i = \ucb_i$ and $\tau_i = \lcb_i$ when the other thresholds are set to $\tau_j = \ucb_j$ for  $j < i$ and $\tau_j = \lcb'_j $ for  $j > i$. The difference between these two settings only comes from $\tau_i$, so the expected difference is
\begin{align*}
    P_i  \cdot  (R_i(\ucb_i) - R_i(\lcb_i)) &~=~ P_i \cdot \left((F_i(\ucb_i) - F_i(\lcb_i)) \algafterprime{i+1} - \int_{\lcb_i}^{\ucb_i} xf_i(x)dx\right) \\
    &~=~ P_i  \cdot  \left(F_i(\ucb_i)(\algafterprime{i+1} - \ucb_i) - F_i(\lcb_i)(\algafterprime{i+1} - \lcb_i) + \int_{\lcb_i}^{\ucb_i} F_i(x)dx\right).
\end{align*}
Adding this with   $\delta_i(\tau)$ implies $\delta_i(\tau) + P_i  \cdot  (R_i(\ucb_i) - R_i(\lcb_i)) $ equals
\begin{align}
 P_i \cdot \left(F_i(\ucb_i)(\tau - \ucb_i) - F_i(\lcb_i)(\tau - \lcb_i) + \int_{\lcb_i}^{\ucb_i} F_i(x)dx\right) ~=:~     \Delta_i(\tau), \label{CDiDfn}
\end{align}
which gives another way of writing $\delta_i(\tau)  = \Delta_i(\tau) -  P_i  \cdot  (R_i(\ucb_i) - R_i(\lcb_i))$. 
Since $\hat{D}_i$ from \Cref{ISAgen} is the difference between average rewards of taking  samples with $\tau_i = \ucb_i$ and $\tau_i = \lcb_i$, it is an unbiased estimator of $P_i  \cdot  (R_i(\ucb_i) - R_i(\lcb_i))$. So, we define  estimate
\[
\hat{\delta}_i(\tau)  ~:=~  \hat{\Delta}_i(\tau) -  \hat{D}_i,
\]
where  $\hat{\Delta}_i(\tau)$ is obtained by replacing $F_i$ with $\hat{F}_i$ and $P_i$ with $\hat{P}_i$ in \eqref{CDiDfn}.

 Similar to \Cref{LossBound} and \Cref{AccBound}, we introduce the following claims for \Cref{ISAgen}.

\begin{Claim}
\label{LossBoundGen}
For $i \in [n-1]$, if $\algafterprime{i+1} \in [\lcb_i, \ucb_i]$ and $\tau \in [\lcb_i, \ucb_i]$, then $P_i\cdot \big(R_i(\algafterprime{i+1}) - R_i(\tau) \big) \leq  |{\delta_i(\tau)}|$. 
\end{Claim}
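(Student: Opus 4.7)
The plan is to mimic the proof of \Cref{LossBound}, with the outer factor $P_i$ acting essentially as a bystander (it multiplies both sides of the desired inequality) and the continuation value $\algafterprime{i+1}$ playing the role that $\tau^* = \Ex{X_2}$ played in the two-variable case. The key structural observation enabling this reduction is that, by definition, $R_i$ uses the convention $\tau_j = 1$ for $j < i$, so under $R_i$ we (almost surely) reach stage $i$; this collapses the multi-stage comparison into a one-dimensional one and means the ambient probability $P_i$ from the true game lives outside the $R_i$ function rather than inside it.

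First I would expand
\[
R_i(\tau) = F_i(\tau) \cdot \algafterprime{i+1} + \int_{\tau}^{1} x\, f_i(x)\, dx,
\]
using that on the event $X_i < \tau$ the continuation with thresholds $\newthreshold_{i+1}, \ldots, \newthreshold_{n-1}$ yields expected reward exactly $\algafterprime{i+1}$ by definition of $\algafterprime{i+1}$, while on $X_i \geq \tau$ we accept $X_i$. Next I would compute $R_i(\algafterprime{i+1}) - R_i(\tau)$ directly from this expression: the two policies differ only when $X_i$ lands between $\tau$ and $\algafterprime{i+1}$, and on that event the reward in one policy is $X_i$ and in the other is $\algafterprime{i+1}$, both of which lie in the same interval, so the per-sample gap is at most $|\tau - \algafterprime{i+1}|$. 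Treating the cases $\tau \geq \algafterprime{i+1}$ and $\tau \leq \algafterprime{i+1}$ symmetrically (and noting the gap is always nonnegative since $\algafterprime{i+1}$ is the optimal stage-$i$ threshold when the continuation value is itself $\algafterprime{i+1}$), this yields
\[
0 \leq R_i(\algafterprime{i+1}) - R_i(\tau) \leq |\tau - \algafterprime{i+1}| \cdot |F_i(\tau) - F_i(\algafterprime{i+1})|.
\]

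Finally, since $\tau, \algafterprime{i+1} \in [\lcb_i, \ucb_i]$ and $F_i$ is monotone non-decreasing, I would bound $|F_i(\tau) - F_i(\algafterprime{i+1})| \leq F_i(\ucb_i) - F_i(\lcb_i)$; multiplying through by $P_i$ gives exactly $|\delta_i(\tau)|$ on the right, completing the argument. I do not anticipate any real obstacle: the claim is a direct generalization of \Cref{LossBound}, where the extra factor $P_i$ simply passes through and $\algafterprime{i+1}$ replaces $\tau^*$. The only mild care is the two-case split on the sign of $\tau - \algafterprime{i+1}$ and keeping straight that $R_i$'s own convention of setting the first $i-1$ thresholds to $1$ already removes the pre-stage-$i$ dynamics from the comparison, so $P_i$ only enters as the external multiplier matching the definition of $\delta_i$.
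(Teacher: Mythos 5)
Your proof is correct and takes essentially the same approach as the paper: the paper's proof of \Cref{LossBoundGen} consists of dividing through by $P_i$ and then invoking \Cref{LossBound} verbatim with $R$ replaced by $R_i$ and $\tau^*$ replaced by $\algafterprime{i+1}$, and you have simply unpacked that same argument explicitly (the only-differ-on-$[\tau,\algafterprime{i+1}]$ observation, the per-sample gap bound by $|\tau-\algafterprime{i+1}|$, and the monotonicity bound $|F_i(\tau)-F_i(\algafterprime{i+1})|\leq F_i(\ucb_i)-F_i(\lcb_i)$).
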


\begin{proof}

We only need to prove that $R_i(\algafterprime{i+1}) - R_i(\tau) \leq \frac{|{\delta_i(\tau)}|}{P_i} =\big(F_{i}(\ucb_i) - F_{i}(\lcb_i)\big)\cdot (\tau - \algafterprime{i+1})$. Now the proof is identical to  \Cref{LossBound} by replacing function $R(\cdot)$ with $R_i(\cdot)$. 
\end{proof}

\begin{Claim}
\label{AccBoundGen}
In \Cref{ISAgen}, if the conditions in \Cref{maingen} hold, then with probability $1 - T^{-10}$, we have $|\hat{\delta}_i(\tau) - \delta_i(\tau)| \leq \epsilon$ simultaneously for all $\tau \in [\lcb_i, \ucb_i]$.
\end{Claim}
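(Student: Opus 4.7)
The plan is to mirror the proof of Claim \ref{AccBound} by decomposing the error into a deterministic CDF-estimation part and a random sampling part. Using $\hat{\delta}_i(\tau) = \hat{\Delta}_i(\tau) - \hat{D}_i$ and $\delta_i(\tau) = \Delta_i(\tau) - P_i\bigl(R_i(\ucb_i) - R_i(\lcb_i)\bigr)$, the triangle inequality gives
\begin{equation*}
|\hat{\delta}_i(\tau) - \delta_i(\tau)| ~\leq~ |\hat{\Delta}_i(\tau) - \Delta_i(\tau)| ~+~ \bigl|\hat{D}_i - P_i\bigl(R_i(\ucb_i) - R_i(\lcb_i)\bigr)\bigr|,
\end{equation*}
and the target is to show each term is at most $\epsilon/2$ with high probability.

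For the deterministic term, I would expand $\hat{\Delta}_i(\tau) - \Delta_i(\tau)$ termwise according to the definition on line~\ref{alg:lineDefnDelta}. The result is three summands, each a product of a coefficient difference of the form $\hat{P}_i \hat{F}_i(y) - P_i F_i(y)$ and a length factor of at most $\ucb_i - \lcb_i \leq T^{-1/4}$. Using the hypothesis on products of CDFs together with a telescoping expansion, every coefficient difference is $O(T^{-1/4})$, so each summand is $O(T^{-1/2})$. Since $\hat{\Delta}_i(\tau) - \Delta_i(\tau)$ is affine in $\tau$, the bound holds uniformly over $\tau \in [\lcb_i, \ucb_i]$, and the hypothesis $\epsilon > 12 T^{-1/2}$ makes it at most $\epsilon/2$.

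For the random term, the crucial step is to check that $\hat{D}_i$ is an unbiased estimator of $P_i\bigl(R_i(\ucb_i) - R_i(\lcb_i)\bigr)$. Under the threshold profile $(\ucb_1,\ldots,\ucb_{i-1}, \tau_i, \newthreshold_{i+1},\ldots,\newthreshold_{n-1})$ played by the algorithm, let $E$ be the event that $X_j < \ucb_j$ for every $j<i$. Then $\Pr{E}=P_i$ and $E$ is independent of $\tau_i$. On $E^c$ the algorithm selects one of the first $i-1$ variables and the resulting reward is independent of $\tau_i$, hence cancels in $\hat{R}_{\ucb_i}-\hat{R}_{\lcb_i}$. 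On $E$ the algorithm reaches $X_i$, and the conditional expected reward equals $R_i(\tau_i)$ because the leading $1$'s in the definition of $R_i$ make $X_1,\ldots,X_{i-1}$ irrelevant while the suffix $\newthreshold_{i+1},\ldots,\newthreshold_{n-1}$ matches. Thus $\Ex{\hat{D}_i} = P_i\bigl(R_i(\ucb_i) - R_i(\lcb_i)\bigr)$, and two applications of Hoeffding's inequality to the averages $\hat{R}_{\ucb_i}$ and $\hat{R}_{\lcb_i}$ (each of $N = C\log T/\epsilon^2$ rewards in $[0,1]$) give $|\hat{D}_i - \Ex{\hat{D}_i}| \leq \epsilon/2$ with probability at least $1 - 2T^{-11}$.

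Combining the two bounds yields $|\hat{\delta}_i(\tau) - \delta_i(\tau)| \leq \epsilon$ uniformly in $\tau$, and a union bound over $i \in [n-1]$ fits into the stated $T^{-10}$ failure probability. I expect the main obstacle to be the unbiasedness computation for $\hat{D}_i$: because the algorithm plays non-trivial thresholds both before and after index $i$, one must condition carefully on reaching $X_i$ and verify that the suffix thresholds $\newthreshold_{i+1},\ldots,\newthreshold_{n-1}$ (already finalized during earlier iterations, since the loop runs from $n-1$ down to $1$) are exactly those used in the definition of $R_i$, so that the conditional post-$i$ reward distribution is precisely the one appearing in $R_i(\tau_i)$.
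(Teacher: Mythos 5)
Your proof is correct and takes essentially the same approach as the paper's: both decompose $|\hat{\delta}_i(\tau) - \delta_i(\tau)|$ into the deterministic term $|\hat{\Delta}_i(\tau) - \Delta_i(\tau)|$, bounded by $O(T^{-1/2}) \leq \epsilon/2$ using the CDF-product hypothesis and $\ucb_i - \lcb_i \leq T^{-1/4}$, and the sampling term $|\hat{D}_i - P_i(R_i(\ucb_i) - R_i(\lcb_i))|$, bounded by $\epsilon/2$ via unbiasedness plus Hoeffding. Where you deviate slightly (bounding the deterministic term by a termwise coefficient expansion $\hat{P}_i\hat{F}_i(y) - P_iF_i(y)$ rather than the paper's factor-out-$P_i$ manipulation, and spelling out the conditioning-on-$E$ argument for unbiasedness that the paper merely asserts) the differences are cosmetic and the reasoning is sound; just be explicit that the ``cancellation'' on $E^c$ occurs at the level of expectations, since $\hat{R}_{\ucb_i}$ and $\hat{R}_{\lcb_i}$ come from disjoint sets of rounds.
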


\begin{proof}

There are two terms in $\delta_i(\tau) = \Delta_i(\tau) - P_{i}\cdot \left(R_i(\ucb) - R_i(\lcb)\right)$.  We prove that the error of each term is bounded by $\frac{\epsilon}{2}$ with high probability, which will complete the proof by a union bound.

We first bound $|\hat{\Delta}_i(\tau) -\Delta_i(\tau)|$. There are three terms in $\frac{\Delta_i(\tau)}{P_i} = F_i(\ucb_i)(\tau - \ucb_i) - F_i(\lcb_i)(\tau - \lcb_i) + \int_{\lcb_i}^{\ucb_i} F_i(x)dx$. Since the conditions in \Cref{maingen} guarantee that $|\hat{F}_i(x) - F_i(x)| \leq T^{-{1}/{4}}$ and $\ucb_i - \lcb_i \leq T^{-{1}/{4}}$,  the error in each term is at most $\frac{1}{\sqrt{T}}$ and the total error  $\left|\frac{\Delta_i(\tau)}{P_i} - \frac{\hat{\Delta}_i(\tau)}{\hat{P}_i}\right| \leq \frac{3}{\sqrt{T}}$. 

For $P_{i}$, the  preconditions in \Cref{maingen} guarantee that $|\hat{P}_{i} - \hat{P}_{i}| \leq T^{-{1}/{4}}$. Moreover, observe that $\ucb_i - \lcb_i \leq T^{-{1}/{4}}$ implies that  $\frac{|\Delta_i(\tau)|}{P_i} \leq 3T^{-{1}/{4}}$. So,
\begin{align*}
    \left|\hat{\Delta}_i(\tau) - \Delta_i(\tau)\right| ~\leq~ \left|\hat{P}_{i}\left(\frac{\hat{\Delta}_i(\tau)}{\hat{P}_i} - \frac{\Delta_i(\tau)}{P_i}\right)\right| + \left|(\hat{P}_{i} - P_{i})\frac{\Delta_i(\tau)}{P_i}\right| ~\leq~ \frac{6}{\sqrt{T}} ~\leq~ \frac{\epsilon}{2},
\end{align*}
where the last inequality uses $\epsilon > 12T^{-\frac{1}{2}}$.

For $P_{i}\cdot (R_i(\ucb_i)-R_i(\lcb_i))$, note that $\hat{D}_i$ is an unbiased estimator of $P_i\cdot (R_i(\ucb_i) - R_i(\lcb_i))$ with $N = C \cdot \frac{\log T}{\epsilon^2}$ samples. So, by  Hoefdding's Inequality,
\[ \Pr{ \big|\hat{D}_i - P_i\cdot (R_i(\ucb_i) - R_i(\lcb_i)) \big| > \frac{\epsilon}{2}} ~\leq~ 2\exp(-8N\epsilon^2) ~=~ 2T^{-8C}.\] 
Thus, $|\hat{D}_i - P_i\cdot (R_i(\ucb_i) - R_i(\lcb_i))| \leq \frac{\epsilon}{2}$ holds with probability $1 - T^{-10}$ when  $C > 10$.
\end{proof}



Besides \Cref{LossBoundGen} and \Cref{AccBoundGen}, we also need some other properties of \Cref{ISAgen} to prove \Cref{maingen}. The next claim shows that the expected reward of playing lower confidence bounds increases phase to phase.

\begin{Claim}
\label{BeforeMain1}
Assume the conditions in \Cref{maingen} and the bound in \Cref{AccBoundGen} hold. Then, for $i \in [n]$, we have $\algafterprime{i} \geq \algafter{i}$.
\end{Claim}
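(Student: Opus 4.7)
The plan is to argue by reverse induction on $i$, from $i = n$ down to $i = 1$. The base case $i = n$ is immediate: since $\tau_n = 0$ is fixed, both $\algafter{n}$ and $\algafterprime{n}$ involve no thresholds at all and equal $\Ex{X_n}$. For the inductive step at index $i \in [n-1]$, I would condition on whether $X_i$ is accepted to rewrite
\[
\algafterprime{i} = F_i(\lcb'_i)\,\algafterprime{i+1} + \int_{\lcb'_i}^{1} x\, f_i(x)\,dx, \qquad \algafter{i} = F_i(\lcb_i)\,\algafter{i+1} + \int_{\lcb_i}^{1} x\, f_i(x)\,dx.
\]
Applying the inductive hypothesis $\algafter{i+1} \leq \algafterprime{i+1}$ together with the non-negativity of $F_i(\lcb_i)$, I would upper bound $\algafter{i}$ by $\tilde{R}_i(\lcb_i)$, where $\tilde{R}_i(\tau) := F_i(\tau)\,\algafterprime{i+1} + \int_\tau^1 x\, f_i(x)\,dx$ replaces the old continuation value by the new one. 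Since $\algafterprime{i} = \tilde{R}_i(\lcb'_i)$ by construction, the goal reduces to establishing $\tilde{R}_i(\lcb'_i) \geq \tilde{R}_i(\lcb_i)$.

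A direct computation yields $\tilde{R}_i'(\tau) = f_i(\tau)(\algafterprime{i+1} - \tau)$, so $\tilde{R}_i$ is unimodal with maximizer at $\tau = \algafterprime{i+1}$. Since the algorithm ensures $[\lcb'_i, \ucb'_i] \subseteq [\lcb_i, \ucb_i]$, we already have $\lcb_i \leq \lcb'_i$ for free. Thus it suffices to show $\lcb'_i \leq \algafterprime{i+1}$; once this is in place, both $\lcb_i$ and $\lcb'_i$ lie in the region where $\tilde{R}_i$ is non-decreasing, and the desired inequality follows from monotonicity.

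The hard part will be proving $\lcb'_i \leq \algafterprime{i+1}$, since it requires combining three ingredients. First I would verify that $\algafterprime{i+1} \in [\lcb_i, \ucb_i]$ via the chain $\lcb_i \leq \algafter{i+1} \leq \algafterprime{i+1} \leq \optafter{i+1} \leq \ucb_i$: the outer bounds are preconditions of \Cref{maingen}, the second inequality is the inductive hypothesis, and the third holds because $\algafterprime{i+1}$ is the value of a specific policy (with thresholds $\lcb'_{i+1},\ldots,\lcb'_{n-1}$) and hence cannot exceed the optimum $\optafter{i+1}$. With this in hand, I would observe from the defining formula $\delta_i(\tau) = P_i(F_i(\ucb_i) - F_i(\lcb_i))(\tau - \algafterprime{i+1})$ that $\delta_i(\algafterprime{i+1}) = 0$, so \Cref{AccBoundGen} gives $\hat{\delta}_i(\algafterprime{i+1}) \geq -\epsilon$. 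Finally, noting that $\hat{\delta}_i'(\tau) = \hat{P}_i(\hat{F}_i(\ucb_i) - \hat{F}_i(\lcb_i)) \geq 0$ makes $\hat{\delta}_i$ non-decreasing, and that $\lcb'_i$ is by definition the smallest $\tau \in [\lcb_i, \ucb_i]$ with $\hat{\delta}_i(\tau) \geq -\epsilon$, we conclude $\lcb'_i \leq \algafterprime{i+1}$, which closes the induction.
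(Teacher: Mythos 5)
Your proof is correct and follows essentially the same route as the paper's: reverse induction, using the inductive hypothesis to bound the continuation value, then exploiting the unimodality of $R_i$ (your $\tilde R_i$) together with $\lcb_i \leq \lcb'_i \leq \algafterprime{i+1}$, where the last inequality comes from $\delta_i(\algafterprime{i+1})=0$, \Cref{AccBoundGen}, and monotonicity of $\hat\delta_i$. Your explicit verification that $\algafterprime{i+1} \in [\lcb_i,\ucb_i]$ (needed to legitimately invoke \Cref{AccBoundGen} at that point) is a small but welcome addition of care; the paper leaves this implicit even though it is a prerequisite and is only stated later as \Cref{BeforeMain2}.
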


\begin{proof}
We prove  by induction for $i$ going from $n$ to $1$. The base case $i = n$  holds  because $\algafterprime{n} = \algafter{n} = \optafter{n}= \Ex{X_n}$ by definition.
  
For the induction step, assume  that $\algafterprime{i+1} \geq \algafter{i+1}$ by induction hypothesis. Observe that
 %
%
 \begin{align}
     R(1,\ldots, 1, \lcb_i, \lcb_{i+1}, \ldots, \lcb_{n-1}) & = \Ex{X_i \cdot \one_{X_i > \lcb_i}} + \Pr{X_i \leq \lcb_i} R(1,\ldots, 1, 1, \lcb_{i+1}, \ldots, \lcb_{n-1}) \notag \\
     & \leq \Ex{X_i \cdot \one_{X_i > \lcb_i}} + \Pr{X_i \leq \lcb_i} R(1,\ldots, 1, 1, \newthreshold_{i+1}, \ldots, \newthreshold_{n-1}) \notag  \\
     & = R(1,\ldots, \lcb_i, \newthreshold_{i+1},\ldots, \newthreshold_{n-1}), \label{EqBM1.1}
 \end{align}
 where the inequality uses induction hypothesis as $R(1,\ldots, 1, 1, \lcb_{i+1}, \ldots, \lcb_{n-1}) = \algafter{i+1}\leq \algafterprime{i+1}  =  R(1,\ldots, 1, 1, \newthreshold_{i+1}, \ldots, \newthreshold_{n-1}) $.

Next, we  have $R_i(\lcb_i) \leq R_i(\lcb'_i)$, i.e.,
 \begin{align}
      R(1,\ldots, 1, \lcb_{i}, \newthreshold_{i+1},\ldots, \newthreshold_{n-1}) ~\leq~ R(1,\ldots, 1, \newthreshold_{i}, \newthreshold_{i+1},\ldots, \newthreshold_{n-1}). \label{EqBM1.2}
 \end{align}
 To prove this, we first observe that if $\lcb_{i} = \lcb'_{i}$, then the inequality is an equality. Otherwise, there must be  $\hat{\delta}_{i}(\lcb'_{i}) = -\epsilon$. 
Next, combining the definition of $\delta_i(\tau)$ and \Cref{AccBoundGen}, we have $\hat{\delta}_{i}(\algafterprime{i+1}) \geq \delta_{i}(\algafterprime{i+1}) - |\delta_{i}(\algafterprime{i+1}) - \hat{\delta}_{i}(\algafterprime{i+1})|\geq -\epsilon$. Since $\hat{\delta}'_i(\tau) = P_i\cdot (\hat{F}_i(\ucb_i) - \hat{F}_i(\lcb_i)) \geq 0$ means  $\hat{\delta}_i(\tau)$ is increasing, there must be $\algafterprime{i+1} \geq \lcb'_{i}$.

Now consider function $R_i(\tau)$. Recall that $R_i(\tau) = R(1,...1, \tau_i = \tau, \lcb'_{i+1},..., \lcb'_{n-1})$. Therefore, 
\[
R_i(\tau) ~= ~ \Pr{X_i \leq \tau} \cdot \algafterprime{i+1} + \Ex{X_i\cdot \one_{X_i > \tau}} ~=~ F_i(\tau) \cdot \algafterprime{i+1} + \int_\tau^1 f_i(x)\,x\,dx,
\]
which means $R'_i(\tau) = f_i(\tau)(\algafterprime{i+1} - \tau)$, showing that $R_i(\tau)$ is a unimodular function and reaches its maximum when $\tau = \algafterprime{i+1}$. Hence,  \eqref{EqBM1.2} holds because $\lcb_{i+1} \leq \newthreshold_{i+1} \leq \algafterprime{i+1}$.

Combining (\ref{EqBM1.1}) and (\ref{EqBM1.2}) proves the claim.
\end{proof}

Next, we prove that $\algafterprime{i+1} \in [\lcb_i, \ucb_i]$, which is crucial for us to use \Cref{LossBoundGen}. 

\begin{Claim}
\label{BeforeMain2}
Assume that the preconditions in \Cref{maingen} and the bound in \Cref{AccBoundGen} hold, then $\algafterprime{i+1} \in [\lcb_i, \ucb_i]$ for all $i \in [n-1]$.
\end{Claim}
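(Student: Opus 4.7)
The plan is to show the inclusion by a short sandwich argument. The upper endpoint $\ucb_i$ is controlled via $\optafter{i+1}$, and the lower endpoint $\lcb_i$ is controlled via $\algafter{i+1}$, with Claim~\ref{BeforeMain1} bridging $\algafter{i+1}$ and $\algafterprime{i+1}$.

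For the lower bound, I would argue as follows. By the preconditions of Lemma~\ref{maingen}, we have $\algafter{i+1} \geq \lcb_i$. Claim~\ref{BeforeMain1}, which is available under the stated hypotheses, gives $\algafterprime{i+1} \geq \algafter{i+1}$. Chaining these yields $\algafterprime{i+1} \geq \lcb_i$.

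For the upper bound, the key observation is that $\algafterprime{i+1} = R(1,\dots,1,\lcb'_{i+1},\dots,\lcb'_{n-1})$ is the expected reward of one particular threshold choice on the tail $\D_{i+1},\dots,\D_n$ (the prefix thresholds equal $1$ so the first $i$ positions are always skipped). Since $\optafter{i+1}$ is, by definition, the maximum expected reward over \emph{all} threshold choices on this tail, we immediately have $\algafterprime{i+1} \leq \optafter{i+1}$. Combined with the precondition $\optafter{i+1} \leq \ucb_i$, this yields $\algafterprime{i+1} \leq \ucb_i$.

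Putting the two bounds together gives $\algafterprime{i+1} \in [\lcb_i, \ucb_i]$ for every $i \in [n-1]$, as required. I do not expect any real obstacle here: the hypotheses on $\optafter{i+1}$ and $\algafter{i+1}$ being in $[\lcb_i,\ucb_i]$ are given, the optimality of $\optafter{i+1}$ is by definition, and the only nontrivial ingredient, $\algafterprime{i+1} \geq \algafter{i+1}$, has already been established in Claim~\ref{BeforeMain1}. The claim is essentially a bookkeeping consequence of the earlier work and serves as a stepping stone for invoking Claim~\ref{LossBoundGen} in the subsequent proof of Lemma~\ref{maingen}.
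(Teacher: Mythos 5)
Your proof is correct and is essentially identical to the paper's argument: the lower bound via Claim~\ref{BeforeMain1} and the precondition $\algafter{i+1}\geq\lcb_i$, and the upper bound via $\algafterprime{i+1}\leq\optafter{i+1}\leq\ucb_i$. No discrepancy.
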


\begin{proof}
  \Cref{BeforeMain1} shows that $\algafter{i+1} \leq \algafterprime{i+1}$. On the other hand, $\algafterprime{i+1} \leq \optafter{i+1}$ holds because $\optafter{i+1}$ is the maximum achievable reward. Then, \Cref{BeforeMain2} holds because $\optafter{i+1}, \algafter{i+1} \in [\lcb_i, \ucb_i]$ by the  preconditions in \Cref{maingen}.
\end{proof}

Finally, we show that $\algafterprime{i}$ cannot be much smaller than $\optafter{i}$.

\begin{Claim}
\label{BeforeMain3}
Assume that the preconditions in \Cref{maingen} and the bound in \Cref{AccBoundGen} hold, then  $\optafter{i} - \algafterprime{i} \leq \frac{2(n-i)\epsilon}{P_i}$ for all $i \in [n-1]$.
\end{Claim}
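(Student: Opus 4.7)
I would prove the claim by downward induction on $i$, from $i = n$ to $i = 1$. The base case $i = n$ is trivial: by definition $\optafter{n} = \algafterprime{n} = \Ex{X_n}$, and $\frac{2(n-n)\epsilon}{P_n} = 0$. For the inductive step, the key is to compare $\optafter{i}$ and $\algafterprime{i}$ by introducing the intermediate quantity $R_i(\optafter{i+1})$, where $R_i(\tau) := R(1,\ldots,1,\tau_i=\tau,\lcb'_{i+1},\ldots,\lcb'_{n-1})$ is exactly the one-parameter function used throughout the section, so that $\algafterprime{i} = R_i(\lcb'_i)$. Then I would write
\[
\optafter{i} - \algafterprime{i} ~=~ \bigl[\optafter{i} - R_i(\optafter{i+1})\bigr] + \bigl[R_i(\optafter{i+1}) - R_i(\lcb'_i)\bigr],
\]
and bound the two bracketed terms separately.

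For the first bracket, conditioning on whether $X_i$ exceeds $\optafter{i+1}$ gives
\[
\optafter{i} - R_i(\optafter{i+1}) ~=~ F_i(\optafter{i+1}) \cdot (\optafter{i+1} - \algafterprime{i+1}),
\]
because both terms agree on the event $\{X_i > \optafter{i+1}\}$ and differ only in the continuation value. By the induction hypothesis the right-hand side is at most $F_i(\optafter{i+1}) \cdot \frac{2(n-i-1)\epsilon}{P_{i+1}}$. Since $\optafter{i+1} \leq \ucb_i$ we have $F_i(\optafter{i+1}) \leq F_i(\ucb_i)$, and by the definition $P_{i+1} = P_i \cdot F_i(\ucb_i)$ this collapses to $\frac{2(n-i-1)\epsilon}{P_i}$.

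For the second bracket, since $R_i(\tau)$ is unimodal with maximum at $\algafterprime{i+1}$ (as shown inside the proof of \Cref{BeforeMain1}), we get $R_i(\optafter{i+1}) - R_i(\lcb'_i) \leq R_i(\algafterprime{i+1}) - R_i(\lcb'_i)$, and \Cref{LossBoundGen} applied with $\algafterprime{i+1} \in [\lcb_i,\ucb_i]$ (\Cref{BeforeMain2}) yields $R_i(\algafterprime{i+1}) - R_i(\lcb'_i) \leq |\delta_i(\lcb'_i)|/P_i$. So it remains to show $|\delta_i(\lcb'_i)| \leq 2\epsilon$. The lower bound $\delta_i(\lcb'_i) \geq -2\epsilon$ follows directly from $\hat{\delta}_i(\lcb'_i) \geq -\epsilon$ (by definition of $\lcb'_i$) and the $\epsilon$-accuracy of \Cref{AccBoundGen}. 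For the upper bound I would argue $\delta_i(\lcb'_i) \leq 0$: since $\delta_i(\algafterprime{i+1}) = 0$ by definition of $\delta_i$, accuracy gives $\hat{\delta}_i(\algafterprime{i+1}) \geq -\epsilon$, and since $\hat{\delta}_i$ is non-decreasing (the key monotonicity already used in the main proof), this forces $\lcb'_i \leq \algafterprime{i+1}$, from which the non-decreasing $\delta_i$ satisfies $\delta_i(\lcb'_i) \leq 0$.

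Combining both brackets yields $\optafter{i} - \algafterprime{i} \leq \frac{2(n-i-1)\epsilon}{P_i} + \frac{2\epsilon}{P_i} = \frac{2(n-i)\epsilon}{P_i}$, completing the induction. The main subtlety I anticipate is the sign argument $\lcb'_i \leq \algafterprime{i+1}$, which is what makes the $|\delta_i(\lcb'_i)|$ bound symmetric; everything else is a clean telescoping of the decomposition and a cancellation of $F_i(\ucb_i)$ between the induction hypothesis and the definition of $P_{i+1}$.
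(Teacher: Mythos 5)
Your proof is correct and follows the same route as the paper: it splits $\optafter{i} - \algafterprime{i}$ at the intermediate point $R_i(\optafter{i+1})$, bounds the first part via the induction hypothesis together with $F_i(\optafter{i+1}) \le F_i(\ucb_i)$ and $P_{i+1} = P_i F_i(\ucb_i)$, and bounds the second part via the optimality/unimodality of $R_i$ at $\algafterprime{i+1}$ followed by \Cref{LossBoundGen} and a $2\epsilon$ bound on $|\delta_i(\lcb'_i)|$. Your version is slightly tighter in two small places (you write $\optafter{i} - R_i(\optafter{i+1})$ as an exact identity rather than chaining inequalities, and you observe $\delta_i(\lcb'_i)\le 0$ via $\lcb'_i \le \algafterprime{i+1}$, spelling out the monotonicity step that the paper compresses into ``from \Cref{ISAgen}''), but the decomposition and the lemmas invoked are the same.
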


\begin{proof}
We prove by induction for $i$ going from $n$ to $1$. The base case $i = n$ holds because $\optafter{n} = \algafterprime{n}=\Ex{X_n}$.

For the induction step, we assume that $\optafter{i+1} - \algafterprime{i+1} \leq \frac{2(n-i-1)\epsilon}{P_{i+1}}$ and  would like to show that $\optafter{i} - \algafterprime{i} \leq \frac{2(n-i)\epsilon}{P_i}$.
We first have
\begin{align}
    R(1,\ldots,1, \optafter{i+1}, \optafter{i+2},\ldots, \optafter{n}) &= \Ex{X_i \cdot \one_{X_i > \optafter{i+1}}} + \Pr{X_i \leq \optafter{i+1}} \optafter{i+1} \notag \\
    &\leq \textstyle \Ex{X_i \cdot \one_{X_i > \optafter{i+1}}} + \Pr{X_i \leq \optafter{i+1}}(\algafterprime{i+1} + \frac{2(n-i-1)\epsilon}{P_{i+1}}) \notag \\
    &\leq \textstyle \Ex{X_i \cdot \one_{X_i > \optafter{i+1}}} + \Pr{X_i \leq \optafter{i+1}}\algafterprime{i+1} + \frac{2(n-i-1)\epsilon}{P_{i}} \notag \\
    &= \textstyle R(1,\ldots,1, \optafter{i+1}, \lcb'_{i+1},\ldots, \lcb'_{n-1}) + \frac{2(n-i-1)\epsilon}{P_{i}}, \label{EqBM3.3}
\end{align}
where we use the induction hypothesis in the second line, and the fact that $\Pr{X_i \leq \optafter{i+1}} \leq \Pr{X_i \leq \ucb_i} = \frac{P_{i+1}}{P_i}$ in the third line.
 
Next, since $\algafterprime{i+1}$ is the optimal threshold, we  have
  \begin{align}
     R(1,\ldots,1, \optafter{i+1}, \newthreshold_{i+2}, \ldots, \newthreshold_{n-1}) ~\leq~ R(1,\ldots,1, \algafterprime{i+1}, \newthreshold_{i+2}, \ldots, \newthreshold_{n-1}). \label{EqBM3.4}
 \end{align}
Finally,
\[
     |\delta(\newthreshold_{i})| ~\leq~ |\hat{\delta}(\newthreshold_{i})| + |\hat{\delta}(\newthreshold_{i}) - \delta(\newthreshold_{i})| ~\leq~ \epsilon + \epsilon ~=~2\epsilon, 
\]
 where the bound of $|\hat{\delta}(\newthreshold_{i}) - \delta(\newthreshold_{i})|$ is from \Cref{AccBoundGen}, and the bound of $|\hat \delta(\newthreshold_{i})|$ is from \Cref{ISAgen}. Combining this with \Cref{LossBoundGen}, we have
     $R_{i}(\algafterprime{i+1}) - R_{i}(\newthreshold_{i}) ~\leq~  \frac{|\delta_{i}(\newthreshold_{i})|}{P_{i}}
      ~\leq~ \frac{2\epsilon}{P_i} $, which is exactly
 \begin{align*}
     R(1,\ldots,1, \algafterprime{i+1}, \newthreshold_{i+1}, \ldots, \newthreshold_{n-1}) ~\leq~ R(1,\ldots,1, \newthreshold_{i}, \newthreshold_{i+1}, \ldots, \newthreshold_{n-1})  + \frac{2\epsilon}{P_i}.  
 \end{align*}
 Summing this with (\ref{EqBM3.3}) and (\ref{EqBM3.4})  completes the induction step.
\end{proof}

Finally, we can prove \Cref{maingen}.

\begin{proof}[Proof of \Cref{maingen}]

In this proof, we assume \Cref{AccBoundGen} always holds. Then the whole proof should success with probability $1 - T^{-10}$.

We prove the three statements separately:

\noindent \textit{Statement~\ref{Sta1}.}  For the upper bound, \Cref{BeforeMain3} shows that $\optafter{i} - \algafterprime{i} \leq \frac{2(n - i)\epsilon}{P_{i}}$. Therefore, $\delta_i(\optafter{i+1}) \leq P_{i} \cdot F_{i}(\ucb_i) \cdot \frac{2(n - i-1)\epsilon}{P_{i+1}} = 2(n-i-1)\epsilon$. Combining this with \Cref{AccBoundGen}, we have $\hat{\delta}_i(\optafter{i+1}) \leq 2(n-i-1)\epsilon+ \epsilon = (2n-2i-1)\epsilon$. Then $\optafter{i+1} \leq \ucb'_i$, because $\optafter{i+1} \in [\lcb_i, \ucb_i]$, $\ucb'_i = \max\{\tau: \tau \in [\lcb_i, \ucb_i] \land \hat{\delta}_i(\tau) \leq (2n-2i-1)\epsilon\}$ and the monotonicity of $\hat{\delta}_i$ function.

For the lower bound, at least we have $\optafter{i+1} \geq \algafterprime{i+1}$. Therefore, $\delta_i(\optafter{i+1}) \geq 0$, so $\hat{\delta}_i(\optafter{i+1}) \geq -\epsilon$. Then $\optafter{i+1} \geq \lcb'_i$, because $\optafter{i+1} \in [\lcb_i, \ucb_i]$, $\lcb'_i = \max\{\tau: \tau \in [\lcb_i, \ucb_i] \land \hat{\delta}_i(\tau) \geq -\epsilon\}$ and the monotonicity of $\hat{\delta}_i$ function. Combining the two bounds proves Statement~\ref{Sta1}.

\noindent \textit{Statement~\ref{Sta2}.} The proof idea is the same as Statement~\ref{Sta1}. Notice that $\delta_i(\algafterprime{i+1}) = 0$. Then, according to \Cref{AccBoundGen}, $|\hat{\delta}_i(\algafterprime{i+1})| \leq \epsilon$. So Statement~\ref{Sta2} hold because $\algafterprime{i+1} \in [\lcb_i, \ucb_i]$, which is from \Cref{BeforeMain2}, and $[\lcb'_i, \ucb'_i] \supseteq \{\tau\in [\lcb_i, \ucb_i]: |\hat{\delta}_i(\tau)| \leq\epsilon\}$.

\noindent \textit{Statement~\ref{Sta3}.} We prove the following stronger statement by induction on $i$: If $\tau_j \in [\lcb'_j, \ucb'_j]$ for all $j \in \{ i,\ldots, n\}$, then 
\[ 
    \algafterprime{i}~-~ R(1,\ldots1, \tau_{i},\ldots, \tau_{n-1}) ~\leq~ \textstyle \frac{(n-i+1)^2\epsilon}{P_{i}}.
\]
When the statement above holds, taking $i = 1$ gives $R(\tau_1,\ldots, \tau_{n-1}) \geq \algafterprime{1} - n^2\epsilon$. Furthermore, \Cref{BeforeMain3} shows that $\algafterprime{1} \geq \optafter{1} - 2(n-1)\epsilon$. Combining these two inequalities proves Statement~\ref{Sta3}.

It remains to prove the induction statement. The base case  $i = n$  holds trivially.


For the induction step, we will assume that the statement holds for $i+1$ and we have to show it also holds for $i$.
By induction hypothesis, 
\begin{align*}
 R(1,\ldots,1, \tau_{i}, \ldots, \tau_{n-1}) ~& =~  \Ex{X_{i} \cdot \one_{X_{i} \geq \tau_{i}}} + \Pr{X_{i} < \tau_{i}} R(1,\ldots,1, \tau_{i+1}, \ldots, \tau_{n-1}) \\
&\textstyle \geq \Ex{X_{i} \cdot \one_{X_{i} \geq \tau_{i}}} + \Pr{X_{i} < \tau_{i}} \left( R(1,\ldots,1, \lcb'_{i+1}, \ldots, \lcb'_{n-1}) - \frac{(n-i)^2\epsilon}{P_{i+1}} \right) \\
& \textstyle \geq \Ex{X_{i} \cdot \one_{X_{i} \geq \tau_{i}}} + \Pr{X_{i} < \tau_{i}} R(1,\ldots,1, \lcb'_{i+1}, \ldots, \lcb'_{n-1}) - \frac{(n-i)^2\epsilon}{P_{i}} \\
& \textstyle = R(1,\ldots,1, \tau_{i}, \lcb'_{i+1}, \ldots, \lcb'_{n-1}) - \frac{(n-i)^2\epsilon}{P_{i}} .
\end{align*}

Furthermore,  $|\delta_i(\tau_{i})| \leq |\hat{\delta}_{i}(\tau_i)| + \epsilon$ by \Cref{AccBoundGen} and $|\hat{\delta}_{i}(\tau)| \leq (2n-2i-1)\epsilon$ by the definitions of $\lcb'_{i}$ and $\ucb'_{i}$, which means $|\delta_i(\tau)| \leq 2(n-i)\epsilon$. So, \Cref{LossBoundGen} implies
%
\[
     R(1,\ldots,1, \algafterprime{i+1}, \newthreshold_{i+1}, \ldots, \newthreshold_{n-1}) - R(1,\ldots,1, \tau_{i}, \newthreshold_{i+1}, \ldots, \newthreshold_{n-1}) ~\leq~  \frac{2(n-i)\epsilon}{P_{i}} .
\]

Finally, using  $R(1,\ldots,1, \algafterprime{i+1}, \newthreshold_{i+1}, \ldots, \newthreshold_{n-1}) \geq \algafterprime{i}$, we get 
\[
R(1,\cdots,1, \tau_{i}, \ldots, \tau_{n-1}) ~\geq~ \algafterprime{i} - \frac{2(n-i)\epsilon}{P_{i}} - \frac{(n-i)^2\epsilon}{P_{i}} ~\geq~ \algafterprime{i} - \frac{(n-i+1)^2\epsilon}{P_{i}}. \qedhere
\]
%
%
\end{proof}

\subsection{Initialization and Putting Everything Together}

Now, we can give the initialization algorithm. The main goal of the initialization is to satisfy the conditions listed in \Cref{maingen}. Starting from the second call of \Cref{ISAgen}, the confidence interval length constraint and the distribution estimates constraints hold from the initialization, and the constraints $\optafter{i+1}, \algafter{i+1} \in [\lcb_i, \ucb_i]$ are guaranteed by Statements~\ref{Sta1} and \ref{Sta2} in \Cref{maingen}. Then, we can apply \Cref{DoublingBound} to bound the total regret.

We first give the initialization algorithm:

\begin{algorithm}[tbh]
\caption{Initialization}
\label{alg:pi-initialization}
\KwIn{Time horizon $T$, problem size $n$.}
\For{$i = 1 \to n $}
{
Run $1000n^2\sqrt{T} \log T$ free samples for $X_i$ to estimate $\hat{F}_i(x)$.
}
\For{$i = n-1 \to 1$}
{
Run $1000n^2\sqrt{T}\log T$ samples under the threshold setting $(1,\ldots,1,\tau_{i+1} = \newthreshold_{i+1},\ldots, \tau_{n-1} = \newthreshold_{n-1})$. Let $\mu_i$ be the average reward. \\
Let $\lcb_i = \mu_i - \frac{T^{-{1}/{4}}}{10n}$, $\ucb_i = \mu_i + (2n - 2i-1) \cdot \frac{T^{-{1}/{4}}}{10n}$
}
\KwOut{$[\lcb_1,\ucb_1],\ldots, [\lcb_{n-1}, \ucb_{n-1}]$. }
\end{algorithm}

\begin{Lemma}
\label{PIgenInit}
\Cref{alg:pi-initialization} runs $O(n^3\sqrt{T}\log T)$ rounds. The output satisfies with probability $1 - T^{-10}$ all constraints listed in \Cref{maingen}.
\end{Lemma}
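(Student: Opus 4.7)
The plan is to combine standard concentration tools with an inductive argument comparing the \emph{lower-bound policy} values $\algafter{\cdot}$ to the optima $\optafter{\cdot}$. The runtime is immediate by counting: the two loops jointly play $n\cdot 1000n^2\sqrt T\log T+(n-1)\cdot 1000n^2\sqrt T\log T=O(n^3\sqrt T\log T)$ rounds. In the first loop the bandit feedback delivers i.i.d.\ samples of $X_i$ because setting $\tau_j=1$ for $j<i$ and $\tau_i=0$ forces the selected variable to be $X_i$; in the second loop the reward in each round is i.i.d.\ with mean $R(1,\ldots,1,\newthreshold_{i+1},\ldots,\newthreshold_{n-1})=\algafter{i+1}$ by the definition of $\algafter{\cdot}$.

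For the two statistical conditions of \Cref{maingen}, I would first apply DKW to each $\hat F_i$ with target accuracy $T^{-1/4}/n$: with $N=1000n^2\sqrt T\log T$ samples the failure probability is at most $2T^{-2000}$, so a union bound over $i\in[n]$ combined with the telescoping inequality $|\prod_{i\in S}\hat F_i(x)-\prod_{i\in S}F_i(x)|\leq\sum_{i\in S}|\hat F_i(x)-F_i(x)|$ yields the product bound required by \Cref{maingen}. Next, Hoeffding applied to the second-loop averages with accuracy $\varepsilon_0:=T^{-1/4}/(10n)$ gives $|\mu_i-\algafter{i+1}|\leq\varepsilon_0$ simultaneously for all $i$ with high probability. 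Combined with $\lcb_i=\mu_i-\varepsilon_0$ and $\ucb_i=\mu_i+(2n-2i-1)\varepsilon_0$, this immediately gives the length bound $\ucb_i-\lcb_i=(2n-2i)\varepsilon_0\leq T^{-1/4}$ and the containment $\algafter{i+1}\in[\lcb_i,\ucb_i]$.

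The main obstacle is the upper containment $\optafter{i+1}\leq\ucb_i$: the thresholds $\lcb_j$ used for $j>i$ are not optimal, so the gap between $\optafter{i+1}$ and $\algafter{i+1}$ may compound across positions. I plan to mirror the telescoping argument of \Cref{BeforeMain3}. Set $R_j(\tau):=R(1,\ldots,1,\tau_j=\tau,\lcb_{j+1},\ldots,\lcb_{n-1})$ and $D_j:=\optafter{j}-\algafter{j}\geq 0$, and decompose
\begin{equation*}
D_j=\bigl(\optafter{j}-R_j(\optafter{j+1})\bigr)+\bigl(R_j(\optafter{j+1})-R_j(\lcb_j)\bigr).
\end{equation*}
The first summand equals $\Pr{X_j\leq\optafter{j+1}}\cdot D_{j+1}\leq D_{j+1}$. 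For the second, $R_j$ is unimodular with maximizer $\algafter{j+1}$ and derivative $f_j(\tau)(\algafter{j+1}-\tau)$, so
\begin{equation*}
R_j(\optafter{j+1})-R_j(\lcb_j)\leq R_j(\algafter{j+1})-R_j(\lcb_j)\leq|\algafter{j+1}-\lcb_j|\leq 2\varepsilon_0,
\end{equation*}
where the last bound uses the Hoeffding estimate together with $\lcb_j=\mu_j-\varepsilon_0$. Downward induction from $D_n=0$ then yields $D_{i+1}\leq 2(n-i-1)\varepsilon_0$, so $\optafter{i+1}\leq\algafter{i+1}+2(n-i-1)\varepsilon_0\leq\mu_i+(2n-2i-1)\varepsilon_0=\ucb_i$. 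A final union bound absorbs all failure events into probability at most $T^{-10}$.
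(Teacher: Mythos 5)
Your proof is correct, and it follows the same high-level structure as the paper's (DKW for the CDFs, Hoeffding for the means, then a downward induction over $i$ to bound $\optafter{i+1}-\algafter{i+1}$ by $(2n-2i-2)\varepsilon_0$, which is exactly what the slack in $\ucb_i$ is designed to absorb). Where you diverge is in the two supporting calculations, and in both cases your version is a bit cleaner. For the product-of-CDFs bound the paper works with $(1\pm\frac{T^{-1/4}}{2n})^n$ estimates, whereas you use the telescoping inequality $\left|\prod_{i\in S}\hat F_i-\prod_{i\in S}F_i\right|\le\sum_{i\in S}|\hat F_i-F_i|$ (valid because all factors lie in $[0,1]$); this avoids the exponential massaging at the cost of a slightly worse constant, which is irrelevant here. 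For the induction step, the paper argues informally by coupling the two threshold vectors and observing that they diverge at most once, with conditional loss at a divergence at position $j$ bounded by $|\lcb_j-\optafter{j+1}|\le\ucb_j-\lcb_j$; you instead write the explicit identity $D_j=\bigl(\optafter{j}-R_j(\optafter{j+1})\bigr)+\bigl(R_j(\optafter{j+1})-R_j(\lcb_j)\bigr)$, bound the first summand by $\Pr{X_j\le\optafter{j+1}}D_{j+1}\le D_{j+1}$ and the second by $2\varepsilon_0$ via the unimodality of $R_j$ (derivative $f_j(\tau)(\algafter{j+1}-\tau)$). The two yield the identical recursion $D_j\le D_{j+1}+2\varepsilon_0$ and hence the same final estimate, but your decomposition makes every term explicit and is easier to verify line by line; the paper's coupling argument is shorter but leaves the disjoint-divergence-events reasoning implicit.
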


\begin{proof}

For the accuracy bound of $\hat{F}_i(x)$, we first show that $|\hat{F}_i(x) - F_i(x)| \leq \frac{T^{-1/4}}{2n}$ with probability $1 - T^{-11}$ after running $N = C\cdot n^2\sqrt{T}\log T$ samples with $C = 1000$. With DKW inequality (\Cref{DKW}), we have \[\Pr{|\hat{F}_i(x) - F_i(x)| > \varepsilon ~=~ \frac{T^{-\frac{1}{4}}}{2n}} ~\leq~ 2\exp(-2N\varepsilon^2) ~=~ 2T^{-C/4}.\] 
So the bound holds with probability $1 - T^{-12}$ when $C = 1000$. By the union bound, with probability $1 - T^{-11}$, we have $|\hat{F}_i(x) - F_i(x)| \leq \frac{T^{-\frac{1}{4}}}{2n}$ holds for every $i \in [n]$. Then, for the accuracy of $\prod_{i \in S} F_i(x)$, we have $ \big((1 - \frac{T^{-\frac{1}{4}}}{2n})^n - 1\big) \leq \prod_{i \in S} \hat{F}_i(x) -\prod_{i \in S} F_i(x) \leq \big((1 + \frac{T^{-\frac{1}{4}}}{2n})^n - 1\big)$. For the lower bound, we have $(1 - \frac{T^{-\frac{1}{4}}}{2n})^n - 1 \geq 1 - \frac{T^{-\frac{1}{4}}}{2} - 1 > -T^{-\frac{1}{4}}$. For the upper bound, we have $(1 + \frac{T^{-\frac{1}{4}}}{2n})^n - 1 \leq \exp(\frac{T^{-\frac{1}{4}}}{2n} \cdot n) - 1 \leq 1 + 2\cdot \frac{T^{-\frac{1}{4}}}{2}  - 1 = T^{-\frac{1}{4}}$. Combining two bounds finishes the proof.

For the confidence interval, the constraints $\ucb_i - \lcb_i \leq T^{-\frac{1}{4}}$ hold by definition. Then, it only remains to show $\optafter{i+1} \in [\lcb_i, \ucb_i]$ and $\algafter{i+1} \in [\lcb_i, \ucb_i]$. 

We start from proving $\algafter{i+1} \in [\lcb_i, \ucb_i]$. Notice that $\mu_i$ is an estimate of $\algafter{i+1}$ with $N=C\cdot n^2\sqrt{T}\log T$ samples with $C = 1000$. With Hoeffding's Inequality (\Cref{Hoeffding}), we have 
\[ \Pr{|\mu_i - \algafter{i+1}| > \varepsilon = \frac{T^{-{1}/{4}}}{10n} } ~<~ 2\exp(-2N\varepsilon^2) ~=~ 2T^{-C/50}.\] 
Notice that $\lcb_i = \mu_i - \frac{T^{-{1}/{4}}}{10n}$ and $\ucb_i = \mu_i + \frac{T^{-{1}/{4}}}{10n}$. Then, by the union bound for all $i \in [n]$ ,  we have $\algafter{i+1} \in [\lcb_i, \ucb_i]$ holds for all $i$ with probability $1 - T^{-11}$ when $C \geq 1000$.

For $\optafter{i+1}$, we prove the statement by doing induction with the assumption that $|\algafter{i+1} - \mu_i| \leq \frac{T^{-{1}/{4}}}{10n}$ for all $i$. The base case is $i = n$, the statement simply holds because $\algafter{n} = \optafter{n}$. Next, we consider $i$, with the condition that $\optafter{j+1} \in [\lcb_{j}, \ucb_{j}]$ for all $j > i$. For the lower bound, since we know that $\algafter{i+1} \geq \lcb_i$, there must be $\optafter{i+1} \geq \lcb_i$, because $\optafter{i+1} \geq \algafter{i+1}$. For the upper bound, we first bound the difference between $\algafter{i+1}$ and $\optafter{i+1}$. Consider the setting $(1,\ldots1, \tau_{i+1} = \lcb_{i+1},\ldots, \tau_{n-1} = \lcb_{n-1})$ and $(1,\ldots, 1, \tau_{i+1} =\optafter{i+2},\ldots, \tau_{n-1} = \optafter{n})$. The first setting incurs an extra loss only when its behavior is different from the second setting. Assume the two settings behave differently when meeting a threshold $\tau_j$. Notice that this extra loss is bounded by $|\lcb_j - \optafter{j+1}|$. Since $\optafter{j+1} \in [\lcb_j, \ucb_j]$ for all $j > i$, this difference is upper bounded by $\max_{j > i} \ucb_j - \lcb_j = \ucb_{i+1} - \lcb_{i+1} = (2n-2i-2) \cdot \frac{T^{-{1}/{4}}}{10n}$. Therefore, 
\begin{align*}
    \optafter{i+1} ~\leq~ \algafter{i+1} + (2n-2i-2) \cdot \frac{T^{-{1}/{4}}}{10n} ~\leq~ \mu_i + (2n-2i-1) \cdot \frac{T^{-{1}/{4}}}{10n} ~=~ \ucb_i.
\end{align*}
Combining the lower bound and the upper bound proves $\optafter{i+1} \in [\lcb_i, \ucb_i]$. Finally, taking union bounds for all events that hold with probability $1 - T^{-11}$ finishes the proof.
\end{proof}

Now we are ready to prove the main theorem.

\mainProphet*

\begin{proof}

For the initialization, \Cref{alg:pi-initialization} runs $O(n^3\sqrt{T} \log T)$ rounds, so the total regret from the initialization is $O(n^3\sqrt{T} \log T)$.

For the main algorithm, we run \Cref{Doubling} with \Cref{ISAgen} being the required sub-routine $\alg$. This is feasible because the requirements in \Cref{maingen} are guaranteed by the initialization and \Cref{maingen} itself. Besides, \Cref{maingen} implies that \Cref{ISAgen} upper-bound the one-round regret by $\epsilon$ after $O(\frac{n^5\log T}{\epsilon^2})$ samples. Applying \Cref{DoublingBound} with $\alpha = 5$, we have the $O(n^{2.5}\sqrt{T} \log T)$ regret bound. Combining two parts finishes the proof.
\end{proof}


\section{Pandora's Box for General $n$}\label{PBGen}

In the Bandit Pandora's Box problem,  there are $n$ unknown independent distributions $\D_1, \ldots , \allowbreak\D_n$ representing the values of the $n$ boxes. The distributions have cdfs $F_1,  \ldots, F_n$ and densities $f_1,\ldots, f_n$.  Moreover, each box/distribution $\D_i$ has a known inspection cost $c_i$. Although in the original problem in introduction we assumed that  the values and costs  have support $[0,1]$, in this section we will scale down the costs and  values by a factor of $2n$, so that they have support $[0, \frac{1}{2n}]$. 
This scaling helps to bound the utility in each round between $[-0.5, 0.5]$. To obtain bounds for the original unscaled problem, we will multiply our bounds with this  factor $2n$  in the final analysis.

Consider a $T$ rounds game where in each round  we play some permutation $\pi$ representing the order of  inspection and $n$ thresholds $(\tau_{\pi(1)}, \ldots, \tau_{\pi(n)})$.  Our algorithm receives the following utility as feedback: \newadded{For $i \in [n]$, draw $X_\pi(i) \sim \D_\pi(i)$. Let $j$ be the minimum index that satisfies $\max\{X_\pi(1),\ldots, X_\pi({j-1})\} \geq \tau_{\pi(j)}$. If such $j$ does not exist, $j$ is set to be $n + 1$ (all boxes opened). The utility we receive in this round is $\max\{X_\pi(1),\ldots, X_\pi({j-1})\} - \sum_{k<j} c_{\pi(k)}$. }

Note that the  only  feedback  is the utility, and we do not see any value or even the index $j$ where we stop. 

  \IGNORE{
\begin{itemize}
    \item Step 1: Start from $i = 1$ and $v = 0$.
    \item Step 2: Check whether we have $v > \tau_{\pi(i)}$. If so, stop and return the reward. 
    \item Step 3: Otherwise, pay $c_{\pi(i)}$ to draw $X_{\pi(i)}$ from $\D_{\pi(i)}$ and update $v = \max\{v, X_{\pi(i)}\}$, then go back to Step 2 with $i \gets i + 1$.
\end{itemize}

Finally, the algorithm can only receive the reward in each round, i.e., the maximum value $v$ minus the total cost of those opened distributions.
}

In the case of  known distributions,  the optimal one-round policy for this problem was designed by Weitzman~\cite{Weitzman-Econ79}: For every distribution $\D_i$, solve the equation $\Ex{\max\{X_i - \sigma_i, 0\}} = c_i$; now play permutation $\pi$ by sorting in decreasing order of $\sigma_i$ and set threshold $\tau^*_i = \sigma_i$.  
Let $OPT$ be the optimal expected reward according to this optimal policy. Let $ALG_t$ be the expected reward of our policy in the $t$-th round. Then, we want to design an algorithm with \emph{total regret} $T\cdot OPT - \sum_{t \in T} ALG_t $ at most $\widetilde{O}(\poly(n) \sqrt{T})$.

Before introducing the algorithm, we define the \textit{gain function} for this general case:
\begin{align} \label{eq:gainPBGen}
g_i(v)  ~&:=~ -c_i + \int_v^1 (x - v)f_i(x) dx 
~=~ -c_i + (1-v) - \int_v^1 F_i(x)dx. 
\end{align}
Similar to the $n = 2$ case, this \textit{gain function} is the expected additional utility we get on opening $X_i$ when we already have value $v$ in hand. Note that the optimal threshold $\tau^*_i$ satisfies $g_i(\tau^*_i) = 0$.

\subsection{High-Level Approach via Valid Policies.}

We first briefly introduce the initialization algorithm. The following lemma shows what we achieve in the initialization  (proved in \Cref{sec:appendixSec4Lemma1}).

\begin{Lemma}
\label{PBInit}
The initialization algorithm runs $1000 \cdot \sqrt{T}\log T$ samples for each distribution to output interval $[\lcb_i,\ucb_i]$, such that with probability $1 - T^{-10}$ the following hold simultaneously for all $i \in [n]$:
\begin{itemize}
    \item $\lcb_i \leq \sigma_i \leq \ucb_i$.
    \item $|g_i(x)| \leq T^{-\frac{1}{4}}$ simultaneously for all $x \in [\lcb_i, \ucb_i]$.
\end{itemize}
\end{Lemma}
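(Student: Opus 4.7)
The plan is to obtain clean samples from each $\D_i$ via a simple deterministic policy, use DKW to get a uniform CDF estimate, and then define $[\lcb_i, \ucb_i]$ via a plug-in estimate of the gain function $g_i$. To sample $X_i$ under bandit feedback, I would play a permutation $\pi$ with $\pi(1) = i$, $\tau_{\pi(1)} = 0$, and $\tau_{\pi(2)} = 0$. The first box is always opened, and the process halts before the next one since $X_i \geq 0$ satisfies $\max\{X_i\} \geq 0 = \tau_{\pi(2)}$. The observed utility equals $X_i - c_i$, so $X_i$ is recovered using the known $c_i$. Repeating this $N := 1000\sqrt{T}\log T$ times for each $i \in [n]$ uses $O(n\sqrt{T}\log T)$ initialization rounds and yields $N$ independent samples per distribution.

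Let $\hat F_i$ denote the resulting empirical CDF. By DKW (\Cref{DKW}) applied with $\varepsilon = T^{-1/4}/4$, a single distribution's estimate fails with probability at most $2\exp(-2N\varepsilon^2) = 2 T^{-125}$; a union bound over $i \in [n]$ then yields $\|\hat F_i - F_i\|_\infty \leq T^{-1/4}/4$ simultaneously for all $i$ with probability at least $1 - T^{-10}$. Next, define the plug-in estimate
\[
\hat g_i(v) ~:=~ -c_i + (1-v) - \int_v^1 \hat F_i(x)\,dx,
\]
mirroring the second form in \eqref{eq:gainPBGen}. Using $1-v \leq 1$, this yields the pointwise bound $|\hat g_i(v) - g_i(v)| \leq \|\hat F_i - F_i\|_\infty \leq T^{-1/4}/4$ for every $v$. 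Finally, I would set $[\lcb_i, \ucb_i] := \{v \in [0, 1/(2n)] : |\hat g_i(v)| \leq T^{-1/4}/2\}$. Since $\hat g_i'(v) = \hat F_i(v) - 1 \leq 0$, the function $\hat g_i$ is monotone non-increasing and this set is a genuine interval. Because $g_i(\sigma_i) = 0$, the approximation bound gives $|\hat g_i(\sigma_i)| \leq T^{-1/4}/4 < T^{-1/4}/2$, so $\sigma_i \in [\lcb_i, \ucb_i]$, establishing the first conclusion. For the second, any $x \in [\lcb_i, \ucb_i]$ satisfies $|g_i(x)| \leq |\hat g_i(x)| + |\hat g_i(x) - g_i(x)| \leq T^{-1/4}/2 + T^{-1/4}/4 \leq T^{-1/4}$.

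No genuine obstacle arises beyond careful bookkeeping with constants. The only subtlety worth verifying is that the sample-extraction policy indeed halts after opening just box $i$ under the paper's stopping rule, which is immediate from $X_i \geq 0 = \tau_{\pi(2)}$. The $O(n\sqrt{T}\log T)$ cost of this initialization phase is a lower-order contribution to the overall $O(n^{5.5}\sqrt{T}\log T)$ regret target.
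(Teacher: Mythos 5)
Your proof is correct and follows essentially the same route as the paper's: obtain $\Theta(\sqrt{T}\log T)$ clean samples of each $X_i$, apply DKW to get a uniform CDF estimate, define the plug-in gain estimate $\hat g_i$ from the integrated form of \eqref{eq:gainPBGen}, use monotonicity of $\hat g_i$ to confirm that $\{v : |\hat g_i(v)| \leq T^{-1/4}/2\}$ is an interval, and then conclude via $g_i(\sigma_i)=0$ and the triangle inequality. The one small detail you should tighten is the sample-extraction policy: with $\tau_{\pi(1)} = 0$, the stopping rule ``$\max\{X_{\pi(1)},\ldots,X_{\pi(j-1)}\} \geq \tau_{\pi(j)}$'' could already fire at $j=1$ under the convention that the empty maximum is $0$ (the paper leaves this convention implicit), so it is safer to take $\tau_{\pi(1)}$ strictly above the scaled support (e.g.\ $1$) so that box $i$ is guaranteed to be opened.
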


After initialization, the main part  is the action set-updating algorithm. Similar to the algorithm for $n = 2$, we hope to use estimates $\hat{F}_i(x)$ to gradually shrink the intervals $[\lcb_i, \ucb_i]$. However, one major challenge is that we don't have a fixed order. If $n$ is a constant, we can just simply try all possible permutations and use a multi-armed bandit style algorithm to find the optimal permutation. But the number of permutations is exponential in $n$, so this approach is impossible when $n$ is a general parameter. To get a polynomial regret algorithm, we can only test $\poly(n)$ number of different orders.

Another challenge is that the idea for $n = 2$ can bound the regret when we play a sub-optimal threshold, but it tells nothing about playing a sub-optimal order. We don't have a direct way to bound the regret when playing an incorrect order.

Both difficulties imply that only keeping the confidence intervals as the constraint for the actions is not enough. Therefore, we also introduce a set of order constraints:

\begin{Definition}[Valid Constraint Group]
Given a set of confidence intervals $I = \{[\lcb_1,\ucb_1], [\lcb_2, \ucb_2],...,$ $[\lcb_n, \ucb_n]\}$ and a set $S$ of order constraints, satisfying:
\begin{itemize}
    \item $\ucb_i - \lcb_i \leq T^{-\frac{1}{4}}$.
    \item $\sigma_i \in [\lcb_i, \ucb_i]$
    \item Every constraint in $S$ can be defined as $(i, j)$ that means $\sigma_i > \sigma_j$.
    \item The constraints in $S$ are closed, i.e., if $(i, j), (j, k) \in S$, there must be $(i, k) \in S$.
    \item If $(i, j) \in S$, we must have $\ucb_i \geq \ucb_j$ and $\lcb_i \geq \lcb_j$.
\end{itemize}
For $(I, S)$ satisfying the conditions above, we call it a \emph{valid constraint group}.
\end{Definition}

The intuition of the extra order constraints is: When we are shrinking the intervals, if it is evident that $\sigma_i > \sigma_j$, we will require $\D_i$ to be in front of $\D_j$ in the following rounds. Correspondingly, we give the following definition for a ``valid" policy. During the algorithm, we will only run valid policies, according to the current constraint group we have.

\begin{Definition}[Valid Policy]
\label{validThSetting}
Let $(\tau_{\pi(1)}, \tau_{\pi(2)}, ..., \tau_{\pi(n)})$ be a policy to play in one round, where $\pi$ is the distribution permutation for this policy, and the threshold in front of box $\pi(i)$ is $\tau_{\pi(i)}$. For simplicity, we use $\pi$ to represent a policy.

For a policy $\pi$, we say it is \emph{valid} for a constraint group $(I, S)$ if the following conditions hold:

\begin{itemize}
    \item For $i \in [n]$, $\tau_{\pi(i)} \in [\lcb_{\pi(i)}, \ucb_{\pi(i)}]$.
    \item If $(i, j) \in S$, then $\D_i$ must be in front of $\D_j$, i.e., $\pi^{-1}(i) < \pi^{-1}(j)$.
    \item For $i < j$, $\tau_{\pi(i)} \geq \tau_{\pi(j)}$.
\end{itemize}
\end{Definition}

Notice that for a valid constraint group, we have $\sigma_i \in [\lcb_i, \ucb_i]$ for all $i \in [n]$, and $\sigma_i > \sigma_j$ for all $(i, j) \in S$. Then, the optimal policy is valid. Therefore, we can always find a valid policy from the constraint group.

Now, we are ready to give the main idea of the constraint-updating algorithm. In each phase, we first update the confidence intervals and then update the order constraints as follows: 
\begin{itemize}
    \item Step 1: For each $i \in [n]$, we run $\OTild(\frac{\poly(n)}{\epsilon^2})$ samples to update the confidence interval to $[\lcb'_i, \ucb'_i]$, such that for every threshold pair $\tau_i, \tau'_i \in [\lcb'_i, \ucb'_i]$, the \textit{moving difference} is small, i.e., if we move $\tau_i$ to $\tau'_i$ and keep the validity, the difference of the expected reward is bounded by $O(\poly(n)\cdot \epsilon)$.
    \item Step 2: For each distribution pair $(i, j)$ without a constraint, we run $\OTild(\frac{\poly(n)}{\epsilon^2})$ samples to test the order between them, such that we can either clarify which one is bigger between $\sigma_i$ and $\sigma_j$, or we can claim that the \textit{swapping difference} (the difference before and after swapping $\D_i$ and $\D_j$) is bounded by $O(\poly(n)\cdot \epsilon)$.
\end{itemize}

Finally, we argue that for every valid policy, we can convert it into the optimal policy by using $\poly(n)$ number of moves and swaps. This is sufficient for us to give  $O(\poly(n)\cdot \epsilon)$ regret bound.

In the following analysis, we use separate sub-sections to introduce each part. \Cref{step1} provides the Interval-Shrinking algorithm to bound the moving difference. \Cref{step2} introduces the way to add a new order constraint to bound the swapping difference. \Cref{step3} shows how to convert a valid policy to the optimal policy using a $\poly(n)$ number of moves and swaps. Finally, \Cref{combinepan} combines the results of three sub-sections to complete the analysis.

\subsection{Step 1: Interval-Shrinking to Bound Moving Difference}\label{step1}

The goal of this sub-section is: Given $i\in [n]$ and an original constraint group $(I, S)$, we want to update the confidence interval $[\lcb_i, \ucb_i]$, to make sure that moving $\tau_i$ inside the new confidence interval incurs a small difference. The key idea of the Interval-Shrinking algorithm is similar to the case when $n = 2$: For each $i \in [n]$, we want to play two different values for $\tau_i$, and see the difference of the expected reward. However, playing $\tau_i = \lcb_i$ and $\tau_i = \ucb_i$ might be impossible. The reason is: We hope to keep a decreasing threshold setting. There may not be a policy that allow $\tau_i$ to be set to $\ucb_i$ and $\lcb_i$ without changing other thresholds. If we need different permutations to test $\tau_i = \ucb_i$ and $\tau_i = \lcb_i$, this makes the analysis involved. Therefore, we should find a policy that fixes the order and other thresholds, then test $\tau_i$ under this fixed policy while keeping a decreasing thresholds.

When we set $\tau_i$ to be different values, the two policies will be different only when the maximum reward before $\tau_i$ falls between the two thresholds. Therefore, to see the largest difference, we hope the probability of this event is maximized. This intuition allows us to give the following definition:

\begin{Definition}[\clever  Policy]
\label{clever}

Given $(I, S)$ and $i \in [n]$, a \clever policy is a valid \textit{partial} policy $\pi$ parameterized by $\lcb$ and $\ucb$\footnote{Here, we say $\pi$ is a partial policy because it's not completely fixed. We fix the permutation of the distributions and the value of all other thresholds, but the value of $\tau_i$ is flexible.}, such that $F_{\pi, i}(\ucb) - F_{\pi, i}(\lcb)$ is maximized.

In the definition, $F_{\pi, i}(x)$ is the probability that the algorithm reaches distribution $X_i$ with maximum value $v < x$ in hand, i.e.,
\[
  F_{\pi, i}(x) ~:=~ \prod_{j < \pi^{-1}(i)} F_{\pi(j)}(x).
\]

Furthermore, $\ucb$ and $\lcb$ represents two possible value of $\tau_i$ to keep a valid $\pi$, i.e.,
$\pi$ is valid when both $\tau_i = \ucb$ and $\tau_i = \lcb$.
\end{Definition}

A key fact of \clever policy is that for every different distribution, we might find a different \clever policy. This is different from the Prophet Inequality problem: In the Pandora's Problem, we don't keep a fixed order. Every order that satisfies the constraints $(I,S)$ is possible to be tested.

Now, the key idea of the Interval-Shrinking algorithm is clear: For each $i$, find the \clever policy and run samples with $\tau_i = \ucb$ and $\tau_i = \lcb$. Then, use a method similar to \Cref{ISAPan} to calculate the new interval. The following algorithm describes the details of this idea:

\begin{algorithm}[tbh]
\caption{Interval-Shrinking Algorithm}
\label{PBISA}
\KwIn{$(I, S)$, $\epsilon$, $i$, $\hat{F}_1(x),...,\hat{F}_n(x)$}
Get an \textit{approximate} \clever policy $\hat \pi$ and $\lcb, \ucb$ using \Cref{Lemma:Clever}.  \\
Calculates $\hat F_{\hat \pi, i}(x)$. \\ 
For $\tau \in [\lcb_i, \ucb_i]$, let $\hat{\Delta}_i(\tau) := \hat{F}_{\hat \pi, i}(\ucb) \int_\tau^\ucb (\hat{F}_i(x)-1)dx + \hat{F}_{\hat \pi, i}(\lcb) \int_\lcb^\tau (\hat{F}_i(x)-1) dx - \int_\lcb^\ucb \hat{F}_{\hat \pi, i}(x)(\hat{F}_i(x)-1)dx$. \\
Run $C \cdot \epsilon^{-2}\log T$ samples with $\tau_i = \ucb$. Let the average reward be $\hat{R}_\ucb$. \\
Run $C \cdot \epsilon^{-2}\log T$ samples with $\tau_i = \lcb$. Let the average reward be $\hat{R}_\lcb$. \\
Define $\hat{\delta}_i(\tau) := \hat{\Delta}_i(\tau) - (\hat{R}_\ucb - \hat{R}_\lcb)$. \\
Let $\ucb'_i = \max_{\tau \in [\lcb_i, \ucb_i]} |\hat{\delta}_i(\tau)| < \epsilon$  and let $\lcb'_i = \min_{\tau \in [\lcb_i, \ucb_i]} |\hat{\delta}_i(\tau)| < \epsilon$. \\
\KwOut{$[\lcb'_i, \ucb'_i]$}
\end{algorithm}

Then, the following lemma shows the bound when modifying a threshold:

\begin{Lemma}[Moving Difference Bound]
\label{MoveOP}

Suppose we are given $(I, S)$, $\epsilon > 16 T^{-\frac{1}{2}}$, CDF estimates $\hat F_1(x), \cdots, \hat F_n(x)$, and $i \in [n]$, satisfying the following conditions for all $j \in [n]$:
\begin{itemize}
\item $|g_j(\tau)| \leq T^{-\frac{1}{4}}$ for all $\tau \in [\lcb_j, \ucb_j]$.
\item $(I, S)$ is valid.
\item For any valid partial policy $\pi'$ of $(I, S)$, we fix the order and the other thresholds except $\tau_j$. Assume $\pi'$ is valid when both $\tau_j = \lcb'$ and $\tau_j = \lcb'$. Define $\delta_{\pi', \ucb', \lcb', j}(\tau) = (F_{\pi', j}(\lcb') - F_{\pi', j}(\ucb'))g_i(\tau)$. Then $|\delta_{\pi', \ucb', \lcb', j}(\tau)| \leq 6\epsilon$.
\item CDF estimate $\hat F_j(x)$ is constructed via $10^5 \cdot \frac{n^2 \log T}{\epsilon}$ fresh i.i.d. samples of $X_j$.
 \end{itemize}
Then, \Cref{PBISA} runs $O(\frac{\log T}{\epsilon^2})$ samples and calculates a new interval $[\lcb'_i, \ucb'_i]$, such that the following properties hold with probability $1 - T^{-11}$: 
\begin{enumerate}[label=(\roman*)]
    \item \label{MoveSta1} $\sigma_i \in [\lcb'_i, \ucb'_i]$
    \item \label{MoveSta3} Let $I'_i = (I \setminus \{[\lcb_i, \ucb_i]\}) \cup \{[\lcb'_i, \ucb'_i]\} $. For any valid partial policy $\pi'$ of $(I'_i, S)$, we fix the order and the other thresholds. Assume $\pi'$ is valid when both $\tau_i = \ucb'$ and $\tau_i = \lcb'$. Define $\delta_{\pi', \ucb', \lcb', i}(\tau) = (F_{\pi', i}(\lcb') - F_{\pi', i}(\ucb'))g_i(\tau)$. Then $|\delta_{\pi', \ucb', \lcb', i}(\tau)| \leq 3\epsilon$.
    \item \label{MoveSta2}  For any valid policy of $(I'_i, S)$, if we fix the order and the other thresholds, but modify $\tau_i$ to $\tau'_i$, satisfying that the new policy is still valid, the difference of the expected reward between these two policies is less than $3\epsilon$.
\end{enumerate}
\end{Lemma}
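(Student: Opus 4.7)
The plan is to mirror the $n=2$ case from \Cref{mainthmpan}, with the new ingredient being the \clever policy $\hat\pi$ (with boundary thresholds $\ucb, \lcb$) from \Cref{Lemma:Clever} that approximately maximizes $F_{\hat\pi, i}(\ucb) - F_{\hat\pi, i}(\lcb)$ over all valid partial policies of $(I, S)$. I would take the bounding function to be $\delta(\tau) := (F_{\hat\pi, i}(\lcb) - F_{\hat\pi, i}(\ucb)) \, g_i(\tau)$ instantiated at this specific \clever policy, and define $[\lcb'_i, \ucb'_i]$ as exactly those $\tau \in [\lcb_i, \ucb_i]$ where the estimate $\hat\delta(\tau)$ is small, as \Cref{PBISA} does. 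The maximization property of $\hat\pi$ is crucial: it ensures that a bound on $|g_i(\tau)|$ obtained through this single \clever policy transfers to every other valid partial policy, which is what Statements~\ref{MoveSta3} and \ref{MoveSta2} demand.

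The identity driving the algorithm is $\delta(\tau) = \Delta_i(\tau) - (R_\ucb - R_\lcb)$, where $R_\ucb, R_\lcb$ are the expected rewards of the \clever policy with $\tau_i$ set to $\ucb$ and $\lcb$. Since the two policies differ only on the event that the max value upon reaching $X_i$ falls inside $[\lcb, \ucb]$, we have $R_\ucb - R_\lcb = \int_\lcb^\ucb f_{\hat\pi, i}(x) g_i(x) \, dx$ with $f_{\hat\pi, i} = \tfrac{d}{dx} F_{\hat\pi, i}$. Integration by parts together with $g'_i(x) = F_i(x) - 1$ and $g_i(\ucb) - g_i(\tau) = \int_\tau^\ucb (F_i(x)-1) dx$ produces the explicit formula for $\Delta_i$ used in the algorithm; replacing the CDFs and the expected reward difference by their estimates gives $\hat\delta$.

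The main technical obstacle, as in \Cref{AccBoundPan}, is uniform accuracy $|\hat\delta(\tau) - \delta(\tau)| \leq 2\epsilon$ on $[\lcb_i, \ucb_i]$ without the luxury of $T^{-1/4}$-width intervals. The plan is a two-part argument. For $|\hat\Delta_i - \Delta_i|$, I would use DKW on $\hat F_i$ (whose fresh samples are provided in the hypotheses) together with a union bound over the at most $n$ factors composing $\hat F_{\hat\pi, i}$, and exploit that $\int_\lcb^\ucb(F_i(x)-1) dx$ is small since $|g_i| \leq T^{-1/4}$ uniformly on $[\lcb_i, \ucb_i]$. For $|(\hat R_\ucb - \hat R_\lcb) - (R_\ucb - R_\lcb)|$, Hoeffding suffices for pointwise concentration but Bernstein is sharper given that the per-sample reward difference has variance $O(\epsilon)$: the reward difference is nonzero only on the event that the max-so-far crosses $[\lcb, \ucb]$, which has probability $F_{\hat\pi, i}(\ucb) - F_{\hat\pi, i}(\lcb)$, and its magnitude is at most $\max|g_i|$, whose product is pinned at $\leq 6\epsilon$ by the input hypothesis. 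A union bound over a $\poly(T)$-fine net in $[\lcb_i, \ucb_i]$, combined with Lipschitz continuity of $\hat\delta$ (whose derivative is controlled by $\hat F_i$ and $\hat F_{\hat\pi, i}$), extends the pointwise concentration to a uniform one; using fresh samples each phase as in \Cref{PBISA} decouples dependencies across phases.

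Given the accuracy bound, the three conclusions follow quickly. For Statement~\ref{MoveSta1}, $g_i(\sigma_i) = 0$ gives $\delta(\sigma_i) = 0$, hence $|\hat\delta(\sigma_i)|$ is below the cutoff and $\sigma_i \in [\lcb'_i, \ucb'_i]$. For Statement~\ref{MoveSta3}, any $\tau \in [\lcb'_i, \ucb'_i]$ satisfies $|\hat\delta(\tau)| < \epsilon$, so $(F_{\hat\pi, i}(\ucb) - F_{\hat\pi, i}(\lcb)) |g_i(\tau)| \leq 3\epsilon$; since $[\lcb'_i, \ucb'_i] \subseteq [\lcb_i, \ucb_i]$, the approximate-\clever guarantee of $\hat\pi$ still applies to $(I'_i, S)$ and yields $F_{\pi', i}(\lcb') - F_{\pi', i}(\ucb') \leq O(1) \cdot (F_{\hat\pi, i}(\lcb) - F_{\hat\pi, i}(\ucb))$ for every valid $(\pi', \ucb', \lcb')$, transferring the bound to $|\delta_{\pi', \ucb', \lcb', i}(\tau)| \leq 3\epsilon$. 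Finally, Statement~\ref{MoveSta2} reduces to Statement~\ref{MoveSta3}: changing $\tau_i$ to $\tau'_i$ within $[\lcb'_i, \ucb'_i]$ changes the expected reward by $\int f_{\pi', i}(x) g_i(x) dx$ over the interval between $\tau_i$ and $\tau'_i$, which is at most $(F_{\pi', i}(\ucb'_i) - F_{\pi', i}(\lcb'_i)) \cdot \max_{x \in [\lcb'_i, \ucb'_i]} |g_i(x)| \leq 3\epsilon$.
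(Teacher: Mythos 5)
Your overall approach matches the paper's proof of \Cref{MoveOP} almost exactly: the same bounding function $\delta_i(\tau) = -(F_{\hat\pi,i}(\ucb) - F_{\hat\pi,i}(\lcb))\,g_i(\tau)$, the same identity $\delta_i = \Delta_i - (R_\ucb - R_\lcb)$ obtained by integration by parts, a Bernstein-plus-covering argument for uniform concentration (with the variance controlled by the precondition $|\delta_{\pi',\ucb',\lcb',j}(\tau)| \leq 6\epsilon$ and dependencies decoupled by fresh samples, exactly as in the paper's \Cref{AccBoundPBGen}), and the same three-step derivation of the conclusions. So the high-level plan is sound.

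There is, however, one genuine gap in how you handle the approximation error from \Cref{Lemma:Clever}. You assert that the approximate-\clever guarantee ``yields $F_{\pi', i}(\lcb') - F_{\pi', i}(\ucb') \leq O(1) \cdot (F_{\hat\pi, i}(\lcb) - F_{\hat\pi, i}(\ucb))$.'' This is not what \Cref{Lemma:Clever} provides: the error is \emph{additive}, i.e., if $q_i := \max_{\pi',\ucb',\lcb'} (F_{\pi',i}(\ucb') - F_{\pi',i}(\lcb'))$ and $\hat q_i := F_{\hat\pi,i}(\ucb) - F_{\hat\pi,i}(\lcb)$, then $q_i \leq \hat q_i + 4\sqrt{\epsilon}$, and a multiplicative $O(1)$ relationship fails when $\hat q_i$ is small relative to $\sqrt{\epsilon}$. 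The bound you need is on the \emph{product} $q_i \cdot \max_{v\in[\lcb'_i,\ucb'_i]}|g_i(v)|$, and the additive slack $4\sqrt{\epsilon}$ must be absorbed by pairing it with the precondition $|g_i(v)| \leq T^{-1/4}$, giving the extra term $4\sqrt{\epsilon}\cdot T^{-1/4}$, which is $< \epsilon$ precisely because of the hypothesis $\epsilon > 16 T^{-1/2}$. This is the step the paper carries out explicitly and that your proposal elides; without it, the leap from $\hat q_i |g_i(\tau)| \leq 2\epsilon$ to the uniform $3\epsilon$ bound over all valid $(\pi',\ucb',\lcb')$ is not justified. Everything else in your plan, including the reduction of Statement~\ref{MoveSta2} to Statement~\ref{MoveSta3} via the one-dimensional integral estimate, is correct as stated.
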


Before starting the proof, we first give an accuracy bound of the distribution estimates, which is proved in \Cref{sec:appendixSec4Lemma2}.

\begin{Claim}
\label{PBGenDisAcc}
Assume the preconditions in \Cref{MoveOP} hold. Then with probability $1 - T^{-12}$, we have
  $|\prod_{i \in S} \hat{F}_i(x) - \prod_{i \in S} F_i(x)| \leq \sqrt{\epsilon}$ simultaneously hold for all $x \in [0,1]$ and $S \subseteq[n]$.
\end{Claim}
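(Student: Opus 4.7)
The plan is to derive the claim as a routine consequence of the DKW inequality applied to each marginal CDF estimate, combined with an elementary telescoping bound for products of numbers in $[0,1]$. The preconditions give us $N := 10^5 \cdot n^2 \log T / \epsilon$ fresh i.i.d.\ samples per distribution, so the strength of DKW will be more than sufficient; the only mild care needed is in choosing the per-coordinate accuracy so that the product error across all of $[n]$ stays within $\sqrt{\epsilon}$.

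First I would apply the DKW inequality (\Cref{DKW}) to each $\hat{F}_j$ separately with target sup-accuracy $\varepsilon_0 := \sqrt{\epsilon}/n$. This yields
\[
\Pr\!\left[\sup_{x\in[0,1]} |\hat F_j(x) - F_j(x)| > \varepsilon_0\right] \leq 2\exp(-2N\varepsilon_0^2) = 2\exp(-2\cdot 10^5 \log T) = 2T^{-2\cdot 10^5}.
\]
A union bound over $j \in [n]$ then gives the uniform sup-bound $\sup_x |\hat F_j(x) - F_j(x)| \leq \varepsilon_0$ for every $j$ with failure probability at most $2n T^{-2\cdot 10^5} \leq T^{-12}$ for the $T \geq n$ regime considered in the paper.

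Next I would convert per-index accuracy into product accuracy. Since all $F_j(x),\hat F_j(x)\in[0,1]$, the standard telescoping identity
\[
\prod_{i\in S}\hat F_i(x) - \prod_{i\in S} F_i(x) ~=~ \sum_{k=1}^{|S|}\Bigl(\prod_{j<k}\hat F_{i_j}(x)\Bigr)\bigl(\hat F_{i_k}(x) - F_{i_k}(x)\bigr)\Bigl(\prod_{j>k} F_{i_j}(x)\Bigr)
\]
(with an arbitrary ordering of $S = \{i_1,\dots,i_{|S|}\}$) immediately yields $|\prod_{i\in S}\hat F_i(x) - \prod_{i\in S} F_i(x)| \leq |S|\cdot \varepsilon_0 \leq n\varepsilon_0 = \sqrt{\epsilon}$, simultaneously for every $x\in[0,1]$ and every $S\subseteq[n]$. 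This completes the claim on the same $1 - T^{-12}$ success event, with no union bound needed over $S$ or $x$ beyond what DKW already supplies.

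There is essentially no hard step: DKW gives the uniform-in-$x$ bound ``for free'', and the $S$-quantifier is handled by a deterministic product-perturbation argument that only loses a factor of $n$. The sample budget $N = 10^5 n^2 \log T/\epsilon$ was clearly chosen precisely so that $2N\varepsilon_0^2$ is a large constant times $\log T$; any mild weakening of the constant would still suffice, so no delicate tuning is required.
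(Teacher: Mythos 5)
Your proposal is correct and follows essentially the same route as the paper: DKW for each marginal $\hat F_j$, a union bound over $j\in[n]$ to get simultaneous sup-accuracy, and then a deterministic product-perturbation argument that handles the quantifier over $S$ for free. The only cosmetic difference is in the last step: you use the standard telescoping identity to get the clean bound $n\varepsilon_0$, whereas the paper passes through the sandwich $(1-\tfrac{\sqrt\epsilon}{2n})^n - 1 \le \prod\hat F_i - \prod F_i \le (1+\tfrac{\sqrt\epsilon}{2n})^n - 1$ and then bounds these with Bernoulli's inequality and $e^x-1\le 2x$; your telescoping version is arguably tighter and cleaner, and is the reason you can work with per-coordinate accuracy $\sqrt\epsilon/n$ rather than the paper's $\sqrt\epsilon/(2n)$, but the two are equivalent in effect.
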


\begin{proof}[Proof of \Cref{MoveOP}]
  Fix the \clever policy $\pi$. Assume we want to move $\tau_i$ from $\tau_i = \ucb$ to $\tau_i = \lcb$, such that the policies are both valid when $\tau_i = \lcb$ and $\tau_i = \ucb$. Since we only care about the absolute value of the difference between two expected rewards, we may assume $\ucb > \lcb$.

If moving $\tau_i$ from $\ucb$ to $\lcb$, the performance of the two policies will only be different if the previous maximum reward falls between $\lcb$ and $\ucb$: It will reject the previous maximum if $\tau_i = \ucb$, but accept it when $\tau_i = \lcb$. Besides, since $\lcb$ is greater than the next threshold in $\pi$, when the previous maximum is inside $[\lcb, \ucb]$, the algorithm must stop before the next threshold, which means the difference only comes from $\tau_i$ and $X_i$.

Recall that $F_{\hat \pi,i}(x) = \prod_{j < \hat \pi^{-1}(i)} F_j(x)$, i.e., $F_{\hat \pi(i)}(x)$ is the probability that \Cref{ISAgen} reaches $\tau_i$ with $v \leq x$ in hand. Let $f_{\hat \pi, i}(x) = F'_{\hat \pi, i}(x)$. Then, the difference of the expected reward is $\int_\lcb^\ucb f_{\hat \pi, i}(x)g_i(x) dx = F_{\hat \pi, i}(\ucb)g_i(\ucb) - F_{\hat \pi, i}(\lcb)g_i(\lcb) - \int_\lcb^\ucb F_{\hat \pi, i}(x)g'_i(x)dx$. To upper-bound this difference,  define \emph{generalized bounding function}
\begin{align}
    \delta_i(\tau) ~:=~  -(F_{\hat \pi, i}(\ucb) - F_{\hat \pi, i}(\lcb)) \cdot g_i(\tau). \label{EqPan1}
\end{align}
Then, to learn $\delta_i(\tau)$, we define
\begin{align*}
    \Delta_i(\tau) &~:=~ F_{\hat \pi, i}(\ucb)(g_i(\ucb) - g_i(\tau)) - F_{\hat \pi, i}(\lcb)(g_i(\lcb)-g_i(\tau)) - \int_\lcb^\ucb F_{\hat \pi, i}(x)g'_i(x)dx \notag \\
    &~=~ F_{\hat \pi, i}(\ucb) \int_\tau^\ucb (F_i(x) - 1)dx + F_{\hat \pi, i}(\lcb) \int_\lcb^\tau (F_i(x) - 1) dx - \int_\lcb^\ucb F_{\hat \pi, i}(x)(F_i(x) - 1)dx.
\end{align*}
Observe that $\delta_i(\tau) = \Delta_i(\tau) - (R_\ucb - R_\lcb)$, where $R_\ucb$ and $R_\lcb$ correspond to the expected reward in $\hat \pi$ with $\tau_i = \ucb$ and $\tau_i = \lcb$ respectively. Then, by replacing $F_i(x)$ with $\hat{F}_i(x)$, we can get $\hat{\Delta}_i(\tau)$, which is an estimate of $\Delta_i(\tau)$. For $R_\ucb$ and $R_\lcb$, we can learn the estimates $\hat{R}_\ucb$ and $\hat{R}_\lcb$ via running samples. Combining these estimates results in $\hat{\delta}_i(\tau)$. Then, the following claim shows that $\hat{\delta}_i(\tau)$ estimates $\delta_i(\tau)$ accurately (proved in \Cref{sec:appendixSec4Lemma3}).

\begin{Claim}
\label{AccBoundPBGen}
In \Cref{PBISA}, if the conditions in \Cref{MoveOP} holds, then with probability $1 - T^{-12}$ we have $|\hat{\delta}_i(\tau) - \delta_i(\tau)| \leq \epsilon$ simultaneously for all $\tau \in [\lcb_i, \ucb_i]$.
\end{Claim}

Now we prove the statements in \Cref{MoveOP}. In the following proofs, we assume $|\delta_i(\tau) - \hat{\delta}_i(\tau)| \leq \epsilon$  holds simultaneously for all $\tau \in [\lcb_i, \ucb_i]$. 

\noindent\textit{Statement \ref{MoveSta1}.} Look at \Cref{PBISA}: It finds $\hat \pi, \lcb, \ucb$, gets $\hat{\delta}_i(\tau)$, then calculates $[\lcb'_i, \ucb'_i] = \{\tau \in [\lcb_i, \ucb_i]: |\hat{\delta}_i(\tau) < \epsilon|\}$. Since $\delta_i(\tau^*_i) = 0$, there must be $|\hat{\delta}_i(\tau^*_i)| \leq \epsilon$. Therefore, $\tau^*_i \in [\lcb'_i, \ucb'_i]$.

\noindent\textit{Statement \ref{MoveSta3}.} Notice that $[\lcb'_i, \ucb'_i] = \{\tau \in [\lcb_i, \ucb_i]: |\hat{\delta}_i(\tau) \leq  \epsilon|\}$. Therefore, for all $\tau \in [\lcb'_i, \ucb'_i]$, $|\delta_i(\tau)| \leq |\hat{\delta}_i(\tau)| + |\delta_i(\tau) - \hat{\delta}_i(\tau)| \leq 2\epsilon$. We first assume that $\hat \pi$ is an \textit{accurate} \clever policy. Then,  from the definition, we have $F_{\hat \pi, i}(\ucb) - F_{\hat \pi, i}(\lcb) \geq F_{\pi', i}(\ucb') - F_{\pi', i}(\lcb')$ for all valid partial policy $\pi'$ parameterized by $\ucb', \lcb'$. Therefore, $|\delta_{\pi', \ucb', \lcb',i}(\tau)| \leq |\delta_i(\tau)| \leq 2\epsilon$.

\noindent\textit{Statement \ref{MoveSta2}.} We again assume that $\hat \pi$ is an accurate \clever policy. Recall that we just proved $|\delta_i(\tau)| \leq 2\epsilon$. Combining this with (\ref{EqPan1}), we have $|g_i(\tau)| \leq \frac{2\epsilon}{(F_{\hat \pi, i}(\ucb) - F_{\hat \pi, i}(\lcb))}$. 

Now, consider the policy $\pi'$. Assume we first have $\tau_i = \ucb'$ and we want to move it to $\tau_i = \lcb'$, satisfying $\lcb', \ucb' \in [\lcb'_i, \ucb'_i]$ and $\pi'$ is valid when both $\tau_i = \lcb'$ and $\tau_i = \ucb'$. Then, the difference of the expected reward is $\left|\int_{\lcb'}^{\ucb'} f_{\pi', i}(x)g_i(x)dx\right|$, and we have the following bound: 
\begin{align}
    \left|\int_{\lcb'}^{\ucb'} f_{\pi', i}(x)g_i(x)dx\right| &~\leq~ |F_{\pi', i}(\ucb') - F_{\pi', i}(\lcb')|\max_{v \in [\lcb', \ucb']}|g_i(v)| ~\leq~ 2\epsilon, \label{EqPan2}
\end{align}
where in the last inequality we use the fact that $F_{\hat \pi, i}(\ucb) - F_{\hat \pi, i}(\lcb) \geq |F_{\pi', i}(\ucb') - F_{\pi', i}(\lcb')|$ when $\pi$ is a \clever policy, and $|g_i(v)| \leq \frac{2\epsilon}{(F_{\pi, i}(\ucb) - F_{\pi, i}(\lcb))}$ for all $v \in [\lcb'_i, \ucb'_i]$.
This gives an upper bound on the difference of the expected reward when we want to move $\tau_i$.

The remaining part is to show how to get a \clever policy. However, since we only have CDF estimates $\hat F_i(x)$ instead of an accurate $F_i(x)$, there is no hope to get an accurate \clever policy. The following Lemma then shows that we can calculate an approximate \clever policy:

\begin{Lemma}
\label{Lemma:Clever}
There exists an algorithm with time complexity $O(n\cdot 2^n)$ that calculates a \clever policy with an extra $4\sqrt{\epsilon}$ additive error.
\end{Lemma}
{ We leave the details of the algorithm and the proof to \Cref{sec:appendixSec4Lemma4}.}

Finally, we show that this $4\sqrt{\epsilon}$ error doesn't hurt too much for both Statement \ref{MoveSta3} and \ref{MoveSta2}. Define
\[
q_i ~:=~ \max_{\pi} F_{\pi, i}(\ucb) - F_{\pi, i}(\lcb), \quad \text{and} \quad
\hat{q}_i ~:=~ F_{\hat \pi, i}(\ucb) - {F}_{\hat \pi, i}(\lcb),
\]
where $\hat \pi$ is the approximate \clever policy we get via \Cref{Lemma:Clever}. Then, we have $q_i \leq \hat{q}_i + 4\sqrt{\epsilon}$. For Statement \ref{MoveSta3}, we have
\begin{align*}
    |\delta_{\pi', \ucb', \lcb', i}(\tau)| ~\leq~ q_i \max_{v \in [\lcb'_i, \ucb'_i]} |g_i(v)| &~\leq~  (\hat{q}_i + 4\sqrt{\epsilon}) \cdot \max_{v \in [\lcb'_i, \ucb'_i]} |g_i(v)| \\
    &~\leq~ \hat q_i \cdot \frac{\max_{v \in [\lcb'_i, \ucb'_i]} |\delta_i(v)|}{\hat q_i} + 4\sqrt{\epsilon} \cdot T^{-1/4} \\
    & ~\leq~ 2\epsilon + 4\sqrt{\epsilon} \cdot T^{-\frac{1}{4}} ~<~ 3\epsilon.
\end{align*}
Here, the second line follows the definition of $\delta_i(v)$ and the precondition in \Cref{MoveOP}. The third line holds because the condition $|\delta_i(v)| \leq 2\epsilon$ does not require $\hat \pi$ to be accurate, and the last inequality
 holds when $\epsilon > 16T^{-\frac{1}{2}}$. 
 
 For Statement \ref{MoveSta2}, following (\ref{EqPan2}), we can bound the moving difference to
\[
    \left|\int_{\lcb'}^{\ucb'} f_{\pi', i}(x)g_i(x)\right| ~\leq~ q_i \max_{v \in [\lcb'_i, \ucb'_i]} |g_i(v)|.
\]
Therefore, the same $3\epsilon$ bound holds.
\end{proof}

\subsection{Step 2: Updating Order Constraints to Bound Swapping Difference}\label{step2}

In this section, our goal is to verify $\sigma_i$ and $\sigma_j$ which one is larger, or claiming that reversing the order of $X_i$ and $X_j$ doesn't hurt too much. We first provide the following lemma, which shows the difference of the expected reward when we swap two distributions with a same threshold:

\begin{Lemma}
\label{SwapLemma1}
For a policy $\pi$, such that $X_i$ and $X_j$ are consecutive with $\tau_i = \tau_j = \tau$, let $\Delta_{\pi, i,j}(\tau)$ be the change of the expected reward after swapping $X_i$ and $X_j$, then
\begin{align*}
    \Delta_{\pi, i, j}(\tau) ~=~ F_{\pi, i}(\tau)(g_i(\tau)(1 - F_j(\tau)) - g_j(\tau)(1-F_i(\tau))).
\end{align*}
\end{Lemma}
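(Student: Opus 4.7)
The plan is to condition on the event $A$ that the algorithm reaches the pair with current running maximum strictly below $\tau$, observe that outside $A$ both policies make identical decisions, perform a three-case analysis on $A$ to compute the conditional expected-utility difference, and finally rewrite the answer in terms of the gain function $g$.

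First I would show that $\Pr{A} = F_{\pi, i}(\tau)$. Under the valid-policy assumption the thresholds along $\pi$ are non-increasing, so every threshold of a box before $X_i$ is at least $\tau_i = \tau$; consequently the algorithm reaches $X_i$ if and only if every predecessor value lies below $\tau$, which by independence has probability $\prod_{k : \pi^{-1}(k) < \pi^{-1}(i)} F_k(\tau) = F_{\pi, i}(\tau)$. The same equality holds under the swap $\pi'$ since the predecessor set and the relevant thresholds are unchanged. Outside $A$, neither policy opens $X_i$ or $X_j$ and the two trajectories coincide sample-path-by-sample-path, so the full difference $\Delta_{\pi, i, j}(\tau)$ equals $\Pr{A}$ times the conditional difference on $A$.

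On $A$ I would split into three sub-cases by $(\one\{X_i \geq \tau\},\, \one\{X_j \geq \tau\})$. When both values fall below $\tau$, both policies open both boxes and then continue from the same state $(\max(v, X_i, X_j),\, c_i + c_j)$; since all later thresholds are $\leq \tau$, the subsequent behavior is a function of this state alone and is symmetric in $i,j$, so this sub-case contributes zero to the difference. In the two remaining sub-cases I would track which box is opened first under each policy and the corresponding total cost; using independence of $X_i$ and $X_j$ together with the identity $F_i(\tau)(1 - F_j(\tau)) - F_j(\tau)(1 - F_i(\tau)) = F_i(\tau) - F_j(\tau)$, the coefficient of $c_i + c_j$ collapses. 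Writing $\bar F_i := 1 - F_i(\tau)$ and $E_i := \int_\tau^1 x\, f_i(x)\, dx$, a routine expansion yields
\[
\Ex{U_\pi - U_{\pi'} \,\mid\, A} ~=~ \bar F_j (E_i - c_i) ~-~ \bar F_i (E_j - c_j).
\]

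Finally I would substitute $E_i - c_i = g_i(\tau) + \tau \bar F_i$, which is just the definition $g_i(\tau) = -c_i + \int_\tau^1 (x-\tau) f_i(x)\, dx$ rearranged; the $\tau \bar F_i \bar F_j$ cross terms cancel between the two summands, leaving $g_i(\tau) \bar F_j - g_j(\tau) \bar F_i$. Multiplying by $\Pr{A} = F_{\pi, i}(\tau)$ gives the claimed formula (up to the sign convention one adopts for ``change after swapping''). The main obstacle is purely bookkeeping: making sure the ``open both and continue'' sub-case is genuinely symmetric in $i$ and $j$ (which uses both $\tau_i = \tau_j$ and the non-increasing-threshold assumption on boxes after the pair, so that ``continue'' means the same event in both policies and the future is driven by a symmetric state), and correctly tracking the asymmetric total costs incurred in the two ``stop after one box'' sub-cases. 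Once these are handled, the remaining algebra is mechanical.
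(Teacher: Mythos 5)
Your proof is correct and arrives at the stated formula by a genuinely different decomposition than the paper's. Both arguments condition on the event of reaching the pair with running maximum below $\tau$ (probability $F_{\pi,i}(\tau)$) and use that the continuation after the pair does not depend on the order of $X_i,X_j$. The paper then conditions further on the running maximum $v$: it writes the local expected gain under $X_i$-first as $g_i(v) + F_i(\tau)g_j(\tau) - \int_v^\tau F_i(x)\,g_j'(x)\,dx$ (via $g_j'(x)=F_j(x)-1$ and integration by parts), symmetrizes, and observes that the $v$-dependence cancels algebraically, leaving $g_i(\tau)(1-F_j(\tau)) - g_j(\tau)(1-F_i(\tau))$. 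You never condition on $v$: your observation that the ``both values below $\tau$'' cell is exactly symmetric in $i,j$ (both policies then continue from the same state $(\max(v,X_i,X_j),\,c_i+c_j)$, and later thresholds are $\le\tau$) explains structurally why the per-$v$ local difference must be constant, and the remaining cells are resolved by direct accounting of rewards and costs, followed by $E_i-c_i = g_i(\tau) + \tau(1-F_i(\tau))$ and cancellation of the $\tau(1-F_i)(1-F_j)$ cross-terms. Your route avoids the nested gain-function integrals and integration by parts and makes the source of the asymmetry (the stopping events) more visible; the paper's keeps $g$ as the central analytic object, matching its use throughout the section. Two small bookkeeping notes: the partition by $(\one\{X_i\ge\tau\},\one\{X_j\ge\tau\})$ has four cells, not three, and the ``both above $\tau$'' cell also contributes reward and cost terms to your $E_i,E_j$ accounting — worth spelling out in a clean write-up; and, as you flagged, the paper's proof computes utility with $X_i$-first minus $X_j$-first, which is the sign convention needed to match the displayed formula.
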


\begin{proof}
Assume we have value $v$ in hand before arriving $X_i$ and $X_j$. To pass the threshold, there must be $v \leq \tau$. If $X_i$ is in the front, the expected gain of opening $X_i$ is $g_i(v)$. After that, if $X_i < \tau$, we can play $X_j$ as well. The expected gain is $F_i(v)g_j(v) + \int_v^\tau f_i(x)g_j(x)dx = F_i(\tau)g_j(\tau) - \int_u^\tau F_i(x)g'_j(x)dx$. Therefore, the total expected gain from $X_j$ and $X_i$ is $g_i(v) + F_i(\tau)g_j(\tau) - \int_u^\tau F_i(x)g'_j(x)dx$. Similarly, if $X_j$ is in the front, the total expected gain from $X_i$ and $X_j$ is $g_j(v) + F_j(\tau)g_i(\tau) - \int_u^\tau F_j(x)g'_i(x)dx$.

Notice that the order of $X_i$ and $X_j$ doesn't affect the expected gain from the distributions behind $X_i$ and $X_j$. Therefore, the difference of the gain from $X_i$ and $X_j$ is exactly the difference of the expected reward: 
\begin{align*}
     & \left(g_i(v) + F_i(\tau)g_j(\tau) - \int_u^\tau F_i(x)g'_j(x)dx \right) - \left(g_j(v) + F_j(\tau)g_i(\tau) - \int_u^\tau F_j(x)g'_i(x)dx \right) \\
     &~=~ g_i(v) + F_i(\tau)g_j(\tau) -g_j(v) - F_j(\tau)g_i(\tau) + \int_u^\tau \left(F_j(x)(F_i(x) - 1) - F_i(x)(F_j(x) - 1)\right)dx \\
     &~=~ \left(g_i(v) + u - \tau + \int_u^\tau F_i(x)dx\right) - \left(g_j(v) + u - \tau + \int_u^\tau F_j(x)dx\right) + F_i(\tau)g_j(\tau) - F_j(\tau)g_i(\tau) \\
     &~=~ g_i(\tau)(1 - F_j(\tau)) - g_j(\tau)(1-F_i(\tau)).
\end{align*}
Since the probability that $v$ arrives with $v < \tau$ is exactly $F_{\pi, i}(\tau)$, the expected difference is $\Delta_{\pi, i,j}(\tau) = F_{\pi, i}(\tau)(g_i(\tau)(1 - F_j(\tau)) - g_j(\tau)(1-F_i(\tau)))$.
\end{proof}

\Cref{SwapLemma1} shows the following properties:

\begin{enumerate}
    \item \label{SwapLemmaP1} Assume $\sigma_i > \sigma_j$. When $\tau \in [\sigma_j, \sigma_i]$,  $\Delta_{\pi, i,j}(\tau) < 0$, i.e., letting $X_i$ be in the front is better. This implies: If we know the sign of $\Delta_{\pi, i, j}(\tau)$, and we are sure that $\tau$ is between $\sigma_i$ and $\sigma_j$, then we can determine that $\sigma_i$ and $\sigma_j$ which one is greater.
    \item \label{SwapLemmaP2} Fix $i,j, \tau$, $|\Delta_{\pi, i, j}(\tau)|$ is maximized when $F_{\pi, i}(\tau)$ is maximized.
\end{enumerate}
    
According to Property \ref{SwapLemmaP2}, we hope to test $X_i$ and $X_j$ with a policy $\pi$ that maximizes $F_{\pi, i}(\tau)$. If the difference is bounded when $F_{\pi, i}(\tau)$ is maximized, the swapping difference is bounded in all policies. Inspired by this, we give the definition of the \smart policy:

\begin{Definition}[\smart Policy]
\label{smart}
Given $(I,S)$ and $i,j \in [n]$ with $i \neq j$ and $(i, j), (j, i) \notin S$. Assume we have $[\lcb'_i, \ucb'_i], [\lcb'_j, \ucb'_j] \in I$. A \smart policy is a pair of valid policies $(\pi, \pi')$, such that
\begin{itemize}
    \item $\tau_i = \tau_j = \max\{\lcb'_i, \lcb'_j\}$.
    \item $X_i$ and $X_j$ are adjacent in both $\pi$ and $\pi'$, but under different orders, and this is the only difference between $\pi$ and $\pi'$. W.l.o.g, assume $X_i$ is in the front in $\pi$, while $X_j$ is in the front in $\pi'$, i.e., $\pi^{-1}(i) = \pi^{-1}(j) - 1$, and $\pi'^{-1}(j) = \pi'^{-1}(i) - 1$.
    \item The \smart policy maximizes $F_{\pi, i}(\tau)$ when the first two conditions are satisfied.
\end{itemize}
\end{Definition}

Then, the algorithm for testing $X_i$ and $X_j$ is clear: We find the \smart policy for $X_i$ and $X_j$, run some samples for two policies and see the difference. If the difference is too large, we can verify $\sigma_i$ and $\sigma_j$ which one is larger. Otherwise, we can bound the swapping difference. \Cref{PBOCA} gives the details of this idea.

\begin{algorithm}[tbh]
\caption{$\mathsf{SwapTest}$ Algorithm}
\label{PBOCA}
\KwIn{Distribution indices $i$ and $j$}
Run \Cref{CalcSmart} to get \smart policy $(\pi, \pi')$ \\
Run $C \cdot \frac{\log T}{n^2\epsilon^2}$ samples with policy $\pi$. Let $\hat{R}_{i,j}$ be the average reward. \\
Run $C \cdot \frac{\log T}{n^2\epsilon^2}$ samples with policy $\pi'$. Let $\hat{R}_{j,i}$ be the average reward. \\
\If{$|\hat{R}_{i, j} - \hat{R}_{j, i}| > 40n\epsilon$}
{
Add constraint $(i, j)$ into $S'$ if $\hat{R}_{i, j} > \hat{R}_{j, i}$, otherwise add constraint $(j, i)$ into $S'$. \\
Update $S'$ according to the transitivity. Update $I'$ according to the new order constraints, i.e., when adding a constraint $(a, b)$, let $\ucb'_b \gets \min\{\ucb'_a, \ucb'_b\}$ and $\lcb'_a \gets \max\{\lcb'_a, \lcb'_b\}$.
}
\KwOut{Updated constraint group $(I', S')$}
\end{algorithm}

\begin{algorithm}[tbh]
\caption{Finding \smart Policy}
\label{CalcSmart}
\KwIn{\textbf{Input:} $(I', S')$, m, $i, j$}
Let $\tau = \tau_i = \tau_j = \max\{\lcb'_i, \lcb'_j\}$ \\
Let $T = \{k|(k, i) \in S' \lor (k, j) \in S' \lor \lcb'_k > \tau \}$.\\
For $k \in T$, let $\tau_k = \ucb'_k$ \\
For $k \in [n] \setminus (\{i, j\} \cup T)$, let $\tau_k = \lcb'_k$ \\
Let $\pi$ and $\pi'$ be two policies that sort the distributions in a decreasing threshold order, and break ties according to $S'$. The only difference is: $X_i$ is in front of $X_j$ in $\pi$, but $X_j$ is in front of $X_i$ in $\pi'$. \\
\KwOut{$\pi$ and $\pi'$}
\end{algorithm}

Before analysing the algorithm, we point out two facts of \Cref{PBOCA}:

\begin{itemize}
    \item \Cref{PBOCA} relies on \Cref{PBISA}, i.e., we need to first run \Cref{PBISA} to get $n$ new confidence intervals, then run \Cref{PBOCA} to update order constraints. This is critical to the regret analysis. 
    \item In the \smart algorithm, we only test the swapping difference with $\tau_i = \tau_j = \max\{\lcb'_i, \lcb'_j\}$, and give the difference bound only with this threshold. This is sufficient for our regret analysis.
\end{itemize}

\begin{Lemma}[Swapping Difference Bound]
\label{SwapOP}
Given $(I',S)$, $\epsilon$, and $i,j \in [n]$ with $i \neq j$ and $(i, j), (j, i) \notin S$, where $I'$ is generated by \Cref{PBISA}. Assume the preconditions in \Cref{PBInit} hold. \Cref{PBOCA} runs $O(\frac{\log T}{n^2\epsilon^2})$ samples and achieves one of the following:
\begin{itemize}
    \item Clarify $\sigma_i$ and $\sigma_j$ which one is bigger with probability $1 - T^{-12}$, and give a new constraint $(i, j)$ or $(j, i)$. 
    \item Make the following claim with probability $1 - T^{-12}$: For every two valid policies of $(I', S)$, satisfying:
    \begin{itemize}
        \item $\tau_i = \tau_j = \max\{\lcb'_i, \lcb'_j\}$.
        \item $X_i$ and $X_j$ are consecutive in both policies but in a different order. This is the only difference between two policies.
    \end{itemize} 
    The difference of the expected reward between these two policies is no more than $60n\epsilon$.
\end{itemize}
\end{Lemma}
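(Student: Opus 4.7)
The argument combines Hoeffding concentration on the two empirical averages with the structural swap formula from \Cref{SwapLemma1}, together with the monotonicity-in-$F_{\pi,i}$ property built into the definition of \smart{}. First, each of $\hat R_{i,j},\hat R_{j,i}$ is the empirical mean of $N=C\log T/(n^2\epsilon^2)$ i.i.d.\ rewards that lie in $[-\tfrac12,\tfrac12]$ (recall that values and costs are rescaled to $[0,\tfrac{1}{2n}]$). For $C$ sufficiently large, a union bound over Hoeffding's inequality gives $|\hat R_{i,j}-R_\pi|\le 10n\epsilon$ and $|\hat R_{j,i}-R_{\pi'}|\le 10n\epsilon$ with probability $\ge 1-T^{-12}$. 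Since $\pi$ and $\pi'$ agree on all distributions, permutations, and thresholds except for swapping $X_i$ with $X_j$ at the common threshold $\tau=\max\{\lcb'_i,\lcb'_j\}$, this directly yields $\bigl|(\hat R_{i,j}-\hat R_{j,i})-\Delta_{\pi,i,j}(\tau)\bigr|\le 20n\epsilon$.

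In the small-gap branch ($|\hat R_{i,j}-\hat R_{j,i}|\le 40n\epsilon$) the above immediately gives $|\Delta_{\pi,i,j}(\tau)|\le 60n\epsilon$. By construction of \Cref{CalcSmart}, the \smart{} policy $\pi$ maximizes $F_{\pi,i}(\tau)$ over all valid policies satisfying the bullet hypotheses of the lemma; so writing $\Delta_{\pi',i,j}(\tau)=F_{\pi',i}(\tau)\cdot h(\tau)$ with $h(\tau)=g_i(\tau)(1-F_j(\tau))-g_j(\tau)(1-F_i(\tau))$ as in \Cref{SwapLemma1}, Property~\ref{SwapLemmaP2} gives $|\Delta_{\pi',i,j}(\tau)|\le|\Delta_{\pi,i,j}(\tau)|\le 60n\epsilon$ for every other valid $\pi'$ in the bullet. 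This is exactly the swap-difference bound claimed in the second alternative.

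In the large-gap branch ($|\hat R_{i,j}-\hat R_{j,i}|>40n\epsilon$) we have $|\Delta_{\pi,i,j}(\tau)|>20n\epsilon$ and the sign of the empirical gap agrees with the sign of $\Delta_{\pi,i,j}(\tau)$. The key point is that $\tau$ lies strictly between $\sigma_i$ and $\sigma_j$: one of them is $\ge\tau$ by $\sigma_i\ge\lcb'_i$ and $\sigma_j\ge\lcb'_j$, and the other is ruled out from being $\ge\tau$ because if both were, the preconditions in \Cref{PBInit} (inherited through \Cref{MoveOP}) give $g_i(\tau),g_j(\tau)\in[0,T^{-1/4}]$, hence $|h(\tau)|\le 2T^{-1/4}$ and $|\Delta_{\pi,i,j}(\tau)|\le 2T^{-1/4}$, contradicting the lower bound $20n\epsilon$ in the regime the algorithm operates. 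With $\tau$ strictly separating $\sigma_i$ from $\sigma_j$, Property~\ref{SwapLemmaP1} after \Cref{SwapLemma1} tells us that the sign of $\Delta_{\pi,i,j}(\tau)$ identifies which of $\sigma_i,\sigma_j$ is larger, so the constraint added by \Cref{PBOCA} is correct; transitive closure and the interval updates $\ucb'_b\gets\min\{\ucb'_a,\ucb'_b\}$, $\lcb'_a\gets\max\{\lcb'_a,\lcb'_b\}$ at the end of the algorithm preserve validity of $(I',S')$.

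\textbf{Main obstacle.} The delicate step is the sign-identification in the large-gap branch, in particular ruling out the ``both $\sigma_i,\sigma_j>\tau$'' configuration. This does not follow from \Cref{SwapLemma1} itself and must exploit the sharp $|g_i|,|g_j|\le T^{-1/4}$ bound from initialization together with the calibration of the threshold $40n\epsilon$ against $T^{-1/4}$ under the working regime $\epsilon>12T^{-1/2}$; once this pointwise bound on $|\Delta_{\pi,i,j}(\tau)|$ is available, the rest reduces to routine concentration and the extremality of the \smart{} policy.
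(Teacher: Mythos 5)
Your approach to the small-gap branch (Hoeffding concentration on $\hat R_{i,j},\hat R_{j,i}$, then transferring the bound to all valid $\pi'$ via the extremality of the \smart{} policy and Property~\ref{SwapLemmaP2}) matches the paper and is correct. The gap is in the large-gap branch.

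You attempt to show that the threshold $\tau=\max\{\lcb'_i,\lcb'_j\}$ must lie strictly between $\sigma_i$ and $\sigma_j$, so that Property~\ref{SwapLemmaP1} directly identifies the ordering. To rule out ``both $\sigma_i,\sigma_j\ge\tau$'' you argue that this would force $g_i(\tau),g_j(\tau)\in[0,T^{-1/4}]$, hence $|\Delta_{\pi,i,j}(\tau)|\le 2T^{-1/4}$, ``contradicting the lower bound $20n\epsilon$.'' This fails for two related reasons. First, the bound $|g_i(\tau)|,|g_j(\tau)|\le T^{-1/4}$ from \Cref{PBInit} holds for every $\tau$ in the current confidence intervals regardless of the sign of $g$, so $|\Delta_{\pi,i,j}(\tau)|\le 2T^{-1/4}$ is unconditional; if this implied a contradiction with $|\Delta_{\pi,i,j}(\tau)|>20n\epsilon$, the large-gap branch could never trigger at all, which is false. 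Second, the calibration does not actually hold in the working regime: $\epsilon$ descends through the doubling schedule to roughly $n^{3.5}\log T/\sqrt T$, so $20n\epsilon$ is of order $\poly(n)\log T/\sqrt T$, which is smaller than $2T^{-1/4}$ once $T\gtrsim\poly(n,\log T)$. So there is no contradiction, and the case $\tau<\min\{\sigma_i,\sigma_j\}$ is genuinely possible in the large-gap branch.

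The paper handles exactly this case by a structural argument you are missing: when $\tau<\min\{\sigma_i,\sigma_j\}$, it raises both $\tau_i,\tau_j$ toward $\min\{\sigma_i,\sigma_j\}$, interleaving moves (bounded by $3\epsilon$ each via \Cref{MoveOP}) with swaps past intervening boxes whose thresholds block the increase (\Cref{SwapSmart} guarantees these swaps are legal), and tracks the effect of each move on $\Delta_{\pi,\pi'}$. Since there are at most $n$ such move/swap pairs and each Step~1 changes $\Delta_{\pi,\pi'}$ by at most $12\epsilon$ while each Step~2 can only enlarge $|\Delta_{\pi,\pi'}|$, an initial gap above $20n\epsilon$ survives the whole process, after which Property~\ref{SwapLemmaP1} applies at the final threshold. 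This iterative move-and-swap argument, together with the validity guarantee of \Cref{SwapSmart}, is the essential content of the large-gap branch; without it, the sign identification is not justified.
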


\begin{proof}
We first prove the theorem assuming \Cref{CalcSmart} returns an accurate \smart policy $(\pi, \pi')$. According to the definition of \smart policy, $\pi$ and $\pi'$ maximizes the probability of reaching $X_i$ and $X_j$ when $\tau_i = \tau_j = \max\{\lcb'_i, \lcb'_j\}$. According to Property \ref{SwapLemmaP2}, for any valid policy, such that $X_i$ and $X_j$ are consecutive with $\tau_i = \tau_j = \max\{\lcb'_i, \lcb'_j\}$, the swapping difference is no more than the difference between $\pi$ and $\pi'$. Therefore, if we are evident that the difference between $\pi$ and $\pi'$ is no more than $60n\epsilon$, we can claim that this upper bounds the swapping difference between $X_i$ and $X_j$ for any other policy. The proof idea is the following: We run multiple samples to estimate $R_{i, j}$ and $R_{j, i}$, where $R_{i, j}$ is the expected reward of $\pi$ and $R_{j, i}$ is the expected reward of $\pi'$. Next, we show that $|R_{i, j} - \hat{R}_{i,j}| \leq 10n\epsilon$ and $|R_{j, i} - \hat{R}_{j, i}| \leq 10n\epsilon$ with probability $1 - T^{-12}$. Then, $|R_{i, j} - R_{j, i}| \leq 60n\epsilon$ when $|\hat{R}_{i, j} - \hat{R}_{j, i}| \leq 40n\epsilon$.

Now, we bound $|R_{i, j} - \hat{R}_{i,j}|$ with Hoeffding's Inequality (\Cref{Hoeffding}). $\hat{R}_{i, j}$ is an estimate of $R_{i,j}$ by running $N = C\cdot \frac{\log T}{n^2\epsilon^2}$ samples, and the per-round reward is bounded by $[-0.5, 0.5]$. Then, $\Pr{|R_{i, j} - \hat{R}_{i,j}| > 10n\epsilon} < 2\exp(-2N\cdot 100n^2\epsilon^2/4) = 2T^{-50C}$. Hence, $|R_{i, j} - \hat{R}_{i,j}| \leq  10n\epsilon $ with probability $1 - T^{-13}$ when $C > 10$. Bounding $|R_{j,i} - \hat{R}_{j,i}|$ is identical, and by the union bound, $|R_{i, j} - \hat{R}_{i,j}| \leq 10n\epsilon$ and $|R_{j, i} - \hat{R}_{j, i}| \leq 10n\epsilon$ simultaneously hold with probability $1 - T^{-12}$.

The concentration proof above also shows that when $|\hat{R}_{i, j} - \hat{R}_{j, i}| > 40n\epsilon$, we can claim that w.h.p. $|R_{i, j} - R_{j, i}| > 20n\epsilon$.
Next, we show that this is evident to clarify which of $\sigma_i$ and $\sigma_j$ is greater. We first introduce a special case to give the intuition: Consider the case that all other confidence intervals are disjoint with $[\lcb'_i, \ucb'_i]$ or $[\lcb'_j, \ucb'_j]$. W.l.o.g., assume $\pi$ ($X_i$ in the front) is better than $\pi'$ ($X_j$ in the front). If $\tau = \max\{\lcb'_i, \lcb'_j\}$ is between $\sigma_i$ and $\sigma_j$, we can immediately claim that $\sigma_i > \sigma_j$ according to Property \ref{SwapLemmaP1}. If $\tau$ doesn't fall between $\sigma_i$ and $\sigma_j$, there must be $\tau < \min\{\sigma_i, \sigma_j\}$. Then, we adjust $\pi$ and $\pi'$ by increasing $\tau_i$ and $\tau_j$ to $\min\{\sigma_i, \sigma_j\}$. According to \Cref{MoveOP}, these operations do not change the expected reward too much: Since we move two thresholds in each policy, the expected reward of $\pi$ can decrease by at most $6\epsilon$, and the expected reward of $\pi'$ can increase by at most $6\epsilon$. Therefore, if the original $\pi$ is at least $20\epsilon$ better than $\pi'$, we can still claim that $\sigma_i > \sigma_j$.

However, this moving process can be invalid in the general case: $\min\{\sigma_i, \sigma_j\}$ might be greater than some thresholds in front of $X_i$ and $X_j$. To fix this issue, consider the following process:
\begin{itemize}
    \item Step 1: Increase $\tau_i$ and $\tau_j$ until reaching $\tau_k$, where $X_k$ is the distribution just in front of $X_i$ and $X_j$.
    \item Step 2: Swap $X_i$ and  $X_j$ with $X_k$.
    \item Repeat Step 1 and 2 until $\tau_i = \tau_j = \min\{\sigma_i, \sigma_j\}$.
\end{itemize}
Let $\Delta_{\pi, \pi'}$ be the difference between expected values of $\pi$ and $\pi'$. We monitor the change of $\Delta_{\pi, \pi'}$ during these operations. Step 1 can decrease $\Delta_{\pi, \pi'}$ by at most $12\epsilon <20\epsilon$. Step 2 can increase the absolute value of $\Delta_{\pi, \pi'}$. Since there can be at most $n$ Step 1 and 2,  if initially $\Delta_{\pi, \pi'} > 20n\epsilon$, this is sufficient to guarantee that $\Delta_{\pi, \pi'} > 0$ at the end of the process. Then, we are evident to claim $\sigma_i > \sigma_j$.

It remains to show that \Cref{CalcSmart} returns a \smart policy. Besides, this policy should also guarantee that when we are swapping $X_i$ and $X_j$ with $X_k$, the policy after doing a swap is still valid. Therefore, we introduce the following lemma:

\begin{Lemma}
\label{SwapSmart}
\Cref{CalcSmart} calculates a \smart policy. Besides, it has the following property: Let $\tau = \max\{\lcb'_i, \lcb'_j\}$ and $\tau' = \min\{\sigma_i, \sigma_j\}$. If $\tau' > \tau$, then for all $k \in [n] \setminus \{i, j\}$, if $\tau_k \in [\tau, \tau']$, there must be $(k, i) \notin S$ and $(k, j) \notin S$.
\end{Lemma}

\begin{proof}
The first two conditions in \Cref{smart} directly follows \Cref{CalcSmart}. For the objective condition, observe that no distribution in the set $T$ can be moved behind $X_i$ and $X_j$. Therefore, the policy calculated by \Cref{CalcSmart} minimizes $F_{\pi, i}(\tau)$, which means the third condition holds.

For the additional property, assume there exists $k$ satisfying $\tau_k = \ucb'_k$, $\tau_k < \min\{\sigma_i, \sigma_j\}$. Notice that if $(k, i) \in S'$, there must be $\ucb'_k \geq \ucb'_i \geq \min\{\sigma_i, \sigma_j\}$, which is in contrast to the condition $\tau_k = \ucb'_k < \min\{\sigma_i, \sigma_j\}$. Therefore, $(k, i) \notin S'$. Similarly, $(k, j) \notin S'$. Therefore, the additional property in \Cref{SwapSmart} holds.
\end{proof}

Finally, applying \Cref{SwapSmart} immediately proves \Cref{SwapOP}.
\end{proof}

\subsection{Converting our Policy to the Optimal Policy in Polynomial Steps} \label{step3}

In this section, we show that using $\poly(n)$ number of moves and swaps can convert any valid policy into the optimal policy. Since \Cref{MoveOP} and \Cref{SwapOP} already show that the difference of each move and swap is bounded by $O(\poly(n)\epsilon)$, combining these results, we can argue that the per-round loss of a valid policy is bounded by $O(\poly(n)\epsilon)$. Formally, we give the following lemma:

\begin{Lemma}
\label{RegretBound}
Given a valid constraint group $(I, S)$. For a valid policy of $(I, S)$, we use a ``move'' to represent the action that modifies a single threshold, and guarantees that the  policy after modifying the threshold is still valid. Besides, we use a ``swap'' to represent the action that swaps two consecutive distributions with the same threshold. This threshold should be equal to the maximum of the two lower confidence bounds, and the policy after swapping the distributions should still be valid.

For any valid policy of $(I, S)$, it can be converted into the optimal policy using $2n^2$ moves and $2n^2$ swaps.
\end{Lemma}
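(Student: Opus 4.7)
I plan to use a selection-sort procedure that processes positions $1,\ldots,n$ from left to right, maintaining the invariant that after stage $k$ the leftmost $k$ positions already hold $\pi^*(1),\ldots,\pi^*(k)$ with the optimal thresholds $\sigma_{\pi^*(1)},\ldots,\sigma_{\pi^*(k)}$. At stage $k+1$, I identify the distribution $X_{i^*}$ whose $\sigma$-value is the $(k+1)$-th largest, locate it at its current position $\ell\geq k+1$, drag it leftward to position $k+1$ by $\ell-(k+1)$ adjacent swaps, and finally set $\tau_{i^*}=\sigma_{i^*}$ with one move.

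For a single leftward swap of $X_{i^*}$ past its left neighbor $X_j$, the argument rests on three facts. First, $\sigma_{i^*}>\sigma_j$, because $X_{i^*}$ has the largest $\sigma$-value among the unsorted distributions. Second, neither $(i^*,j)$ nor $(j,i^*)$ lies in $S$: the former would already force $X_{i^*}$ ahead of $X_j$ in the current valid policy, and the latter would force $\sigma_j>\sigma_{i^*}$. Third, the current validity together with $\sigma_{i^*}\in[\lcb_{i^*},\ucb_{i^*}]$, $\sigma_j\in[\lcb_j,\ucb_j]$, and $\sigma_{i^*}>\sigma_j$ forces $m:=\max\{\lcb_{i^*},\lcb_j\}\leq\min\{\ucb_{i^*},\ucb_j\}$, so $m$ is an admissible common threshold for both boxes. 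I then set $\tau_{i^*}=\tau_j=m$ by two moves whose order is chosen adaptively---moving $\tau_j$ first when $m$ lies below both current values, and $\tau_{i^*}$ first otherwise---so that the local decreasing-threshold constraint between positions $\ell-1$ and $\ell$ is preserved at the intermediate step, after which the swap itself is legal.

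Counting, each stage uses at most $\ell-(k+1)$ swaps and $2(\ell-(k+1))+1$ moves. Summing over $k=0,\ldots,n-1$ gives at most $\binom{n}{2}\leq n^2\leq 2n^2$ swaps and at most $n(n-1)+n\leq n^2\leq 2n^2$ moves, which meets the stated budget.

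The main obstacle is verifying validity against the non-adjacent neighbors at positions $\ell-2$ and $\ell+1$ during the two preparatory moves and the swap. At position $\ell-2$ the threshold is either a previously locked $\sigma_{\pi^*(k')}$ with $k'\leq k$, which satisfies $\sigma_{\pi^*(k')}\geq\sigma_{i^*}\geq m$, or an untouched original threshold, which by the initial decreasing-threshold order dominates both $\tau_{i^*}$ and $\tau_j$ at the start of the stage, and hence dominates their lower confidence bounds and thus $m$. At position $\ell+1$ the analysis is analogous, after observing that the successive $m$-values produced by earlier swaps within the same stage can be arranged to be non-increasing by a short case analysis on the lower confidence bounds of the neighbors traversed, with at most one extra corrective move per stage if needed; this still fits comfortably in the $2n^2$ budget, so the counting survives.
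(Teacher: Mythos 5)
Your approach is a direct ``selection sort'' from the given valid policy to the optimal one, dragging the distribution with the $(k+1)$-th largest $\sigma$-value leftward to slot $k+1$ via adjacent swaps. The paper instead routes through a fixed reference policy $\pi$ (all thresholds set to $\lcb_i$, distributions sorted in decreasing order of $\lcb_i$), proving that $\pi$ can reach any valid policy with at most $n^2$ moves and $n^2$ swaps, and then invoking the observation that every move and swap is invertible, so \emph{any valid policy} $\to \pi \to$ \emph{optimal policy} costs $2n^2$ of each. This detour through $\pi$ is not cosmetic; it is what makes the intermediate validity checks go through.

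The gap in your argument is the very first preparatory move of a stage, at the right neighbor (position $\ell+1$). When you set $\tau_{i^*}$ to $m=\max\{\lcb_{i^*},\lcb_j\}$, that move can be \emph{downward} (whenever $\lcb_j\leq\tau_{i^*}^{\mathrm{orig}}$), and a downward move is valid only if $m$ stays $\geq$ the threshold at position $\ell+1$. Nothing forces this: position $\ell+1$ holds some untouched $X_{j_0}$ with an arbitrary valid $\tau_{j_0}\in[\lcb_{j_0},\ucb_{j_0}]$ that may well satisfy $\lcb_{i^*}<\tau_{j_0}$ and $\lcb_j<\tau_{j_0}$, hence $m<\tau_{j_0}$. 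For instance $[\lcb_{i^*},\ucb_{i^*}]=[0.1,0.9]$, $\tau_{i^*}^{\mathrm{orig}}=0.5$, $[\lcb_j,\ucb_j]=[0.2,0.8]$, $\tau_j^{\mathrm{orig}}=0.6$, $[\lcb_{j_0},\ucb_{j_0}]=[0.3,0.4]$, $\tau_{j_0}=0.35$; then $m=0.2<0.35$, so the move $\tau_{i^*}\to m$ already produces an invalid policy, before any swap is attempted. Your proposed fix --- ``at most one extra corrective move per stage,'' pushing some other threshold down --- does not resolve this: to move $\tau_{j_0}$ below $m$ you would need $m\geq\lcb_{j_0}$ and $m\geq$ the threshold at position $\ell+2$, neither of which is implied, and the constraint cascades rightward. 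In contrast, the paper's canonical walk from $\pi$ outward never lowers the dragged threshold: at each step $\tau_i$ is \emph{raised} to the left neighbor's $\lcb$-threshold (which dominates $\lcb_i$ because $\pi$ is $\lcb$-sorted, and hence equals $\max\{\lcb_i,\lcb_{\text{left}}\}$), so the right-neighbor constraint is automatically preserved; you would need either to adopt this reference-policy-plus-bidirectionality idea or to find a genuinely new argument for why downward moves past the first swap of a stage are always legal.
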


\begin{proof}

Let $\pi$ be the policy that $\tau_i = \lcb_i$ for all $i \in [n]$, and the distributions are sorted in a decreasing order of $\tau$. Since for every constraint $(i, j) \in S$, we have $\lcb_i \geq \lcb_j$, $\pi$ must be a valid policy.

We can prove \Cref{RegretBound} by showing the following statement: Starting from the policy $\pi$, we can move it to any valid policy $\pi'$ using $n^2$ moves and $n^2$ swaps:
\begin{itemize}
    \item Step 1: Let  $i = \arg \max_i \tau'_i$, where $\tau'_i$ is the threshold of $X_i$ in policy $\pi'$.
    \item Step 2: If $X_i$ is not the first distribution in $\pi$, move $\tau_i$ to $\tau_{\pi_{\pi^{-1}(i)-1}}$, then swap $X_i$ and $X_{\pi_{\pi^{-1}(i)-1}}$.
    \item Step 3: Do Step 2 until $X_i$ is moved to the first place. Then move $\tau_i$ to $\tau'_i$.
    \item Step 4: Ignore $X_i$ in both $\pi$ and $\pi'$, repeat Step 1, 2 and 3 until every distribution is settled.
\end{itemize}

Each distribution only involves in $n$ swaps and $n$ moves, so the total number of moves and swaps are both bounded by $n^2$. Then, we need to show the validity of every operation. For each move, we increase $\tau_i$ to let it be closer to $\tau'_i$. Since $\tau'_i \in [\lcb_i, \ucb_i]$, every move is valid. For each swap, the threshold in the front must reach its lower confidence bound. Besides, every swap happens only when there is no constraint between two distributions, so every swap is valid.

Finally, notice that every operation is bidirected. It means that starting from any valid policy $\pi'$, we can convert it to the policy $\pi$, and then convert it to the optimal policy using $2n^2$ moves and swaps, which finishes the proof.
\end{proof}

\subsection{Putting Everything Together}\label{combinepan}

In this section, we show how to combine \Cref{PBISA} and \Cref{PBOCA} to generate a new valid constraint group $(I', S')$, then proves that this leads to an $\OTild(\poly(n)\sqrt{T})$ regret algorithm. We first give the one-phase algorithm:

\begin{algorithm}[tbh]
\caption{Constraint Updating Algorithm for Pandora's Box}
\label{PanAlg}
\KwIn{$I = \{[\lcb_1, \ucb_1],...,[\lcb_n, \ucb_n]\}, S = \{(i, j)\}$, $\hat F_1(x),\ldots, \hat F_n(x)$, $m$}
//STEP 1: Calculate new confidence interval for each distribution \\
\For{$i \in [n]$}
{
{For $j \in [n]$, construct $\hat F_j(x)$ using $10^5\cdot \frac{n^2\log T}{\epsilon}$ new i.i.d. samples of $X_j$}\\
Run \Cref{PBISA} with new CDF estimates to get $\lcb'_i$ and $\ucb'_i$.
}
//Adjust the confidence intervals to meet constraints in $S$. \\
\For{$(i, j) \in S$}
{
Let $\lcb'_i  = \max\{\lcb'_i, \lcb'_j\}$ and $\ucb'_j = \min\{\ucb'_j, \ucb'_i\}$.
}
Let $I' = \{[\lcb'_i, \ucb'_i]\}$ and $S' = S$ \\
//Add new constraints for disjoint confidence intervals \\
\For{$(i, j) \notin S'$}
{
\lIf{$\lcb'_i > \ucb'_j$}
{
    Add $(i, j)$ into $S'$
}
}
~\\
//STEP 2: Calculate new constraints for each distribution pair \\
Let $Q = \{(i, j)|(i,j) \notin S' \land (j, i) \notin S'\}$ \\
\While{$Q \neq \emptyset$}
{
Choose $(i, j) \in Q$ and remove $(i, j)$ from $Q$ \\
Run \Cref{PBOCA} with input $(i, j)$ and update $I'$ and $S'$ \\
//New constraints may fail some previous tests. Should add them back \\
For every $k$ such that $\lcb'_k$ changes in \Cref{PBOCA}, if  $\exists k'$ such that $(k, k'), (k', k) \notin S'$, add $(k, k')$ into $Q$.
}
\KwOut{$(I',S')$}
\end{algorithm}

We can directly give the following lemma according to the three lemmas above:

\begin{Lemma}[Main Lemma]
\label{MainPandoraLemma}
Given $(I, S)$ and $\epsilon > 16T^{-\frac{1}{2}}$. Assume the pre-conditions in \Cref{MoveOP} hold, i.e.,
    \begin{itemize}
        \item $|g_j(\tau)| \leq T^{-\frac{1}{4}}$ for all $\tau \in [\lcb_j, \ucb_j]$.
        \item $(I, S)$ is valid.
        \item For any valid partial policy $\pi'$ of $(I, S)$, we fix the order and the other thresholds except $\tau_j$. Assume $\pi'$ is valid when both $\tau_j = \lcb'$ and $\tau_j = \lcb'$. Define $\delta_{\pi', \ucb', \lcb', j}(\tau) = (F_{\pi', j}(\lcb') - F_{\pi', j}(\ucb'))g_i(\tau)$. Then $|\delta_{\pi', \ucb', \lcb', j}(\tau)| \leq 6\epsilon$.
        \item CDF estimate $\hat F_j(x)$ is constructed via $10^5 \cdot \frac{n^2 \log T}{\epsilon}$ fresh i.i.d. samples of $X_j$.
    \end{itemize}

Then, \Cref{PanAlg} runs  $O(\frac{n\log T}{\epsilon^2})$ rounds, such that the policy in each round is valid for $(I, S)$ {(except Line 3)}, and output a new constraint group $(I', S')$, satisfying the following statements with probability $1 - T^{-10}$: 
\begin{itemize}
    \item $(I', S')$ is valid.
    \item For all $j \in [n]$, for any valid partial policy $\pi'$ of $(I', S')$, we fix the order and the other thresholds except $\tau_j$. Assume $\pi'$ is valid when both $\tau_j = \lcb'$ and $\tau_j = \lcb'$. Define $\delta_{\pi', \ucb', \lcb', i}(\tau) = (F_{\pi', j}(\lcb') - F_{\pi', j}(\ucb'))g_i(\tau)$. Then $|\delta_{\pi', \ucb', \lcb', i}(\tau)| \leq 3\epsilon$.
    \item For a valid policy of $(I', S')$, the per-round regret is no more than $126n^3\epsilon$.
\end{itemize}

\end{Lemma}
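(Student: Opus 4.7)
The plan is to combine the three earlier ingredients in the order they are built: apply \Cref{MoveOP} inside the $n$ iterations of Step~1 to obtain the shrunk intervals $[\lcb'_i,\ucb'_i]$; apply \Cref{SwapOP} inside the while-loop of Step~2 to either add order constraints or certify small swapping difference; and then invoke \Cref{RegretBound} to turn per-operation bounds into a per-round regret bound for valid policies of $(I',S')$.

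First I will verify the round budget. Step~1 performs $n$ calls to \Cref{PBISA}, each using $O(\log T/\epsilon^2)$ rounds, contributing $O(n\log T/\epsilon^2)$. Step~2 performs calls to \Cref{PBOCA}, each using $O(\log T/(n^2\epsilon^2))$ rounds. A pair $(i,j)$ is re-enqueued into $Q$ only when some $\lcb'_k$ strictly increases, and each $\lcb'_k$ can only increase $n$ times (it is replaced by some larger $\lcb'_\ell$ from the at-most-$n$ distinct lower bounds), so the total number of \smart calls is $O(n^3)$, contributing another $O(n\log T/\epsilon^2)$ rounds. Within both steps the played policies are valid for $(I,S)$ because \Cref{PBISA} plays partial policies that come from the current \clever choice (valid by construction) and \Cref{PBOCA} plays the \smart pair, which is valid by \Cref{SwapSmart}.

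Next I will verify the three output properties. For validity of $(I',S')$: the length bound $\ucb'_i-\lcb'_i\le T^{-1/4}$ and the containment $\sigma_i\in[\lcb'_i,\ucb'_i]$ follow from \Cref{MoveOP}\,\ref{MoveSta1}; closure of $S'$ under transitivity and the monotonicity requirement $\ucb'_i\ge\ucb'_j$, $\lcb'_i\ge\lcb'_j$ for $(i,j)\in S'$ are enforced explicitly by the adjustment lines of \Cref{PanAlg} and by \Cref{PBOCA}; when $(i,j)$ is newly added to $S'$, $\sigma_i>\sigma_j$ holds with probability $1-T^{-12}$ by the first bullet of \Cref{SwapOP}, so the induced interval tightenings still contain the true $\sigma_k$. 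The moving-difference conclusion is obtained by applying \Cref{MoveOP}\,\ref{MoveSta3} inside Step~1, which gives $|\delta_{\pi',\ucb',\lcb',j}(\tau)|\le 3\epsilon$ on the intervals produced by \Cref{PBISA}; since Step~2 only further shrinks intervals, any valid partial policy of $(I',S')$ is also a valid partial policy in the intermediate state, so the same $3\epsilon$ bound is preserved. For the per-round regret, I will apply \Cref{RegretBound} to a valid policy $\pi$ of $(I',S')$: it can be converted into the optimal policy using at most $2n^2$ moves and $2n^2$ swaps, each move costing at most $3\epsilon$ by \Cref{MoveOP}\,\ref{MoveSta2} and each swap costing at most $60n\epsilon$ by \Cref{SwapOP} (noting that swaps in \Cref{RegretBound} are performed at the maximum of the two relevant lower bounds, exactly the regime covered by \smart), for a total regret bounded by $6n^2\epsilon+120n^3\epsilon\le 126n^3\epsilon$.

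The main obstacle is the feedback loop between Steps~1 and~2: when Step~2 adds $(i,j)$ and tightens some $\lcb'_k$ or $\ucb'_k$, pairs already tested could in principle fail their swap bounds, and the moving-difference bound certified by \Cref{MoveOP} must continue to hold after the tightening. I will handle both points by exploiting monotonicity, arguing that tightening intervals only shrinks the family of valid partial policies, so the upper bounds of \Cref{MoveOP}\,\ref{MoveSta3} and \Cref{SwapOP} are preserved, while the re-enqueue rule in \Cref{PanAlg} guarantees that every pair whose swap geometry could have changed (because a relevant $\lcb'_k$ moved) is re-tested. Finally I will bundle all high-probability events ($n$ instances of \Cref{MoveOP} and $O(n^3)$ instances of \Cref{SwapOP}, each failing with probability at most $T^{-11}$ or $T^{-12}$) by a union bound so that the overall failure probability is at most $T^{-10}$.
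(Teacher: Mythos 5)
Your proposal follows essentially the same route as the paper: invoke \Cref{MoveOP} inside Step~1 to establish validity, the $3\epsilon$ moving-difference bound, and the per-move cost; invoke \Cref{SwapOP} inside Step~2 to either add constraints or certify a $60n\epsilon$ swap cost; invoke \Cref{RegretBound} to convert any valid policy of $(I',S')$ into the optimum via $2n^2$ moves and $2n^2$ swaps, yielding $6n^2\epsilon + 120n^3\epsilon \le 126n^3\epsilon$; and union-bound the high-probability events. Your round-counting for Step~2 also matches the paper's detailed accounting of $O(n^3)$ \smart calls (the paper's earlier ``$n^2$ times'' for SwapOP applications in the union-bound sentence is a minor typo inconsistent with its own later counting, which you implicitly correct). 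Your explicit discussion of the Step~1/Step~2 feedback loop and why monotone interval tightening preserves the certified bounds is a useful clarification that the paper leaves implicit, but the structure of the argument is identical.
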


\begin{proof}

In this proof, we assume \Cref{MoveOP} and \Cref{SwapOP} holds. We use \Cref{MoveOP} for no more than $n$ times and \Cref{SwapOP} for no more than $n^2$ times. By the union bound\footnote{We assume $T > 10 n$, otherwise an $O(n)$ regret algorithm is trivial.}, our proof fails with probability at most $n \cdot T^{-11} + n^2 \cdot T^{-12} \leq T^{-10}$.

For the validity of $(I', S')$, the statement $\sigma_i \in [\lcb'_i, \ucb'_i]$ follows \Cref{MoveOP}, and the statement $\sigma_i > \sigma_j$ for all $(i, j) \in S'$ follows \Cref{SwapOP}. All other statements hold by definition. Therefore, $(I', S')$ is valid.

For the bound of $|\delta_{\pi', \ucb', \lcb', i}(\tau)|$, it's guaranteed directly by \Cref{MoveOP}. Notice that \Cref{MoveOP} even provides a stronger bound for the constraint group $(I'_i, S)$. Since all possible choices of $\pi', \lcb', \ucb'$ must be valid for $(I'_i, S)$ when it's valid for $(I', S')$, this doesn't hurt the statement.

For the per-round regret bound, \Cref{MoveOP} says that the difference of a move is bounded by $3\epsilon$. \Cref{SwapOP} says that the difference of a swap is bounded by $60n\epsilon$. Then, according to \Cref{RegretBound}, we can convert any valid policy to the optimal policy using $2n^2$ moves and swaps. Therefore, the per-round regret is bounded by $126n^3\epsilon$.

Next, we argue that \Cref{PanAlg} runs no more than $O(\frac{n\log T}{\epsilon^2})$ rounds. Note that \Cref{PBISA} is called $n$ times, and \Cref{PBISA} uses $O(\frac{\log T}{\epsilon^2})$ rounds in one call. So the number of rounds is $O(\frac{n\log T}{\epsilon^2})$. For \Cref{PBOCA}, we might test a distribution pair $(X_i, X_j)$ for multiple times. The reason is the following: When using \Cref{RegretBound}, we need to make sure that the value of the final $\max\{\lcb'_i, \lcb'_j\}$ is the one that we test. Therefore, if the value of $\lcb'_i$ changes, we need to re-test some distribution pairs $(i, j)$. We can argue that the total number of tests is bounded: When doing an extra test for $(i, j)$, at least one of $\lcb'_i$ or $\lcb'_j$ must change. This can happen only when a new constraint related to $i$ or $j$ is added into $S'$. There are only $2n$ constraints related to $i$ and $j$, so we can test $(i, j)$ for at most $4n$ times. Therefore, the total number of calls of \Cref{PBOCA} is no more than $4n^3$, and \Cref{PBOCA} uses $O(\frac{\log T}{n^2\epsilon^2})$ samples in one call, so the number of samples is bounded by $O(\frac{n\log T}{\epsilon^2})$. Combining the two results finishes the proof.  
\end{proof}

Now, we are ready to show the total regret bound.

\begin{Theorem}
\label{MainPandoraThm}
There exists an $O(n^{4.5}\sqrt{T}\log T)$ regret algorithm for Pandora's Box problem.
\end{Theorem}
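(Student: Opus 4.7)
The plan is to assemble three already-stated ingredients: the initialization of Lemma~\ref{PBInit}, the single-phase constraint-updating procedure of Lemma~\ref{MainPandoraLemma} (implemented as Algorithm~\ref{PanAlg}), and the doubling framework of Lemma~\ref{DoublingBound}. Throughout, I work in the scaled problem where values and costs lie in $[0,1/(2n)]$, so the per-round utility sits in $[-1/2,1/2]$ and the per-round regret is bounded by $1$, matching the hypothesis of Lemma~\ref{DoublingBound}. At the very end I recover the original regret by multiplying by the $2n$ scaling factor.

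First I run the initialization of Lemma~\ref{PBInit}, using $O(n\sqrt{T}\log T)$ rounds to obtain, with probability $1-T^{-10}$, a valid initial constraint group $(I^{(0)},S^{(0)}=\emptyset)$ with $\sigma_i\in[\lcb_i,\ucb_i]$, $\ucb_i-\lcb_i\le T^{-1/4}$, and $|g_i(x)|\le T^{-1/4}$ throughout each interval. Since per-round regret is at most $1$ in the scaled problem, this phase contributes $O(n\sqrt{T}\log T)$ regret. The initial $\delta$-precondition of Lemma~\ref{MainPandoraLemma} holds trivially: for any valid partial policy, $|\delta_{\pi',\ucb',\lcb',j}(\tau)|\le(F_{\pi',j}(\lcb')-F_{\pi',j}(\ucb'))\cdot T^{-1/4}\le T^{-1/4}\le 6\epsilon$ whenever $\epsilon\ge T^{-1/4}/6$, which holds in the first phase (where $\epsilon$ is a constant).

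Next I plug Algorithm~\ref{PanAlg} into the doubling framework (Algorithm~\ref{Doubling}). By Lemma~\ref{MainPandoraLemma}, a single call at accuracy $\epsilon$ runs $O(n\epsilon^{-2}\log T)$ rounds (plus $O(n^4\epsilon^{-1}\log T)$ rounds for fresh CDF estimates, which are lower order for the relevant range of $\epsilon$) and outputs a valid $(I',S')$ such that (i)~$|\delta_{\pi',\ucb',\lcb',j}(\tau)|\le 3\epsilon$ simultaneously for all $j$, and (ii)~every valid policy has per-round regret at most $126\,n^3\epsilon$. Property (i) is exactly what propagates the precondition across phases at halved accuracy, since $3\epsilon\le 6(\epsilon/2)$; fresh CDF samples are drawn at the start of every phase so that concentration arguments remain independent of the already-refined intervals. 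This inductive propagation — rather than any new technical insight — is the main thing to verify carefully.

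To fit Lemma~\ref{DoublingBound}'s template I rescale: define $\tilde\epsilon := 126\,n^3\epsilon$, so the output per-round regret is $\tilde\epsilon$ and the number of rounds per phase is $O(n^7 \tilde\epsilon^{-2}\log T)$, i.e.\ $\alpha=7$. Lemma~\ref{DoublingBound} then gives total regret $O(n^{7/2}\sqrt{T}\log T)$ in the scaled problem, with probability at least $1-T^{-9}$ after a union bound over the $O(\log T)$ phases. Multiplying by the $2n$ scaling factor to undo the initial rescaling of values and costs yields the claimed $O(n^{4.5}\sqrt{T}\log T)$ regret of Theorem~\ref{MainPandoraThm}. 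The regret from the initialization phase, $O(n\sqrt{T}\log T)$, is absorbed into this bound. The only subtlety is bookkeeping the high-probability events (initialization plus $O(\log T)$ uses of Lemma~\ref{MainPandoraLemma}, each failing with probability at most $T^{-10}$) and confirming that the CDF-estimation overhead added each phase does not spoil the total.
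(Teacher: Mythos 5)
Your proposal is correct and follows essentially the same route as the paper: run Lemma~\ref{PBInit} for initialization, verify the preconditions of Lemma~\ref{MainPandoraLemma} propagate (first phase via $|g_j|\le T^{-1/4}$, subsequent phases via the $3\epsilon\le 6(\epsilon/2)$ chain), plug Algorithm~\ref{PanAlg} into Algorithm~\ref{Doubling} with the $126n^3$ rescaling to get $\alpha=7$ and hence $O(n^{3.5}\sqrt{T}\log T)$ scaled regret from Lemma~\ref{DoublingBound}, note the initialization and per-phase CDF-construction overhead are lower order, and finally multiply by the $2n$ scaling factor. This matches the paper's proof step for step.
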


\begin{proof}

We run \Cref{Doubling} and then use \Cref{DoublingBound} to bound the main part of the total regret. To run \Cref{Doubling}, we require the pre-conditions listed in \Cref{MainPandoraLemma} hold. We discuss them separately:

\begin{itemize}
    \item $|g_j(\tau)| \leq T^{-1/4}$: This is guaranteed by \Cref{PBInit}.
    \item $(I, S)$ is valid: For the first phase, the condition $\tau^*_i \in [\lcb_i, \ucb_i]$ is guaranteed by \Cref{PBInit}, and we don't have any initial order constraints between distributions (except those distributions with disjoint confidence intervals). Therefore, $(I, S)$ is valid for the first phase. Starting from the second phase, this is guaranteed by \Cref{MainPandoraLemma}.
    \item $|\delta_{\pi', \ucb', \lcb', i}(\tau)| \leq 6\epsilon$: For the first phase, this is true because $|\delta_{\pi', \ucb', \lcb', i}(\tau)| \leq |g(\tau)| \leq T^{-\frac{1}{4}}$, and initially we have $\epsilon = O(1)$. Starting from the second phase, this is from \Cref{MainPandoraLemma} regarding the previous phase. Notice that parameter $\epsilon$ in the new phase is exactly $\frac{\epsilon}{2}$ in the previous phase. Therefore, there is an extra $2$ factor in the condition.
    \item New CDF estimates: This is guaranteed by \Cref{PanAlg}.
\end{itemize}

\Cref{MainPandoraLemma} implies that after $O(\frac{n^7 \log T}{\epsilon^2})$ rounds, the one-round regret in the new constraint group is bounded by $\epsilon$. Applying \Cref{DoublingBound} with $\alpha = 7$, we have the $O(n^{3.5}\sqrt{T} \log T)$ regret bound.

Besides, there are some extra rounds not covered by \Cref{DoublingBound}, including the initialization and the CDF estimates construction ({{Line 3}} in \Cref{PanAlg}). For the initialization, \Cref{PBInit} runs $O(n \sqrt{T} \log T)$ samples, so the regret is $O(n\sqrt{T} \log T)$. For the CDF estimates construction, let $k$ be the number of phases in the doubling algorithm. Then, the total number of samples is 
\begin{align*}
    \sum_{i = 1}^k n\cdot O(\frac{n^2\log T}{\epsilon_i}) = O(\sqrt{T})
\end{align*}

Combining three parts of regret, the total regret is $O(n^{3.5}\sqrt{T} \log T)$.

Finally, recall that until now we are working on a scaled Pandora's Box problem: We scale down the values and the costs by a factor of $2n$. Therefore, for the original problem, the final regret bound is $O(n^{4.5}\sqrt{T}\log T)$.
\end{proof}

\subsection{Making the Algorithm Efficient}

Currently, the running time of the whole algorithm is exponential in $n$ as just \Cref{Lemma:Clever} introduces an algorithm with $O(n2^n)$ running time. If we want a polynomial time algorithm, we may need an approximation. The following lemma shows a new regret bound with approximation:

\begin{Lemma}
\label{alphaapprox}
Assume for every $i$, we can $\gamma$-approximate $\max_{\pi, \ucb, \lcb} \hat{F}_{\pi, i}(\ucb) - \hat{F}_{\pi, i}(\lcb)$, then there exists an $O(\max\{\gamma n^{4.5}, \gamma^2 n\} \sqrt{T}\log T)$ regret algorithm.
\end{Lemma}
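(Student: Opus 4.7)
The plan is to re-derive Lemmas \ref{MoveOP}, \ref{SwapOP}, and \ref{MainPandoraLemma} while tracking how the $\gamma$-approximation to $\max_{\pi,\ucb,\lcb}(\hat F_{\pi,i}(\ucb)-\hat F_{\pi,i}(\lcb))$ propagates through the argument. The only subroutine of the existing algorithm that runs in time exponential in $n$ is the \clever-policy computation from \Cref{Lemma:Clever}; all other pieces (initialization, \Cref{CalcSmart}, threshold updates, and the constraint maintenance in \Cref{PanAlg}) are already polynomial in $n$, so it suffices to replace the exact oracle with the $\gamma$-approximate one and rebound the resulting error.

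In the proof of \Cref{MoveOP}, let $\hat\pi$ denote the $\gamma$-approximate \clever policy returned by the oracle and write $\hat q_i := F_{\hat\pi,i}(\ucb)-F_{\hat\pi,i}(\lcb)$, so by assumption $\hat q_i \geq q_i/\gamma$ up to the additive error already controlled by \Cref{PBGenDisAcc}. For $\tau\in[\lcb'_i,\ucb'_i]$ the definition of the updated interval still yields $|\hat\delta_i(\tau)|\leq\epsilon$, and \Cref{AccBoundPBGen} still yields $|\hat\delta_i(\tau)-\delta_i(\tau)|\leq\epsilon$, so $|g_i(\tau)|\leq 2\epsilon/\hat q_i$. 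For any other valid partial policy $\pi'$ the moving difference is bounded by $|F_{\pi',i}(\ucb')-F_{\pi',i}(\lcb')|\cdot\max|g_i|\leq q_i\cdot(2\epsilon/\hat q_i)\leq 2\gamma\epsilon$, so Statements \ref{MoveSta3} and \ref{MoveSta2} of \Cref{MoveOP} carry through with bound $O(\gamma\epsilon)$ instead of $O(\epsilon)$, at the cost of inflating the inductive precondition on $|\delta|$ by the same $\gamma$ factor (which doubling still accommodates because $\epsilon$ halves every phase).

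For \Cref{SwapOP}, the \smart-policy construction in \Cref{CalcSmart} is unaffected since it only asks for an exact maximization of a single product. The only place the $\gamma$-approximation enters is in the evidence-chasing argument that certifies $\sigma_i>\sigma_j$: each ``Step 1'' move there now shifts the estimated reward gap by at most $O(\gamma\epsilon)$ rather than $O(\epsilon)$, so after $O(n)$ iterations the total drift is $O(n\gamma\epsilon)$. Consequently I inflate the detection threshold in \Cref{PBOCA} from $40n\epsilon$ to $\Theta(n\gamma\epsilon)$, match it with Hoeffding accuracy using $O(\log T/(n\gamma\epsilon)^2)$ samples per test, and conclude that whenever the test fails to add a constraint, the swap difference is bounded by $O(n\gamma\epsilon)$.

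Combining via \Cref{RegretBound}, the per-round regret of any valid policy is at most $2n^2\cdot O(\gamma\epsilon)+2n^2\cdot O(n\gamma\epsilon)=O(n^3\gamma\epsilon)$. Plugging into \Cref{DoublingBound} with $O(n\log T/\epsilon^2)$ samples per phase and reparameterizing by $\epsilon^*:=n^3\gamma\epsilon$ yields per-phase rounds $O(n^7\gamma^2\log T/{\epsilon^*}^2)$ and thus scaled regret $O(n^{3.5}\gamma\sqrt{T}\log T)$; undoing the $1/(2n)$ value/cost rescaling gives the $\gamma n^{4.5}$ term. The secondary $\gamma^2 n$ term arises because the Bernstein-style variance bound inside \Cref{AccBoundPBGen} now only gives $\textsf{Var}(\hat\Delta_i)=O(\gamma\epsilon)$ under the inflated precondition, forcing an extra $\gamma^2$ factor in the samples needed per call of \Cref{PBISA}; tracing this through the doubling calculation produces the claimed secondary bound, and taking the larger of the two yields the final regret. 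The main obstacle will be re-proving \Cref{AccBoundPBGen} with the inflated variance so that the Bernstein concentration still closes, and verifying that the new inductive preconditions $|\delta|\leq O(\gamma\epsilon)$ remain consistent across phases under doubling.
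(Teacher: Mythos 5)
Your proposal is correct and follows essentially the same route as the paper: identify the \clever-policy oracle as the sole exponential-time piece, track the $\gamma$ factor through Lemma~\ref{MoveOP} via the inequality $q_i \le \gamma\,\bar q_i + O(\sqrt{\epsilon})$ to get $O(\gamma\epsilon)$ moving-difference bounds, propagate that to an $O(n\gamma\epsilon)$ swapping-difference bound in Lemma~\ref{SwapOP}, combine via Lemma~\ref{RegretBound} to obtain per-round regret $O(n^3\gamma\epsilon)$, and feed this into Lemma~\ref{DoublingBound} (with effective $n^\alpha = n^7\gamma^2$) and the CDF-construction accounting to obtain the two terms, unscaling by $2n$ at the end. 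Your proposed inflation of the detection threshold in Algorithm~\ref{PBOCA} and the corresponding reduction in its sample count is a harmless cosmetic variant that the paper does not bother with (it keeps the sample count fixed and just lets the swap-difference bound grow), and your final flagged concern about re-proving Claim~\ref{AccBoundPBGen} with inflated variance is exactly the place where the paper's proof also becomes terse, handling it by simply multiplying the CDF-estimate sample count by $\gamma^2$; so you have neither missed an idea nor introduced a gap beyond what the paper itself leaves at the same level of detail.
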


\begin{proof}
In this proof, we first discuss the problem for the scaled Pandora's Box problem, and add the scaled $2n$ factor back at last.

We first see how the $\gamma$ approximation changes \Cref{MoveOP}. Recall that $q_i = \max_{\pi} F_{\pi, i}(\ucb) - F_{\pi, i}(\lcb)$. We further define $\tilde{q}_i = \max_{\pi} \hat F_{\pi, i}(\ucb) - \hat F_{\pi, i}(\lcb)$ and $\bar q_i = F_{\hat \pi, i}(\ucb) - F_{\hat \pi, i}(\lcb)$, where $\hat \pi$ is the chosen policy that $\gamma$-approximates $\max_{\pi, \ucb, \lcb} \hat{F}_{\pi, i}(\ucb) - \hat{F}_{\pi, i}(\lcb)$. According to \Cref{PBGenDisAcc}, we have $\tilde q_i \geq q_i - 2\sqrt{\epsilon}$ and $\bar q_i \geq \frac{\tilde q_i}{\gamma} - 2\sqrt{\epsilon}$. So $q_i \leq \gamma\bar q_i + (2\gamma+2)\sqrt{\epsilon}$. According to (\ref{EqPan2}), Statement \ref{MoveSta3} and \ref{MoveSta2} are both bounded by
\begin{align*}
    q_i \max_{v \in [\lcb'_i, \ucb'_i]} |g_i(v)| ~&\leq~ (\gamma\bar q_i + (2\gamma+2)\sqrt{\epsilon})\max_{v \in [\lcb'_i, \ucb'_i]} |g_i(v)|   \\
    & \leq~ \gamma \bar q_i \cdot \frac{2\epsilon}{\bar q_i} + (2\gamma+2)\sqrt{\epsilon} \cdot T^{-\frac{1}{4}} \leq 3\gamma \epsilon.
\end{align*}

For Statement~\ref{MoveSta3}, this changes the bound of $|\delta_{\hat \pi, \ucb', \lcb', i}(\tau)|$ to $O(\gamma\epsilon)$. In our proof, we use this bound when proving \Cref{AccBoundPBGen}: The bound of $|\delta_{\hat \pi, \ucb', \lcb', i}(\tau)|$ provides a bound for the variance of the $\Delta_i(\tau)$ function, and then we use Bernstein Inequality to show $|\hat \Delta_i(\tau) - \Delta_i(\tau)| \leq O(\epsilon)$. When the bound changes to $O(\gamma\epsilon)$, to get an $O(\epsilon)$ approximation of $\Delta_i(\tau)$, the number of samples for constructing CDF estimates should be multiplied by $\gamma^2$, leading to an $O(\gamma^2\sqrt{T} \log T)$ regret bound.

For Statement~\ref{MoveSta2}, notice that we need to use this moving difference to bound the swapping difference. The main idea of the original proof is: Assume we want to test $X_i$ and $X_j$. After $O(n)$ moves, we can adjust $\tau_i$ and $\tau_j$ to $\min\{\sigma_i, \sigma_j\}$, then bound the swapping difference by $O(n) \cdot O(\epsilon)$. Since there is an extra $\gamma$ factor in the new moving difference bound, the new swapping difference should be $O(\gamma n\epsilon)$. 

Next, \Cref{RegretBound} shows that we need $2n^2$ move operations and swap operations to convert a policy to the optimal one, so the new regret bound after $O(\frac{n\log T}{\epsilon^2})$ samples is $O(\gamma n^3 \epsilon)$. Then, the parameter $\alpha$ in \Cref{DoublingBound} changes to $\gamma^2 n^7$, so the total regret from the doubling algorithm is $O(\gamma n^{3.5} \sqrt{T} \log T)$.

Finally, after combining these two new regret bounds and adding the scaled $2n$ factor back to the regret bound, we get the $O(\max\{\gamma n^{4.5}, \gamma^2 n^2\}\sqrt{T}\log T)$ final regret bound.
\end{proof}

\Cref{alphaapprox} shows that: If we can get a $\poly(n)$ approximation for the \clever policy in polynomial time, we can still get an $O(\poly(n)\sqrt{T})$ regret algorithm. To achieve this goal, we introduce the following sub-routine:

\begin{Definition}[sub-routine]

Let Problem A be the following: Given $n$ and real numbers $a_1,..., a_n,$ $b_1, ..., b_n$, satisfying $0 \leq a_i \leq b_i \leq 1$ for all $i \in [n]$.  The objective of Problem A is to calculate
\begin{align*}
     \max_{B \in [n]} \prod_{i \in B} b_i - \prod_{i \in B} a_i.
\end{align*}
under a set of constraints $\{(i, j)\}$, where a constraint $(i, j)$ means that if we have $i \in B$, there must be $j \in B$.
\end{Definition}

\begin{Lemma}
\label{alphaLemma2}
If there exists an algorithm that calculates an $\gamma$-approximation for Problem A, then there exists an algorithm that $\gamma$-approximates $\max_{\pi} \hat{F}_{\pi, i}(\ucb) - \hat{F}_{\pi, i}(\lcb)$. If the running time of the algorithm for approximating Problem A is polynomial, then the algorithm for approximating $\hat{F}_{\pi, i}(\ucb) - \hat{F}_{\pi, i}(\lcb)$ is also polynomial.
\end{Lemma}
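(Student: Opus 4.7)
The plan is to give a polynomial-time, approximation-preserving reduction from the clever-policy objective to Problem~A. The key observation is that $\hat F_{\pi,i}(x) = \prod_{j \in B_\pi} \hat F_j(x)$, where $B_\pi := \{j : \pi^{-1}(j) < \pi^{-1}(i)\}$ is the set of distributions that appear before $X_i$ in $\pi$. Hence the objective $\hat F_{\pi,i}(\ucb) - \hat F_{\pi,i}(\lcb)$ depends on $\pi$ only through $B_\pi$, and setting $b_j := \hat F_j(\ucb)$ and $a_j := \hat F_j(\lcb)$ for each $j \neq i$, it equals $\prod_{j \in B_\pi} b_j - \prod_{j \in B_\pi} a_j$, matching Problem~A's objective exactly (with $0 \le a_j \le b_j \le 1$ by CDF monotonicity and $\lcb \le \ucb$).

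Next I would translate the validity requirements on $\pi$ into Problem~A's conditional-inclusion constraints on $B$. Each order constraint $(k,k') \in S$ yields the Problem~A constraint ``$k' \in B \Rightarrow k \in B$'', since any $S$-predecessor of a distribution placed before $X_i$ must also precede $X_i$. Any $k$ with $(i,k) \in S$, and any $k$ whose confidence interval rules out placement before $X_i$ (for instance when $\ucb_k < \ucb$), is simply removed from the Problem~A universe. Invoking the assumed $\gamma$-approximation algorithm $\mathcal{A}$ on the resulting instance returns some $B^*$ with $\prod_{B^*} b_j - \prod_{B^*} a_j \ge \tfrac{1}{\gamma} \max_B (\prod_B b_j - \prod_B a_j)$ over all feasible $B$. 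Reconstructing a policy $\pi^*$ by ordering $B^*$ (and its complement in $[n]\setminus\{i\}$) with decreasing thresholds consistent with $S$ yields a valid partial policy achieving the $\gamma$-guarantee, and all steps run in polynomial time whenever $\mathcal{A}$ does.

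The main technical obstacle I anticipate is encoding the \emph{forced-inclusion} requirements coming from $(k,i) \in S$ or from $\lcb_k > \lcb$, which demand $k \in B$ unconditionally, whereas Problem~A only supplies conditional inclusions. My plan is to factor the forced set $J$ out of the objective and maximize over $B' \subseteq L$ (the remaining flexible distributions) the weighted expression $B_J \prod_{B'} b_j - A_J \prod_{B'} a_j$, where $B_J := \prod_{j \in J} b_j$ and $A_J := \prod_{j \in J} a_j$ are constants determined by $J$. The delicate step will be to argue that running $\mathcal{A}$ on an appropriately renormalized Problem~A instance still produces a $\gamma$-approximation of this weighted difference; since $\prod_B b_j - \prod_B a_j$ does not decompose multiplicatively over disjoint sets, preserving the approximation ratio through this reduction is where I expect most of the care in the proof to concentrate.
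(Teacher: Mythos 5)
Your reduction follows the same route as the paper's: fix $\lcb$ and $\ucb$ (the paper observes these can be restricted to the $O(n^2)$ endpoints among $\{\lcb_j\}\cup\{\ucb_j\}$, a step you leave implicit even though the lemma really requires maximizing over $\lcb,\ucb$ as well as over $\pi$), observe that the objective $\hat F_{\pi,i}(\ucb) - \hat F_{\pi,i}(\lcb) = \prod_{j\in B}\hat F_j(\ucb) - \prod_{j\in B}\hat F_j(\lcb)$ depends on $\pi$ only through the prefix set $B$, translate order constraints into conditional inclusions, and drop forced-out boxes. The paper's proof is silent on the forced-inclusion requirements coming from $\lcb_j > \lcb$ or from $(j,i)\in S$, and you are right that these do not fit Problem~A's conditional-inclusion constraint language — this is a genuine gap, and the renormalization you sketch for the weighted objective $B_J\prod_{B'}b_j - A_J\prod_{B'}a_j$ will not work cleanly precisely because, as you suspect, the Problem~A objective does not decompose multiplicatively over disjoint index sets.

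The obstacle has a clean resolution that you did not quite land on: bundle the entire forced set $J$ into a single auxiliary item $z$ with $b_z := \prod_{j\in J} b_j$ and $a_z := \prod_{j\in J} a_j$ (this is a legal Problem~A item since $0 \le a_z \le b_z \le 1$), take the universe to be $L\cup\{z\}$, keep the inherited conditional constraints on $L$, and additionally add the constraint $(j,z)$ for every $j\in L$. The feasible Problem~A sets are then exactly $\emptyset$ and $\{z\}\cup B'$ for $B'\subseteq L$ obeying the constraints, with $\mathrm{val}(\{z\}\cup B') = B_J\prod_{B'}b_j - A_J\prod_{B'}a_j$; since $\mathrm{val}(\{z\}) = B_J - A_J \ge 0 = \mathrm{val}(\emptyset)$, the Problem~A optimum equals the constrained optimum, and any $\gamma$-approximate $B^*$ maps to a feasible $B'$ (take $B' := B^*\setminus\{z\}$, or $B' := \emptyset$ when $B^* = \emptyset$) of at least the same value, so the ratio $\gamma$ is preserved exactly. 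With this patch, plus the explicit $O(n^2)$ outer enumeration over candidate $(\lcb,\ucb)$ pairs, your reduction is complete and matches what the paper intended.
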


\begin{proof}

Consider calculating a \clever policy for $X_i$. Assume that we know the value of $\lcb$ and $\ucb$. Then, we only need to pick a subset $B \subseteq [n] \setminus \{i\}$ to maximize $\prod_{j \in B} b_j - \prod_{j \in N} a_j$, where $b_j = \hat{F}_j(\ucb)$, and $a_j = \hat{F}_j(\lcb)$.

However, not all subsets $B$ are valid. Firstly, for $j \in B$, there must be $\tau_j \geq \ucb$, which means $\ucb_j \geq \ucb$ is required. Similarly, we should also guarantee that $\lcb_j \leq \lcb$ for all $j \in [n] \setminus B$. Besides, if there is an order constraint $(j, k)$, then $k \in B$ implies $j \in B$, which can be represented as a constraint in Problem A. If all constraints are satisfied, policy $\tau_j = \ucb_j$ for $j \in B$ and $\tau_j = \lcb_j$ for $j \notin B \cup \{i\}$ is a feasible policy. Therefore, finding the optimal policy with fixed $\lcb$ and $\ucb$ is captured by Problem A. So, an $\gamma$-approximation algorithm for Problem A also $\gamma$-approximates $\hat{F}_{\pi, i}(\ucb) - \hat{F}_{\pi, i}(\lcb)$.

Notice that when maximizing $\hat{F}_{\pi, i}(\ucb) - \hat{F}_{\pi, i}(\lcb)$, we want to push the thresholds to the boundaries to give $\tau_i$ enough space. Therefore, the value of $\lcb$ and $\ucb$ must be equal to some $\lcb_j$ or $\ucb_j$, which means that there are only $O(n^2)$ candidates. Therefore, if the algorithm that $\gamma$- approximates Problem A runs in polynomial time, the running time of the algorithm for approximating $\hat{F}_{\pi, i}(\ucb) - \hat{F}_{\pi, i}(\lcb)$ is also polynomial.
\end{proof}

It remains to give a $\poly(n)$-approximation algorithm for Problem A, with $O(\poly(n))$ running time. The following theorem shows that this is possible:

\begin{Lemma}
\label{alphaLemma3}
Given an instance of Problem A. Let $B_j$ be the subset with the smallest size that contains $j$. Let $q_j = \prod_{i \in B_j} b_i - \prod_{i \in B_j} a_i$ Then, $\max_{j} q_j$ is an $n$-approximation of problem A.
\end{Lemma}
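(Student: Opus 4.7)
The plan is to prove the bound by exhibiting a single witness index $j^\star\in B^\star$ whose $q_{j^\star}$ already captures an $\Omega(1/n)$ fraction of $V^\star := P(B^\star)-Q(B^\star)$, where $B^\star$ is any optimal valid set and I use the shorthand $P(S)=\prod_{i\in S}b_i$, $Q(S)=\prod_{i\in S}a_i$. The one structural fact I will exploit is that whenever $j\in B^\star$, the minimal valid set $B_j$ satisfies $B_j\subseteq B^\star$, since $B^\star$ itself is a valid set containing $j$ and $B_j$ is the smallest such set. This inclusion, together with $b_i\in[0,1]$, will imply $P(B_j)\ge P(B^\star)$.

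First I would dispose of the degenerate case $P(B^\star)=0$: then some $b_i$ with $i\in B^\star$ vanishes, forcing $a_i=0$, so $V^\star\le 0$; since $B=\emptyset$ is always valid and gives $q=0$, and $q_j\ge 0$ for every $j$ (because $a_i\le b_i$), the claim is immediate. Assuming $P(B^\star)>0$, I factor
\[
V^\star \;=\; P(B^\star)\bigl(1-r\bigr), \qquad r:=\prod_{i\in B^\star}\frac{a_i}{b_i}\in[0,1],
\]
and apply the Weierstrass product inequality $\prod_i(1-x_i)\ge 1-\sum_i x_i$ to get $1-r\le \sum_{i\in B^\star}\bigl(1-a_i/b_i\bigr)$. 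Pigeonhole over the at most $n$ summands then delivers a coordinate $j^\star\in B^\star$ with $1-a_{j^\star}/b_{j^\star}\ge (1-r)/n$, hence $V^\star\le n\cdot P(B^\star)\bigl(1-a_{j^\star}/b_{j^\star}\bigr)$.

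Next I would lower-bound $q_{j^\star}$. Writing $C:=B_{j^\star}\setminus\{j^\star\}$ and using $Q(C)\le P(C)$ (from $a_i\le b_i$),
\[
q_{j^\star} \;=\; b_{j^\star}P(C)-a_{j^\star}Q(C) \;\ge\; (b_{j^\star}-a_{j^\star})P(C) \;=\; P(B_{j^\star})\bigl(1-a_{j^\star}/b_{j^\star}\bigr).
\]
The inclusion $B_{j^\star}\subseteq B^\star$ combined with $b_i\in[0,1]$ gives $P(B_{j^\star})\ge P(B^\star)$, and chaining yields $q_{j^\star}\ge P(B^\star)(1-a_{j^\star}/b_{j^\star})\ge V^\star/n$, which proves the lemma.

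The main conceptual obstacle is choosing how to distribute $V^\star$ across coordinates. The natural additive telescoping $V^\star=\sum_\ell P(T_{\ell-1})Q(B^\star\setminus T_\ell)(b_{j_\ell}-a_{j_\ell})$ is not term-by-term comparable to $q_{j_\ell}$: the mixed prefix-suffix factor $P(T_{\ell-1})Q(B^\star\setminus T_\ell)$ bears no clean relation to $P(B_{j_\ell})$. The multiplicative factorization $V^\star=P(B^\star)(1-r)$ sidesteps this by separating the overall scale $P(B^\star)$ from the ratio $r$, after which Weierstrass localizes the ratio gap to a single coordinate $a_{j^\star}/b_{j^\star}$ that the inclusion $B_{j^\star}\subseteq B^\star$ lets us transport directly back to $q_{j^\star}$.
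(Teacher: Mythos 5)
Your proof is correct, and it takes a genuinely different route from the paper's. The paper \emph{does} use the additive telescoping you dismiss: it re-indexes the elements topologically (after contracting strongly connected components of the constraint digraph) so that $B_j\setminus\{j\}\subseteq\{1,\ldots,j-1\}$, at which point the prefix factor $\prod_{i<j}b_i$ is $\le \prod_{i\in B_j\setminus\{j\}}b_i$ (dropping factors in $[0,1]$ only increases the product) and the suffix factor $\prod_{i>j}a_i$ is crudely bounded by $1$; each telescoping term is then $\le q_j$, giving $V^\star\le\sum_j q_j\le n\max_j q_j$. So your assessment that the telescoping ``bears no clean relation to $P(B_{j_\ell})$'' is an overstatement — it does, under the right ordering — but you are right that finding and justifying that ordering is exactly the extra work the paper has to do (graph construction, SCC contraction, topological relabeling). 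Your multiplicative route $V^\star=P(B^\star)(1-r)$ combined with Weierstrass and pigeonhole cleanly avoids all of that: the only structural fact you use is $B_{j^\star}\subseteq B^\star$ (minimality), and the degenerate case $P(B^\star)=0$ is handled directly. The tradeoffs are roughly: the paper's telescoping is fully constructive and stays inside elementary sum-of-products manipulations, while yours is shorter, order-free, and isolates the single pigeonhole witness $j^\star$ explicitly, which makes the $1/n$ loss transparent. Both give the same $n$-approximation.
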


\begin{proof}
Construct a graph $G=(V, E)$, such that $V = [n]$, and $E$ is the set of all constraints, i.e., a constraint $(u, v)$ is represented as a directed edge $(u, v) \in E$. Then, $B_j$ is the set of all vertices which is reachable from $j$.

Notice that when $G$ contains a connected component, we can shrink the component into one single vertex, because picking any single vertex in the connected component means picking the whole component. Therefore, we only need to prove the theorem when $G$ is a directed acyclic graph (DAG).

Re-index the vertices in $G$, to make sure that for every edge $(u, v) \in E$, there must be $u > v$. Besides, make sure that $B^* = \{1,..., k\}$ is exactly the optimal set of Problem A. Then,
\begin{align*}
    \prod_{i \in [k]} b_i - \prod_{i \in [k]} a_i &= \sum_{j \in [k]} \left( \prod_{i = 1}^{j-1} b_i \cdot (b_j - a_j) \cdot \prod_{i = j+1}^k a_i\right) \\
    &\leq  \sum_{j \in [k]} \left( \prod_{i \in B_j \setminus \{j\}} b_i \cdot (b_j - a_j) \cdot 1\right) \\
    &\leq \sum_{j \in [k]} \left( \prod_{i \in B_j} b_i - \prod_{i\in T_j} a_i\right)     \qquad \leq \qquad  \sum_{j \in [n]} q_j, 
\end{align*}
where the second-last inequality uses $a_i \leq b_i$.
Therefore, $\max_{j} q_j$ is an $n$-approximation of $\prod_{i \in B^*} b_i - \prod_{i \in B^*} a_i$.
\end{proof}

Finally, combining \Cref{alphaapprox}, \Cref{alphaLemma2}, and \Cref{alphaLemma3} gives the following main theorem:

\mainThmPB*


\section{Lower Bounds} \label{sec:lowerBounds}
In this section we prove lower bounds for Online Learning Prophet Inequality and Online Learning Pandora's Box. Our lower bounds will hold even against full-feedback.

\subsection{$\Omega(\sqrt{T})$  Lower Bound  for Stochastic Input}
We show an $\Omega(\sqrt{T})$ regret lower bound for Bandit Prophet Inequality and an $\Omega(\sqrt{nT})$ lower bound for Pandora's Box problem, which implies that the $\sqrt{T}$ factor in our regret bounds is tight. We first give the lower bound for Prophet Inequality.

\begin{Theorem}
For Bandit Prophet Inequality there exists an instance  with $n=2$ such that all online algorithms  incur $\Omega(\sqrt{T})$ regret.
\end{Theorem}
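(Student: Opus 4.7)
The plan is to execute a standard Le Cam two-point lower bound for $n=2$, reducing the problem to a two-armed stochastic bandit with gap $\epsilon$, for which the $\Omega(\sqrt T)$ lower bound is classical. Set $\epsilon := c/\sqrt T$ for a small absolute constant $c$. In both instances, take $\D_1$ to be the point mass at $1/2$ (so $X_1 = 1/2$ deterministically). Let $\D_2$ be Bernoulli with mean $1/2 + \epsilon$ in instance $\mathcal I_+$ and mean $1/2 - \epsilon$ in instance $\mathcal I_-$. A direct calculation shows that the optimal threshold is $\tau^* > 1/2$ in $\mathcal I_+$ (with optimal reward $1/2 + \epsilon$) and $\tau^* \leq 1/2$ in $\mathcal I_-$ (with optimal reward $1/2$). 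Since the reward distribution in any round depends only on whether the played threshold lies above or below $1/2$, we may without loss of generality assume the learner toggles between a ``safe arm'' (deterministic reward $1/2$, no information about $\D_2$) and an ``informative arm'' (a $\mathrm{Ber}(1/2 \pm \epsilon)$ sample from $\D_2$).

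The second step is the information-theoretic lower bound. Let $N$ denote the (random) number of informative pulls in $T$ rounds. The regrets satisfy $\mathrm{Reg}(\mathcal I_+) = \epsilon \Ex[+]{T - N}$ and $\mathrm{Reg}(\mathcal I_-) = \epsilon \Ex[-]{N}$. By the standard bandit divergence decomposition, $\mathrm{KL}(P_- \,\|\, P_+) = \Ex[-]{N} \cdot \mathrm{KL}(\mathrm{Ber}(1/2-\epsilon) \,\|\, \mathrm{Ber}(1/2+\epsilon)) = O(\epsilon^2 \Ex[-]{N})$, because the safe-arm KL contribution is zero. Applying the Bretagnolle--Huber inequality to the event $\{N \leq T/2\}$ then yields $\mathrm{Reg}(\mathcal I_+) + \mathrm{Reg}(\mathcal I_-) \geq \Omega\bigl(\epsilon T \cdot \exp(-O(\epsilon^2 \Ex[-]{N}))\bigr)$. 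A case split finishes the argument: if $\Ex[-]{N} \leq 1/(100\epsilon^2)$, the exponential factor is $\Omega(1)$ and we conclude $\Omega(\epsilon T) = \Omega(\sqrt T)$; otherwise $\mathrm{Reg}(\mathcal I_-) = \epsilon \Ex[-]{N} = \Omega(1/\epsilon) = \Omega(\sqrt T)$ directly.

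The main obstacle I anticipate is purely bookkeeping in the divergence decomposition: naively one might bound the KL by $T$ times the per-round KL, which would not suffice to obtain the tight rate; the key observation is that the safe arm has identical reward distributions under $\mathcal I_+$ and $\mathcal I_-$, so only the expected count of informative pulls contributes to $\mathrm{KL}(P_- \,\|\, P_+)$. A clean presentation handles this via the standard divergence-decomposition lemma from stochastic bandits (e.g., Lemma~15.1 in Lattimore--Szepesv\'ari). Everything else---the reduction to two effective arms and the regret formulas---is routine.
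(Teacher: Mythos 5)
Your construction (point mass at $1/2$ for $\D_1$, Bernoulli with mean $\tfrac12\pm\Theta(1/\sqrt T)$ for $\D_2$) is exactly the paper's, and the Le Cam / Bretagnolle--Huber / divergence-decomposition argument you sketch is the rigorous version of what the paper only gestures at with ``because of the variance, the algorithm needs $\Omega(T)$ samples.'' Same approach; your write-up simply supplies the information-theoretic detail the paper elides (and, as a small side note, with $\epsilon=\Theta(1/\sqrt T)$ even the crude bound $\mathrm{KL}\le T\cdot O(\epsilon^2)=O(1)$ already suffices, so the case split you describe is not strictly needed).
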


\begin{proof}
Let $\D_1$ be a distribution that always gives $\frac{1}{2}$. Let $\D_2$ be a Bernoulli distribution. The probability of $X_2 = 1$ might be $\frac{1}{2} + \frac{1}{\sqrt{T}}$ or $\frac{1}{2} - \frac{1}{\sqrt{T}}$. Both settings appear w.p. $\frac{1}{2}$. The online algorithm doesn't know which is the real setting. If it chooses not to open $X_2$, it will lose $\sqrt{T}$ w.p. $\frac{1}{2}$. Otherwise,
because of the variance, the algorithm needs $\Omega(T)$ samples from $X_2$ to learn the real setting, and loses $\frac{1}{2}\cdot \sqrt{T}$ for each round it runs. In both cases, the online algorithm should lose $\Omega(\sqrt{T})$, which finishes the proof.
\end{proof}

For the Pandora's Box problem, \cite{GHTZ-COLT21} already shows a lower bound for the sample complexity of Pandora's Box problem, which directly implies a lower bound for the online learning setting.

\begin{Theorem}[\cite{GHTZ-COLT21}]
\label{PBLB}
For any instance of Pandora’s problem in which the rewards are bounded in $[0, 1]$, running $\Omega(\frac{n}{\epsilon^2})$ samples is necessary to get an $\epsilon$-additive algorithm.
\end{Theorem}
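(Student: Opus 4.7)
The plan is to reduce an $n$-ary hypothesis testing problem to Pandora's Box and apply a standard information-theoretic lower bound in the style of best-arm identification. I would exhibit a family $\mathcal{I}_1, \ldots, \mathcal{I}_n$ of Pandora's Box instances, pairwise statistically close, whose optimal utilities differ in a way that forces any $\epsilon$-optimal algorithm to identify the planted index with good probability.

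For the construction, I would fix a small common inspection cost $c$ and a baseline Bernoulli parameter $p$ shared by all $n$ boxes (rescaled into $[0,1]$). In instance $\mathcal{I}_j$, box $j$ has value distribution $\mathrm{Bernoulli}(p+\epsilon)$ while every other box has $\mathrm{Bernoulli}(p)$. I would choose $c$ so that: (a) in the symmetric baseline ($p$ everywhere), Weitzman's reservation value sits right on a borderline (e.g., between ``open exactly one box'' and ``open none''), and (b) in $\mathcal{I}_j$, the uniquely optimal strategy inspects box $j$ first (or inspects only box $j$), while any strategy that acts differently loses $\Omega(\epsilon)$ in expected utility. A direct calculation using the gain function $g_i(\tau) = -c + (1-\tau) - \int_\tau^1 F_i(x)\,dx$ shows that a first-order perturbation of $p$ by $\epsilon$ shifts $\sigma_j$ by $\Theta(\epsilon)$ and, at the carefully chosen threshold, produces a $\Theta(\epsilon)$ swing in the utility of the optimal policy.

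For the information-theoretic step, any $(\epsilon/3)$-optimal algorithm, when run on $\mathcal{I}_j$, must output a policy that inspects box $j$ first with probability at least $2/3$; otherwise its expected utility loss on $\mathcal{I}_j$ would exceed $\epsilon/3$. The algorithm therefore induces an estimator of the planted index $j$ with error probability bounded away from $1$. Because $\mathcal{I}_j$ and $\mathcal{I}_{j'}$ differ only in the distributions of boxes $j$ and $j'$, the KL divergence between the respective $m$-sample product measures is $O\big((m_j + m_{j'})\epsilon^2\big)$, where $m_i$ denotes the number of samples drawn from box $i$. Applying Fano's inequality (or Assouad's lemma for a sharper constant) to this $n$-ary testing problem and summing over pairs yields $\sum_i m_i = \Omega(n/\epsilon^2)$, which is the desired bound.

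The delicate part is item~(b) of the construction: if one naively perturbs a single box's mean, the utility gain from opening it first may be only second order in $\epsilon$, which would give a weaker $\Omega(n/\epsilon)$ lower bound. Placing the baseline instance at a regime where the optimal policy's \emph{support} (i.e., the set of boxes it chooses to inspect) changes discontinuously under an $\epsilon$-perturbation is what promotes the utility gap from second order to first order in $\epsilon$. Once this critical regime is pinned down via Weitzman's indices, the rest of the argument follows the familiar best-arm-identification lower bound template.
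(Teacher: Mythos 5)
The paper does not prove this statement; it imports it as a black box from \cite{GHTZ-COLT21}, so there is no in-paper proof to compare against. Your high-level plan---plant a slightly better box, choose the cost so that all other boxes sit exactly at Weitzman's break-even point, and then invoke a best-arm-identification-style change-of-measure bound---is indeed the natural construction and plausibly the one in \cite{GHTZ-COLT21}. But there is a concrete gap in the parameter choice. You propose to ``fix a small common inspection cost $c$'' with baseline Bernoulli parameter $p$ tuned to the borderline, i.e.\ $p=c$. With this choice, a direct computation (stopping at the first success) shows that a policy that places box $j$ in position $k$ of the opening order earns utility $\epsilon'\,(1-p)^{k-1}$ against the optimum $\epsilon'$, so the loss from placing $j$ at position $k$ is only $\epsilon'\bigl(1-(1-p)^{k-1}\bigr)\approx (k-1)p\,\epsilon'$. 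When $p$ is small, an $\epsilon'/4$-optimal policy only needs to place $j$ somewhere among the first $\Theta(1/p)$ positions, not in position $1$, which breaks the identification step: the learner need only narrow $j$ to a set of size $\Theta(1/p)$, and the resulting information-theoretic requirement degrades by a factor of $p$. The fix is to make $c$ and $p$ constants bounded away from $0$ (e.g.\ $c=p=\tfrac12$), which pins the affordable positions for box $j$ down to $O(1)$ of them.

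Two smaller remarks. First, your worry that a naive mean perturbation gives only a second-order utility gap is not really the issue; the borderline choice $\sigma_i=0$ for $i\neq j$ is useful mainly because it makes the optimal policy open exactly one box, which collapses the problem to ``guess the planted index'' and makes the gap computation clean, not because it magically promotes a second-order effect to first order. Second, a raw Fano bound over the $n$-ary hypothesis family would produce an extraneous $\log n$ factor (it controls identification with constant success probability, which is a harder task than being $\epsilon$-optimal). The standard way to get exactly $\Omega(n/\epsilon^2)$ is an arm-wise change-of-measure (Le Cam / Bretagnolle--Huber) argument against a null instance with no planted box: under the null, the policy output can ``favor'' box $j$ for at most $O(1)$ indices, so for $\Omega(n)$ indices $j$ one gets $\mbox{\rm\bf E}_{\text{null}}[m_j]\cdot\epsilon^2=\Omega(1)$, and summing gives the claim. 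Assouad also works if set up on a product perturbation, but your ``sum over pairs'' phrasing is not quite the right bookkeeping.
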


\begin{Corollary}
For Pandora's Box problem, 
\newadded{all online algorithms  incur $\Omega(\sqrt{nT})$ regret.}
\end{Corollary}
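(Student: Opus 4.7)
The plan is to convert the sample complexity lower bound of \Cref{PBLB} into a regret lower bound via a standard online-to-batch reduction. Suppose, toward contradiction, that some online algorithm guarantees expected total regret at most $R(T)$ on every instance. Applied to the hard instance family from \Cref{PBLB} for $T$ rounds, let $a^{(t)}$ denote the (random) policy it plays in round $t$; by assumption $\sum_{t=1}^T \big(\opt - \Ex{\algafter{t}}\big) \leq R(T)$.

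Construct a PAC learner that draws $t \in [T]$ uniformly at random and outputs $a^{(t)}$. Its expected utility is $\frac{1}{T}\sum_{t=1}^T \Ex{\algafter{t}} \geq \opt - R(T)/T$, so its output is $\epsilon$-additive for $\epsilon := R(T)/T$. A single round of the online game consumes at most one fresh joint draw $(X_1^{(t)},\ldots,X_n^{(t)})$, and bandit feedback is strictly weaker than observing that full sample; so the constructed PAC learner uses at most $T$ samples. \Cref{PBLB} then forces $T \geq c \cdot n/\epsilon^2 = c\cdot nT^2/R(T)^2$ for some absolute constant $c>0$, which rearranges to $R(T)^2 \geq c\cdot nT$ and hence $R(T) = \Omega(\sqrt{nT})$.

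The main thing to get right is lining up the PAC model with the online model: one has to argue that a round of Pandora's Box under bandit feedback provides at most one i.i.d.\ copy of $(X_1,\ldots,X_n)$ worth of information, so a $T$-round regret-$R(T)$ algorithm induces a valid PAC learner with $T$ samples and additive error $R(T)/T$. This coupling is routine since the $T$ fresh draws used by the simulator can be the same $T$ draws the online algorithm would have received under full feedback. Once the model correspondence is fixed, the rest is just averaging plus rearrangement of \Cref{PBLB}.
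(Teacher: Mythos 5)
Your proof is correct and uses the same online-to-batch reduction that the paper invokes (almost tacitly) in its two-line argument: total regret $R(T)$ over $T$ rounds yields an $R(T)/T$-additive learner with $T$ samples, which combined with \Cref{PBLB} forces $R(T) = \Omega(\sqrt{nT})$. Your write-up is simply more explicit about the random-round-selection simulator and the feedback inequality (bandit $\leq$ full $\leq$ one i.i.d.\ draw per round); note the paper's phrasing ``$o(\frac{n}{\sqrt{T}})$ per-round regret'' appears to be a small typo for $o(\sqrt{n/T})$, which is the rate your calculation correctly produces.
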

\begin{proof}
Assume there exists an online algorithm that achieves $o(\sqrt{nT})$ regret. This implies that after $T$ rounds, we can achieve $o(\frac{n}{\sqrt{T}})$ per-round regret, which is in contradiction with \Cref{PBLB}.
\end{proof}

We remark that   \cite{GHTZ-COLT21}   claims that  $\Omega(\frac{n}{\epsilon^2})$ samples are necessary to get an $\epsilon$-additive algorithm for Prophet Inequality but without giving a proof. However, this claim seems incorrect since in an ongoing work we show  an $\widetilde{O}(\sqrt{T})$ regret algorithm for Prophet Inequality with full-feedback. 


\subsection{$\Omega(T)$ Lower Bound for Adversarial Input} \label{sec:lowerBoundAdvers}

In this paper, we study  Bandit Prophet Inequality and Bandit Pandora's Box problems under the stochastic assumption that input is drawn from  unknown-but-fixed distributions. A natural extension would be: can we obtain $o(T)$ regret  for adversarial inputs where the the input distribution may change in each time step? The following theorems shows that sub-linear regret is impossible even for oblivious adversarial inputs with $n=2$ under full-feedback.

\begin{Theorem}
\label{AdvPI}
For Bandit Prophet Inequality  with oblivious adversarial inputs, there exists an instance with $n=2$ such that the optimal fixed-threshold strategy has total value $\frac{3}{4}T$ but no online algorithm (even under full-feedback) can obtain total value more than $\frac{1}{2}T$.
\end{Theorem}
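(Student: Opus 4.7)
The plan is to invoke Yao's minimax principle by exhibiting an oblivious randomized adversary $\mu$ whose two properties are (i) the expected best-in-hindsight fixed threshold earns $3T/4$, and (ii) every fixed threshold earns only $T/2$ in expectation. Condition (ii) exhibits the Jensen-type gap of $T/4$ that the theorem requires, and extending this ceiling from non-adaptive to adaptive strategies (even under full feedback) yields the claim.

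Define $\mu$ to place probability $\tfrac{1}{2}$ on each of the following constant sequences:
\begin{align*}
I_A:\ (X_1^{(t)}, X_2^{(t)}) = (\tfrac{1}{2}, 1) \text{ for every } t,\qquad I_B:\ (X_1^{(t)}, X_2^{(t)}) = (\tfrac{1}{2}, 0) \text{ for every } t.
\end{align*}
On $I_A$ the optimal fixed threshold is any $\tau>\tfrac{1}{2}$, which always selects $X_2=1$, for total $T$; on $I_B$ the optimal fixed threshold is any $\tau\le\tfrac{1}{2}$, which always selects $X_1=\tfrac{1}{2}$, for total $T/2$. Averaging gives $\mathbb{E}_\mu[\text{best fixed}]=3T/4$, verifying condition~(i). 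For condition~(ii), a threshold $\tau\le\tfrac{1}{2}$ deterministically selects $X_1=\tfrac{1}{2}$ in both realizations (since $X_1\ge\tau$), giving total $T/2$; a threshold $\tau>\tfrac{1}{2}$ selects $X_2$, which equals $1$ under $I_A$ and $0$ under $I_B$, giving expected total $T/2$. Hence $\max_\tau\mathbb{E}_\mu[\text{reward}(\tau)]=T/2$ exactly.

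The final step---and the main technical obstacle---is upgrading the $T/2$ ceiling from fixed-threshold strategies to arbitrary adaptive online algorithms. The intuition is a per-round break-even trade: the only feedback under $\mu$ that distinguishes $I_A$ from $I_B$ is the observed value of $X_2^{(t)}$, and its marginal distribution $\mathrm{Ber}(\tfrac{1}{2})$ yields per-round expected reward $\tfrac{1}{2}$, identical to the safe action $\tau\le\tfrac{1}{2}$. I would formalize this by coupling the algorithm's executions on $I_A$ and on $I_B$ round by round: in each round the expected reward contribution averaged across the two realizations is $\tfrac{1}{2}$, regardless of how the algorithm's choice $\tau^{(t)}$ depends on the past, because in expectation the round's reward equals $\tfrac{1}{2}\,\mathbf{1}[\tau^{(t)}\le\tfrac{1}{2}] + \tfrac{1}{2}\,\mathbf{1}[\tau^{(t)}>\tfrac{1}{2}]\cdot(1+0) = \tfrac{1}{2}$. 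Summing over $t\in[T]$ then yields the desired $T/2$ upper bound on expected total reward for every (even full-feedback) online algorithm; combined with the $3T/4$ benchmark from $\mu$, this completes the proof.

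The delicate point to get right in the coupling is that under full feedback an algorithm that sees $X_2^{(1)}$ learns the realization immediately, so the argument must track carefully that although the conditional expected reward in later rounds can differ across $I_A,I_B$, the \emph{unconditional} per-round expectation under $\mu$ is still pinned to $\tfrac{1}{2}$ by linearity once one observes that the only random input is $X_2$ and its marginal law is symmetric around $\tfrac{1}{2}$ per round. This symmetry-plus-linearity reasoning is what turns the non-adaptive Yao bound into an adaptive one and is the crux I would write out in detail.
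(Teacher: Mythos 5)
Your construction does not work, and the gap is exactly at the step you flag as ``the crux I would write out in detail.'' Your distribution $\mu$ has only a single global bit of randomness (the choice between $I_A$ and $I_B$), and all $T$ rounds are perfectly correlated with it. Under full feedback the algorithm observes $X_2^{(1)}$ after round~1 and thereafter knows the realization exactly; it then plays $\tau^{(t)}>\tfrac12$ on $I_A$ and $\tau^{(t)}\le\tfrac12$ on $I_B$, earning $\tfrac34 T - O(1)$ in expectation under $\mu$ --- essentially matching the benchmark rather than being capped at $T/2$. Your per-round identity $\E_\mu[\text{reward}^{(t)}]=\tfrac12\,\Pr[\tau^{(t)}\le\tfrac12]+\tfrac12\,\Pr[\tau^{(t)}>\tfrac12]$ implicitly factors $\E_\mu\bigl[\one[\tau^{(t)}>\tfrac12]\cdot X_2^{(t)}\bigr]$ into a product, which requires $\tau^{(t)}$ to be independent of the realized instance; for $t\ge 2$ it is not (they are both functions of the same coin flip), so ``symmetry plus linearity'' does not pin the unconditional per-round expectation to $\tfrac12$. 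Note also that refreshing the randomness each round ($X_2^{(t)}\sim\mathrm{Ber}(\tfrac12)$ i.i.d.) fixes the upper bound but destroys the benchmark: the best fixed threshold then also earns only about $T/2$, since a single threshold cannot take $X_1$ in the rounds where $X_2^{(t)}=0$ and skip to $X_2$ in the rounds where $X_2^{(t)}=1$.

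The paper's construction is designed precisely to reconcile these two requirements. The adversary draws $T$ fresh independent bits $s_1,\dots,s_T$, sets $X_2^{(t)}$ according to $s_t$ (so that $s_t$ is uniform conditioned on everything any algorithm has seen before round $t$, giving the unconditional per-round ceiling of $\tfrac12$ for every adaptive, full-feedback algorithm), and simultaneously encodes the entire codeword into an infinitesimal perturbation of $X_1^{(t)}$ around $\tfrac12$. The single fixed threshold $\tau=\tfrac12+\epsilon\cdot Bin(s)$ then decodes $s_t$ from $X_1^{(t)}$ in round $t$ and behaves round-dependently --- keeping $X_1=\tfrac12$ when $X_2^{(t)}=0$ and passing to $X_2^{(t)}=1$ otherwise --- which is what yields $\tfrac34$ per round for the benchmark. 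Your proposal is missing both of these ingredients: the per-round fresh randomness needed for the adaptive upper bound, and the encoding of future information into $X_1$ needed for the fixed-threshold lower bound.
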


\begin{proof}

We first introduce a notation used in this proof. Let $s$ be a 01-string. Define $Bin(s)$ to be the binary decimal corresponding to $s$. For example, $Bin(1) = (0.1)_2 = \frac{1}{2}$, $Bin(0011) = (0.0011)_2 = \frac{3}{16}$.

Now, we introduce the main idea of the counter example: At the beginning, the adversary will choose a $T$-bits code $s = s_1s_2...s_T$ uniformly at random (i.e., $s_i$ is set to be $0$ or $1$  w.p. $\frac{1}{2}$ independently). The value of $X_1$ is $\frac{1}{2}$ plus a small bias that contains the information of the code. The value of $X_2$ is either $1$ or $0$, which is decided by the code. Formally, in the $i$-th round:

\begin{itemize}
    \item $X_1 = \frac{1}{2} + \epsilon\cdot v_i$, where $\epsilon$ is an arbitrarily small constant that doesn't effect the reward, and $v_i$ is a value between $Bin(s_1s_2...s_{i-1}+0+1^{T-i})$ and $Bin(s_1s_2...s_{i-1}+1+0^{T-i})$. The notation $0^k$ represents a length-$k$ string with all $0$s, and $1^k$ represents a length-$k$ string with all $1$s.
    \item $X_2 = 1$ if $s_i = 0$, otherwise $X_2 = 1$.
\end{itemize}

For an online algorithm, it only knows that the next $s_i$ can be $0$ or $1$ w.p. $\frac{1}{2}$. Therefore, no matter it switches to the next box or not, it can only get $\frac{1}{2}$ in expectation. So the maximum total reward it can achieve is $\frac{1}{2}T$. 

However, if we know the code, playing $\tau = \frac{1}{2} + \epsilon\cdot Bin(s)$ gets $\frac{3}{4}T$: $X_2 = 1$ when $X_1 < \tau$, while $X_2 = 0$ when $X_1 \geq \tau$. Therefore, playing $\tau$ allows us to pick every $1$, but stays in $X_1 = \frac{1}{2}$ when $X_2=0$. Since we generate the code uniformly at random, $X_2$ is $1$ w.p. $\frac{1}{2}$. Therefore, the expected reward is $T\cdot (\frac{1}{2} \cdot 1 + \frac{1}{2} \cdot \frac{1}{2}) = \frac{3}{4}T$.
\end{proof}

Next, we use a similar proof idea to prove lower bound for Pandora's Box. This resolves an open question of \cite{Gergatsouli-22,GT-ICML22} on whether sublinear regrets are possible for Online Learning of Pandora's Box with adversarial inputs.

\begin{Theorem}
\label{AdvPB}
For Bandit Pandora's Box   with oblivious adversarial inputs, there exists an instance with $n=2$ such that the optimal fixed-threshold strategy has total utility $\frac{1}{4}T$ but no online algorithm  (even under full-feedback)  can obtain total utility more than $0$.
\end{Theorem}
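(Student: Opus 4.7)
The plan is to mirror the encoding construction of Theorem~\ref{AdvPI}, replacing the prophet-style ``keep $X_1$ vs.\ switch to $X_2$'' dichotomy by a ``stop at box~1 vs.\ pay $c_2$ to probe box~2'' dichotomy. The oblivious adversary samples $s = s_1 s_2 \cdots s_T \in \{0,1\}^T$ uniformly at random, and with constants $\delta \ll \epsilon \ll 1/T$ (say $\epsilon = 2^{-T-10}$), defines $v_i := Bin(s_1 \cdots s_{i-1}) + 2^{-i}$ and the round-$i$ instance
\[
X_1^{(i)} = \epsilon \cdot v_i, \quad c_1 = 0, \qquad X_2^{(i)} = s_i, \quad c_2 = \tfrac{1}{2}.
\]
Crucially, $v_i$ depends only on the past bits $s_{<i}$, whereas $X_2^{(i)}$ carries the fresh bit $s_i$.

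For the offline bound, I take the fixed-threshold policy $\pi = (1,2)$ with $\tau_2 = \epsilon \cdot Bin(s) + \delta$. The same $Bin$ arithmetic used in Theorem~\ref{AdvPI} shows $X_1^{(i)} \ge \tau_2$ iff $s_i = 0$, so the policy stops at box~1 when $s_i = 0$ (utility $\epsilon v_i \le \epsilon$) and opens box~2 when $s_i = 1$ (utility $\max(X_1^{(i)}, 1) - c_1 - c_2 = \tfrac{1}{2}$). Averaging over uniform $s_i$ gives expected per-round utility $\tfrac{1}{4} - O(\epsilon)$, hence total offline utility at least $\tfrac{T}{4} - o(1)$.

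For the online bound, fix any (possibly randomised, adaptive) algorithm. At the start of round $i$, even under full feedback, everything observed is a function of $s_{<i}$ only, while $s_i$ is a fresh uniform bit independent of the past; so the round-$i$ policy $(\pi^{(i)}, \tau^{(i)})$ is fixed independently of $s_i$. I case-analyse the policy shape. If $\pi^{(i)} = (1,2)$, then opening box~1 reveals only $\epsilon v_i$, which the algorithm already knows from the history, and the algorithm either stops for utility $\le \epsilon$ or continues to box~2 for expected utility $\tfrac{1}{2}(\epsilon v_i - \tfrac{1}{2}) + \tfrac{1}{2} \cdot \tfrac{1}{2} = \tfrac{\epsilon v_i}{2} \le \tfrac{\epsilon}{2}$ over $s_i$. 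If $\pi^{(i)} = (2,1)$, checking each range of $\tau_1$ likewise gives expected per-round utility $O(\epsilon)$, since the $s_i = 1$ branch yields $\tfrac{1}{2}$ and the $s_i = 0$ branch yields $-\tfrac{1}{2} + O(\epsilon)$. Policies that inspect at most one box give $0$ or at most $\epsilon$. Summing over $T$ rounds, any algorithm's expected total utility is $O(\epsilon T) = o(1)$.

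The main technical obstacle is this online case analysis: we must enumerate every permutation/threshold combination and verify that even full-feedback access to $X_1^{(i)}$ gives no predictive power over $s_i$, because $v_i$ is a function of $s_{<i}$ only and the algorithm already observed these bits as past $X_2$'s. The choice $c_2 = \E[X_2^{(i)}] = \tfrac{1}{2}$ then makes box~2 a fair bet, blocking any blind ``just open box~2'' strategy. The theorem now follows by averaging: the expected regret over random $s$ is at least $\tfrac{T}{4} - o(1)$, so there exists a deterministic $s^\ast$ on which the optimal fixed-threshold policy achieves $\ge \tfrac{T}{4} - o(1)$ utility while every online algorithm is bounded by $o(1)$, establishing the claim up to vanishing $\epsilon$-slack.
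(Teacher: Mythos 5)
Your proof is correct and takes essentially the same route as the paper: the identical binary-encoding adversary (random $s$, box~1 encodes $s_{<i}$ at a tiny scale $\epsilon$, box~2 carries the fresh bit $s_i$), the same parameter choices $c_1 = 0$ and $c_2 = \tfrac12 = \E[X_2]$ to make box~2 a zero-expectation gamble, the same offline policy $\tau_2 \approx \epsilon\,Bin(s)$, and the same observation that under full feedback the round-$i$ action is a function of $s_{<i}$ only and hence independent of $s_i$. The differences are merely cosmetic — you pick concrete values of $v_i$, $\epsilon$, $\delta$ (fixing the paper's implicit tie-handling and typo in the definition of $X_2$) and spell out the per-permutation case analysis, but no new idea is introduced.
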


\begin{proof}
At the beginning, the adversary will choose a $T$-bits code $s = s_1s_2...s_T$ uniformly at random ($s_i$ is set to be $0$ or $1$  w.p. $\frac{1}{2}$ independently). The cost $c_1$ is $0$, and the value of $X_1$ is $0$ plus a small bias that contains the information of the code. The cost $c_2$ is $\frac{1}{2}$, and the value of $X_2$ is either $1$ or $0$, which is decided by the code. 
Formally, in the $i$-th round:
\begin{itemize}
    \item $X_1 = 0 + \epsilon\cdot v_i$, where $\epsilon$ is an arbitrarily small constant that doesn't effect the reward, and $v_i$ is a value between $Bin(s_1s_2...s_{i-1}+0+1^{T-i})$ and $Bin(s_1s_2...s_{i-1}+1+0^{T-i})$. The notation $0^k$ represents a length-$k$ string with all $0$s, and $1^k$ represents a length-$k$ string with all $1$s.
    \item $X_2 = 1$ if $s_i = 0$, otherwise $X_2 = 1$.
\end{itemize}

The cost of $X_1$ is $0$, so we can always first open $X_1$. Then, for an online algorithm, it doesn't know whether $X_2$ is $1$ or $0$. No matter it opens $X_2$ or not, the expected reward will only be $0$.

However, when we know the code, playing $\tau = \epsilon\cdot Bin(s)$ gets $\frac{1}{4}T$, because it will open $X_2$ whenever $X_2 = 1$, and skip it when $X_2$ is $0$. Since we generate the code uniformly at random, $X_2$ is $1$ w.p. $\frac{1}{2}$. Therefore, the expected reward is $T\cdot (\frac{1}{2} \cdot (1 -\frac{1}{2})) = \frac{1}{4}T$.
\end{proof}

\bibliographystyle{alpha}
\bibliography{ref.bib,bib.bib}

\appendix
\section{Basic Probabilistic Inequalities}
 

\begin{Theorem}[Hoeffding's Inequality]
\label{Hoeffding}
Let $X_1,\ldots,X_N$ be independent random variables such that $a_i \leq X_i \leq b_i$. Let $S_N = \sum_{i \in [N]} X_i$. Then for all $t > 0$, we have
$  \Pr{|S_N - \Ex{S_n}| \geq t} \leq 2\exp\big(-\frac{2t^2}{\sum_{i \in [n]} (b_i - a_i)^2}\big).$
This implies, that if $X_i$ are i.i.d. samples of random variable $X$, and $a = a_i, b = b_i$ for all $i \in [N]$, let  $\hat X := \frac{1}{N} \sum_{i \in [N]} X_i$, then for every $\varepsilon > 0$, 
\[
    \Pr{|\hat{X} - \Ex{X}| \geq \varepsilon} \leq 2 \exp \left(-\frac{2N\varepsilon^2}{(b-a)^2}\right).
\]
\end{Theorem}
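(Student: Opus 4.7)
The plan is to prove this by the standard Chernoff-style argument combined with Hoeffding's Lemma on the moment generating function of a bounded, mean-zero random variable. Concretely, I would first reduce to bounding only the upper tail $\Pr{S_N - \Ex{S_N} \geq t}$; the lower tail follows by applying the same argument to $-X_i$, and a union bound over the two tails costs the factor of $2$ in the conclusion.

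For the upper tail, I would let $Y_i := X_i - \Ex{X_i}$ so that $Y_i$ has mean zero and is bounded in an interval of length $b_i - a_i$. Then for any $s > 0$, Markov's inequality applied to $e^{s \sum_i Y_i}$ gives
\[
\Pr{\textstyle\sum_{i=1}^N Y_i \geq t} ~\leq~ e^{-st}\,\Ex{e^{s\sum_i Y_i}} ~=~ e^{-st}\prod_{i=1}^N \Ex{e^{sY_i}},
\]
using independence in the last equality.

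The key technical lemma I would invoke is Hoeffding's Lemma: if $Y$ is a random variable with $\Ex{Y}=0$ and $Y \in [\alpha,\beta]$ almost surely, then $\Ex{e^{sY}} \leq \exp(s^2(\beta-\alpha)^2/8)$ for every $s \in \mathbb{R}$. I would prove this by writing $Y$ as a convex combination of the endpoints $\alpha$ and $\beta$ (using convexity of $y \mapsto e^{sy}$), taking expectations, and then analyzing $\varphi(s) := \log\bigl(\tfrac{\beta}{\beta-\alpha}e^{s\alpha} - \tfrac{\alpha}{\beta-\alpha}e^{s\beta}\bigr)$. A Taylor expansion of $\varphi$ around $0$ gives $\varphi(0) = \varphi'(0) = 0$ and $\varphi''(u) \leq (\beta-\alpha)^2/4$ uniformly in $u$ (this last bound follows from viewing $\varphi''(u)$ as the variance of a Bernoulli-like random variable supported on $\{\alpha,\beta\}$ and applying the fact that a variable in $[\alpha,\beta]$ has variance at most $(\beta-\alpha)^2/4$). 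This variance bound is the main obstacle — everything else is mechanical, but this step requires the Popoviciu-style observation that the variance of any random variable on an interval of length $L$ is at most $L^2/4$.

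Plugging Hoeffding's Lemma into the product bound yields
\[
\Pr{\textstyle\sum_i Y_i \geq t} ~\leq~ \exp\Bigl(-st + \tfrac{s^2}{8}\textstyle\sum_{i=1}^N (b_i-a_i)^2\Bigr).
\]
Optimizing the right-hand side over $s > 0$ by setting $s = 4t/\sum_i(b_i-a_i)^2$ produces the bound $\exp\bigl(-2t^2/\sum_i(b_i-a_i)^2\bigr)$. Adding the symmetric lower-tail estimate gives the stated inequality. The i.i.d.\ corollary is then immediate by specializing $a_i = a$, $b_i = b$, taking $t = N\varepsilon$, and dividing through by $N$.
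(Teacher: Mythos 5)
Your argument is the standard and correct proof of Hoeffding's inequality: symmetrize to one tail, apply the Chernoff bound, control each factor via Hoeffding's Lemma (proved through convexity plus the Popoviciu-style variance bound on $\varphi''$), and optimize the exponential parameter. The paper states this result in its appendix as a basic probabilistic fact without proof, so there is no alternative approach to compare against; your derivation, including the choice $s = 4t/\sum_i(b_i-a_i)^2$ and the specialization to the i.i.d.\ corollary, is sound.
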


\begin{Theorem}[Bernstein Inequality]
\label{Bernstein}
Given mean zero random variables $\{X_i\}_{i=1}^N$ with $\mathbb P(|X_i| \leq c) = 1$ and $\mathsf{Var} X_i \leq \sigma_i^2$. If $\bar{X}_N$ denotes their average and $\sigma^2 = \frac{1}{N}\sum_{i=1}^n \sigma_i^2$, then
\begin{align*}
    \mathbb P(|\bar{X}_N| \geq \varepsilon) ~\leq~ 2\exp\big(-\frac{N\varepsilon^2}{2\sigma^2 + 2c\varepsilon/3}\big).
\end{align*}
\end{Theorem}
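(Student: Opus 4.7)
The plan is to prove this by the standard Chernoff/MGF approach tailored to bounded, mean-zero summands. First I would reduce the two-sided probability to a one-sided bound via a union bound: apply the one-sided inequality to $\{X_i\}$ and $\{-X_i\}$ separately, doubling the final bound. So it suffices to bound $\Pr\!\left[\bar X_N \geq \varepsilon\right]$ by $\exp\!\left(-\tfrac{N\varepsilon^2}{2\sigma^2 + 2c\varepsilon/3}\right)$.

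Next, I would invoke Markov's inequality in its exponential form: for any $\lambda > 0$,
\begin{equation*}
\Pr\!\left[\textstyle\sum_i X_i \geq N\varepsilon\right] \;\leq\; e^{-\lambda N \varepsilon}\,\mathbb{E}\!\left[e^{\lambda \sum_i X_i}\right] \;=\; e^{-\lambda N \varepsilon}\prod_{i=1}^N \mathbb{E}\!\left[e^{\lambda X_i}\right],
\end{equation*}
using independence in the equality. The main obstacle (and the key technical step) is controlling the moment generating function of a bounded, mean-zero variable with known variance. I would proceed by Taylor-expanding $e^{\lambda X_i} = 1 + \lambda X_i + \sum_{k\geq 2} (\lambda X_i)^k/k!$, taking expectation (the linear term vanishes), and using $|X_i|^k \leq c^{k-2} X_i^2$ for $k\geq 2$ together with $\mathbb{E}[X_i^2]\leq \sigma_i^2$. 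This yields
\begin{equation*}
\mathbb{E}\!\left[e^{\lambda X_i}\right] \;\leq\; 1 + \frac{\sigma_i^2}{c^2}\bigl(e^{\lambda c} - 1 - \lambda c\bigr).
\end{equation*}
Then using $1+u\leq e^u$ and the analytic inequality $e^u - 1 - u \leq \tfrac{u^2/2}{1 - u/3}$ (valid for $0 \leq u < 3$) with $u = \lambda c$, I obtain $\mathbb{E}[e^{\lambda X_i}] \leq \exp\!\bigl(\tfrac{\lambda^2 \sigma_i^2/2}{1 - \lambda c/3}\bigr)$, provided $\lambda c < 3$.

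Finally, combining across $i$ gives $\prod_i \mathbb{E}[e^{\lambda X_i}] \leq \exp\!\bigl(\tfrac{\lambda^2 N \sigma^2/2}{1 - \lambda c/3}\bigr)$, so
\begin{equation*}
\Pr\!\left[\bar X_N \geq \varepsilon\right] \;\leq\; \exp\!\left(-\lambda N\varepsilon + \frac{\lambda^2 N \sigma^2/2}{1 - \lambda c/3}\right).
\end{equation*}
I would then optimize over $\lambda$ by choosing $\lambda = \varepsilon/(\sigma^2 + c\varepsilon/3)$, which is the value that makes the exponent approximately $-\tfrac{N\varepsilon^2}{2(\sigma^2 + c\varepsilon/3)}$. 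A short calculation confirms $\lambda c < 3$ at this choice (since $c\varepsilon \leq 3(\sigma^2 + c\varepsilon/3)$ trivially), and plugging in gives exactly the claimed exponent. Doubling for the two-sided bound completes the proof. The only delicate step is the MGF inequality $e^u - 1 - u \leq \tfrac{u^2/2}{1-u/3}$; I would either cite it as standard or verify it by comparing Taylor coefficients, since $k! \geq 2 \cdot 3^{k-2}$ for $k \geq 2$ gives $e^u - 1 - u = \sum_{k\geq 2} u^k/k! \leq (u^2/2)\sum_{k\geq 0}(u/3)^k = \tfrac{u^2/2}{1-u/3}$.
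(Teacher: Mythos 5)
Your derivation is the standard Chernoff--Cram\'er proof of Bernstein's inequality and it is correct: the MGF bound $\E\left[e^{\lambda X_i}\right] \leq \exp\!\left(\tfrac{\lambda^2\sigma_i^2/2}{1-\lambda c/3}\right)$, the coefficient comparison $k! \geq 2\cdot 3^{k-2}$, and the choice $\lambda = \varepsilon/(\sigma^2 + c\varepsilon/3)$ all check out and yield exactly the stated exponent. The paper states this theorem in its appendix of basic probabilistic tools without proof, so there is no argument to compare against; the only remark worth making is that independence of the $X_i$ (which your product step uses) is implicit in the paper's statement and should be stated explicitly as a hypothesis.
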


\begin{Theorem}[DKW Inequality]
\label{DKW}
Given a natural number $N$, let $X_1,\ldots,X_n$ be i.i.d. samples with cumulative distribution function $F(\cdot)$. Let $\hat{F}(\cdot)$ be the associated empirical distribution function $ \hat F(x) ~:=~ \frac{1}{N} \sum_{i \in [N]} \one_{X_i \leq x}.$
Then, for every $\varepsilon > 0$, we have
\begin{align*}
    \Pr{\sup_{x}|\hat F(x) - F(x)|> \varepsilon} \leq 2\exp(-2N\varepsilon^2).
\end{align*}
\end{Theorem}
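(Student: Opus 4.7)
The plan is to derive the two-sided bound from a one-sided bound by symmetry, reduce to the uniform distribution, and then apply an exponential-moment argument on the order statistics to achieve the tight constant $2$. Concretely, I would first show that it suffices to establish the one-sided estimate
\[
\Pr{\sup_{x}(\hat{F}(x)-F(x)) > \varepsilon} \leq \exp(-2N\varepsilon^2),
\]
since applying the same bound to the samples $-X_i$ (with CDF $1-F(-\cdot^-)$) and taking a union bound yields the factor of $2$ in the statement.

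Next, I would reduce to the case where $F$ is the uniform distribution on $[0,1]$. Using the probability integral transform $U_i := F(X_i)$, we have that $U_i$ is stochastically dominated (in the appropriate sense even for non-continuous $F$) by $\mathrm{Unif}[0,1]$, and one can verify that $\sup_x(\hat{F}(x)-F(x))$ is bounded above by $\sup_u(\hat{G}(u)-u)$, where $\hat{G}$ is the empirical CDF of the $U_i$. Hence the one-sided DKW bound for general $F$ follows from its uniform-case counterpart. For the uniform case, writing $U_{(1)} \leq \cdots \leq U_{(N)}$ for the order statistics, a clean reformulation is
\[
\sup_{u \in [0,1]} \bigl(\hat{G}(u)-u\bigr) \;=\; \max_{1 \leq k \leq N}\Bigl(\tfrac{k}{N}-U_{(k)}\Bigr).
\]

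The heart of the argument is bounding the probability that this maximum exceeds $\varepsilon$. Following Massart's approach, I would apply an exponential Chernoff-type bound to the indicator sum $N \hat{G}(u) = \sum_i \one_{U_i \leq u}$, which is $\mathrm{Bin}(N,u)$, yielding a pointwise bound $\Pr{\hat{G}(u) - u > \varepsilon} \leq \exp(-2N\varepsilon^2)$ by Hoeffding with the variance-aware refinement. The pointwise bound is then lifted to a uniform bound by exploiting monotonicity: it suffices to control $\hat{G}(u)-u$ at the (random) breakpoints $U_{(k)}$, and the dependence among these is managed via Massart's symmetrization trick combined with a careful reweighting. After optimizing the exponential parameter over the relevant range, one obtains the tight constant $2$ in the exponent.

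The main obstacle is obtaining the tight constant $2$ rather than some weaker universal constant. A loose version of the inequality, with $2$ replaced by a small unspecified constant, follows rather directly from a chaining or bracketing argument on the class $\{\one_{(-\infty,x]}\}$ using VC-dimension bounds, and would suffice for all applications of \Cref{DKW} in this paper. The sharp constant, due to Massart (1990), requires the more delicate symmetrization-plus-exponential-moment analysis sketched above; for the purposes of the paper it is natural simply to cite this tight form as a standard result.
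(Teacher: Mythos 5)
The paper does not prove \Cref{DKW} at all: it is stated in the appendix alongside Hoeffding's and Bernstein's inequalities as a standard probabilistic tool and is used as a black box (e.g., in \Cref{PI2Init}, \Cref{PIgenInit}, and \Cref{PBGenDisAcc}). So there is no in-paper proof to compare against, and your closing recommendation --- simply cite the tight form due to Massart --- is exactly the paper's practice.

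Judged as a self-contained proof, your sketch correctly lays out the standard reductions: two-sided to one-sided via the reflected samples $-X_i$ plus a union bound, general $F$ to uniform via the probability integral transform, and the order-statistics identity $\sup_u(\hat G(u)-u)=\max_{k}\bigl(k/N-U_{(k)}\bigr)$. But the decisive step is missing. A pointwise Chernoff/Hoeffding bound on $\mathrm{Bin}(N,u)$ followed by ``lifting via monotonicity'' does not yield a uniform bound with the same exponent: the supremum is attained at the random points $U_{(k)}$, the $N$ events $\{k/N-U_{(k)}>\varepsilon\}$ are strongly dependent, and a naive union bound over them costs a factor of $N$ in front of the exponential. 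The actual route to the sharp constant $2$ goes through the exact Birnbaum--Tingey/Smirnov formula for the one-sided Kolmogorov--Smirnov statistic (or Massart's refinement of the original Dvoretzky--Kiefer--Wolfowitz argument) and a delicate term-by-term estimate of that sum; your sketch only gestures at this with ``symmetrization plus careful reweighting.'' That is a genuine gap if the goal is a complete proof, though since the paper imports the inequality as a known result, deferring to Massart (1990), as you ultimately suggest, is entirely appropriate.
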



\section{Missing Proofs from Section 2}
\label{sec:appendixSec2}

\subsection{Proof of \Cref{DoublingBound}}

\begin{proof}

There are two different sources of regret. We bound them separately.

\medskip
\noindent \emph{Loss 1 from the while loop:} The main idea of the proof is to use the regret bound from the previous phase to bound the total regret in the next phase. Specifically, assume $\epsilon_0 = O(1)$ be the maximum possible one-round regret, and assume there are $k$ phases in the while loop. Then the total regret can be bounded by
\begin{align}
    \sum_{i=1}^k O\left(\frac{n^\alpha \log T}{\epsilon_i^2}\right) \cdot \epsilon_{i-1} ~=~ \sum_{i=1}^k O\left(\frac{n^\alpha \log T}{\epsilon_i}\right) ~=~ O(n^{\alpha / 2}\sqrt{T}).
\end{align}
Therefore, the total regret from the while loop is bounded by $O(n^{\alpha / 2}\sqrt{T})$.

\medskip
\noindent \emph{Loss 2 after the while loop:} After the while loop, the one-round regret is bounded by $\epsilon_k =\frac{n^{\alpha / 2} \log T}{\sqrt{T}}$, so the total regret can be bounded by $O(\frac{n^{\alpha / 2} \log T}{\sqrt{T}}) \cdot T = O(n^{\alpha / 2}\sqrt{T} \log T)$.

Finally, combining the two sources of regret proves the theorem.

Besides, we should also verify that \Cref{Doubling}  succeeds with probability $1 - T^{-9}$, and it runs no more than $O(T)$ rounds. For the success probability, \Cref{Doubling} runs $k = O(\log T) < T$ rounds, and the subroutine $\alg$ succeeds with probability $1 - T^{-10}$. By the union bound, \Cref{Doubling} succeeds with probability $1 - T^{-9}$. As for the number of rounds, in the while loop, \Cref{Doubling} runs
\begin{align*}
    \sum_{i=1}^k \frac{n^\alpha \log T}{\epsilon^2_i} ~\leq~ \frac{4n^\alpha \log T}{\epsilon^2_k} ~=~ O(T)
\end{align*}
number of rounds.
Therefore, \Cref{Doubling} is a valid algorithm with respect to time horizon $T$.
\end{proof}

\subsection{Missing Details of Pandora's Box Algorithm for $n = 2$} \label{sec:missingPandoraTwo}

\subsubsection{Proof of \Cref{mainthmpan}} To prove \Cref{mainthmpan}, we need two claims. The first claim says that when we have a good guess $\tau$ with a small $\delta(\tau)$, the loss of playing $\tau$ is bounded:

\PBBoundTwoBoxes*

\begin{proof}
We first upper-bound $R(\tau^*) - R(\tau)$. The two settings are different only when $X_1$ is between $\tau$ and $\tau^*$: Playing $\tau$ will loss an extra $|g(X_1)|$. Since $g(x)$ is monotone, we can bound $|g(X_1)|$ by $|g(\tau)|$. Therefore, the extra loss of playing $\tau$ is no more than $|F_1(\tau^*) - F_1(\tau)||g(\tau)|$.

On the other hand, 
\begin{align*}
    |\delta(\tau)| = (F_1(\ucb) - F_1(\lcb))\left| \int_\tau^{\tau^*} g'(x) dx\right| 
    =  (F_1(\ucb) - F_1(\lcb)) \cdot |g(\tau)|.
\end{align*}
When $\tau, \tau^* \in [\lcb, \ucb]$, $F_1(\ucb) - F_1(\lcb) \geq |F_1(\tau^*) - F_1(\tau)|$. Therefore, $|\delta(\tau)| \geq R(\tau^*) - R(\tau)$.
\end{proof}

The second claim shows that we can get a good estimate for function $\delta(\tau)$:

\AccBoundPan*

\begin{proof}

Recall that $\delta(\tau) = \Delta(\tau) - \left(R(\ucb) - R(\lcb)\right)$. We will give the bound for $|\Delta(\tau^*) - \Delta(\tau)|$, $|R(\ucb) - \hat R(\ucb)|$ and $|R(\lcb) - \hat R(\lcb)|$ separately.

For $|\Delta(\tau^*) - \Delta(\tau)|$, we first bound the magnitude of $\Delta(\tau)$:
\begin{align*}
    \Delta(\tau) = \int_{\tau}^\ucb (F_1(\ucb) - F_1(x))(F_2(x) - 1)dx - 
    \int_{\lcb}^\tau (F_1(x) - F_1(\lcb))(F_2(x) - 1)dx,
\end{align*}
which implies
\begin{align}
    |\Delta(\tau)| & \leq  (F_1(\ucb) - F_1(\tau))\int_{\tau}^{\ucb}(1 - F_2(x))dx + 
    (F_1(\tau) - F_1(\lcb))\int_{\lcb}^\tau (1 - F_2(x))dx\\
    & = (F_1(\ucb) - F_1(\tau))(g(\tau) - g(\ucb)) + (F_1(\tau) - F_1(\lcb))(g(\lcb) - g(\tau))\\
    & \leq (F_1(\ucb) - F_1(\lcb))(g(\lcb) - g(\ucb))\\
    & \leq  |\delta(\ucb)| + |\delta(\lcb)| ~\leq ~ 32 \epsilon,\label{eq:pandoratwoderivation}\end{align}
where the last equality follows from the bound 
 $|\delta(\tau)| \leq 16 \epsilon$ for all $\tau \in [\lcb, \ucb]$ in \Cref{mainthmpan}.
 
 Now notice that the estimate $\hat \Delta(\tau)$ we have based on our initial estimates $\hat F_1$ and $\hat F_2$ is unbiased i.e. 
$\Ex{ \hat \Delta(\tau) }= \Delta(\tau) \leq 32\epsilon$. This simply follows from exchanging interval integration and expectation combined with the independence of $X_1$ and $X_2$:
\begin{align}
     \Ex {\int_{\tau}^u (\hat F_1(u) - \hat F_1(x))(\hat F_2(x) - 1)dx} & = 
    \int_{\tau}^u (\Ex{ \hat F_1(u)} - \Ex{ \hat F_1(x)})(\Ex{ \hat F_2(x)} - 1)dx \notag\\
    &= \int_{\tau}^u ( F_1(u) - F_1(x))(F_2(x) - 1)dx.\label{eq:expzero}
\end{align}
Now let us define $\hat \Delta(\tau)$ per sample $i$ for each initial sample. We run $N = C\cdot \frac{\log T}{\epsilon}$ samples for $C = 1000$. Then for $i \in [N]$, we define
\begin{align*}
    \hat \Delta^{(k)}(\tau)= \int_{\tau}^u (\hat F^{(k)}_1(u) - \hat F^{(k)}_1(x))(\hat F^{(k)}_2(x) - 1)dx - 
    \int_{\ell}^\tau (\hat F^{(k)}_1(x) - \hat F^{(k)}_1(\ell))(\hat F^{(k)}_2(x) - 1)dx,
\end{align*}
where $\hat F^{(k)}_1(.)$ and $\hat F^{(k)}_2(.)$ are simple threshold functions at the $i$th initial sample, which are estimates for the densities $F_1$ and $F_2$ respectively. Note that
\[
    \hat \Delta(\tau) = \frac{1}{N} \sum_{k\in [N]} \hat \Delta^{(k)}(\tau).
\]
Now again similar to \eqref{eq:expzero} we have
\begin{align}
    \Ex{ \Delta^{(k)}(\tau)} ~=~ \Ex{ \Delta(\tau)} ~\leq~ 32 \epsilon.\label{eq:expbound}
\end{align}Moreover, note that the random variable $\hat \Delta^{(i)}(\tau)$ is bounded by one since
\begin{align}
    |\hat \Delta^{(k)}(\tau)| & \leq \int_{\tau}^u \Big|(\hat F^{(k)}_1(u) - \hat F^{(k)}_1(x))(\hat F^{(k)}_2(x) - 1)\Big|dx - 
    \int_{\ell}^\tau \Big|(\hat F^{(k)}_1(x) + \hat F^{(k)}_1(\ell))(\hat F^{(k)}_2(x) - 1)\Big|dx \notag \\
    & \leq \int_{\tau}^u 1dx + 
    \int_{\ell}^\tau 1dx ~=~ \ucb - \lcb ~\leq~ 1.\label{eq:avali}
\end{align}
Combining Equations~\eqref{eq:expbound} and~\eqref{eq:avali}, we have the variance bound:
\begin{align}
    \textsf{Var}[ \hat \Delta(\tau) ] ~\leq~ \Ex{ \hat \Delta(\tau)^2 } ~\leq~ \Ex{ \hat \Delta(\tau) }~\leq~ 32 \epsilon.\label{eq:dovomi}
\end{align}

Now, combining \eqref{eq:avali} and \eqref{eq:dovomi}, we can apply
Bernstein inequality for the random variables $\hat \Delta^{(i)}(\tau)$. We have: 
\begin{align}
\Pr{|\hat \Delta(\tau) - \Delta(\tau)| \geq \epsilon} \leq 2\exp(-\frac{N\epsilon^2}{2\textsf{Var}[ \hat \Delta(\tau) ] + \frac{2}{3} \epsilon}) = 2 T^{-\frac{3C}{194}}.\label{eq:bernsteinbound}
\end{align}
 Therefore, $|\hat \Delta(\tau) - \Delta(\tau)| < \epsilon$ holds with probability $1 - T^{-12}$ when $C = 1000$.

Notice that we only prove the bound for a single $\tau$. To strengthen this concentration bound to hold simultaneously for all $\tau$ and $[\lcb, \ucb]$, we take a union over appropriate cover sets. In particular, consider $\mathcal C$ as a discretization of the interval $[\lcb,\ucb]$ with accuracy $1/T$. To be able to exploit the high probability argument for the elements inside the cover for the ones outside, we need to show that $\Delta$ is Lipschitz with respect to $\tau$, $\ucb$ and $\lcb$.

For $\Delta$ function, we have $|\Delta(\tau) - \Delta(\tau')| \leq 2|\tau - \tau'|$ since
\begin{align*}
    \Big|\Delta(\tau) - \Delta(\tau')\Big| & = \Big|\int_{\tau}^{\tau'} (F_1(u) - F_1(x))(F_2(x) - 1)dx\Big| + \Big|\int_{\tau}^{\tau'} (F_1(u) - F_1(x))(F_2(x) - 1)dx\Big| \\
    & \leq 2|\tau - \tau'|.
\end{align*}
It is easy to see that the same Lipschitz bound also holds for $\hat \Delta$. 

Now for an arbitrary $\tau' \in [\lcb,\ucb]$, if we consider the closest $\tau$ to it in $\mathcal C$, we have $|\tau' - \tau| \leq \frac{1}{T}$. Then, using the Lipschitz constant of $\Delta$ and $\hat \Delta$:
\begin{align}
    \Big| \Delta(\tau) - \Delta(\tau')\Big| ~\leq \frac{2}{T} \qquad \text{and} \qquad \Big| \hat \Delta(\tau) - \hat \Delta(\tau')\Big| \leq \frac{2}{T}.~\label{eq:converguarantees}
\end{align}

Now we apply a union bound over the events $|\hat \Delta(\tau) - \Delta(\tau)| < \epsilon$ for all $\tau \in \mathcal C$. Since running over all possibilities of $|C| \leq T$, after taking a union bound we know that all of these events happen simultaneously with probability at least $1 - T^{-11}$.  We then have for $\tau'$ and its closest element $\tau$ in $\mathcal C$:
\begin{align}
      |\Delta(\tau) - \Delta(\tau')| + |\hat \Delta(\tau) - \hat \Delta(\tau')| ~\leq~ 4|\tau - \tau'| ~\leq~ \frac{4}{T}.\label{eq:coverbounds}
\end{align}
We simply upper-bound $\frac{4}{T}$ by $\epsilon$. This must be true because $\epsilon \geq T^{-\frac{1}{2}} = \omega(\frac{1}{T})$. Then, combining the bound in~\eqref{eq:coverbounds} with~\eqref{eq:bernsteinbound} implies
$|\hat \Delta(\tau') - \Delta(\tau')| \leq 2\epsilon$ holds with probability $1 - T^{-11}$ for all $\tau \in [\lcb, \ucb]$.

Next, we bound $|\hat{R}_\lcb - R(\lcb)|$ and $|\hat{R}_\ucb - R(\ucb)|$. For $|\hat{R}_\lcb - R(\lcb)|$, Notice that $\hat{R}_\lcb$ is an estimate of $R(\lcb)$ with $N = C\cdot \frac{\log T}{\epsilon^2}$ samples, and the reward of each sample falls in $[-1, 1]$. By Hoeffding's Inequality (\Cref{Hoeffding}), the probability that $|\hat{R}_\lcb - R(\lcb)| > \epsilon$ is bounded by $2\exp(-2N\epsilon^2/4) = 2T^{-C/2}$. So, with probability $1 - T^{-11}$ $|\hat{R}_\lcb - R(\lcb)| \leq  \epsilon$ when $C>100$. The bound for $|\hat{R}_\ucb - R(\ucb)|$ is identical. 
Finally, combining three parts with union bound finishes the proof.
\end{proof}

Finally, we have the tools to prove \Cref{mainthmpan}:

\begin{proof}[Proof of \Cref{mainthmpan}]
We will assume that $|\hat{\delta}(\tau) - \delta(\tau)| \leq 4\epsilon$, which is true with probability $1 - T^{-10}$ by \Cref{AccBoundPan}.

Observe that $\hat{\delta}(\tau)$ is a monotone increasing function, because $\hat{\delta}'(\tau) = \hat{\Delta}'(\tau) = (\hat{F}_1(\ucb) - \hat{F}_1(\lcb))(1 - \hat{F}_2(\tau) \geq 0$. Therefore, according to the definition of $\lcb'$ and $\ucb'$, we have $[\lcb', \ucb'] = \{\tau \in [\lcb, \ucb]: |\hat{\delta}(\tau)| \leq 4\epsilon\}$. Now, we can use this property to prove two statements separately:

For the statement that $\tau^* \in [\lcb', \ucb']$, notice that $\delta(\tau^*) = 0$. According to \Cref{AccBoundPan}, $|\hat{\delta}(\tau^*)| \leq 4\epsilon$. Then, since $\tau^* \in [\lcb, \ucb]$ and $|\hat{\delta}(\tau^*)| \leq 4\epsilon$, there must be $\tau^* \in [\lcb', \ucb']$, because $[\lcb', \ucb'] = \{\tau \in [\lcb, \ucb]: |\hat{\delta}(\tau)| \leq 4\epsilon\}$. 

Next, we prove that $|\delta(\tau)| \leq 8\epsilon$ for all $\tau \in [\lcb, \ucb]$. This is true because $[\lcb', \ucb'] = \{\tau \in [\lcb, \ucb]: |\hat{\delta}(\tau)| \leq 4\epsilon\}$, and we have $|\hat{\delta}(\tau) - \delta(\tau)| \leq 4\epsilon$ from \Cref{AccBoundPan}. Therefore, $|\delta(\tau)| \leq |\hat{\delta}(\tau)| + |\hat{\delta}(\tau) - \delta(\tau)| \leq 8\epsilon$ for all $\tau \in [\lcb', \ucb']$.

Finally, the bound $R(\tau^*) - R(\tau) \leq 8\epsilon$ directly follows \Cref{LossBoundPan} and  that $|\delta(\tau)| \leq 8\epsilon$.
\end{proof}

\subsubsection{Proof of \Cref{PB2Regret}}
 To prove \Cref{PB2Regret}, we need to first give an initialization algorithm such that its output should satisfy the conditions listed in \Cref{mainthmpan}. Formally, we have the following lemma:

\begin{Lemma}
\label{PB2Init}
After running no more than $1000 \sqrt{T}\log T$ samples from $\D_1$ and $\D_2$, with probability $1 - T^{-10}$ we can output an \emph{initial interval} $[\lcb, \ucb]$ that satisfies $|g(\tau)| \leq T^{-1/4}$ and $ \tau^* \in [\lcb, \ucb]$.

\end{Lemma}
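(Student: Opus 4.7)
The plan is to localize the unique zero $\tau^*$ of $g$ to within $T^{-1/4}/2$ of some $\hat\tau$; because $g'(v) = F_2(v) - 1 \in [-1,0]$, $g$ is $1$-Lipschitz, so outputting $[\lcb,\ucb] := [\max\{0,\hat\tau - T^{-1/4}/2\}, \min\{1,\hat\tau + T^{-1/4}/2\}]$ automatically gives $\tau^* \in [\lcb,\ucb]$ together with $|g(\tau)| = |g(\tau) - g(\tau^*)| \leq |\tau-\tau^*| \leq T^{-1/4}$ throughout the interval.

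I split the $1000\sqrt{T}\log T$-sample budget in half. First I play $\tau=0$ for $500\sqrt{T}\log T$ rounds: since $X_1 \geq 0 = \tau$ always, we stop at box~1 and the utility $X_1-c_1$ recovers each $X_1$ exactly, giving i.i.d.\ samples of $\D_1$; by the DKW inequality (\Cref{DKW}) the empirical CDF $\hat F_1$ satisfies $\sup_x|\hat F_1(x) - F_1(x)| \leq T^{-1/4}/C_0$ for a large constant $C_0$, with probability at least $1-T^{-11}$. Then I play $\tau=1$ for another $500\sqrt{T}\log T$ rounds: both boxes are always opened, so the utility $\max(X_1,X_2) - c_1 - c_2$ recovers $M := \max(X_1,X_2)$, yielding $\hat F_M$ with the same DKW accuracy; by independence of $X_1$ and $X_2$, the target satisfies $F_M = F_1 F_2$.

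From these I form the regularized estimator $\hat F_2(x) := \min\{1, \hat F_M(x)/\max\{\hat F_1(x), T^{-1/4}\}\}$ and define $\hat g(v) := -c_2 + (1-v) - \int_v^1 \hat F_2(x)\,dx$. I search a $T^{-1/2}$-spaced grid on $[0,1]$ for $\hat\tau$ minimizing $|\hat g|$ (no extra rounds needed) and output $[\lcb, \ucb]$ around $\hat\tau$ as above. Correctness reduces to the uniform accuracy claim $\sup_v |\hat g(v) - g(v)| \leq T^{-1/4}/4$, since this forces $|\hat g(\tau^*)| \leq T^{-1/4}/4$ and the monotonicity of $\hat g$ then pins $\hat\tau$ within $T^{-1/4}/2$ of $\tau^*$.

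The main obstacle is this uniform accuracy bound, which is nontrivial because the pointwise error of $\hat F_M/\hat F_1$ diverges where $F_1$ is small. I plan to split $\int_v^1(\hat F_2(x) - F_2(x))\,dx$ at $y^* := F_1^{-1}(T^{-1/4})$: on $[\max\{v,y^*\},1]$ the ratio estimator is stable, the pointwise error is $O(T^{-1/4}/F_1(x))$, and the change of variables $u = F_1(x)$ reduces the integral to one weighted by $du/u$, bounded by $O(T^{-1/4}\log T / C_0)$ and absorbed by enlarging $C_0$; on $[v,y^*]$ (nontrivial only when $v < y^*$) the regularized estimator is $0$ but $F_2(x) \leq F_M(x)/F_1(v)$, so the total error contribution is $O(T^{-1/4}/F_1(v))$, handled in the regime $F_1(v) \gtrsim T^{-1/8}$ and in the complementary regime dealt with by observing that $v$ is then so close to $0$ that $\tau^*$ can be located by a direct sign check on $\hat g(0)$ and outputting $[0, 2T^{-1/4}]$. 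Summing the contributions yields the desired $T^{-1/4}/4$ bound and completes the proof.
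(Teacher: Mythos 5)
Your proposal departs from the paper's proof in a way that is well-motivated but, as executed, does not close. The paper's own proof simply declares that one can ``run $1000\sqrt{T}\log T$ extra samples for $X_2$,'' applies DKW to obtain $\hat F_2$ directly, and outputs $[\ell,u]:=\{\tau:|\hat g(\tau)|\leq \tfrac12 T^{-1/4}\}$; it never confronts the fact that in the fixed-order two-box game with utility-only feedback, playing $\tau=1$ reveals only $M:=\max(X_1,X_2)$, not $X_2$. You correctly flag this and attempt to recover $F_2=F_M/F_1$, which is a more honest approach. But the recovery has genuine gaps. The change-of-variables step is wrong: under $u=F_1(x)$ one has $dx = du/f_1(x)$, so $\int_{y^*}^1 \frac{dx}{F_1(x)}$ becomes $\int \frac{du}{u\, f_1(F_1^{-1}(u))}$, not $\int\frac{du}{u}$, and the extra $1/f_1$ is unbounded. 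If $F_1$ is essentially constant at $T^{-1/4}$ over a macroscopic interval $[a,b]$, then $\int_a^b \frac{dx}{F_1(x)} = (b-a)T^{1/4}$, so multiplying by the pointwise accuracy $T^{-1/4}/C_0$ yields an $\Omega(1)$ error, not $O(T^{-1/4}\log T)$. Your ``complementary regime'' fallback, $F_1(v)<T^{-1/8}\Rightarrow v$ near $0$, is simply false---$F_1$ can remain below $T^{-1/8}$ up to $1-o(1)$. And in that regime your regularized estimator equals $\min\{1, \hat F_M(x)\,T^{1/4}\}$, not $0$, so the pointwise error on $[v,y^*]$ is only bounded by $1$ and its contribution can be $\Omega(y^*-v)$.

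There is also a structural flaw in the output. You return a width-$T^{-1/4}$ window around a point estimate $\hat\tau$ and argue that ``monotonicity of $\hat g$ pins $\hat\tau$ within $T^{-1/4}/2$ of $\tau^*$.'' That does not follow: $g$ is monotone non-increasing but can be nearly flat near $\tau^*$ (when $F_2(\tau^*)\approx 1$), so $|\hat g(\hat\tau)|$ being small does not locate $\hat\tau$ near $\tau^*$, and $\tau^*$ need not lie in your narrow window. The paper instead outputs the entire sublevel set $\{\tau:|\hat g(\tau)|\leq\tfrac12 T^{-1/4}\}$---an interval by monotonicity, containing $\tau^*$ since $g(\tau^*)=0$, with $|g|\leq T^{-1/4}$ throughout by the triangle inequality---without any width claim. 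Finally, the deeper concern you implicitly raise is real: where $F_1$ is tiny, $F_M=F_1F_2$ carries essentially no information about $F_2$, so $g$ is not learnable there from utility feedback. The paper sidesteps this by assuming direct access to samples of $\D_2$ (consistent with the lemma's phrasing ``samples from $\D_1$ and $\D_2$''), and under that assumption its short argument is correct; your route, which tries to do without that assumption, does not currently establish the claim.
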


\begin{proof}
We first run $1000 \sqrt{T}\log T$ extra samples for $X_2$ and calculate an estimate $\hat F_2(x)$. We can show that $|\hat F_2(x) - F_2(x)| \leq \frac{1}{2}T^{-\frac{1}{4}}$ with probability $1 - T^{-10}$: After running $N = C\cdot \sqrt{T} \log T$ samples, the DKW inequality (\Cref{DKW}) shows that $\Pr{|\hat F_2(x) - F_2(x)| > \varepsilon = \frac{1}{2}T^{-\frac{1}{4}}} \leq 2\exp(-2N\varepsilon^2) = 2T^{-C/2}$. Then, with probability $1 - T^{-10}$, we have $|\hat F_2(x) - F_2(x)| \leq \frac{1}{2}T^{-\frac{1}{4}}$ simultaneously holds for all $x \in [0, 1]$ when $C>100$. In the following proof, we assume this accuracy bound always holds. 

Next, we calculate $\hat g(\tau)$ by replacing $F_2(x)$ with $\hat F_2(x)$ in \eqref{eq:gainPBTwo}. When $|\hat F_2(x) - F_2(x)| \leq \frac{1}{2}T^{-\frac{1}{4}}$ holds simultaneously for all $x \in [0, 1]$, we have $|\hat g(\tau) - g(\tau)| \leq \int_\tau^1 |\hat F_2(x) - F_2(x)| \leq \frac{1}{2}T^{-\frac{1}{4}}$. Then, we let $[\lcb, \ucb] := \{\tau: |\hat g(\tau)| \leq \frac{1}{2} T^{-\frac{1}{4}}\}$. Since $\hat g'(\tau) = \hat{F}_2(\tau) - 1 \leq 0$, function $\hat g(\tau)$ is a non-increasing. So, the set $\{\tau: |\hat g(\tau)| \leq \frac{1}{2} T^{-\frac{1}{4}}\}$ must form an interval. Besides, notice that $g(\tau^*) = 0$, which means $|\hat g(\tau^*)| \leq \frac{1}{2} T^{-\frac{1}{4}}$, so we must have $\tau^* \in [\lcb, \ucb]$. Furthermore, for every $\tau \in [\lcb, \ucb]$, $|g(\tau)| \leq |\hat g(\tau)| + |\hat g(\tau) - g(\tau)| \leq T^{-\frac{1}{4}}$, which finishes the proof.
\end{proof}

Now, we are ready to prove \Cref{PB2Regret}:

\begin{proof}[Proof of \Cref{PB2Regret}]

For the core part of the algorithm, we run \Cref{Doubling} and then use \Cref{DoublingBound} to bound the regret. To run \Cref{Doubling}, we let the constraints to mean that the threshold played in each round is inside the interval $[\lcb,\ucb]$ given by \Cref{ISAPan}. Besides, we require the conditions listed in \Cref{mainthmpan} hold (with high probability). We discuss them separately:

\begin{itemize}
    \item $|g(\tau)| \leq T^{-\frac{1}{4}}$ for all $\tau \in [\lcb, \ucb]$: This is guaranteed by \Cref{PB2Init}.
    \item $\tau^* \in [\lcb, \ucb]$: For the first phase, this is guaranteed by \Cref{PB2Init}. Starting from the second phase, this is from \Cref{mainthmpan} of the previous phase.
    \item $|\delta(\tau)| \leq 16\epsilon$: For the first phase, this is true because $\epsilon_1 = 1$. Starting from the second phase, this is from \Cref{mainthmpan} of the previous phase. Notice that the statement in \Cref{mainthmpan} is a little bit different: It guarantees that $|\delta(\tau)| \leq 8\epsilon$ with respect to the $[\lcb, \ucb]$ and $\epsilon$ from the previous phase. When switching to the new phase, notice that $F_2(\ucb') - F_2(\lcb') \leq F_2(\ucb) - F_2(\lcb)$, which means $|\delta(\tau)|$ drops when switching to the new phases. Besides, the parameter $\epsilon_{new}$ in the new phase is exactly $\frac{1}{2}\epsilon_{old}$. Combining these two differences shows that $|\delta(\tau)| \leq 16\epsilon$ holds in the new phase.
\end{itemize}

Therefore, \Cref{ISAPan} satisfies algorithm $\alg$ in \Cref{DoublingBound}. Applying \Cref{DoublingBound} gives the $O(\sqrt{T} \log T)$ regret bound.

Besides, we also run samples for initialization and constructing CDF estimates for \Cref{ISAPan}. These are not coverd by \Cref{DoublingBound}. For the initialization, \Cref{PB2Init} states that $\Theta(\sqrt{T} \log T)$ rounds are sufficient. So the regret from the initialization is $O(\sqrt{T} \log T)$. For constructing $\hat F_1(x)$ and $\hat F_2(x)$, assume we run $k$ phases, then the total number of samples is 
\begin{align*}
    \sum_{i = 1}^k \Theta(\frac{\log T}{\epsilon_i}) = O(\sqrt{T} \log T).
\end{align*}
Combining three parts finishes the proof.

\end{proof}

%
%

\section{Missing Proofs from Section 4}\label{sec:appendixSec4}

\subsection{Proof of \Cref{PBInit}}\label{sec:appendixSec4Lemma1}

\begin{proof}

We first prove the lemma for a single $i$. For $[\lcb_i, \ucb_i]$, we run $C \cdot \sqrt{T}\log T$ extra samples for $X_i$ with $C = 1000$, and calculate an estimate $\hat F_i(x)$. We can show that $|\hat F_i(x) - F_i(x)| \leq \frac{1}{2}T^{-\frac{1}{4}}$ with probability $1 - T^{-11}$: After running $N = C\cdot \sqrt{T} \log T$ samples, the DKW inequality (\Cref{DKW}) shows that $\Pr{|\hat F_i(x) - F_i(x)| > \varepsilon = \frac{1}{2}T^{-\frac{1}{4}}}\leq2\exp(-2N\varepsilon^2) = 2T^{-C/2}$. Then with probability $1 - T^{-11}$, we have $|\hat F_i(x) - F_i(x)| \leq \frac{1}{2}T^{-\frac{1}{4}}$ holds for every $x \in [0, 1]$ when $C > 100$. In the following proof, we assume this accuracy bound always holds.  By the union bound over all $i \in [n]$, the whole proof succeeds with probability $1 - T^{-10}$.

Next, we calculate $\hat g_i(\tau)$ by replacing $F_i(x)$ with $\hat F_i(x)$ in \eqref{eq:gainPBGen}. When $|\hat F_i(x) - F_i(x)| \leq T^{-\frac{1}{4}}$ holds for all $x \in [0, 1]$, we have $|\hat g_i(\tau) - g_i(\tau)| \leq \int_\tau^1 |\hat F_i(x) - F_i(x)| \leq \frac{1}{2}T^{-\frac{1}{4}}$. Then, we let $[\lcb_i, \ucb_i] := \{\tau: |\hat g_i(\tau)| \leq \frac{1}{2} T^{-\frac{1}{4}}\}$. Since $\hat g'(\tau) = \hat{F}_i(\tau) - 1 \leq 0$, which means $\hat g_i(\tau)$ is a decreasing function, then the set $\{\tau: |\hat g_i(\tau)| \leq \frac{1}{2} T^{-\frac{1}{4}}\}$ must form an interval. Besides, notice that $g_i(\tau^*) = 0$, which means $|\hat g_i(\tau^*)| \leq \frac{1}{2} T^{-\frac{1}{4}}$, so there must be $\tau^*_i = \sigma_i \in [\lcb_i, \ucb_i]$. Furthermore, for every $\tau \in [\lcb_i, \ucb_i]$, $|g_i(\tau)| \leq |\hat g_i(\tau)| + |\hat g_i(\tau) - g_i(\tau)| \leq T^{-\frac{1}{4}}$.

Finally, combining the statements for all $n$ intervals finishes the proof. 
\end{proof}

\subsection{Proof of \Cref{PBGenDisAcc}}\label{sec:appendixSec4Lemma2}
\begin{proof}
We first show that $|\hat{F}_i(x) - F_i(x)| \leq \frac{\sqrt{\epsilon}}{2n}$ with probability $1 - T^{-13}$ with $N = C\cdot \frac{n^2\log T}{\epsilon}$ samples, where $C$ is set to be $1000$. Using DKW inequality (\Cref{DKW}), we have $\Pr{|\hat{F}_i(x) - F_i(x)| >  \frac{\sqrt{\epsilon}}{2n}} \leq 2\exp(-2N\frac{\epsilon}{4n^2}) = 2T^{-C/4}$. So the bound holds with probability $1 - T^{-13}$ when $C = 1000$. By the union bound, with probability $1 - T^{-12}$ we have $|\hat{F}_i(x) - F_i(x)| \leq \frac{\sqrt{\epsilon}}{2n}$ holds for every $i \in [n]$. Then, for the accuracy of $\prod_{i \in S} F_i(x)$, we have $ \big((1 - \frac{\sqrt{\epsilon}}{2n})^n - 1\big) \leq \prod_{i \in S} \hat{F}_i(x) -\prod_{i \in S} F_i(x) \leq \big((1 + \frac{\sqrt{\epsilon}}{2n})^n - 1\big)$. For the lower bound, we have $(1 - \frac{\sqrt{\epsilon}}{2n})^n - 1 \geq 1 - \frac{\sqrt{\epsilon}}{2} - 1 > -\sqrt{\epsilon}$. For the upper bound, we have $(1 + \frac{\sqrt{\epsilon}}{2n})^n - 1 \leq \exp(\frac{\sqrt{\epsilon}}{2n} \cdot n) - 1 \leq 1 + 2\cdot \frac{\sqrt{\epsilon}}{2}  - 1 = \sqrt{\epsilon}$. Combining two bounds finishes the proof.
\end{proof}

\subsection{Proof of \Cref{AccBoundPBGen}}\label{sec:appendixSec4Lemma3}

\begin{proof}
 
 Since $\delta_i(\tau) = \Delta_i(\tau) - (R_\ucb - R_\lcb)$, there are three parts in $\delta_i(\tau)$. We show that the accuracy of each part is bounded by $\frac{\epsilon}{3}$ with probability $1 - T^{13}$, then taking a union bound over three accuracy bounds gives \Cref{AccBoundPBGen}.

 First, similar to the derivation in Equation~\eqref{eq:pandoratwoderivation} we bound the magnitude of the $\Delta_i$ function:
 \begin{align}
    |\Delta_i(\tau)| & ~\leq~  (F_{\pi,i}(\ucb) - F_{\pi,i}(\tau))\int_{\tau}^{\ucb}(1 - F_i(x))dx + 
    (F_{\pi,i}(\tau) - F_{\pi,i}(\lcb))\int_{\lcb}^\tau (1 - F_i(x))dx\notag\\
    & ~=~ (F_{\pi,i}(\ucb) - F_{\pi,i}(\tau))(g_i(\tau) - g_i(\ucb)) + (F_{\pi,i}(\tau) - F_{\pi,i}(\lcb))(g_i(\lcb) - g_i(\tau))\notag\\
    & ~\leq~ (F_{\pi,i}(\ucb) - F_{\pi,i}(\lcb))(g_i(\lcb) - g_i(\ucb)) \notag\\
    &~\leq~ |\delta_{\pi, \ucb, \lcb, i}(\lcb)| + |\delta_{\pi, \ucb, \lcb, i}(\ucb)| ~\leq~ 12\epsilon,\label{eq:newbound}
\end{align}
where we use the bound $|\delta_{\pi', \ucb', \lcb', i}(\tau)| \leq 6\epsilon$ in \Cref{MoveOP}.

Next, we hope to propose an estimator $\hat \Delta_i^{(k)}(\tau)$ for the $\Delta_i$ function which uses  $N = C \cdot \frac{\log T}{\epsilon}$ samples for $C = 10^5$. For $k \in [N]$, define
\begin{align*}
    \hat \Delta_i^{(k)}(\tau)= \int_{\tau}^u (\hat F^{(k)}_{\pi,i}(u) - \hat F^{(k)}_{\pi,i}(x))(\hat F^{(k)}_i(x) - 1)dx - 
    \int_{\ell}^\tau (\hat F^{(k)}_{\pi,i}(x) - \hat F^{(k)}_{\pi,i}(\ell))(\hat F^{(k)}_i(x) - 1)dx,
\end{align*}
where $\hat F^{(k)}_{\pi,i}(.)$ and $\hat F^{(k)}_i(.)$ are simple threshold functions at the $i$th initial sample, which are estimates for the densities $F_{\pi,i}$ and $F_i$, respectively. This definition  implies
$    \hat \Delta_i(\tau) = \frac{1}{N} \sum_{k\in [N]} \hat \Delta_i^{(k)}(\tau)$,
and Equation~\eqref{eq:newbound} implies
\begin{align}
    \Ex{\hat \Delta_i^{(k)}(\tau) }= \Delta_i(\tau) ~\leq~ 12\epsilon.\label{eq:expbound2}
\end{align}

Now it is easy to see that $\hat F^{(k)}_{\pi,i}(x) - \hat F^{(k)}_{\pi,i}(y)$ is a Bernoulli random variable which are one if and only if the maximum value obtained from $X_{\pi(1)},\dots,X_{\pi(\pi^{-1}(i)-1)}$ is in~$[\lcb_i, \ucb_i]$. In particular, this implies that $\hat \Delta_i^{(k)}(\tau)$ is bounded by $1$ since
\begin{align}
    |\hat \Delta_i^{(k)}(\tau)| & \leq \int_{\tau}^u \Big|(\hat F^{(k)}_{\pi,i}(u) - \hat F^{(k)}_{\pi,i}(x))(\hat F^{(k)}_i(x) - 1)\Big|dx - 
    \int_{\ell}^\tau \Big|(\hat F^{(k)}_{\pi,i}(x) - \hat F^{(k)}_{\pi,i}(\ell))(\hat F^{(k)}_i(x) - 1)\Big|dx \notag \\
    & \leq \int_{\tau}^u 1dx + 
    \int_{\ell}^\tau 1dx ~=~ \ucb - \lcb ~\leq~ 1.\label{eq:avali2}
\end{align}
Combining Equations~\eqref{eq:expbound2} and~\eqref{eq:avali2}, we have the variance bound:
\begin{align}
    \textsf{Var}[ \hat\Delta_i(\tau) ] ~\leq~ \Ex{ \hat\Delta_i(\tau)^2 } ~\leq~ \Ex{ \hat\Delta_i(\tau) }~\leq~ 12\epsilon.\label{eq:dovomi2}
\end{align}

Hence, using Bernstein inequality, we have
\begin{align}
\Pr{|\hat \Delta_i(\tau) - \Delta_i(\tau)| \geq  \frac{\epsilon}{12}} ~\leq~ 2\exp(-\frac{N\epsilon^2/144}{2\textsf{Var}[ \hat \Delta_i(\tau) ] + \frac{2}{3} \frac{\epsilon}{12}}) ~=~ 2 T^{-\frac{C}{3464}}. \label{eq:bernstein2}
\end{align}
 Therefore, $|\hat \Delta_i(\tau) - \Delta_i(\tau)| \leq \frac{\epsilon}{12}$ holds with probability $1 - T^{-14}$ when $C = 10^5$.

 The bound above is only for a single $\tau$. To give the bound for a whole interval, we discretize $[\lcb_i, \ucb_i]$ uniformly into a discrete set $\mathcal{C}$ and make sure that each pair of adjacent $\tau, \tau' \in \mathcal{C}$ follows $|\tau - \tau'| \leq \frac{1}{T}$. Then, there must be $|\mathcal{C}| \leq T$ and the union bound implies $|\hat \Delta_i(\tau) - \Delta_i(\tau)| \leq \frac{\epsilon}{12}$ holds with probability $1 - T^{-13}$ for all $\tau \in \mathcal{C}$.

 Next, we bound the Lipschitz constant of $\Delta_i$ (and similarly $\hat \Delta_i$):
\begin{align*}
    \Big|\Delta_i(\tau) - \Delta_i(\tau')\Big| & = \Big|\int_{\tau}^{\tau'} (F_{\pi,i}(u) - F_{\pi,i}(x))(F_i(x) - 1)dx\Big| + \Big|\int_{\tau}^{\tau'} (F_{\pi,i}(u) - F_{\pi,i}(x))(F_i(x) - 1)dx\Big| \\
    & \leq 2|\tau - \tau'|.
\end{align*}
 Finally, for every $\tau \in [\lcb_i, \ucb_i]$, let $\tau'$ be the closest value in $\mathcal{C}$. Then, we have:
 \begin{align*}
     \textstyle \Big|\hat \Delta_i(\tau) - \Delta_i(\tau)\Big| \leq \Big|\Delta_i(\tau) - \Delta_i(\tau')\Big| + \Big|\hat \Delta_i(\tau') - \Delta_i(\tau')\Big| + \Big|\hat \Delta_i(\tau') - \hat \Delta_i(\tau)\Big| \leq \frac{\epsilon}{12} + \frac{4}{T} \leq  \frac{\epsilon}{6},
 \end{align*}
 where the last inequality is true because $\epsilon > T^{-\frac{1}{2}} = \omega(\frac{1}{T})$. Therefore, with probability $1 - T^{-13}$, we have $\Big|\hat \Delta_i(\tau) - \Delta_i(\tau)\Big| \leq \frac{\epsilon}{6}$  for all $\tau \in [\lcb_i, \ucb_i]$ simultaneously.

Next, we use Hoeffding's Inequality (\Cref{Hoeffding}) to bound the accuracy of $|R_\lcb - \hat{R}_\lcb|$. In each round, the reward falls in [-0.5,0.5]. Then, after running $N = C\cdot \epsilon^{-2}\log T$~ samples, we have $\Pr{|R_\lcb - \hat{R}_\lcb| >  \frac{\epsilon}{6}} \leq 2\exp(-2N\epsilon^2/36) = 2 T^{-C/18} $. Therefore, $|R_\lcb - \hat{R}_\lcb| \leq \frac{\epsilon}{6}$ with probability $1 - T^{-13}$ when $C > 1000$. Besides, the proof for $|R_\ucb - \hat{R}_\ucb|$ is identical. Combining three parts with union bound finishes the proof.
\end{proof}

\subsection{Proof of \Cref{Lemma:Clever}}\label{sec:appendixSec4Lemma4}

In this section, we show that \Cref{ApproxClever}  finds an approximately clever threshold setting. We first  introduce the following lemma:

\begin{algorithm}[tbh]
\caption{Finding Approximately Clever Threshold}
\label{ApproxClever}
\KwIn{$(I, S)$, $m$, $i$, $\hat{F}_1(x),...,\hat{F}_n(x)$}
\For{$P \subseteq [n]$}
{
\lIf{$\exists k: (k, i) \in S \land k \notin P$ or $\exists k: (i, k) \in S \land k \in P$ or $\exists k,j: (k, j) \in S \land k \notin P \land j \in P$
}
{
Skip this $P$
}
For $k \in P$, let $\tau_k = \ucb_k$ \\
For $k \in [n] \setminus (T \cup \{i\})$, let $\tau_k = \lcb_k$ \\
Let $\ucb_T = \min\{\ucb_i, \tau_{k: k \in T}\}$, $\lcb_T = \max\{\lcb_i, \tau_{k: k \notin (T \cup \{i\})}\}$ \\
Set partial setting $\pi_T$ be: Let $\tau_i \in [\lcb_T, \ucb_T]$. $\pi_T$ sorts the thresholds in a decreasing order. Break the ties according to the constraints in $S$. \\
Let $\hat{F}_{\pi_T, i}(x) = \prod_{k \in T} \hat{F}_k(x)$. \\
Calculate $q_T := \hat{F}_{\pi_T, i}(\ucb_T) - \hat{F}_{\pi_T, i}(\lcb_T)$
}
Let $T^* = \arg \max q_T$. \\
\KwOut{$\pi_{T^*}, \lcb_{T^*}, \ucb_{T^*}, \hat{F}_{\pi_{T^*}, i}$.}
\end{algorithm}

\begin{Lemma}
\Cref{ApproxClever} calculates a \textit{clever} threshold setting, up to an $4\sqrt{\epsilon}$ additive error. The running time of \Cref{ApproxClever} is $O(n\cdot 2^n)$.

\end{Lemma}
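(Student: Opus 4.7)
The plan is to verify the running time by direct inspection and to establish the approximation guarantee in two stages: first, arguing that the algorithm's enumeration captures every valid clever policy, and second, controlling the error introduced by using the CDF estimates $\hat{F}_k$ in place of the true $F_k$.

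First I would handle the running time. The outer loop iterates over all $2^n$ subsets $P \subseteq [n]$, and for each $P$ the validity check, the threshold assignment, the computation of $\lcb_T$ and $\ucb_T$, the product $\hat{F}_{\pi_T, i}(x) = \prod_{k \in T} \hat{F}_k(x)$, and the quantity $\hat{q}_T$ all take $O(n)$ time, giving $O(n \cdot 2^n)$ overall.

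Next I would argue structural completeness of the enumeration. Any valid partial policy $\pi'$ with boundaries $\lcb', \ucb'$ is fully determined (for the purpose of evaluating $F_{\pi', i}(\ucb') - F_{\pi', i}(\lcb')$) by the set $T$ of boxes placed strictly before $X_i$ together with the thresholds assigned to them. Given $T$, the objective is maximized by pushing each $\tau_k$ for $k \in T$ up to $\ucb_k$ and each $\tau_k$ for $k \notin T \cup \{i\}$ down to $\lcb_k$, and then taking $\ucb = \min\{\ucb_i, \min_{k \in T} \tau_k\}$ and $\lcb = \max\{\lcb_i, \max_{k \notin T \cup \{i\}} \tau_k\}$: raising each $\tau_k$ for $k \in T$ only increases $F_{\pi, i}(\ucb)$, while lowering the remaining thresholds only decreases $F_{\pi, i}(\lcb)$. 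The skipping conditions in the inner test discard precisely those subsets incompatible with the constraints in $S$, so every kept $T$ corresponds to a valid clever policy obtained by sorting in decreasing threshold order with ties broken according to $S$. Therefore, as $T$ ranges over the kept subsets, $q_T := F_{\pi_T, i}(\ucb_T) - F_{\pi_T, i}(\lcb_T)$ covers every value achievable by a valid clever policy, and we denote by $T^{\mathrm{opt}}$ a true maximizer.

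The main obstacle, and the only place where approximation is lost, is that the algorithm sees only $\hat{F}_k$. By \Cref{PBGenDisAcc}, with probability at least $1 - T^{-12}$ we have
\[
\Bigl\lvert \prod_{k \in S} \hat{F}_k(x) - \prod_{k \in S} F_k(x) \Bigr\rvert \leq \sqrt{\epsilon}
\]
simultaneously for all $S \subseteq [n]$ and all $x \in [0,1]$. Applied at $x = \ucb_T$ and $x = \lcb_T$ and combined by the triangle inequality, this yields $\lvert \hat{q}_T - q_T \rvert \leq 2\sqrt{\epsilon}$ for every enumerated $T$. Letting $T^*$ be the subset returned by the algorithm, we then have
\[
q_{T^{\mathrm{opt}}} \;\leq\; \hat{q}_{T^{\mathrm{opt}}} + 2\sqrt{\epsilon} \;\leq\; \hat{q}_{T^*} + 2\sqrt{\epsilon} \;\leq\; q_{T^*} + 4\sqrt{\epsilon},
\]
where the middle inequality uses the optimality of $T^*$ under $\hat{q}$. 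Combined with the completeness argument, this shows that $\pi_{T^*}$ with boundaries $\lcb_{T^*}, \ucb_{T^*}$ is a clever policy up to an additive $4\sqrt{\epsilon}$ error, completing the proof.
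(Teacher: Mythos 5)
Your proposal is correct and follows essentially the same approach as the paper: enumerate all valid subsets $T$ of boxes placed before $X_i$, observe that for a fixed $T$ the extremal choice of thresholds maximizes $\ucb_T$ and minimizes $\lcb_T$, and then use \Cref{PBGenDisAcc} to bound $|\hat q_T - q_T| \le 2\sqrt{\epsilon}$ so that a two-sided comparison between the algorithm's maximizer and the true maximizer yields the $4\sqrt{\epsilon}$ loss. The explicit chain $q_{T^{\mathrm{opt}}} \le \hat q_{T^{\mathrm{opt}}} + 2\sqrt{\epsilon} \le \hat q_{T^*} + 2\sqrt{\epsilon} \le q_{T^*} + 4\sqrt{\epsilon}$ is a slightly cleaner presentation of the same ``combining two errors'' step that the paper invokes informally.
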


\begin{proof}
The goal of a \textit{clever} threshold setting is to maximize $F_{\pi, i}(\ucb) - F_{\pi, i}(\lcb)$. Fix $i$. When the set $P$, which represents the distributions in front of $X_i$ is determined, the function $F_{\pi, i}(x)$ is fixed. Therefore, to maximize $F_{\pi, i}(\ucb) - F_{\pi, i}(\lcb)$, we should maximize $\ucb$ and minimize $\lcb$. This can be achieved by maximizing the thresholds in $P$ and minimizing the thresholds in $[n] \setminus (P \cup \{i\})$, which is exactly lines 8 and 9 in \Cref{PBISA}. Then, after enumerating all valid subsets $P$, we can find a setting that maximizes $F_{\pi, i}(\ucb) - F_{\pi, i}(\lcb)$.

There is one   missing detail: we only know the value of $\hat{F}_i(x)$. From \Cref{PBGenDisAcc}, we know $\hat{F}_i(x)$ is an estimate of $F_i(x)$ with accuracy $\sqrt{\epsilon}$. Therefore, $\max_{\pi} \hat{F}_{\pi, i}(\ucb) - \hat{F}_{\pi, i}(\lcb)$ is at most $2\sqrt{\epsilon}$  different from $\max_{\pi} F_{\pi, i}(\ucb) - F_{\pi, i}(\lcb)$. After getting $\pi' = \arg \max_{\pi} \hat{F}_{\pi, i}(\ucb) - \hat{F}_{\pi, i}(\lcb)$, the real value of $F_{\pi', i}(\ucb) - F_{\pi', i}(\lcb)$ is at most $2\sqrt{\epsilon}$ different from $\hat F_{\pi', i}(\ucb) - \hat F_{\pi', i}(\lcb)$. Combining two errors proves the $4\sqrt{\epsilon}$ error bound.

For the running time of \Cref{ApproxClever}, we need to enumerate a subset $S$, then calculate the corresponding $F_{\pi, i}(x)$ function. So the running time is $O(n\cdot 2^n)$.
\end{proof}

\end{document}